\keywords{
$\ALCreg$,
Description logics,
Propositional Dynamic Logic,
Complexity,
(Un)decidability,
Nominals,
Non-regular extensions,
Visibly Pushdown Languages}
\tikzset{
diagonal fill/.style 2 args={fill=#2, path picture={
\fill[#1, sharp corners] (path picture bounding box.south west) -|
                         (path picture bounding box.north east) -- cycle;}},
reversed diagonal fill/.style 2 args={fill=#2, path picture={
\fill[#1, sharp corners] (path picture bounding box.north west) |- 
                         (path picture bounding box.south east) -- cycle;}}
}
 \def\desclabel#1#2{\begingroup
    \def\@currentlabel{#1}%
    #1\label{#2}\endgroup
 }
\DeclareMathAlphabet{\mathdutchcal}{U}{dutchcal}{m}{n}
\SetMathAlphabet{\mathdutchcal}{bold}{U}{dutchcal}{b}{n}
\DeclareMathAlphabet{\mathdutchbcal}{U}{dutchcal}{b}{n}
\definecolor{ao(english)}{rgb}{0.0, 0.5, 0.0}
\let\captioncaption\caption
\let\caption\captioncaption
\tikzset{
diagonal fill/.style 2 args={fill=#2, path picture={
\fill[#1, sharp corners] (path picture bounding box.south west) -|
                         (path picture bounding box.north east) -- cycle;}},
reversed diagonal fill/.style 2 args={fill=#2, path picture={
\fill[#1, sharp corners] (path picture bounding box.north west) |- 
                         (path picture bounding box.south east) -- cycle;}}
}
\newcommand{\len}[1]{\left\lVert #1 \right\rVert}
\newrobustcmd{\newnotion}[1]{\emph{#1}}
\newcommand{\deff}{\coloneqq}
\newcommand{\dland}{\sqcap}
\newcommand{\dlor}{\sqcup}
\newcommand{\topconcept}{\top}
\newcommand{\botconcept}{\bot}
\newcommand{\dlsubseteq}{\sqsubseteq}
\newcommand{\Self}{\mathsf{Self}}
\newcommand{\bigdlor}{\bigsqcup}
\providecommand{\bigdland}{%
  \mathop{%
    \mathpalette\@updown\bigsqcup
  }%
}
\newcommand*{\@updown}[2]{%
  \rotatebox[origin=c]{180}{$\m@th#1#2$}%
}
\let\amsmath@bigm\bigm
\renewcommand{\bigm}[1]{%
  \ifcsname fenced@\string#1\endcsname
    \expandafter\@firstoftwo
  \else
    \expandafter\@secondoftwo
  \fi
  {\expandafter\amsmath@bigm\csname fenced@\string#1\endcsname}%
  {\amsmath@bigm#1}%
}
\newcommand{\DeclareFence}[2]{\@namedef{fenced@\string#1}{#2}}
\DeclareFence{\mid}{|}
\newcommand{\DL}[1]{\ensuremath{\mathcal{#1}}}  
\newcommand{\CQ}{\textrm{CQ}}   
\newcommand{\PEQ}{\textrm{PEQ}}
\newcommand{\VPQ}{\textrm{VPQ}}   
\newcommand{\TwoVPQ}{\textrm{2VPQ}}
\newcommand{\ALC}{\DL{ALC}}                     
\newcommand{\ALCO}{\DL{ALCO}}                   
\newcommand{\ALCSelf}{\DL{ALC}^{\textsf{Self}}} 
\newcommand{\ALCreg}{\DL{ALC}_{\textsf{reg}}} 
\newcommand{\ALCOreg}{\DL{ALCO}_{\textsf{reg}}}
\newcommand{\Z}{\DL{Z}}                         
\newcommand{\abstrDL}{\DL{DL}}     
\newcommand{\complexityclass}[1]{\textsc{#1}} 
\newcommand{\TwoExpTime}{\complexityclass{2ExpTime}} 
\newcommand{\lang}[1]{\mathbf{#1}}  
\newcommand{\Ilang}{\lang{N_I}}     
\newcommand{\Rlang}{\lang{N_R}}     
\newcommand{\Clang}{\lang{N_C}}     
\newcommand{\Vlang}{\lang{N_V}}     
\newcommand{\query}[1]{\mathit{#1}}  
\newcommand{\queryq}{\query{q}}      
\newcommand{\match}[1]{#1}          
\newcommand{\matcheta}{\match{\eta}}  
\newcommand{\var}[1]{\mathit{#1}}   
\newcommand{\varx}{\var{x}}         
\newcommand{\vary}{\var{y}}         
\newcommand{\varz}{\var{z}}         
\newcommand{\varv}{\var{v}}         
\newcommand{\varu}{\var{u}}         
\newcommand{\varw}{\var{w}}         
\newcommand{\role}[1]{\mathit{#1}}      
\newcommand{\roler}{\role{r}}           
\newcommand{\roles}{\role{s}}           
\newcommand{\rolet}{\role{t}}           
\newcommand{\concepts}{\lang{C}}            
\newcommand{\alcconcepts}{\concepts} 
\newcommand{\alccapconcepts}{\concepts} 
\newcommand{\concept}[1]{\mathrm{#1}}       
\newcommand{\conceptA}{\concept{A}}         
\newcommand{\conceptC}{\concept{C}}         
\newcommand{\conceptD}{\concept{D}}         
\newcommand{\indv}[1]{\textls[-85]{\texttt{#1}}}  
\newcommand{\indva}{\indv{a}}       
\newcommand{\indvb}{\indv{b}}       
\newcommand{\kb}[1]{\mathcal{#1}}   
\newcommand{\kbK}{\kb{K}}           
\newcommand{\abox}[1]{\mathcal{#1}} 
\newcommand{\aboxA}{\abox{A}}       
\newcommand{\tbox}[1]{\mathcal{#1}} 
\newcommand{\tboxT}{\tbox{T}}       
\newcommand{\ind}[1]{\mathsf{ind}{(#1)}} 
\newcommand{\indA}{\ind{\aboxA}}    
\newcommand{\indK}{\ind{\kbK}}    
\newcommand{\names}{\mathsf{N}}     
\newcommand{\inter}[1]{\mathcal{#1}}    
\newcommand{\interI}{\inter{I}}         
\newcommand{\interJ}{\inter{J}}         
\newcommand{\DeltaInter}[1]{\Delta^{#1}}                
\newcommand{\DeltaI}{\DeltaInter{\interI}}              
\newcommand{\DeltaJ}{\DeltaInter{\interJ}}              
\newcommand{\cdotInter}[1]{\cdot^{#1}}          
\newcommand{\cdotI}{\cdotInter{\interI}}        
\newcommand{\domelem}[1]{\mathrm{#1}}                           
\newcommand{\domelemd}{\domelem{d}}                             
\newcommand{\domeleme}{\domelem{e}}                             
\newcommand{\N}{\mathbb{N}}
\newcommand{\set}[1]{\{#1\}}
\newcommand{\ZZ}{\mathbb{Z}}
\newcommand{\pathrho}{\rho}
\newcommand{\interfwdunrav}[3]{{#1}^{#2}_{#3}}  
\newcommand{\Deltafwdunrav}[3]{\DeltaInter{\interfwdunrav{#1}{#2}{#3}}}  
\newcommand{\cdotfwdunrav}[3]{\cdotInter{\interfwdunrav{#1}{#2}{#3}}}    
\newrobustcmd{\interomegafwdunravI}[1]{\interfwdunrav{\interI}{\protect\vv{\omega}}{#1}}   
\newrobustcmd{\interomegafwdunravInames}{\interomegafwdunravI{\names}}  
\newrobustcmd{\interomegafwdunravIindA}{\interomegafwdunravI{\indA}}
\newrobustcmd{\interomegafwdunravIindK}{\interomegafwdunravI{\indK}}
\newrobustcmd{\DeltaomegafwdunravI}[1]{\Deltafwdunrav{\interI}{\protect\vv{\omega}}{#1}}   
\newrobustcmd{\DeltaomegafwdunravInames}{\DeltaomegafwdunravI{\names}}  
\newrobustcmd{\DeltaomegafwdunravIindA}{\DeltaomegafwdunravI{\indA}}
\newrobustcmd{\cdotomegafwdunravI}[1]{\cdotfwdunrav{\interI}{\protect\vv{\omega}}{#1}}     
\newrobustcmd{\cdotomegafwdunravInames}{\cdotomegafwdunravI{\names}}  
\newrobustcmd{\cdotomegafwdunravIindA}{\cdotomegafwdunravI{\indA}}
\newrobustcmd{\internfwdunravI}[1]{\interfwdunrav{\interI}{\protect\vv{n}}{#1}}   
\newrobustcmd{\internfwdunravInames}{\internfwdunravI{\names}}
\newrobustcmd{\internfwdunravIindK}{\internfwdunravI{\indK}}
\newrobustcmd{\internfwdunravJ}[1]{\interfwdunrav{\interJ}{\protect\vv{n}}{#1}}   
\newrobustcmd{\internfwdunravJnames}{\internfwdunravJ{\names}}
\newrobustcmd{\internfwdunravJindK}{\internfwdunravJ{\indK}}
\newrobustcmd{\DeltanfwdunravI}[1]{\Deltafwdunrav{\interI}{\protect\vv{n}}{#1}} 
\newrobustcmd{\DeltanfwdunravInames}{\DeltanfwdunravI{\names}}
\newrobustcmd{\DeltanfwdunravIindK}{\DeltanfwdunravI{\indK}}
\newrobustcmd{\cdotnfwdunravI}[1]{\cdotfwdunrav{\interI}{\protect\vv{n}}{#1}} 
\newrobustcmd{\cdotnfwdunravInames}{\cdotnfwdunravI{\names}}
\newrobustcmd{\cdotnfwdunravIindK}{\cdotnfwdunravI{\indK}}
\newcommand{\statesQ}{\mathtt{Q}}
\newcommand{\stateq}{\mathtt{q}}
\newcommand{\transreldelta}{\mathtt{T}} 
\newcommand{\lettera}{\mathtt{a}}
\newcommand{\letterb}{\mathtt{b}}
\newcommand{\letterc}{\mathtt{c}}
\newcommand{\letterx}{\mathtt{x}}
\newcommand{\tapeword}[1]{\mathtt{#1}}
\newcommand{\tapewordw}{\tapeword{w}}
\newcommand{\tapewordu}{\tapeword{u}}
\newcommand{\automatonA}{\mathdutchbcal{A}}
\newcommand{\automatonC}{\mathdutchbcal{C}}
\newcommand{\alphabet}{\Sigma}
\newcommand{\alphabetall}{\Sigma_{\mathsf{all}}}
\newcommand{\alphabetvpl}{\Sigma_{\mathsf{vpl}}}
\newcommand{\stackalphabet}{\Gamma}
\newcommand{\stacksigma}{\sigma}
\newcommand{\finalstates}{\mathtt{F}}
\newcommand{\initialstates}{\mathtt{I}}
\newcommand{\botofstack}{\lhd}
\newcommand{\languageL}{\mathdutchbcal{L}}
\newcommand{\langvpaeq}[2]{{#1}^{\#}\hspace{-0.2em}{#2}^{\#}}
\newcommand{\langvpamore}[2]{#1^{\#} #2^{>\#}}
\newcommand{\ALCvpl}{\ALC_{\textsf{vpl}}}
\newcommand{\ALCOvpl}{\ALCO_{\textsf{vpl}}}
\newcommand{\ALCall}{\ALC_{\textsf{all}}}
\newcommand{\proj}{\pi}
\newcommand{\REG}{\mathbb{REG}}
\newcommand{\ALL}{\mathbb{ALL}}
\newcommand{\VPL}{\mathbb{VPL}}
\newcommand{\langclassC}{\mathbb{C}} 
\newcommand{\ALCregrhashshash}{\DL{ALC}_{\textsf{reg}}^{\roler\#\roles\#}}
\newcommand{\ALCOregrhashshash}{\DL{ALCO}_{\textsf{reg}}^{\roler\#\roles\#}}
\newcommand{\ALCSelfvpl}{\DL{ALC}^{\textsf{Self}}_{\mathsf{vpl}}} 
\newcommand{\wang}[6]{
\draw [black,fill=#3] (#1,#2)--(#1+0.5,#2+0.5)--(#1,#2+1)--cycle;
\draw [black,fill=#4] (#1,#2)--(#1+0.5,#2+0.5)--(#1+1,#2)--cycle;
\draw [black,fill=#5] (#1+1,#2+1)--(#1+0.5,#2+0.5)--(#1+1,#2)--cycle;
\draw [black,fill=#6] (#1+1,#2+1)--(#1+0.5,#2+0.5)--(#1,#2+1)--cycle;
}
\newcommand{\smallwang}[6]{
\draw [black,fill=#3] (#1,#2)--(#1+0.125,#2+0.125)--(#1,#2+0.25)--cycle;
\draw [black,fill=#4] (#1,#2)--(#1+0.125,#2+0.125)--(#1+0.25,#2)--cycle;
\draw [black,fill=#5] (#1+0.25,#2+0.25)--(#1+0.125,#2+0.125)--(#1+0.25,#2)--cycle;
\draw [black,fill=#6] (#1+0.25,#2+0.25)--(#1+0.125,#2+0.125)--(#1,#2+0.25)--cycle;
}
\newcommand{\cyanBox}{\protect\tikz \protect\node [rectangle,draw, fill = cyan] at (2.5,-2) {};}
\newcommand{\rougeBox}{\protect\tikz \protect\node [rectangle,draw, fill = rouge] at (2.5,-2) {};}
\newcommand{\whiteBox}{\protect\tikz \protect\node [rectangle,draw, fill = white] at (2.5,-2) {};}
\newcommand{\limeBox}{\protect\tikz \protect\node [rectangle,draw, fill = lime] at (2.5,-2) {};}
\newcommand{\intextwang}[4]{\protect\tikz \protect\smallwang{0}{0}{#1}{#2}{#3}{#4};}
\newcommand{\rmT}{\mathrm{T}}
\newcommand{\namesrulerT}{\names_{\text{\faRuler}}^{\rmT}}
\newcommand{\rmM}{\mathrm{M}}
\newcommand{\rmN}{\mathrm{N}}
\newcommand{\rmH}{\mathrm{H}}
\newcommand{\rmV}{\mathrm{V}}
\newcommand{\indvyst}{\indv{st}}
\newcommand{\indvmd}{\indv{md}}
\newcommand{\indvyend}{\indv{end}}
\newcommand{\conceptyardstick}{\conceptC_{\text{\faRuler}}}
\newcommand{\tilingsys}{\mathdutchcal{D}}
\newcommand{\tiles}{\mathrm{T}}
\newcommand{\tilesCol}{\mathrm{Col}}
\newcommand{\tileswhite}{\whiteBox}%
\newcommand{\tile}{\mathrm{t}}
\newcommand{\colour}{\mathrm{c}}
\newcommand{\tilesmap}{\xi}
\newcommand{\indvsld}{\indv{ld}}
\newcommand{\indvsrd}{\indv{rd}}
\newcommand{\indvsru}{\indv{ru}}
\newcommand{\indvslu}{\indv{lu}}
\newcommand{\namessnakeT}{\names_{\text{\faPuzzlePiece}}^{\rmT}}
\newcommand{\conceptsnake}{\conceptC_{\includegraphics[scale=0.25]{}}}%
\newcommand{\concepttilepath}{\conceptC_{\text{\faPuzzlePiece}}}
\newcommand{\indvcbra}{\indv{cbra}}
\newcommand{\conceptcobra}{\conceptC_{\includegraphics[scale=0.15]{}}}%
\newcommand{\octant}{\mathbb{O}}
\begin{document}

\title[Exploring Non-Regular Extensions of PDL with DL Features]{Exploring Non-Regular Extensions of Propositional Dynamic Logic with Description-Logics Features}

\author[B.~Bednarczyk]{Bartosz Bednarczyk\lmcsorcid{0000-0002-8267-7554}}  
\address{Computational Logic Group, TU Dresden \& Institute of Computer Science, University of Wroc\l{}aw} 
\email{bartosz.bednarczyk@cs.uni.wroc.pl}  


\begin{abstract}\label{sec:abstract}
We investigate the impact of non-regular path expressions on the decidability of satisfiability checking and querying in description logics extending $\ALC$.
Our primary objects of interest are $\ALCreg$ and $\ALCvpl$, the extensions of $\ALC$ with path expressions employing, respectively, regular and visibly-pushdown languages.
The first one, $\ALCreg$, is a notational variant of the well-known Propositional Dynamic Logic of Fischer and Ladner.
The second one, $\ALCvpl$, was introduced and investigated by L\"oding and Serre in 2007. 
The~logic $\ALCvpl$~generalises many known decidable non-regular extensions of $\ALCreg$.

We provide a series of undecidability results. 
First, we show that decidability of the concept satisfiability problem for $\ALCvpl$ is lost upon adding the seemingly innocent $\Self$ operator.
Second, we establish undecidability for the concept satisfiability problem for $\ALCvpl$ extended with nominals.
Interestingly, our undecidability proof relies only on one single non-regular (visibly-pushdown) language, namely on $\roler^\#\roles^\# \deff \{ \roler^n \roles^n \mid n \in \N \}$ for fixed role names $\roler$ and $\roles$.
Finally, in contrast to the classical database setting, we establish undecidability of query entailment for queries involving non-regular atoms from $\roler^\#\roles^\#$, already in the case of  $\ALC$-TBoxes.
\end{abstract}
\maketitle

\section{Introduction}\label{sec:intro}

Formal ontologies play a crucial role in artificial intelligence, serving as the backbone of various applications such as the Semantic Web, ontology-based information integration, and peer-to-peer data management.
In reasoning about graph-structured data, a significant role is played by \emph{description logics} (DLs)~\cite{dlbook}, a robust family of logical formalisms serving as the logical foundation of contemporary standardised ontology languages, including OWL~2 by the W3C~\cite{grau2008owl,OWL2Primer}.
Among many features present in extensions of the basic description logic $\ALC$, an especially useful one is $\cdot_{\mathsf{reg}}$, supported by the popular $\Z$-family of description logics~\cite{CalvaneseEO09}.
With $\cdot_{\mathsf{reg}}$ one can specify regular path constraints, allowing the user to navigate graph-structured data.
In recent years, many extensions of $\ALCreg$ for ontology-engineering were proposed~\cite{BienvenuCOS14,CalvaneseOS16,ortiz2010query}, and the complexity landscape of their reasoning problems is now mostly well-understood~\cite{CalvaneseEO09,BednarczykR19,BednarczykK22}.
In fact, the logic $\ALCreg$ was already studied in 1979 by the formal-verification community~\cite{FischerLadner1979}, under the name of Propositional Dynamic~Logic~(PDL). 
Relationship between (extensions of) PDL and $\ALCreg$ were investigated by De Giacomo and~Lenzerini~\cite{GiacomoL94}. 

The spectrum of recognizable word languages is relatively wide. Hence the question of whether regular constraints in path expressions of $\ALCreg$ can be lifted to more expressive classes of languages received a lot of attention from researchers.
We call such extensions of PDL \emph{non-regular}.
After the first undecidability proof of satisfiability of $\ALCreg$ by context-free languages~\cite[Cor.~2.2]{harel1981propositional}, several decidable cases were identified. %
For instance, Koren and Pnueli~\cite[Sec. Decidability]{TmimaP1983} proved that $\ALCreg$ extended with the simplest non-regular language $\langvpaeq{\roler}{\roles} \deff \{ \roler^n \roles^n \mid n \in \N \}$ for \emph{fixed} roles $\roler, \roles$ is decidable; while combining it with~$\langvpaeq{\roles}{\roler}$ leads to undecidability~\cite[Thm.~3.2]{HarelPS81}.
This surprises at first glance, but as it was shown later~\cite[Thm.~18]{LodingLS07}, PDL extended with a broad class of input-driven context-free languages, called \emph{visibly pushdown languages}~\cite[Sec.~5]{alur2009}, remains decidable.
This generalises all previously known decidability results, and partially explains the reason behind known failures (\eg the languages $\langvpaeq{\roler}{\roles}$ and $\langvpaeq{\roles}{\roler}$ cannot be both visibly-pushdown under the same partition of the~alphabet).
Three years ago, the decidability boundary was pushed even further~\cite[Ex.~1]{BruseL21}, by allowing for mixing modalities in visibly-pushdown expressions (for instance, allowing the user to specify that ``for all positive integers $n \in \N$, all $\rolet$-successors of $\roler^n$-reachable elements can $\roles^n$-reach an element fulfilling $\varphi$'').

\paragraph*{Our motivation and our contribution.}
Despite the presence of a plethora of various results concerning non-regular extensions of PDL~\cite{TmimaP1983,HarelPaterson1984,HarelS96,HarelRaz93,BruseL21}, to the best of our knowledge the extensions of non-regular PDL with popular features supported by W3C ontology languages are yet to be investigated.
Such extensions include, among others, \emph{nominals} (constants), \emph{inverse roles} (inverse programs), \emph{functionality} or \emph{counting} (deterministic programs or graded modalities), and the $\Self$ operator~(self-loops).
The honourable exception is the unpublished undecidability result for $\ALCreg$ extended with the language $\{ \roler^{n}\roles(\roler^-)^n \mid n \in \N \}$, where~$\roler^-$ denotes the converse of $\roler$, from G\"oller's thesis~\cite{Goller2008} (answering an open problem of Demri~\cite[Probl\'eme ouvert 29]{Demri07}). 
The~lack of results on entailment of non-regular queries over ontologies is also intriguing, taking into account positive results for conjunctive visibly-pushdown queries in the setting of relational-databases~\cite[Thm.~2]{LangeL15}.

In this paper we contribute to the further understanding of the aforementioned~questions. 
Our results are negative.   
Section~\ref{sec:self} establishes undecidability of the concept satisfiability of~$\ALCvpl$ extended with the seemingly innocent $\Self$ operator.
Section~\ref{sec:nominals} establishes undecidability of the concept satisfiability of $\ALCvpl$ extended with nominals. 
More specifically, the undecidability arises already if the only non-regular language present in concepts is~$\roler^{\#}\roles^{\#}$.
Finally, Section~\ref{sec:querying-negative} establishes undecidability of the query entailment problem over $\ALC$-TBoxes, in which our queries can employ atoms involving the language $\langvpaeq{\roler}{\roles}$. 
On the positive side, a combination of existing works shows that the entailment problem of $\REG$-{\PEQ}s over $\ALCvpl$-TBoxes is $\TwoExpTime$-complete.
We conclude with an open problem list.\\

This work is a revised and extended version of our JELIA 2023 paper~\cite{Bednarczyk23}. In~addition to the inclusion of full proofs and minor corrections, the paper contains improved and new undecidability results concerning entailment of non-regular queries (Section~\ref{sec:querying-negative}).

\section{Preliminaries}\label{sec:preliminaries}

We assume familiarity with basics on the description logic $\ALC$~\cite[Sec. 2.1--2.3]{dlbook}, regular and context-free languages, Turing machines and computability~\cite[Sec. 1--5]{sipser13}.
We employ $\N$ to denote the set of \emph{non-negative} integers and $\ZZ_n$ to denote the set $\{ 0, 1, \ldots, n{-}1\}$.

\paragraph{Basics on $\ALC$.} 
We recall ingredients of the well-known description logic $\ALC$~\cite[Sec. 2.1--2.3]{dlbook}.\footnote{To guide people with modal logic background: concept names are propositional letters, interpretations are Kripke structures, ABoxes (a.k.a. Assertion Boxes) are partial descriptions of Kripke structures, and TBoxes (a.k.a. Terminological  Boxes) are limited forms of the universal modality. Consult Section~2.6.2 of the DL Book~\cite{dlbook} for a more detailed comparison.}
We fix countably infinite pairwise disjoint sets of \emph{individual names} \(\Ilang \), \emph{concept names} \(\Clang \), and \emph{role names} \( \Rlang \) and introduce the description logic \( \ALC \).
Starting from \( \Clang \) and~\( \Rlang \), the set \( \alcconcepts \) of \( \ALC \)-\emph{concepts} is built using the following concept constructors: \emph{negation} \((\neg \conceptC) \), \emph{conjunction} \((\conceptC \dland \conceptD) \), \emph{existential restriction} (\(\exists{\roler}.\conceptC \)), and the \emph{top concept} $\topconcept$ with the grammar:
\begin{equation*} \label{eq:alc-grammar}
\conceptC, \conceptD \; \Coloneqq \topconcept \; \mid \; \conceptA \; \mid \; \neg \conceptC \; \mid \; \conceptC \dland \conceptD \; \mid \; \exists{\roler}.\conceptC,
\end{equation*}
where \(\conceptC,\conceptD \in \alcconcepts \), \(\conceptA \in \Clang \) and \(\roler \in \Rlang \). 
We extend the syntax by notational shortcuts for disjunction \(\conceptC \dlor \conceptD \deff \neg (\neg \conceptC  \dland \neg \conceptD) \), universal restrictions \(\forall{\roler}.\conceptC \deff \neg \exists{\roler}.\neg\conceptC \), the bottom concept \(\botconcept \deff \neg \topconcept \), implication~$\conceptC \to \conceptD \deff \neg \conceptC \dlor \conceptD$, and equivalence~$\conceptC \leftrightarrow \conceptD \deff (\conceptC \to \conceptD) \dland (\conceptD \to \conceptC)$.
  \emph{Assertions} are (possibly negated) expressions of the form \(\conceptC(\indva) \) or \(\roler(\indva,\indvb) \) for \(\indva,\indvb \in \Ilang \), \(\conceptC \in \alccapconcepts \) and \(\roler \in \Rlang \).
  A \emph{general concept inclusion} (GCI) has the form \(\conceptC \dlsubseteq \conceptD \) for concepts \(\conceptC, \conceptD \in \alccapconcepts \). 
  A~\emph{knowledge base} (KB) \(\kbK:=(\aboxA, \tboxT) \) is composed of a finite set \(\aboxA \) (\emph{ABox}) of assertions and a finite set \(\tboxT \) (\emph{TBox}) of GCIs. 
  We call the elements of \(\aboxA \cup \tboxT \)~\emph{axioms}.

The semantics of \(\ALC \) is defined via \emph{interpretations}~\(\interI \deff (\DeltaI, \cdotI) \) composed of a non-empty set \(\DeltaI \), called the \emph{domain of \(\interI \)} and an \emph{interpretation function} \(\cdotI \), mapping individual names to elements of \(\DeltaI \), concept names to subsets of \(\DeltaI \), and role names to subsets of \(\DeltaI \times \DeltaI \). 
This mapping is then extended to concepts (see Table~\ref{tab:ALC}), and finally used to define \emph{satisfaction} of assertions and GCIs. 
More precisely $\interI \models \conceptC \dlsubseteq \conceptD$ if $\conceptC^{\interI} \subseteq \conceptD^{\interI}$, $\interI \models \conceptC(\indva)$ if $\indva^{\interI} \in \conceptC^{\interI}$, and $\interI \models \roler(\indva, \indvb)$ if $(\indva^{\interI},\indvb^{\interI}) \in \roler^{\interI}$.
We say that an interpretation \(\interI \) \emph{satisfies} a concept~\( \conceptC \) (or \(\interI \) is a \emph{model} of \(\conceptC \), written:~\(\interI \models \conceptC \)) if $\conceptC^{\interI} \neq \emptyset$.
A concept is \emph{consistent} (or \emph{satisfiable}) if it has a model, and \emph{inconsistent} (or \emph{unsatisfiable}) otherwise. 
In the \emph{consistency} (or \emph{satisfiability}) \emph{problem} we ask, whether an input concept has a model.
Similarly, we say that $\interI$ satisfies a KB $\kbK \deff (\aboxA, \tboxT)$ (written: $\interI \models \kbK$) if $\interI$ satisfies all axioms of~$\aboxA \cup \tboxT$. 
The~knowledge base satisfiability problem is defined analogously. 

\begin{table}[h]
    \caption{Concepts and roles in \(\ALC \).\label{tab:ALC}}
    \centering
        \begin{tabular}{l@{\ \ \ }c@{\ \ \ }l@{}}
            \hline\\[-2ex]
            Name & Syntax & Semantics \\ \hline \\[-2ex]
            top concept & \( \topconcept \) & \( \DeltaI  \) \\
            concept name & \(\conceptA \) & \(\conceptA^\interI \subseteq \DeltaI  \) \\ 
            role name & \(\roler \) & \(\roler^\interI \subseteq \DeltaI {\times} \DeltaI \) \\ 
            concept negation & \(\neg\conceptC \)& \(\DeltaI \setminus \conceptC^{\interI} \) \\  
            concept intersection & \(\conceptC \dland \conceptD \)& \(\conceptC^{\interI}\cap \conceptD^{\interI} \) \\  
            existential restriction & \(\exists{\roler}.\conceptC \) & 
            \(\set{ \domelemd \mid \exists{\domeleme \in \conceptC^{\interI}} \ (\domelemd,\domeleme)\in \roler^{\interI}} \)
            \\\hline
        \end{tabular}
\end{table}

A path in an interpretation $\interI$ is a finite word in $(\DeltaI)^+$. Given a path $\pathrho$ we enumerate its components with $\pathrho_1, \ldots, \pathrho_{n}$. 
We use $\len{\pathrho}$ to denote the length of $\pathrho$ (note that $\len{\pathrho} = |\pathrho|{-}1$).
We say that $\pathrho$ \emph{starts from} (resp. \emph{ends in}) $\domelemd$ if $\pathrho_1 = \domelemd$ holds (resp. $\pathrho_{|\pathrho|} = \domelemd$).
If $\names \subseteq \Ilang$ is a set of individual names, we call an element $\domelemd \in \DeltaI$ \emph{$\names$-named} if $\domelemd = \indva^{\interI}$ holds for some~$\indva \in \names$.

\paragraph{$\ALC$ with extras.}
In the next sections we consider several popular description-logics features: \emph{nominals}~$(\mathcal{O})$, and the \emph{$\Self$ operator}  $(\cdot^{\Self})$. Their semantics is recalled in Table~\ref{tab:ALC-extra-features}, in which $\roler \in \Rlang$, $\ell \in \N$, $\indva \in \Ilang$, and $\conceptC$ is a concept.
\begin{table}[!htb]
    \caption{Popular description-logics features.\label{tab:ALC-extra-features}}
    \centering
        \begin{tabular}{l@{\ \ \ }c@{\ \ \ }l@{}}
            \hline\\[-2ex]
            Name & Syntax & Semantics \\ \hline \\[-2ex]
            nominal & $\{ \indva \}$  & $\{ \indva^{\interI} \}$\\
            self-operator & $\exists{\roler}.\Self$ & $\{ \domelemd \mid (\domelemd, \domelemd) \in \roler^{\interI} \}$\\\hline
        \end{tabular}
\end{table}

\paragraph{$\ALC$ with path expressions.}
We treat the set $\alphabetall \deff \Rlang \cup \{ \conceptC? \mid \conceptC \in \Clang\}$ as an infinite alphabet.
Let $\ALL$ and $\REG$ be classes of all Turing-recognizable (resp. all regular) languages of finite words over any finite subset of $\alphabetall$. 
Given a language $\languageL \in \ALL$ and a path $\pathrho \deff \pathrho_1 \pathrho_2 \ldots \pathrho_n \pathrho_{n{+}1}$ in an interpretation $\interI$, we say that $\pathrho$ is an $\languageL$-\emph{path}, if there exists a word $\tapewordw \deff \tapewordw_1\tapewordw_2\ldots\tapewordw_n  \in \languageL$ such that for all indices $i \leq n$ we have either (i) $\tapewordw_i \in \Rlang$ and $(\pathrho_i, \pathrho_{i{+}1}) \in (\tapewordw_i)^{\interI} $, or (ii) $\tapewordw_i$ is of the form $\conceptC?$ for some $\conceptC \in \Clang$, $\pathrho_i = \pathrho_{i{+}1}$ and $\pathrho_i \in \conceptC^{\interI}$.
Intuitively the word~$\tapewordw$ either traverses roles between the elements of $\pathrho$ or loops at an element to check the satisfaction of concepts.
For convenience, we will also say that $\domeleme \in \DeltaI$ is $\languageL$-\emph{reachable from} $\domelemd \in \DeltaI$ (or alternatively that $\domelemd$ $\languageL$-\emph{reaches} $\domeleme$) if there is an $\languageL$-path $\pathrho$ that starts from $\domelemd$ and ends in $\domeleme$.
The logic $\ALCall$ extends $\ALC$ with concept constructors of the form $\exists{\languageL}.\conceptC$, where $\languageL \in \ALL$ and $\conceptC$ is an $\ALCall$-concept.  
Their semantics is as follows: $(\exists{\languageL}.\conceptC)^{\interI}$ is the set of all $\domelemd \in \DeltaI$ that can $\languageL$-reach some $\domeleme \in \conceptC^{\interI}$, and $\forall{\languageL}.\conceptC$ stands for $\neg\exists{\languageL}.\neg\conceptC$.
The logic $\ALCreg$ is a restriction of $\ALCall$ to regular languages. It is a  notational variant of Propositional Dynamic Logic~\cite{FischerLadner1979}, popular in the community of formal verification.

\paragraph{Visibly-pushdown languages.}

The class $\VPL$ of \emph{Visibly-pushdown languages}~\cite[Sec.~5]{alur2009} (VPLs)
is a well-behaved family of context-free languages, 
in which the usage of the stack in the underlying automaton model 
is input-driven. 
For the exposition of~VPLs we follow the work of L\"oding et al.~\cite[Sec. 2.2]{LodingLS07}.
A \emph{pushdown alphabet} $\alphabet$ is an alphabet equipped with 
a partition $(\alphabet_c, \alphabet_i, \alphabet_r)$.
The elements of $\alphabet_c, \alphabet_i,$ and $\alphabet_r$ are called, respectively,
\emph{call} letters, \emph{internal} letters, and \emph{return} letters.
A \emph{visibly-pushdown automaton}~(VPA) $\automatonA$ over a pushdown alphabet $\alphabet$
is a tuple 
$(\statesQ, \initialstates, \finalstates, \stackalphabet, \transreldelta)$,
where $\statesQ$ is a finite set of states,
$\initialstates$ is a finite subset of \emph{initial states},
$\finalstates$ is a finite subset of \emph{final states},
$\stackalphabet$ is a finite stack alphabet that contains a 
\emph{bottom-of-stack} symbol $\botofstack$, and
$\transreldelta$ is a transition relation~of~type
\[
  \transreldelta \subseteq
  \left( \statesQ \times \alphabet_c \times \statesQ \times (\stackalphabet \setminus \botofstack) \right) \; \; \; \cup \; \; \; 
  \left( \statesQ \times \alphabet_r \times \stackalphabet \times \statesQ) \right) 
  \; \; \; \cup \; \; \; 
  \left( \statesQ \times \alphabet_i \times \statesQ \right).
\]
A \emph{configuration} of a VPA $\automatonA$ is
a pair $(\stateq, \stacksigma) \in \statesQ \times (\stackalphabet \setminus \botofstack)^*\botofstack$
of a state $\stateq$ and a stack content~$\stacksigma$.
Given a letter $\lettera$ and a configuration $(\stateq, \stacksigma)$ we say that $(\stateq', \stacksigma')$ is an $\lettera$-successor of $(\stateq, \stacksigma)$ if one of the following cases hold: 
\begin{itemize}\itemsep0em
\item $\lettera \in \alphabet_c$, $\stacksigma' = \gamma \stacksigma$ and there is a transition 
$(\stateq, \lettera, \stateq', \gamma) \in \transreldelta$.
\item $\lettera \in \alphabet_i$, $\stacksigma' = \stacksigma$ and there is a transition 
$(\stateq, \lettera, \stateq') \in \transreldelta$. 
\item $\lettera \in \alphabet_r$, either
(i) $\stacksigma = \gamma \stacksigma'$ and there is a transition $(\stateq, \lettera, \gamma, \stateq') \in \transreldelta$, or (ii) $\stacksigma = \stacksigma' = \botofstack$ and~$(\stateq, \lettera, \botofstack, \stateq') \in \transreldelta$.
\end{itemize}
We denote this fact by $(\stateq, \stacksigma) \to_\lettera (\stateq', \stacksigma')$.
Given a finite word $\tapewordw \deff \lettera_1 \ldots \lettera_{n}$, a \emph{run} of $\automatonA$ on~$\tapewordw$ is a sequence 
$(\stateq_0, \botofstack) \to_{\lettera_1} (\stateq_1, \sigma_1) \to_{\lettera_2} \ldots \to_{\lettera_n} (\stateq_n, \sigma_n)$ where $\stateq_0 \in \initialstates$.
We call $\tapewordw$ \emph{accepted} by~$\automatonA$ if there is a run in which the last configuration contains a final state.
The language $\languageL(\automatonA)$ of $\automatonA$ is composed of all words accepted by $\automatonA$.
A~language $\languageL$ (\ie a set of words) over $\alphabet$ is \emph{visibly-pushdown} if there is a VPA $\automatonA$ over $\alphabet$ for which $\languageL(\automatonA) = \languageL$.

\begin{exa}\label{example:languages-accepted-by-VPL}
Suppose that $\roler$ is a call letter and $\roles$ is a return letter. 
Then the languages $\langvpaeq{\roler}{\roles} \deff \{ \roler^n \roles^n \mid n \in \N \}$ 
and $\langvpamore{\roler}{\roles} \deff \{ \roler^n \roles^{n{+}m{+}1} \mid n, m \in \N \}$
are visibly-pushdown.
Under such a choice of $\roler$ and $\roles$, the language $\langvpaeq{\roles}{\roler}$ \emph{is not} visibly-pushdown.
Moreover, every regular language is visibly-pushdown.
\end{exa}

%
\paragraph{The logic $\ALCvpl$.}

Throughout the paper, $\alphabetall$ is presented as a pushdown~alphabet
\[
\alphabetvpl \deff \left((\Rlang)_c, (\Rlang)_i \cup \{ \conceptC? \mid \conceptC \in \Clang\}, (\Rlang)_r\right),
\]
where the sets $(\Rlang)_c, (\Rlang)_i, (\Rlang)_r$ form a partition of $\Rlang$.
Hence, we define~$\ALCvpl$ as the restriction of $\ALCall$ to visibly-pushdown languages over finite subsets of $\alphabetvpl$, in which languages in existential and universal restrictions are represented by means of nondeterministic VPA.
The logic $\ALCvpl$ generalises many other logics with non-regular path expressions, and has a $\TwoExpTime$-complete~\cite[Thm.~18--19]{LodingLS07} concept satisfiability problem.
As a special case of $\ALCvpl$, we also consider a, rather minimalistic, extension of~$\ALCreg$ with the language $\langvpaeq{\roler}{\roles}$ for one \emph{fixed} call letter $\roler \in (\Rlang)_c$ and one \emph{fixed} return letter $\roles \in (\Rlang)_r$. 
We~denote it here by $\ALCregrhashshash$.
The concept satisfiability problem for $\ALCregrhashshash$ was shown to be decidable~\cite[Sec. Decidability]{TmimaP1983} already 40 years ago by Koren and Pnueli, but its extensions with popular features like nominals or functionality are still unexplored.

\paragraph{The query entailment problem.}

Given a class $\langclassC$ of languages, the class of \emph{$\langclassC$-enriched Positive Existential Queries} (abbreviated as $\langclassC$-{\PEQ}s) is defined with the following syntax:
\[
 \queryq, \queryq'  \Coloneqq \bot \; \mid \; \conceptA(\varx) \; \mid \;  \roler(\varx, \vary) \; \mid \; \languageL(\varx, \vary) \; \mid \; \queryq \vee \queryq' \; \mid \; \queryq \land \queryq',
\]
where $\conceptA \in \Clang$, $\roler \in \Rlang$, $\languageL \in \langclassC$, and $\varx, \vary$ are variables from a countably-infinite set \( \Vlang \). 
Their semantics is defined as expected, \eg $\languageL(\varx, \vary)$ evaluates to true under a variable assignment $\matcheta\colon \Vlang \to \DeltaI$ if and only if~$\matcheta(\varx)$ can $\languageL$-reach $\matcheta(\vary)$ in $\interI$.
$\emptyset$-{\PEQ}s (or Positive Existential Queries) are a well-known generalization of unions of \emph{conjunctive queries} (in which disjunction is allowed only at the outermost level).
$\REG$-{\PEQ}s (or Positive Conjunctive Regular Path Queries) are among the most popular query languages nowadays~\cite{FlorescuLS98,OrtizS12}.
Finally, $\VPL$-{\PEQ}s recently received some attention~in~\cite{LangeL15}.
A $\langclassC$-{\CQ} is a disjunction-free $\langclassC$-\PEQ.
We~say that an interpretation~$\interI$ \emph{satisfies} a query $\queryq$ (written $\interI \models \queryq$), if there exists an assignment $\matcheta$ of variables from~$\queryq$ to $\DeltaI$ under which $\queryq$ evaluates to true (a \emph{match}).
A~concept~$\conceptC$ entails a query $\queryq$ (written $\conceptC \models \queryq$) if every model of $\conceptC$ satisfies $\queryq$. 
In total analogy we define the notion of entailment and overload the ``$\models$'' symbol for the case of TBoxes and knowledge-bases in place of concepts.
For a given description logic $\abstrDL$, the $\langclassC$-{\PEQ} \emph{entailment problem over $\abstrDL$-knowledge-bases} asks to decide if an input $\abstrDL$-knowledge-base entails an input $\langclassC$-\PEQ.\@
The~problem's definition for concepts and TBoxes is analogous. 

\paragraph{Tree models.}

L\"oding et al. established that $\ALCvpl$ possesses a tree-model property.
An~interpretation $\interI$ is \emph{tree-like} if its domain is a prefix-closed subset of $\N^*$, and for all $\roler \in \Rlang$ and $\domelemd, \domeleme \in \DeltaI$ the condition ''if $(\domelemd, \domeleme) \in \roler^{\interI}$, then $\domeleme = \domelemd n$ for some~$n \in \N$'' holds. 
An interpretation is \emph{single-role} if any two domain elements are connected by at most one role.
The following lemma follows immediately from the proof of Proposition~8 by L\"oding et al.~\cite{LodingLS07}.

\begin{cor}[Consequence of the proof of Prop.~8 of~\cite{LodingLS07}]\label{cor:tree-model-property-for-ALCVPL}
Every satisfiable $\ALCvpl$-TBox~$\tboxT$ has a single-role tree-like model.\footnote{The original work considers concepts only. However, all their results transfer immediately to the case of TBoxes, as TBoxes can be internalised in concepts in the presence of regular expressions~\cite[p.~186]{2003handbook}. The queries are not mentioned either: the so-called tree model property is established with a suitable notion of unravelling, which produces interpretations that can be then homomorphically mapped to the original interpretations (entailing the preservation of the non-satisfaction of query).}
Moreover, for any query $\queryq$ preserved under homomorphisms we have that $\tboxT \not\models \queryq$ if and only if there exists a single-role tree-like model $\interI \models \tboxT$ such that $\interI \not\models \queryq$.
\end{cor}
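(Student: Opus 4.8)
The plan is to obtain both claims as immediate consequences of the tree-shaped unravelling underlying the proof of Proposition~8 in~\cite{LodingLS07}, after bridging the two gaps flagged in the footnote: the passage from concepts to TBoxes, and the passage from satisfiability to query non-entailment.

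For the first statement, I would start by internalising the TBox. Since $\ALCvpl$ extends $\ALCreg$, and in the presence of regular path expressions a finite TBox $\tboxT$ can be folded into a single concept $\conceptC_{\tboxT}$ that forces every reachable element to satisfy all GCIs of $\tboxT$ (using the reflexive-transitive closure of the union of all role names occurring in $\tboxT$ as a surrogate for the universal modality; see~\cite[p.~186]{2003handbook}). A model of $\tboxT$ then corresponds to a model of $\conceptC_{\tboxT}$, and conversely any model of $\conceptC_{\tboxT}$, restricted to the part reachable from a satisfying element, is a model of $\tboxT$. Applying L\"oding et al.'s unravelling to such a model yields a tree-like model of $\conceptC_{\tboxT}$, hence of $\tboxT$. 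The single-role property comes for free: the unravelling creates a fresh copy of an element for each role edge (and test) used to reach it, so distinct roles between a pair of elements give rise to distinct tree nodes, and in a tree-like interpretation every role edge points from a node to one of its children.

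For the ``moreover'' part, I would assume $\tboxT \not\models \queryq$ and fix a model $\interI \models \tboxT$ with $\interI \not\models \queryq$. Unravelling $\interI$ produces a single-role tree-like $\interJ \models \tboxT$ together with a folding map $\homoh \colon \interJ \to \interI$ sending each tree node to the original element it was copied from; this map is a homomorphism, as it preserves concept membership and, by construction, each role edge of $\interJ$ lies over a role edge of $\interI$. Because $\queryq$ is preserved under homomorphisms, $\interJ \models \queryq$ would entail $\interI \models \queryq$; contraposing, $\interI \not\models \queryq$ yields $\interJ \not\models \queryq$, so $\interJ$ is the desired tree-like countermodel. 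The converse implication is immediate: any tree-like model $\interI \models \tboxT$ with $\interI \not\models \queryq$ is in particular a model witnessing $\tboxT \not\models \queryq$.

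The only point requiring genuine care is verifying that the folding map is a homomorphism with respect to the $\languageL$-reachability atoms of the query, i.e.\ that an $\languageL$-path in $\interJ$ maps to an $\languageL$-path in $\interI$. This holds because a map preserving role edges and concept tests sends any path witnessing membership of a word $\tapewordw \in \languageL$ to a path witnessing the same word, so $\languageL$-reachability transfers forward along $\homoh$; this is exactly why the statement is restricted to queries preserved under homomorphisms. Everything else is bookkeeping layered on top of the cited unravelling.
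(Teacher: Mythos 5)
Your proposal is correct and follows essentially the same route as the paper, whose entire justification is the footnote: internalise the TBox into a concept using regular expressions~\cite[p.~186]{2003handbook}, apply the unravelling from the proof of Proposition~8 of L\"oding et al., and use the fact that the unravelling comes with a folding homomorphism onto the original model, so that non-satisfaction of any homomorphism-preserved query transfers to the tree-like countermodel. The details you supply (restriction to the reachable part after internalisation, single-role property from copying per role edge, and the check that $\languageL$-paths are preserved forward along the homomorphism) are exactly the bookkeeping the paper leaves implicit.
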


\paragraph{Undecidability results for extensions of $\ALCvpl$ that follow from the literature.}
Other popular (not yet mentioned) features supported by W3C ontology languages are \emph{inverse roles} and \emph{role hierarchies}. 
For bibliographical purposes, we would like to use the extra space given here to briefly discuss how these features result in the undecidability of the respective extensions of $\ALCvpl$.
To define the first feature, we associate with each role name $\roler \in \Rlang$ a role name $\roler^-$ that is enforced to be interpreted as the inverse of the interpretation of $\roler$. 
It was shown by Stefan G\"oller in his PhD thesis~\cite[Prop.~2.32]{Goller2008} that $\ALCreg$ extended with the single visibly-pushdown language $\{ \roler^{n}\roles(\roler^-)^n \mid n \in \N \}$ is~undecidable. 

\begin{cor}\label{cor:Goller}
The concept satisfiability problem for $\ALCvpl$ with inverses is undecidable, even if the only allowed non-regular language is $\{ \roler^{n}\roles(\roler^-)^n \mid n \in \N \}$ for fixed $\roler, \roles \in \Rlang$.
\end{cor}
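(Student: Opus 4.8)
The plan is to derive this corollary directly from Göller's undecidability theorem \cite[Prop.~2.32]{Goller2008} by verifying that the single language $\{ \roler^{n}\roles(\roler^-)^n \mid n \in \N \}$ belongs to the class $\VPL$. Once this is established, the fragment of $\ALCvpl$ with inverses syntactically contains the undecidable logic that Göller studied, and undecidability transfers by the identity reduction. Thus the entire mathematical content of the corollary reduces to a membership check in $\VPL$ under a suitable pushdown partition of the alphabet.

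First, I would fix the partition. Treating inverse roles as first-class symbols of $\Rlang$, I would declare $\roler$ a call letter, $\roler^-$ a return letter, and $\roles$ an internal letter, placing any remaining role names into the internal part. This yields a legitimate instance of $\alphabetvpl$: since $\roler$ and $\roler^-$ are \emph{distinct} alphabet symbols, no letter is asked to be simultaneously a call and a return, so the partition convention is respected even in the presence of inverses. Second, I would exhibit a VPA $\automatonA$ for the language under this partition. The automaton reads the initial block of $\roler$'s, pushing one stack symbol per letter (as mandated for call letters); then reads a single $\roles$ via an internal transition; and finally reads the block of $\roler^-$'s, popping one stack symbol per letter (as mandated for return letters). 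It accepts precisely when the two blocks have equal length, i.e.\ when the stack is emptied back to $\botofstack$ exactly as the input is exhausted. Because every push occurs on $\roler$ and every pop occurs on $\roler^-$, the stack behaviour is entirely input-driven, so $\automatonA$ is a genuine visibly-pushdown automaton and the language lies in $\VPL$.

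Finally, I would close the argument by subsumption. Every regular language is visibly-pushdown (Example~\ref{example:languages-accepted-by-VPL}), and the one additional language employed by Göller is visibly-pushdown as just shown; hence every concept of $\ALCreg$ augmented with $\{ \roler^{n}\roles(\roler^-)^n \mid n \in \N \}$ is, verbatim, an $\ALCvpl$-concept with inverses. Consequently, the concept satisfiability problem for the former reduces to that of the latter via the identity map, and undecidability is inherited. Since the corollary is essentially a reformulation of Göller's result once the \emph{visibly-pushdown} status of the language is recognised, there is no substantial obstacle; the only point demanding care is confirming that the $\alphabetvpl$-partition tolerates inverse roles as independent call/return symbols, which is immediate once inverses are admitted as members of $\Rlang$.
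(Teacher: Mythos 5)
Your proposal is correct and takes essentially the same route as the paper: the corollary is obtained by direct citation of G\"oller's result~\cite[Prop.~2.32]{Goller2008}, observing that $\{\roler^{n}\roles(\roler^-)^{n} \mid n \in \N\}$ is visibly-pushdown (with $\roler$ a call letter, $\roler^-$ a return letter, and $\roles$ internal), so that $\ALCreg$ extended with this single language embeds verbatim into $\ALCvpl$ with inverses and undecidability transfers by the identity reduction. Your explicit choice of partition and VPA construction merely spells out what the paper leaves implicit when it calls the language ``visibly-pushdown''.
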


The second feature allows for specifying containment of roles by means of expressions of the form for $\roler \subseteq \roles$.
The semantics is that $\interI \models \roler \subseteq \roles$ if and only if $\roler^{\interI} \subseteq \roles^{\interI}$.
Fix $\roler$ and $\roler'$ to be call letters, and $\roles$ and $\roles'$ to be return letters.
Suppose that an interpretation $\interI$ satisfies all of the statements $\roles \subseteq \roler'$, $\roler' \subseteq \roles$, $\roles' \subseteq \roler$, and $\roler \subseteq \roles'$.
Clearly, for all elements $\domelemd \in \DeltaI$ and concepts $\conceptC$ we have that $\domelemd \in (\exists{\langvpaeq{\roles}{\roler}}.\conceptC)^{\interI}$ if and only if $\domelemd \in (\exists{\langvpaeq{\roler'}{\roles'}}.\conceptC)^{\interI}$.
Thus $\ALCvpl$ with role-hierarchies can express concepts of $\ALCreg$ extended with both $\langvpaeq{\roler}{\roles}$ and $\langvpaeq{\roles}{\roler}$.
By undecidability of the concept satisfiability~\cite[Sec. Decidability]{TmimaP1983} of the~latter~we~conclude:

\begin{cor}
The concept satisfiability problem for $\ALCvpl$ with role-hierarchies is undecidable, even if the only allowed non-regular languages are $\langvpaeq{\roler}{\roles}$ and $\langvpaeq{\roler'}{\roles'}$ for fixed call letters~$\roler, \roler'$ and return letters~$\roles, \roles'$.
\end{cor}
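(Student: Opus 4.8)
The plan is to reduce from the concept satisfiability problem for $\ALCreg$ extended with the \emph{two} non-regular languages $\langvpaeq{\roler}{\roles}$ and $\langvpaeq{\roles}{\roler}$, which is undecidable by \cite[Thm.~3.2]{HarelPS81} (recalled in the introduction). The obstruction to invoking this result directly is that, under the partition fixing $\roler$ as a call letter and $\roles$ as a return letter, the language $\langvpaeq{\roles}{\roler} = \{\roles^n \roler^n \mid n \in \N\}$ is \emph{not} visibly-pushdown (see Example~\ref{example:languages-accepted-by-VPL}), so it is not a legal ingredient of an $\ALCvpl$-concept. The key idea is that role-hierarchies let us ``rename'' the two roles so that the offending language becomes visibly-pushdown under the renaming.

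Concretely, I would fix two \emph{fresh} role names $\roler'$ (declared a call letter) and $\roles'$ (declared a return letter) and consider the role hierarchy $\ercboxE \deff \{\roles \subseteq \roler',\ \roler' \subseteq \roles,\ \roles' \subseteq \roler,\ \roler \subseteq \roles'\}$. Since each pair of inclusions goes both ways, any model $\interI \models \ercboxE$ satisfies the \emph{equalities} ${\roler'}^{\interI} = \roles^{\interI}$ and ${\roles'}^{\interI} = \roler^{\interI}$. Given an input concept $\conceptC$ of the two-language logic, I produce an $\ALCvpl$-concept $\conceptC'$ by replacing every sub-concept $\exists{\langvpaeq{\roles}{\roler}}.\conceptD$ (and its universal dual) by $\exists{\langvpaeq{\roler'}{\roles'}}.\conceptD$, while leaving occurrences of $\langvpaeq{\roler}{\roles}$ and of the regular languages untouched. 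The resulting $\conceptC'$ is a genuine $\ALCvpl$-concept: both $\langvpaeq{\roler}{\roles}$ and $\langvpaeq{\roler'}{\roles'}$ are visibly-pushdown under the chosen partition (as is every regular language). The reduction outputs the pair $(\conceptC', \ercboxE)$ and asks whether $\conceptC'$ is satisfiable with respect to the role hierarchy $\ercboxE$.

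For correctness, the central observation is that in any $\interI \models \ercboxE$ the equalities ${\roler'}^{\interI} = \roles^{\interI}$ and ${\roles'}^{\interI} = \roler^{\interI}$ make $\langvpaeq{\roler'}{\roles'}$-reachability coincide, element-for-element, with $\langvpaeq{\roles}{\roler}$-reachability: a path witnessing $(\roler')^n(\roles')^n$ is literally a path witnessing $\roles^n \roler^n$, and conversely. A routine induction on concept structure then yields $\conceptC^{\interI} = (\conceptC')^{\interI}$ for every such $\interI$ (the only non-trivial case being the rewritten restriction). For the forward direction, from a model $\interI$ of $\conceptC$ in the two-language logic I build $\interJ$ by keeping $\interI$ and additionally setting ${\roler'}^{\interJ} \deff \roles^{\interI}$ and ${\roles'}^{\interJ} \deff \roler^{\interI}$; then $\interJ \models \ercboxE$ and, by the observation above, $\interJ \models \conceptC'$. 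For the backward direction, the reduct of any model of $\conceptC'$ and $\ercboxE$ to the original signature is a model of $\conceptC$. This gives the reduction, and hence undecidability, with the only non-regular languages used being $\langvpaeq{\roler}{\roles}$ and $\langvpaeq{\roler'}{\roles'}$, exactly as claimed.

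Since the reduction is merely a syntactic rewriting together with four role inclusions, there is no real computational difficulty. The substantive point --- which is also the conceptual heart of the corollary --- is the very first one: recognising that role inclusions can emulate role \emph{equality}, and thereby convert the non-VPL language $\langvpaeq{\roles}{\roler}$ into the VPL $\langvpaeq{\roler'}{\roles'}$, circumventing precisely the partition constraint that makes $\ALCvpl$ decidable in the first place.
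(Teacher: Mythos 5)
Your proposal is correct and follows essentially the same route as the paper: the paper fixes call letters $\roler, \roler'$ and return letters $\roles, \roles'$, imposes exactly the same four inclusions $\roles \subseteq \roler'$, $\roler' \subseteq \roles$, $\roles' \subseteq \roler$, $\roler \subseteq \roles'$ to force ${\roler'}^{\interI} = \roles^{\interI}$ and ${\roles'}^{\interI} = \roler^{\interI}$, observes that $\langvpaeq{\roles}{\roler}$-reachability then coincides with $\langvpaeq{\roler'}{\roles'}$-reachability, and concludes by the known undecidability of $\ALCreg$ extended with both $\langvpaeq{\roler}{\roles}$ and $\langvpaeq{\roles}{\roler}$. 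Your write-up merely spells out the correctness argument (the structural induction and the two model constructions) that the paper leaves implicit.
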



\section{Negative results I:\@ The Seemingly innocent \texorpdfstring{$\Self$}{Self} operator}\label{sec:self}

We start our series of negative results, by showing (in our opinion) a rather surprising undecidability result. 
Henceforth we employ the $\Self$ operator, a modelling feature supported by two profiles of the OWL 2 Web Ontology Language~\cite{HorrocksKS06,KRH2008}.
Recall that the $\Self$ operator allows us to specify the situation when an element is related to it\emph{self} by a binary relationship, \ie we interpret the concept $\exists{\roler}.\Self$ in an interpretation $\interI$ as the set of all those elements $\domelemd$ for which $(\domelemd,\domelemd)$ belongs to $\roler^{\interI}$. 
We provide a reduction from the undecidable problem of non-emptiness of the intersection of deterministic one-counter automata (DOCA)~\cite[p. 75]{valiant1973decision}. 
Such an automaton model is similar to pushdown automata, but its stack alphabet is single-letter only.
The $\Self$ operator will be especially useful to introduce ``disjunction''~to~paths.

Let $\alphabet$ be an alphabet and $\tapewordw \deff (\lettera_1, \letterb_1) \ldots (\lettera_{n}, \letterb_{n})$ be a word over $\alphabet \times \{ c, r, i\}$.
We call the word $\proj_1(\tapewordw) \deff \lettera_1 \ldots \lettera_{n}$ from $\alphabet^*$ the \emph{projection} of $\tapewordw$.
An important property of DOCAs is that they can be made visibly pushdown in the following sense.

\begin{lem}\label{lemma:from-cfl-to-vpl}
For any deterministic one-counter automaton $\automatonA$ over the alphabet $\alphabet$, 
we can construct a visibly-pushdown automaton $\tilde{\automatonA}$ over a pushdown alphabet $\tilde{\alphabet} \deff (\alphabet \times \{ c \}, (\alphabet \times \{ i \}) \cup \{ \letterx \}, \alphabet \times \{ r \})$ with a fresh internal letter $\letterx$ such that all words in $\languageL(\tilde{\automatonA})$ are of the form $\tilde{\lettera_1} \letterx \tilde{\lettera_2} \letterx \ldots \letterx \tilde{\lettera_n}$ for $\tilde{\lettera_1}, \ldots, \tilde{\lettera_n} \in \alphabet \times \{ c, i, r\}$, and
\[
  \languageL(\automatonA) = \lbrace \proj_1(\tilde{\tapewordw}) \mid \tilde{\tapewordw} \deff \tilde{\lettera_1}\ldots\tilde{\lettera_n}, \; \; \tilde{\lettera_1}\letterx\ldots\letterx\tilde{\lettera_n} \in \languageL(\tilde{\automatonA}) \rbrace. 
\]
\end{lem}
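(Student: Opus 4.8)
The plan is to simulate $\automatonA$ step by step while making the stack discipline \emph{input-driven}, which is the entire point of passing to a VPA. The obstacle is structural: in a one-counter automaton the choice whether to increment, decrement, or leave the counter untouched depends on the current control state and on whether the counter is currently zero, yet a VPA must decide to push, pop, or stay solely from the letter it reads. The remedy is to push that decision into the alphabet. Every letter $\lettera \in \alphabet$ is replaced by one of its three tagged copies $(\lettera, c)$, $(\lettera, i)$, $(\lettera, r)$, and we stipulate that a call letter $(\lettera, c)$ \emph{always} pushes one symbol (a counter increment), an internal letter $(\lettera, i)$ \emph{always} leaves the stack untouched (counter unchanged), and a return letter $(\lettera, r)$ \emph{always} pops (a counter decrement). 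With this convention the stack operation is dictated by the letter alone, exactly as the definition of a VPA demands, and the projection $\proj_1$ simply forgets the tag.

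First I would put $\automatonA$ into a normal form in which every transition performs at most one counter operation and counter zero-tests occur only on decrement attempts. I model the single-letter counter by a stack over $\{\bullet, \botofstack\}$, the number of $\bullet$'s being the counter value, so that ``counter is zero'' becomes synonymous with ``the stack equals $\botofstack$''. A non-modifying zero-test is then realised by a return letter hitting $\botofstack$ (case (ii) of the return clause, which keeps the stack equal to $\botofstack$), whereas a genuine decrement is a return letter hitting $\bullet$ (case (i)); since the two cases can be routed to different control states, a single return tag faithfully encodes ``decrement if positive, test-zero if empty''. The states of $\tilde{\automatonA}$ are copies of those of $\automatonA$, its initial and final states are inherited, and a transition on $(\lettera, o)$ performs the stack operation prescribed by $o$ and updates the control state according to $\automatonA$; crucially, such a transition is \emph{present only when} $o$ matches the operation $\automatonA$ would actually perform, so inconsistent tags lead nowhere.

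The fresh separator $\letterx$, an internal letter, is threaded between consecutive tagged letters and read as a pure no-op state change that never touches the stack. The control structure of $\tilde{\automatonA}$ forces the strict alternation ``tagged letter, $\letterx$, tagged letter, $\dots$'', so that \emph{every} accepted word has the announced shape $\tilde{\lettera_1}\letterx\ldots\letterx\tilde{\lettera_n}$; beyond this formatting role, the separator furnishes a dedicated internal position between two genuine moves, which the subsequent $\Self$-based encoding will exploit to inject ``disjunction'' into paths.

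For correctness I would establish both inclusions of the displayed equality through the \emph{determinism} of $\automatonA$. Given $\tapewordw = \lettera_1 \ldots \lettera_n \in \languageL(\automatonA)$, the run is unique, hence the induced sequence of counter operations $o_1, \ldots, o_n$ is unique; tagging $\lettera_j$ by $o_j$ and interleaving separators yields exactly one word $\tilde{\lettera_1}\letterx\ldots\letterx\tilde{\lettera_n} \in \languageL(\tilde{\automatonA})$ whose projection is $\tapewordw$. Conversely, any word accepted by $\tilde{\automatonA}$ carries, by construction, tags consistent with a legal run of $\automatonA$ on its projection, so the projection lies in $\languageL(\automatonA)$. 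The delicate point, where I expect the bookkeeping to concentrate, is the zero-test handling: one must verify that the return-on-$\botofstack$ mechanism lets one return tag encode both a true decrement and a non-modifying zero-test, while never allowing an illegal decrement of an empty counter to pass as a valid move, so that the tagging stays in exact correspondence with the runs of $\automatonA$.
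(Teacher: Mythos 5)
Your overall route is the same as the paper's: first obtain, from the DOCA, a VPA over the tagged alphabet $\alphabet \times \{c,i,r\}$ whose accepted words project onto $\languageL(\automatonA)$, and then interleave the internal separator $\letterx$ by splitting transitions and forcing alternation in the control. The paper outsources the first step to Alur--Madhusudan (Thm.~5.2 of the cited work) and only details the second; you instead inline a direct construction of the first step, and that is exactly where your proposal has a gap.

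The gap is the normal form you assume without proof: ``every transition performs at most one counter operation and counter zero-tests occur only on decrement attempts.'' In Valiant's model a DOCA transition reads the top of the stack, so the machine may branch on whether the counter is zero while \emph{incrementing} or while \emph{leaving the counter unchanged}, not only when attempting a decrement. Such moves are precisely what your construction cannot simulate: in a VPA, call transitions push without seeing the top of the stack and internal transitions never touch the stack at all, so with stack alphabet $\{\bullet,\botofstack\}$ the control of $\tilde{\automatonA}$ has no way of knowing the sign of the counter, and your return-on-$\botofstack$ trick only rescues a zero-test that is packaged as a failed pop. The normal form is in fact attainable, but establishing it is the real content of the lemma: for instance, one can let the control maintain a bounded offset $d$ with the invariant ``actual counter $=$ stack height $+\,d$'', so that sign tests are performed only when $d=0$ and then only as genuine decrement attempts (raising $d$ on the positive branch), after which the sign is determined by $d$ alone; alternatively, one can drop the restriction to a single non-bottom stack symbol and cache ``counter is zero'' in the control by pushing a marked symbol for the bottom-most counter unit. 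Either argument (or simply citing Alur--Madhusudan, as the paper does) is needed to turn your sketch into a proof; as written, the step ``put $\automatonA$ into the normal form'' is the step that would fail for DOCAs with sign-dependent non-decrementing moves.
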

\begin{proof}[Proof sketch.]
  Alur and Madhusudan proved~\cite[Thm. 5.2]{alur2009} that for any context-free language $\languageL$ over~$\alphabet$ there exists a VPL $\hat{\languageL}$, over the pushdown alphabet $(\alphabet \times \{ c \}, (\alphabet \times \{ i \}), \alphabet \times \{ r \})$, for which $\languageL = \{ \proj_1(\tapewordw) \mid \tapewordw \in \hat{\languageL} \}$ holds.\footnote{The main proof idea here is to take an input DOCA, and decorate the letters on its transitions with $c$, $i$, and $r$, depending on the counter action of the transition.} 
  Suppose now that a one-counter automaton $\automatonA$ is given. 
  By means of the previous construction, we obtain a visibly-pushdown automaton~$\hat{\automatonA} \deff (\statesQ, \initialstates, \finalstates, \stackalphabet, \transreldelta)$ for which $\languageL(\automatonA) = \{ \proj_1(\tapewordw) \mid \tapewordw \in \languageL(\hat{\automatonA})\}$ holds. 
  What remains to be done is to ``insert'' the \emph{internal} letter $\letterx$ after every position of a word accepted by $\hat{\automatonA}$. 
  As reading internal letters by visibly-pushdown automata do not affect the content of their stacks, we may proceed as in standard constructions from the theory of regular languages~\cite[Ex. 1.31]{sipser13}.
  As the first step, we expand the set of states $\statesQ$ with fresh states of the form~$\stateq_\delta$ for all $\delta \in \transreldelta$.
  As the second step, we ``split'' every transition $\delta$ in $\transreldelta$ into two ``parts''.
  Suppose that $\delta$ leads from $\stateq$ to $\stateq'$ after reading the letter $\lettera$.
  We thus (i) replace $\delta$ in $\transreldelta$ with the transition that transforms $\stateq$ into~$\stateq_\delta$ after reading $\lettera$ (and has the same effect on the stack as $\delta$ has), and (ii) append the transition $(\stateq_\delta, \letterx, \stateq')$ to $\transreldelta$.
  Call the resulting automaton~$\tilde{\automatonA}$.
  It can now be readily verified that $\languageL(\tilde{\automatonA}) = \{ \hat{\lettera_1}\letterx\hat{\lettera_2}\letterx\ldots\letterx\hat{\lettera_n} \mid \hat{\tapewordw} \deff \hat{\lettera_1}\ldots\hat{\lettera_n}, \ \hat{\tapewordw} \in \languageL(\hat{\automatonA}) \}$ holds,
  and thus, by the relationship between $\automatonA$ and~$\hat{\automatonA}$, the automaton $\tilde{\automatonA}$ is as desired.
\end{proof}

Let us fix a finite alphabet $\alphabet \subseteq \Rlang$.
We also fix two deterministic one-counter automata~$\automatonA_1$ and~$\automatonA_2$ over $\alphabet$, and let~$\automatonC_1$ and~$\automatonC_2$ be  deterministic one-counter automata recognizing the complement of the languages of $\automatonA_1$ and~$\automatonA_2$ (they exist as DOCA are closed under complement~\cite[p.~76]{valiant1973decision}). 
Finally, we apply Lemma~\ref{lemma:from-cfl-to-vpl} to construct their visibly-pushdown counterparts $\tilde{\automatonA}_1$, $\tilde{\automatonA}_2$, $\tilde{\automatonC}_1$, $\tilde{\automatonC}_2$ over \emph{the same} pushdown alphabet $\tilde{\alphabet}$. 
We stress that the letter $\letterx$, playing the role of a ``separator'', is identical for all of the aforementioned visibly-pushdown automata.
Moreover, note that the non-emptiness of $\languageL(\tilde{\automatonA}_1) \cap \languageL(\tilde{\automatonA}_2)$ is not equivalent to the non-emptiness of $\languageL(\automatonA_1) \cap \languageL(\automatonA_1)$, as the projection of a letter $\lettera \in \tilde{\alphabet}$ may be used by $\automatonA_1$ and $\automatonA_2$ in different contexts (\eg both as a call or as a return).

We are going to encode words accepted by one-counter automata by means of word-like interpretations.
A pointed interpretation $(\interI, \domelemd)$ is \emph{$\alphabet$-friendly} if 
for every element $\domeleme \in \DeltaI$ that is $\letterx^*$-reachable from $\domelemd$ in $\interI$ there exists a unique letter $\lettera \in \alphabet$ so that $\domeleme$ carries all $\tilde{\lettera}$-self-loops for all $\tilde{\lettera} \in \tilde{\alphabet}$ with~$\proj_1(\tilde{\lettera}) = \lettera$, and no self-loops for all other letters in $\tilde{\alphabet}$ (also including $\letterx$).

Observe that $\alphabet$-friendly interpretations can be axiomatised with an $\ALCSelf$-concept~$\conceptC_{\textrm{fr}}^{\alphabet}$:
\[
  \conceptC_{\textrm{fr}}^{\alphabet} \deff \forall{\letterx^*}.\bigdlor_{\lettera \in \alphabet}\ \bigdland_{\letterb \neq \lettera, \letterb \in \alphabet, \proj_1(\tilde{\lettera}) = \lettera, \proj_1(\tilde{\letterb}) = \letterb} \ \Big( [\exists{\tilde{\lettera}}.\Self] \dland \neg [\exists{\tilde{\letterb}}.\Self] \dland \neg [\exists{\letterx}.\Self] \Big).
\]
Moreover, every $\letterx^*$-path $\pathrho$ in a $\alphabet$-friendly $(\interI, \domelemd)$ \emph{represents} a word in $\alphabet^*$ in the following sense: the $i$-th letter of such a word is $\lettera$ if and only if the $i$-th element of the path carries an $(\lettera,c)$-self-loop.
This is well-defined, by the fact that every $\letterx^*$-reachable element in $\alphabet$-friendly~$(\interI, \domelemd)$ carries a $(\lettera,c)$-self-loop for a unique letter $\lettera \in \alphabet$. Consult Figure~\ref{fig:sigma-friendly-interpretation}~for~a~visualization.

\begin{figure}[h]
  \centering 
 \begin{tikzpicture}[transform shape]
       \draw (-0.5, 0) node[] (A0) {$\domelemd^{\interI}$};
      \draw (0, 0) node[medrond] (A1) {};
      \draw (2.5, 0) node[medrond] (A2) {};
      \draw (5, 0) node[medrond] (A3) {};
      \draw (7.5, 0) node[medrond] (A4) {};
      \draw (10, 0) node[medrond] (A5) {};

  \path[->, >=stealth] (A1) edge [loop above] node {$(\lettera, c), (\lettera, r)$} ();
  \path[->, >=stealth] (A1) edge [loop below] node {$(\lettera, i)$} ();

  \path[->, >=stealth] (A2) edge [loop above] node {$(\letterb, c), (\letterb, r)$} ();
  \path[->, >=stealth] (A2) edge [loop below] node {$(\letterb, i)$} ();

  \path[->, >=stealth] (A3) edge [loop above] node {$(\letterb, c), (\letterb, r)$} ();
  \path[->, >=stealth] (A3) edge [loop below] node {$(\letterb, i)$} ();

    \path[->, >=stealth] (A4) edge [loop above] node {$(\lettera, c), (\lettera, r)$} ();
  \path[->, >=stealth] (A4) edge [loop below] node {$(\lettera, i)$} ();

    \path[->, >=stealth] (A5) edge [loop above] node {$(\letterc, c), (\letterc, r)$} ();
  \path[->, >=stealth] (A5) edge [loop below] node {$(\letterc, i)$} ();

      \path[->] (A1) edge [] node[yshift=3] {$\letterx$} (A2);
      \path[->] (A2) edge [] node[yshift=3] {$\letterx$} (A3);
      \path[->] (A3) edge [] node[yshift=3] {$\letterx$} (A4);
      \path[->] (A4) edge [] node[yshift=3] {$\letterx$} (A5);

  \end{tikzpicture}    
    \caption{An example $\alphabet$-friendly $(\interI, \domelemd)$ encoding the word $\tapeword{abbac}$.}
      \label{fig:sigma-friendly-interpretation}
\end{figure}

As a special class of $\alphabet$-friendly interpretations we consider $\alphabet$-metawords.
We say that $(\interI, \domelemd)$ is a \emph{$\alphabet$-metaword} if it is a $\alphabet$-friendly interpretation of the domain $\ZZ_n$ for some positive~$n \in \N$, the  role name~$\letterx$ is interpreted as the set $\{ (i,i{+}1) \mid 0 \leq i \leq n{-}2 \}$, and all other role names are either interpreted as~$\emptyset$ or are subsets of the diagonal $\{ (i,i) \mid i \in \ZZ_n\}$ (or, put differently, they appear only as self-loops).
The example $\alphabet$-friendly $(\interI, \domelemd)$ from Figure~\ref{fig:sigma-friendly-interpretation} is actually a $\alphabet$-metaword.
Note that for every word $\tapewordw \in \alphabet^+$ there is a $\alphabet$-metaword representing~$\tapewordw$.
A~crucial observation regarding $\alphabet$-metawords is as follows.

\begin{obs}\label{obs:meta-word-and-paths}
Let $\ell \in \{ 1, 2\}$, $\alphabet$-metaword $(\interI, \domelemd)$, and $\tilde{\tapewordw}$ be in the language of $ \tilde{\automatonA_\ell}$. 
Then the element $\domelemd$ can $\{ \tilde{\tapewordw} \}$-reach an element $\domeleme$ via a path $\pathrho$ if and only if for all odd indices $i$ we have $\pathrho_i = \pathrho_{i{+}1}$ and for all even indices $i$ we have $\pathrho_i + 1 = \pathrho_{i{+}1}$.
\end{obs}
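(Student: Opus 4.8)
The plan is to unfold the definition of an $\{\tilde{\tapewordw}\}$-path and read off the two geometric constraints forced by the two kinds of letters occurring in $\tilde{\tapewordw}$. First I would invoke Lemma~\ref{lemma:from-cfl-to-vpl}: every word accepted by $\tilde{\automatonA_\ell}$ has the alternating shape $\tilde{\tapewordw} = \tilde{\lettera_1}\letterx\tilde{\lettera_2}\letterx\cdots\letterx\tilde{\lettera_m}$ with $\tilde{\lettera_1},\dots,\tilde{\lettera_m} \in \alphabet\times\{c,i,r\}$. Hence $\tilde{\tapewordw}$ has odd length $2m{-}1$, its $i$-th letter being the separator $\letterx$ for even $i$ and a ``genuine'' letter $\tilde{\lettera}_{(i+1)/2}$ for odd $i$; correspondingly a path tracing it has $2m$ vertices $\pathrho_1,\dots,\pathrho_{2m}$. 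Since no test letter $\conceptC?$ occurs, being a $\{\tilde{\tapewordw}\}$-path is, by definition, nothing but the conjunction of the membership conditions $(\pathrho_i,\pathrho_{i+1})\in(\tilde{\tapewordw}_i)^{\interI}$ for $i=1,\dots,2m{-}1$.

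The core of the argument is then the left-to-right direction, which I would settle using only the metaword semantics: $\letterx^{\interI}=\{(j,j{+}1)\mid 0\le j\le n{-}2\}$, while every other role name---in particular each genuine letter---is interpreted inside the diagonal of $\ZZ_n$. So if $\pathrho$ is a $\{\tilde{\tapewordw}\}$-path, then at every even index $i$ the membership $(\pathrho_i,\pathrho_{i+1})\in\letterx^{\interI}$ forces $\pathrho_{i+1}=\pathrho_i+1$, and at every odd index $i$ the fact that $(\tilde{\tapewordw}_i)^{\interI}$ is diagonal forces $\pathrho_i=\pathrho_{i+1}$. These are exactly the two asserted conditions, and this is the substantive content: the rigid alternation of successor-steps and self-loops pins any witnessing path into the ``staircase'' $\domelemd,\domelemd,\domelemd{+}1,\domelemd{+}1,\dots$, so that in particular $\domeleme=\pathrho_{2m}=\domelemd+(m{-}1)$.

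For the converse I would run the same computation in reverse. Given a $\pathrho$ meeting the two shape constraints, each even step is automatically a legal $\letterx$-edge---from $\pathrho_{i+1}=\pathrho_i+1\in\ZZ_n$ one gets $\pathrho_i\le n{-}2$, whence $(\pathrho_i,\pathrho_{i+1})\in\letterx^{\interI}$---while each odd step is stationary and therefore realises $\tilde{\tapewordw}_i$ exactly when $\pathrho_i$ carries the self-loop named by the genuine letter $\tilde{\tapewordw}_i$. I expect this last self-loop check, rather than either geometric implication, to be the only delicate point of the proof, and it is precisely where the $\alphabet$-friendliness of the metaword enters: each of the elements $\domelemd,\domelemd{+}1,\dots$ visited along the staircase is $\letterx^*$-reachable from $\domelemd$ and hence carries a uniquely determined self-loop family, so that one only has to track that these families agree with the letters $\tilde{\lettera_1},\dots,\tilde{\lettera_m}$ read off along the path. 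Discharging this bookkeeping closes the equivalence.
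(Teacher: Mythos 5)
The paper never proves this statement at all---it is asserted as an immediate observation---so the only real question is whether your blind argument is sound. Your left-to-right direction is correct and captures the actual content: by Lemma~\ref{lemma:from-cfl-to-vpl} the word $\tilde{\tapewordw}$ alternates genuine letters with $\letterx$, and in a metaword every role other than $\letterx$ is interpreted inside the diagonal while $\letterx$ is exactly the successor relation, so any witnessing path is pinned to the staircase shape, ending at $\domeleme = \domelemd + (m{-}1)$.

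Your converse, however, has a genuine gap, and it sits precisely at the point you yourself flagged as delicate and then waved away. For an odd index $i$, the pair $(\pathrho_i,\pathrho_{i+1})=(\pathrho_i,\pathrho_i)$ lies in $(\tilde{\tapewordw}_i)^{\interI}$ only if $\pathrho_i$ carries the self-loop of that \emph{particular} decorated letter. $\alphabet$-friendliness guarantees that each visited element carries the self-loops of some unique letter of $\alphabet$---namely the letter dictated by the word the metaword represents---but nothing whatsoever connects that word to $\tilde{\tapewordw}$, which is an arbitrary member of $\languageL(\tilde{\automatonA}_\ell)$. So the ``bookkeeping'' you appeal to cannot be discharged: if the metaword encodes $\tapeword{abbac}$ as in Figure~\ref{fig:sigma-friendly-interpretation} (so $\domelemd = 0$ carries only $\lettera$-decorated self-loops) and $\languageL(\automatonA_\ell)$ contains $\letterc\letterc$, then $\languageL(\tilde{\automatonA}_\ell)$ contains some $\tilde{\tapewordw} = (\letterc,\star_1)\,\letterx\,(\letterc,\star_2)$, and the path $0,0,1$ satisfies both shape conditions yet is not a $\{ \tilde{\tapewordw} \}$-path. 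Read literally, the right-to-left implication is false; it holds only under the additional hypothesis that $\proj_1$ applied to the genuine letters of $\tilde{\tapewordw}$ spells exactly the letters carried by $\pathrho_1,\pathrho_3,\pathrho_5,\ldots$ That hypothesis is present in the paper's sole use of this direction---in Lemma~\ref{lemma:sigma-friendly} the staircase path $\pathrho_1\pathrho_1\pathrho_2\pathrho_2\ldots$ is built from a word $\tapewordu$ whose projection is, by construction, the word represented by the traversed elements---so a correct write-up must either add that hypothesis to the converse or prove only the forward implication and handle the backward usage the way the paper implicitly does.
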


As the next step of the construction, we are going to decorate $\alphabet$-friendly interpretations with extra information on whether or not words represented by paths are accepted by $\automatonA_1$. 
This is achieved by means of the following~concept 
\[
\conceptC_{\automatonA_1} \deff \conceptC_{\textrm{fr}}^{\alphabet} \dland \forall{\languageL(\tilde{\automatonA}_1)}.\concept{Acc}_{\automatonA_1} \dland \forall{\languageL(\tilde{\automatonC}_1)}.\neg\concept{Acc}_{\automatonA_1}, 
\]
for a fresh concept name $\concept{Acc}_{\automatonA_1}$. 
We define the concept $\conceptC_{\automatonA_2}$ analogously.
We have that:
\begin{lem}\label{lemma:sigma-friendly}
Fix $\ell \in \{ 1, 2\}$.
If $\conceptC_{\automatonA_\ell}$ is satisfied by a pointed interpretation~$(\interI, \domelemd)$,
then $(\interI, \domelemd)$ is $\alphabet$-friendly and for every element $\domeleme \in \DeltaI$ that is $\letterx^*$-reachable from~$\domelemd$ via a path $\pathrho$ we have~$\domeleme \in (\concept{Acc}_{\automatonA_\ell})^{\interI}$ if and only if the $\alphabet$-word represented by $\pathrho$ belongs to $\languageL(\automatonA_\ell)$.
Moreover, after reinterpreting the concept name $\concept{Acc}_{\automatonA_\ell}$, every $\alphabet$-metaword becomes a model of~$\conceptC_{\automatonA_\ell}$.
\end{lem}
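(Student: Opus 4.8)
The plan is to split $\conceptC_{\automatonA_\ell}$ into its three conjuncts and dispatch them in turn. Its first conjunct $\conceptC_{\textrm{fr}}^{\alphabet}$ is exactly the concept that axiomatises $\alphabet$-friendliness, so any $(\interI, \domelemd) \models \conceptC_{\automatonA_\ell}$ already satisfies $\conceptC_{\textrm{fr}}^{\alphabet}$ and is therefore $\alphabet$-friendly; this disposes of the first assertion immediately. The remaining work concerns the two universal conjuncts $\forall{\languageL(\tilde{\automatonA}_\ell)}.\concept{Acc}_{\automatonA_\ell}$ and $\forall{\languageL(\tilde{\automatonC}_\ell)}.\neg\concept{Acc}_{\automatonA_\ell}$, and boils down to a single reachability observation about how words of the shape supplied by Lemma~\ref{lemma:from-cfl-to-vpl} can be traced through an $\alphabet$-friendly interpretation.

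Fix an element $\domeleme$ that is $\letterx^*$-reachable from $\domelemd$ via $\pathrho$ and let $\tapewordw$ be the word represented by $\pathrho$. Recall from Lemma~\ref{lemma:from-cfl-to-vpl} that each word of $\languageL(\tilde{\automatonA}_\ell)$ has the separated shape $\tilde{\lettera_1}\letterx\tilde{\lettera_2}\letterx\ldots\letterx\tilde{\lettera_n}$ with every $\tilde{\lettera_j} \in \alphabet \times \{c,i,r\}$. Given any such word $\tilde{\tapewordw}$ with $\proj_1(\tilde{\tapewordw}) = \tapewordw$, I would trace it through $\interI$ by walking along $\pathrho$ and reading, at its $j$-th element, the self-loop labelled $\tilde{\lettera_j}$ followed (unless $j = n$) by the genuine $\letterx$-edge of $\pathrho$; the self-loop is available precisely because $(\interI, \domelemd)$ is $\alphabet$-friendly and $\proj_1(\tilde{\lettera_j})$ coincides with the $j$-th letter of $\tapewordw$. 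Consequently $\domeleme$ is reachable from $\domelemd$ by some $\languageL(\tilde{\automatonA}_\ell)$-path whenever a word of $\languageL(\tilde{\automatonA}_\ell)$ projects to $\tapewordw$, which by Lemma~\ref{lemma:from-cfl-to-vpl} happens exactly when $\tapewordw \in \languageL(\automatonA_\ell)$; the same reasoning with $\tilde{\automatonC}_\ell$ makes $\domeleme$ reachable by a $\languageL(\tilde{\automatonC}_\ell)$-path exactly when $\tapewordw$ lies in the complement $\languageL(\automatonC_\ell)$ of $\languageL(\automatonA_\ell)$. Plugging this into the two universal conjuncts yields both directions: if $\tapewordw \in \languageL(\automatonA_\ell)$ then $\forall{\languageL(\tilde{\automatonA}_\ell)}.\concept{Acc}_{\automatonA_\ell}$ forces $\domeleme \in (\concept{Acc}_{\automatonA_\ell})^{\interI}$, and otherwise $\tapewordw \in \languageL(\automatonC_\ell)$ and $\forall{\languageL(\tilde{\automatonC}_\ell)}.\neg\concept{Acc}_{\automatonA_\ell}$ forces $\domeleme \notin (\concept{Acc}_{\automatonA_\ell})^{\interI}$. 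Since the two cases are exhaustive and disjoint (DOCA are closed under complement), the claimed biconditional follows.

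For the final assertion I would take an arbitrary $\alphabet$-metaword $(\interI, \domelemd)$ of domain $\ZZ_n$ and reinterpret $\concept{Acc}_{\automatonA_\ell}$ to contain exactly those positions that are $\letterx^*$-reachable from $\domelemd$ and whose unique $\letterx^*$-path from $\domelemd$ represents a word in $\languageL(\automatonA_\ell)$ (positions unreachable from $\domelemd$ are irrelevant to the conjuncts evaluated at $\domelemd$ and may be assigned arbitrarily). Metawords are $\alphabet$-friendly, so $\conceptC_{\textrm{fr}}^{\alphabet}$ stays satisfied. For the acceptance conjuncts I would use Observation~\ref{obs:meta-word-and-paths}: since in a metaword the roles other than $\letterx$ appear only as diagonal self-loops, every $\languageL(\tilde{\automatonA}_\ell)$-path starting at $\domelemd$ is forced into the self-loop/$\letterx$-edge alternation described there and hence ends at a position whose represented word is the projection of the traced word, which therefore belongs to $\languageL(\automatonA_\ell)$; by construction that position lies in $\concept{Acc}_{\automatonA_\ell}$, establishing $\forall{\languageL(\tilde{\automatonA}_\ell)}.\concept{Acc}_{\automatonA_\ell}$. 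The symmetric argument with $\tilde{\automatonC}_\ell$ lands at a position whose word is in $\languageL(\automatonC_\ell)$, which by construction lies outside $\concept{Acc}_{\automatonA_\ell}$, establishing $\forall{\languageL(\tilde{\automatonC}_\ell)}.\neg\concept{Acc}_{\automatonA_\ell}$; so the reinterpreted metaword models $\conceptC_{\automatonA_\ell}$.

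The step I expect to demand the most care is the tracing argument at the core of the second paragraph: one must check that the separated form of the words in $\languageL(\tilde{\automatonA}_\ell)$ lets each be threaded along $\pathrho$ using precisely the self-loops that $\alphabet$-friendliness guarantees, so that the word projected off the trace is exactly $\tapewordw$ and the endpoint is exactly $\domeleme$. Once this lock-step between automaton letters and the positions of $\pathrho$ is secured, each half of the biconditional is just an application of the matching universal conjunct, and the metaword reinterpretation reduces to Observation~\ref{obs:meta-word-and-paths}.
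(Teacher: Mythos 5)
Your proposal is correct and follows essentially the same route as the paper's proof: friendliness from the first conjunct, the biconditional via threading a separated word $\tilde{\lettera_1}\letterx\ldots\letterx\tilde{\lettera_n}$ along $\pathrho$ as an alternation of self-loops and $\letterx$-edges (the paper's path $\pathrho_1\pathrho_1\pathrho_2\pathrho_2\ldots\pathrho_n\pathrho_n$), using the complement automaton $\tilde{\automatonC}_\ell$ for the negative case, and the prefix-acceptance reinterpretation of $\concept{Acc}_{\automatonA_\ell}$ for metawords. Your case split (word in $\languageL(\automatonA_\ell)$ versus its complement, then invoking exhaustiveness and disjointness) is just a reorganisation of the paper's two cases, one of which it phrases as an argument by contradiction.
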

\begin{proof}
  The proof relies on Observation~\ref{obs:meta-word-and-paths}.
  Suppose that $(\interI, \domelemd)$ is a pointed interpretation and $\domelemd \in (\conceptC_{\automatonA_\ell})^{\interI}$.
  We know that $(\interI, \domelemd)$ is $\alphabet$-friendly by the satisfaction of $\conceptC_{\textrm{fr}}^{\alphabet} $. 
  For the remainder of the proof, consider any element $\domeleme \in \DeltaI$ that is $\letterx^*$-reachable from~$\domelemd$, say, via a path $\pathrho \deff \pathrho_1 \ldots \pathrho_n$. 
  Let $\tapewordw$ be the word represented by $\pathrho$.
  This implies, that for every index $i$, the element $\pathrho_i$ in $\interI$ is equipped with a family of self-loops involving (a decorated) letter $\tapewordw_i$.
  We consider two cases:
  \begin{itemize}\itemsep0em
    \item Assume that $\domeleme \in (\concept{Acc}_{\automatonA_\ell})^{\interI}$. 
    We will show that $\tapewordw \in \languageL(\automatonA_\ell)$. 
    Ad absurdum, suppose that $\tapewordw \not\in \languageL(\automatonA_\ell)$.
    Then, by definition of $\automatonC_\ell$, we have that $\tapewordw$ belongs to $\languageL(\automatonC_\ell)$.
    By the construction of $\tilde{\automatonC_\ell}$ there exists a sequence~$\star_1, \ldots, \star_{n} \in \{ c, i, r \}$, for which the word $\tapewordu \deff (\tapewordw_1, \star_1) \letterx \ldots \letterx (\tapewordw_{n}, \star_{n})$ is accepted by $\tilde{\automatonC_\ell}$.
    But then the path $\pathrho' \deff \pathrho_1 \pathrho_1 \pathrho_2 \pathrho_2 \ldots \pathrho_n \pathrho_n$ witnesses $\{ \tapewordu \}$-reachability (and thus $\languageL(\tilde{\automatonC_\ell})$-reachability) of $\domeleme$ from $\domelemd$.
    Due to the satisfaction of $\forall{\languageL(\tilde{\automatonC}_\ell)}.\neg\concept{Acc}_{\automatonA_\ell}$ by $(\interI, \domelemd)$ we infer $\domeleme \in (\neg\concept{Acc}_{\automatonA_\ell})^{\interI} = \DeltaI \setminus (\concept{Acc}_{\automatonA_\ell})^{\interI}$. 
    A contradiction.
    \item Assume that $\tapewordw \in \languageL(\automatonA_\ell)$. 
    We proceed analogously to the previous case.
    By the construction of $\tilde{\automatonA_\ell}$ there exists a sequence~$\star_1, \ldots, \star_{n} \in \{ c, i, r \}$, for which the word $\tapewordu \deff (\tapewordw_1, \star_1) \letterx \ldots \letterx (\tapewordw_{n}, \star_{n})$ is accepted by $\tilde{\automatonA_\ell}$.
    Once again, the path $\pathrho' \deff \pathrho_1 \pathrho_1 \pathrho_2 \pathrho_2 \ldots \pathrho_n \pathrho_n$ witnesses $\{ \tapewordu \}$-reachability (and thus $\languageL(\tilde{\automatonA_\ell})$-reachability) of $\domeleme$ from $\domelemd$.
    Due to the satisfaction of $\forall{\languageL(\tilde{\automatonA}_\ell)}.\concept{Acc}_{\automatonA_\ell}$ by $(\interI, \domelemd)$ we infer $\domeleme \in (\concept{Acc}_{\automatonA_\ell})^{\interI}$, as desired. 
  \end{itemize}
  For the last statement of the proof, take a $\alphabet$-metaword $\interI$ that represents a word $\tapewordw \in \alphabet^*$.
  We alter the interpretation of the concept name $\concept{Acc}_{\automatonA_\ell}$ in $\interI$ so that $(\concept{Acc}_{\automatonA_\ell}^{\interI}) = \{ i{-}1 \mid \tapewordw_1\ldots\tapewordw_i \in \languageL(\automatonA_\ell) \}$.
  It follows that  $\forall{\languageL(\tilde{\automatonA}_\ell)}.\concept{Acc}_{\automatonA_\ell} \dland \forall{\languageL(\tilde{\automatonC}_\ell)}.\neg\concept{Acc}_{\automatonA_\ell}$ is indeed satisfied by~$(\interI, 0)$.
\end{proof}

Equipped with Lemma~\ref{lemma:sigma-friendly}, we are ready to prove correctness of our reduction.
\begin{lem}\label{lemma:sat-iff-intersection-non-empty}
$\conceptC_{\automatonA_1} \dland \conceptC_{\automatonA_2} \dland \exists{\letterx^*}.\left( \concept{Acc}_{\automatonA_1} {\dland} \concept{Acc}_{\automatonA_2} \right)$ is satisfiable iff $\languageL(\automatonA_1) \cap \languageL(\automatonA_2) \neq \emptyset$.
\end{lem}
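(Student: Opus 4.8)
The plan is to prove the biconditional by treating the two implications separately, with Lemma~\ref{lemma:sigma-friendly} doing essentially all the work in each direction.

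For the left-to-right implication, suppose $(\interI, \domelemd)$ is a pointed model of the whole conjunction. Then $\domelemd$ satisfies both $\conceptC_{\automatonA_1}$ and $\conceptC_{\automatonA_2}$, and the third conjunct $\exists{\letterx^*}.(\concept{Acc}_{\automatonA_1} \dland \concept{Acc}_{\automatonA_2})$ supplies an element $\domeleme$ that is $\letterx^*$-reachable from $\domelemd$ via some path $\pathrho$ with $\domeleme \in (\concept{Acc}_{\automatonA_1})^{\interI} \cap (\concept{Acc}_{\automatonA_2})^{\interI}$. Since each $\conceptC_{\automatonA_\ell}$ forces $(\interI,\domelemd)$ to be $\alphabet$-friendly, this single $\letterx^*$-path $\pathrho$ represents a well-defined word $\tapewordw \in \alphabet^*$. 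Now I would invoke Lemma~\ref{lemma:sigma-friendly} for $\ell = 1$ applied to this very element $\domeleme$ and path $\pathrho$: from $\domeleme \in (\concept{Acc}_{\automatonA_1})^{\interI}$ it gives $\tapewordw \in \languageL(\automatonA_1)$; the same lemma for $\ell = 2$ gives $\tapewordw \in \languageL(\automatonA_2)$. Hence $\tapewordw \in \languageL(\automatonA_1) \cap \languageL(\automatonA_2)$, so the intersection is non-empty. (Note that the degenerate case where $\pathrho$ is a single vertex causes no trouble: it simply yields $\varepsilon$ in the intersection.)

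For the right-to-left implication, pick any $\tapewordw \in \languageL(\automatonA_1) \cap \languageL(\automatonA_2)$, say of length $n \geq 1$, and take the $\alphabet$-metaword $(\interI, 0)$ on domain $\ZZ_n$ that represents $\tapewordw$. By the final sentence of Lemma~\ref{lemma:sigma-friendly}, I can reinterpret $\concept{Acc}_{\automatonA_1}$ as $\{\, i{-}1 \mid \tapewordw_1\ldots\tapewordw_i \in \languageL(\automatonA_1)\,\}$ so that $(\interI,0) \models \conceptC_{\automatonA_1}$, and independently reinterpret the \emph{distinct} concept name $\concept{Acc}_{\automatonA_2}$ in the analogous way so that simultaneously $(\interI,0) \models \conceptC_{\automatonA_2}$. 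For the third conjunct, observe that the element $n{-}1$ is $\letterx^*$-reachable from $0$ along the path $0,1,\dots,n{-}1$, which has $n$ vertices and therefore represents exactly the full word $\tapewordw$ (and not a proper prefix). Taking $i = n$ in the definitions above and using $\tapewordw \in \languageL(\automatonA_1) \cap \languageL(\automatonA_2)$, we obtain $n{-}1 \in (\concept{Acc}_{\automatonA_1})^{\interI} \cap (\concept{Acc}_{\automatonA_2})^{\interI}$, whence $(\interI,0) \models \exists{\letterx^*}.(\concept{Acc}_{\automatonA_1} \dland \concept{Acc}_{\automatonA_2})$. Thus $(\interI,0)$ witnesses satisfiability of the whole conjunction.

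The genuinely delicate point — and the only step that needs more than bookkeeping — is the \emph{simultaneous} reinterpretation of $\concept{Acc}_{\automatonA_1}$ and $\concept{Acc}_{\automatonA_2}$ in the backward direction: I must argue that the two modifications do not interfere. This is where it matters that $\concept{Acc}_{\automatonA_1}$ and $\concept{Acc}_{\automatonA_2}$ are syntactically distinct concept names and that Lemma~\ref{lemma:sigma-friendly} alters only the interpretation of the relevant $\concept{Acc}$ concept while leaving the underlying $\alphabet$-metaword (roles, $\letterx$, and all other concepts) untouched; hence performing both reinterpretations on the same metaword yields one interpretation modelling both $\conceptC_{\automatonA_1}$ and $\conceptC_{\automatonA_2}$ at once. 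The remaining routine obligations are the index alignment ensuring the path $0,\dots,n{-}1$ represents $\tapewordw$ itself, and the harmless empty-word case (which, should $\varepsilon$ lie in the intersection, is decided directly and does not affect the undecidability of the reduction), so assuming $n \geq 1$ is without loss of generality.
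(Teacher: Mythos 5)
Your proof is correct and takes essentially the same route as the paper's: both directions hinge on Lemma~\ref{lemma:sigma-friendly}, reading the word off the witnessing $\letterx^*$-path for the left-to-right implication and, for the converse, decorating an $\alphabet$-metaword with both $\concept{Acc}_{\automatonA_1}$ and $\concept{Acc}_{\automatonA_2}$ simultaneously (the non-interference point you elaborate is exactly what the paper uses implicitly, since the two concept names are distinct). One cosmetic remark: under the paper's conventions a single-vertex $\letterx^*$-path represents a one-letter word rather than $\varepsilon$, so your degenerate case yields a length-one word in the intersection; this changes nothing, and the genuine $\varepsilon$ edge case you flag is glossed over by the paper's own proof as well.
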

\begin{proof}
For one direction, take $\tapewordw \in \languageL(\automatonA_1) \cap \languageL(\automatonA_2)$.
Let $(\interI, \domelemd)$ be a $\alphabet$-metaword representing~$\tapewordw$.
By Lemma~\ref{lemma:sigma-friendly} we decorate $\interI$ with concepts $\concept{Acc}_{\automatonA_1}$ and $\concept{Acc}_{\automatonA_2}$ so that $(\interI, \domelemd) \models \conceptC_{\automatonA_1} \dland \conceptC_{\automatonA_2}$, and the interpretations of concepts $\concept{Acc}_{\automatonA_\ell}$ contain precisely the elements $k$ for which the $k$-letter prefix of $\tapewordw$ belongs to $\languageL(\automatonA_\ell)$.
In particular, this means that $(|\tapewordw|{-}1) \in (\concept{Acc}_{\automatonA_1} \dland \concept{Acc}_{\automatonA_2})^{\interI}$.
As $(|\tapewordw|{-}1)$ is $\letterx^*$-reachable from $0$, we conclude the satisfaction of $\exists{\letterx^*}.\left( \concept{Acc}_{\automatonA_1} \dland \concept{Acc}_{\automatonA_2} \right)$ by~$(\interI, 0)$. 
Hence, the concept from the statement of Lemma~\ref{lemma:sat-iff-intersection-non-empty} is indeed~satisfiable.

For the other direction, assume that $(\interI, \domelemd)$ is a model of $\conceptC_{\automatonA_1} \dland \conceptC_{\automatonA_2} \dland \exists{\letterx^*}.\left( \concept{Acc}_{\automatonA_1} \dland \concept{Acc}_{\automatonA_2} \right)$. 
Then there exists an $\letterx^*$-path $\pathrho$ from $\domelemd$ to some $\domeleme \in (\concept{Acc}_{\automatonA_1} \dland \concept{Acc}_{\automatonA_2})^{\interI}$.
Hence, by Lemma~\ref{lemma:sigma-friendly}, for all~$\ell \in \{1,2\}$ the word represented by $\pathrho$ belongs to $\languageL(\automatonA_\ell)$, and thus also $\languageL(\automatonA_1) \cap \languageL(\automatonA_2)$.
\end{proof}

By the undecidability of the non-emptiness problem for intersection of one-counter languages~\cite[p. 75]{valiant1973decision}, we conclude Theorem~\ref{thm:ALCSelfreg-undec-with-VPL}. 
\begin{thm}\label{thm:ALCSelfreg-undec-with-VPL}
The concept satisfiability problem for $\ALCSelfvpl$ is undecidable, even if only visibly-pushdown languages that are encodings of DOCA languages are allowed in concepts.
\end{thm}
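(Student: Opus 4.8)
The plan is to finish the reduction whose correctness was already established in Lemma~\ref{lemma:sat-iff-intersection-non-empty}, transferring undecidability from the non-emptiness problem for intersections of deterministic one-counter automata (DOCA), which is undecidable by~\cite[p.~75]{valiant1973decision}. Since all the conceptual work is done in the preceding lemmas, the argument here is purely one of assembling the pieces and checking that the reduction is effective and respects the restriction in the theorem's ``even if'' clause.

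Concretely, I would proceed as follows. Given an instance of the DOCA-intersection problem, i.e.\ two deterministic one-counter automata $\automatonA_1$ and $\automatonA_2$ over a common alphabet $\alphabet \subseteq \Rlang$, first compute the complement automata $\automatonC_1, \automatonC_2$ (available since DOCA are effectively closed under complement) and then apply Lemma~\ref{lemma:from-cfl-to-vpl} to obtain the four visibly-pushdown automata $\tilde{\automatonA}_1, \tilde{\automatonA}_2, \tilde{\automatonC}_1, \tilde{\automatonC}_2$ over the single shared pushdown alphabet $\tilde{\alphabet}$, all using the same separator letter $\letterx$. From these I would assemble the $\ALCSelfvpl$-concept
\[
  \conceptC_{\automatonA_1} \dland \conceptC_{\automatonA_2} \dland \exists{\letterx^*}.\left( \concept{Acc}_{\automatonA_1} \dland \concept{Acc}_{\automatonA_2} \right),
\]
exactly as in the statement of Lemma~\ref{lemma:sat-iff-intersection-non-empty}. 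This is a legitimate $\ALCSelfvpl$-concept: the subconcepts $\conceptC_{\textrm{fr}}^{\alphabet}$ hidden inside $\conceptC_{\automatonA_\ell}$ use only the $\Self$ operator and the regular language $\letterx^*$, while the universal restrictions guarding $\concept{Acc}_{\automatonA_\ell}$ employ only the visibly-pushdown languages $\languageL(\tilde{\automatonA}_\ell)$ and $\languageL(\tilde{\automatonC}_\ell)$, each of which is by construction an encoding of a DOCA language. Lemma~\ref{lemma:sat-iff-intersection-non-empty} then yields that this concept is satisfiable if and only if $\languageL(\automatonA_1) \cap \languageL(\automatonA_2) \neq \emptyset$, so a decision procedure for $\ALCSelfvpl$-concept satisfiability would decide the DOCA-intersection problem, a contradiction.

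The only points that require a moment's care are that the whole transformation is computable --- which is clear, since complementation, the VPA construction of Lemma~\ref{lemma:from-cfl-to-vpl}, and the syntactic assembly of the concept are all effective --- and that the languages appearing in the concept meet the theorem's restriction, namely that they are VPL encodings of DOCA languages; this is precisely what Lemma~\ref{lemma:from-cfl-to-vpl} guarantees. I do not expect any genuine obstacle at this stage: the real difficulty was already overcome earlier, in using the $\Self$ operator to inject ``disjunction'' into paths (so that a single $\tilde{\alphabet}$-word can be read under every choice of call/internal/return annotation) and in the reachability bookkeeping of Lemma~\ref{lemma:sigma-friendly}, both of which are what make the predicate $\concept{Acc}_{\automatonA_\ell}$ faithfully track membership in $\languageL(\automatonA_\ell)$ along $\letterx^*$-paths.
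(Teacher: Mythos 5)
Your proposal is correct and takes essentially the same route as the paper: the paper derives Theorem~\ref{thm:ALCSelfreg-undec-with-VPL} in exactly this way, as an immediate consequence of Lemma~\ref{lemma:sat-iff-intersection-non-empty} combined with the undecidability of non-emptiness of intersections of DOCA languages~\cite[p.~75]{valiant1973decision}. The extra details you supply --- effectiveness of complementation and of the construction in Lemma~\ref{lemma:from-cfl-to-vpl}, and the check that $\languageL(\tilde{\automatonA}_\ell)$ and $\languageL(\tilde{\automatonC}_\ell)$ satisfy the theorem's ``encodings of DOCA languages'' restriction --- are precisely the routine points the paper leaves implicit.
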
%

There is nothing special about deterministic one-counter automata used in the proof.
In fact, any automaton model would satisfy our needs as long as it would (i) have an undecidable non-emptiness problem for the intersection of languages, (ii) enjoy the analogue of Lemma~\ref{lemma:from-cfl-to-vpl}, and (iii) be closed under complement.
We leave it is an open problem to see if there exists a \emph{single} visibly-pushdown language $\languageL$ that makes the concept satisfiability of $\ALCSelf_{\mathsf{reg}}$ extended with $\languageL$ undecidable.
For instance, the decidability status of $\ALCregrhashshash$ with $\Self$~is~open.


\section{Negative results II:\@ Nominals meet \texorpdfstring{$\langvpaeq{\roler}{\roles}$}{rnsn}}\label{sec:nominals}

We next provide an undecidability proof for the concept satisfiability problem for~$\ALCOregrhashshash$.
To~achieve this, we employ a slight variant of the classical domino tiling problem~\cite{Wang1961ProvingTB}.

A \emph{domino tiling system} is a triple $\tilingsys \deff (\tilesCol, \tiles, \tileswhite)$,
where $\tilesCol$ is a finite set of \emph{colours}, 
$\tiles \subseteq \tilesCol^4$ is a set of $4$-sided \emph{tiles},
and $\tileswhite \in \tilesCol$ is a distinguished colour called \emph{white}.
For brevity, we call a tile $(\colour_l, \colour_d, \colour_r, \colour_u) \in \tiles$ (i) \emph{left-border} if $\colour_l = \whiteBox$, (ii) \emph{down-border} if $\colour_d = \whiteBox$, (iii) \emph{right-border} if $\colour_r = \whiteBox$, and (iv) \emph{up-border} if $\colour_u = \whiteBox$.
We also say that tiles $\tile \deff (\colour_l, \colour_d, \colour_r, \colour_u)$ and $\tile' \deff (\colour_l', \colour_d', \colour_r', \colour_u')$ from $\tiles$ are (i) \emph{$\rmH$-compatible} if $\colour_r = \colour_l'$, and (ii) \emph{$\rmV$-compatible} if $\colour_u = \colour_d'$.
We say that~$\tilingsys$ \emph{covers} $\ZZ_n \times \ZZ_m$ (where $n$ and $m$ are positive integers) if there exists a mapping $\tilesmap \colon \ZZ_n \times \ZZ_m \to \tiles$ such that for all pairs $(x, y) \in \ZZ_n \times \ZZ_m$ with $\tilesmap(x,y) \deff (\colour_l, \colour_d, \colour_r, \colour_u)$ the following conditions are satisfied:
  \begin{description}\itemsep0em
    \item[\desclabel{(TBorders)}{tiles:Borders}] $x = 0$ iff $\colour_l = \tileswhite$; $x = n{-}1$ iff $\colour_r = \tileswhite$; $y = 0$ iff $\colour_d = \tileswhite$; $y = m{-}1$ iff $\colour_u = \tileswhite$;
    \item[\desclabel{(THori)}{tiles:Hori}] If $(x{+}1,y) \in \ZZ_n \times \ZZ_m$ then $\tilesmap(x,y)$ and $\tilesmap(x{+}1, y)$ are $\rmH$-compatible.
    \item[\desclabel{(TVerti)}{tiles:Verti}] If $(x,y{+}1) \in \ZZ_n \times \ZZ_m$ then $\tilesmap(x,y)$ and $\tilesmap(x, y{+}1)$ are $\rmV$-compatible.
  \end{description} 
Intuitively, $\tilesmap : \ZZ_n \times \ZZ_m$ can be seen as a white-bordered rectangle of size $n \times m$ coloured by unit $4$-sided tiles (with coordinates corresponding to the left, down, right, and upper colour) from $\tiles$, where sides of tiles of consecutive squares have matching colours.

\begin{exa}\label{ex:visualization-of-tilesmap}
Suppose that $\tilesCol = \{ \cyanBox, \rougeBox, \whiteBox, \limeBox \}$ and $\tiles = \tilesCol^4$.
Then the map $\tilesmap \deff \{ 
  (0,0) \mapsto \intextwang{white}{white}{cyan}{rouge},
  (1,0) \mapsto \intextwang{cyan}{white}{lime}{rouge},
  (2,0) \mapsto \intextwang{lime}{white}{rouge}{rouge}, 
  (3,0) \mapsto \intextwang{rouge}{white}{white}{cyan},
  (0,1) \mapsto \intextwang{white}{rouge}{cyan}{lime},
  (1,1) \mapsto \intextwang{cyan}{rouge}{cyan}{cyan},
  (2,1) \mapsto \intextwang{cyan}{rouge}{rouge}{cyan},
  (3,1) \mapsto \intextwang{rouge}{cyan}{white}{lime},
  (0,2) \mapsto \intextwang{white}{lime}{cyan}{white},
  (1,2) \mapsto \intextwang{cyan}{cyan}{cyan}{white},
  (2,2) \mapsto \intextwang{cyan}{cyan}{lime}{white},
  (3,2) \mapsto \intextwang{lime}{lime}{white}{white} 
  \}$ covers $\ZZ_4 \times \ZZ_3$, and can be visualised as follows.
  \begin{figure}[h]
  \centering
    \begin{tikzpicture}[transform shape]

   \wang{0}{0}{white}{white}{cyan}{rouge} 
   \wang{1}{0}{cyan}{white}{lime}{rouge} 
   \wang{2}{0}{lime}{white}{rouge}{rouge} 
   \wang{3}{0}{rouge}{white}{white}{cyan}

   \wang{0}{1}{white}{rouge}{cyan}{lime} 
   \wang{1}{1}{cyan}{rouge}{cyan}{cyan} 
   \wang{2}{1}{cyan}{rouge}{rouge}{cyan} 
   \wang{3}{1}{rouge}{cyan}{white}{lime} 

   \wang{0}{2}{white}{lime}{cyan}{white} 
   \wang{1}{2}{cyan}{cyan}{cyan}{white} 
   \wang{2}{2}{cyan}{cyan}{lime}{white} 
   \wang{3}{2}{lime}{lime}{white}{white} 
    \draw (0.5, -0.2) node[label=center:\( 0 \)] (X) {\phantom{0}};
    \draw (1.5, -0.2) node[label=center:\( 1 \)] (X) {\phantom{0}};
    \draw (2.5, -0.2) node[label=center:\( 2 \)] (X) {\phantom{0}};
    \draw (3.5, -0.2) node[label=center:\( 3 \)] (X) {\phantom{0}};
    \draw (-0.2, 0.5) node[label=center:\( 0 \)] (X) {\phantom{0}};
    \draw (-0.2, 1.5) node[label=center:\( 1 \)] (X) {\phantom{0}};
    \draw (-0.2, 2.5) node[label=center:\( 2 \)] (X) {\phantom{0}};


          \draw (6, 0+0.5) node[medrond] (A1) {};
        \draw (7.5, 0+0.5) node[medrond] (A2) {};
        \draw (9, 0+0.5) node[medrond] (A3) {};
        \draw (10.5, 0+0.5) node[medrond] (A4) {};
        \draw (6, 1+0.5) node[medrond] (B1) {};
        \draw (7.5, 1+0.5) node[medrond] (B2) {};
        \draw (9, 1+0.5) node[medrond] (B3) {};
        \draw (10.5, 1+0.5) node[medrond] (B4) {};
        \draw (6, 2+0.5) node[medrond] (C1) {};
        \draw (7.5, 2+0.5) node[medrond] (C2) {};
        \draw (9, 2+0.5) node[medrond] (C3) {};
        \draw (10.5, 2+0.5) node[medrond] (C4) {};
        \path[->] (A1) edge [red] node[yshift=3] {$\roler$} (A2);
        \path[->] (A2) edge [red] node[yshift=3] {$\roler$} (A3);
        \path[->] (A3) edge [red] node[yshift=3, xshift=-8] {$\roler$} (A4);
        \path[->] (A4) edge [red] node[yshift=3] {$\roler$} (B1);
        \path[->] (B1) edge [red] node[yshift=3] {$\roler$} (B2);
        \path[->] (B2) edge [red] node[yshift=3] {$\roler$} (B3);
        \path[->] (B3) edge [red] node[yshift=3, xshift=-8] {$\roler$} (B4);
        \path[->] (B4) edge [red] node[yshift=3] {$\roler$} (C1);
        \path[->] (C1) edge [red] node[yshift=3] {$\roler$} (C2);
        \path[->] (C2) edge [red] node[yshift=3] {$\roler$} (C3);
        \path[->] (C3) edge [red] node[yshift=3] {$\roler$} (C4);

         \smallwang{6-0.125}{-0.125+0.5}{white}{white}{cyan}{rouge} 
         \smallwang{7.5-0.125}{-0.125+0.5}{cyan}{white}{lime}{rouge} 
         \smallwang{9-0.125}{-0.125+0.5}{lime}{white}{rouge}{rouge} 
         \smallwang{10.5-0.125}{-0.125+0.5}{rouge}{white}{white}{cyan}

         \smallwang{6-0.125}{1-0.125+0.5}{white}{rouge}{cyan}{lime} 
         \smallwang{7.5-0.125}{1-0.125+0.5}{cyan}{rouge}{cyan}{cyan} 
         \smallwang{9-0.125}{1-0.125+0.5}{cyan}{rouge}{rouge}{cyan} 
         \smallwang{10.5-0.125}{1-0.125+0.5}{rouge}{cyan}{white}{lime} 

         \smallwang{6-0.125}{2-0.125+0.5}{white}{lime}{cyan}{white} 
         \smallwang{7.5-0.125}{2-0.125+0.5}{cyan}{cyan}{cyan}{white} 
         \smallwang{9-0.125}{2-0.125+0.5}{cyan}{cyan}{lime}{white} 
         \smallwang{10.5-0.125}{2-0.125+0.5}{lime}{lime}{white}{white}

        \node[below=0.1em of A1] {$\indvsld^{\interI}$};
        \node[below=0.1em of A4] {$\indvsrd^{\interI}$};
        \node[above=0.1em of C4] {$\indvsru^{\interI}$}; 
        \node[above=0.1em of C1] {$\indvslu^{\interI}$}; 

    \end{tikzpicture}
  \end{figure}
\end{exa}

W.l.o.g. we assume that $\tiles$ does not contain tiles having more than $2$ white sides. 
A~system $\tilingsys$ is \emph{solvable} if there exist positive integers $n, m \in \N$ for which $\tilingsys$ covers $\ZZ_n \times \ZZ_m$.
The problem of deciding whether an input domino tiling system is solvable is undecidable, which can be shown by a minor modification of classical undecidability proofs~\cite[Lemma~3.9]{PrattHartmann23}\cite{van1997convenience}.
For a domino tiling system $\tilingsys \deff (\tilesCol, \tiles, \tileswhite)$ we employ fresh concept names from $\concepttilepath^{\tiles} \deff \{ \conceptC_\tile \mid \tile \in \tiles \}$ to encode mappings $\tilesmap$ from some $\ZZ_n \times \ZZ_m$ to $\tiles$ in interpretations~$\interI$ as certain $\roler^+$-paths $\pathrho$ from $\indvsld^{\interI}$ to $\indvsru^{\interI}$ passing through $\indvsrd^{\interI}$ and $\indvslu^{\interI}$  (where the individual names from $\namessnakeT \deff \{ \indvsld, \indvsrd, \indvslu, \indvsru \}$ are fresh). 
Consult the figure in Example~\ref{ex:visualization-of-tilesmap}. 

\begin{defi}\label{def:snake}
Consider a domino tiling system $\tilingsys \deff (\tilesCol, \tiles, \tileswhite)$.
An interpretation $\interI$ is a $\tilingsys$-\emph{snake} whenever all seven criteria listed below are fulfilled:
\begin{description}\itemsep0em
    \item[\desclabel{(SPath)}{snake:Path}] There is an $\roler^+$-path $\pathrho$ that starts in $\indvsld^{\interI}$, then passes through $\indvsrd^{\interI}$, then passes through $\indvslu^{\interI}$ and finishes in $\indvsru^{\interI}$.
    More formally, there are indices $1 < i < j < |\pathrho|$ such that $\pathrho_1 = \indvsld^{\interI}$,  $\pathrho_i = \indvsrd^{\interI}$, $\pathrho_j = \indvslu^{\interI}$ and~$\pathrho_{|\pathrho|} = \indvsru^{\interI}$.
    \item[\desclabel{(SNoLoop)}{snake:NoLoop}] No $\namessnakeT$-named element can $\roler^+$-reach itself.
    \item[\desclabel{(SUniqTil)}{snake:UniqTil}] For every element $\domelemd$ that is $\roler^*$-reachable from $\indvsld^{\interI}$ there exists precisely one tile $\tile \in \tiles$ such that $\domelemd \in \conceptC_\tile^{\interI}$ (we say that $\domelemd$ is \emph{labelled} by a tile $\tile$ or that $\domelemd$ \emph{carries} $\tile$).
    \item[\desclabel{(SSpecTil)}{snake:SpecTil}] The $\namessnakeT$-named elements are unique elements $\roler^*$-reachable from $\indvsld^{\interI}$ that are labelled by tiles with two white sides.
    Moreover, we have that 
    (a) $\indvsld^{\interI}$ carries a tile that is left-border and down-border,
    (b) $\indvsrd^{\interI}$ carries a tile that is right-border and down-border,
    (c) $\indvslu^{\interI}$ carries a tile that is left-border and up-border,
    (d) $\indvsru^{\interI}$ carries a tile that is right-border and up-border.
    \item[\desclabel{(SHori)}{snake:Hori}] For all elements $\domelemd$ different from $\indvsru^{\interI}$ that are $\roler^*$-reachable from $\indvsld^{\interI}$ and labelled by some tile $\tile \deff (\colour_l, \colour_d, \colour_r, \colour_u)$, there exists a tile $\tile' \deff (\colour_l', \colour_d', \colour_r', \colour_u')$ for which all $\roler$-successors $\domeleme$ of $\domelemd$ carry the tile $\tile'$ and: (i) $\tile, \tile'$ are $\rmH$-compatible,
    (ii) if $\colour_d = \whiteBox$ then ($\colour_r \neq \whiteBox$ iff $\colour_d' = \whiteBox$), 
    and (iii) if $\colour_u = \whiteBox$ then~$\colour_u' = \whiteBox$.
    \item[\desclabel{(SLen)}{snake:Len}] There exists a unique positive integer $\rmN$ such that all $\roler^+$-paths between $\indvsld^{\interI}$ and $\indvsrd^{\interI}$ are of length $\rmN{-}1$. Moreover, $\indvsrd^{\interI}$ is the only element $\roler^{\rmN{-}1}$-reachable from $\indvsld^{\interI}$.
    \item[\desclabel{(SVerti)}{snake:Verti}] For all elements $\domelemd$ that are $\roler^*$-reachable from $\indvsld^{\interI}$ and labelled by some $\tile \in \tiles$ that is not up-border, we have that (a) there exists a tile $\tile' \in \tiles$ such that all elements $\domeleme$ $\roler^{\rmN}$-reachable (for $\rmN$ guaranteed by~$\mathrm{\ref{snake:Len}}$) from $\domelemd$ carry $\tile'$, (b) $\tile$ and $\tile'$ are $\rmV$-compatible, (c) $\tile$ is left-border (resp. right-border) if and only if $\tile'$ is.
\end{description}%
\noindent Note that tiles are not ``deterministic'' in the following sense: it could happen that two elements carry the same tile but tiles of their (horizontal or vertical) successors do not~coincide.
\end{defi}

If $\interI$ satisfies all but the last two conditions, we call it a $\tilingsys$-\emph{pseudosnake}.
The key properties of our encoding are extracted and established in Lemmas \ref{lemma:from-tiling-to-snake}--\ref{lemma:from-snakes-to-tilling}.
\begin{lem}\label{lemma:from-tiling-to-snake}
  If a domino tiling system $\tilingsys$ is solvable then there exists a $\tilingsys$-snake. 
\end{lem}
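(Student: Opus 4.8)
The plan is to exhibit a concrete $\tilingsys$-snake built from a solution of $\tilingsys$ and then check the seven defining conditions. Since $\tilingsys$ is solvable, fix positive integers $n,m$ and a map $\tilesmap\colon \ZZ_n\times\ZZ_m\to\tiles$ witnessing that $\tilingsys$ covers $\ZZ_n\times\ZZ_m$. A useful preliminary observation is that the standing assumption ``no tile has more than two white sides'' forces $n,m\ge 2$: if $n=1$ then by (TBorders) the cell $(0,0)$ is left-, right- and down-border simultaneously, so $\tilesmap(0,0)$ would be white on at least three sides, a contradiction; the case $m=1$ is symmetric. In particular the four corners $(0,0),(n{-}1,0),(0,m{-}1),(n{-}1,m{-}1)$ are pairwise distinct, which I will need for (SPath).

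Next I would define $\interI$ by reading the grid row by row, left to right (the path depicted in Example~\ref{ex:visualization-of-tilesmap}). Take $\DeltaI\deff\ZZ_{nm}$, put $\roler^{\interI}\deff\{(i,i{+}1)\mid 0\le i\le nm{-}2\}$ and interpret every other role name as $\emptyset$; place element $i$ into $\conceptC_{\tile}^{\interI}$ exactly when $\tile=\tilesmap(i\bmod n,\lfloor i/n\rfloor)$, and interpret every remaining concept name as $\emptyset$; finally set $\indvsld^{\interI}\deff 0$, $\indvsrd^{\interI}\deff n{-}1$, $\indvslu^{\interI}\deff (m{-}1)n$ and $\indvsru^{\interI}\deff nm{-}1$. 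Under this encoding $\roler$ is a single simple directed path $0\to 1\to\dots\to nm{-}1$, so reachability becomes transparent: every element is $\roler^*$-reachable from $0$, the unique $\roler$-successor of $i$ is $i{+}1$, and for every $k$ the only element $\roler^{k}$-reachable from $i$ is $i{+}k$ whenever $i{+}k\le nm{-}1$.

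The conditions (SPath), (SNoLoop), (SLen) (with witness $\rmN\deff n$), (SUniqTil) and (SSpecTil) then fall out immediately: acyclicity of the path gives (SNoLoop) and the uniqueness clauses of (SLen), a single tile per element gives (SUniqTil), and (TBorders) identifies the tiles with exactly two white sides as precisely the four corners, with the correct border types, yielding (SSpecTil) and the existence of the required indices in (SPath). The substance of the proof lies in (SHori) and (SVerti), where the two-step pattern of the snake must be handled. For (SHori), fix $i\neq nm{-}1$ labelled $\tile=\tilesmap(x,y)$; its unique $\roler$-successor $i{+}1$ is either the horizontal grid-neighbour $(x{+}1,y)$ (when $x<n{-}1$) or the ``wrap'' cell $(0,y{+}1)$ (when $x=n{-}1$). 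I would check that the single disjunctive clause of (SHori) holds in both cases: $\rmH$-compatibility follows from (THori) in the first case and holds vacuously in the second, since the right colour of $\tilesmap(n{-}1,y)$ and the left colour of $\tilesmap(0,y{+}1)$ are both white by (TBorders); clause (ii) is verified by noting that for a down-border cell ($y=0$) the successor is again down-border iff we have not yet reached the right border ($x<n{-}1$); and (iii) holds because inside the top row ($y=m{-}1$) every successor before $\indvsru^{\interI}$ is again up-border. For (SVerti), a non-up-border cell has $y<m{-}1$, so $i{+}n$ exists and equals $(x,y{+}1)$; then $\rmV$-compatibility is exactly (TVerti), and the left/right-border equivalences hold because the first coordinate $x$ is shared by $i$ and $i{+}n$.

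The main obstacle is not a single hard argument but the bookkeeping around these wrap steps: the same clause of (SHori) that is intended for genuine horizontal neighbours must also be satisfied at the silent jump from the right end of one row to the left end of the next, and clauses (ii)/(iii) must be seen to faithfully track entering and leaving the border rows. Once the case split on $i\bmod n$ is in place, every subcase reduces to reading off (TBorders), (THori) and (TVerti), so the remaining verification is purely mechanical.
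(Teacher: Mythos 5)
Your construction is exactly the paper's: the same linear $\roler$-path on $\ZZ_{nm}$ with row-by-row labelling, tiles assigned via $\tilesmap(i\bmod n,\lfloor i/n\rfloor)$, and the four corners interpreting the nominals, so the proposal is correct and takes the same route. Your extra observation that $n,m\ge 2$ (needed for the corners to be distinct) and your explicit check of the wrap-around step are details the paper treats more tersely --- indeed the wrap case of (SHori) is the only case its proof spells out.
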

\begin{proof}
  Suppose that $\tilingsys$ covers~$\ZZ_n \times \ZZ_m$ and let $\tilesmap$ be a mapping witnessing it. 
  Define an interpretation $\interI$ as~follows: 
  \begin{enumerate}[(i)]\itemsep0em
  \item $\DeltaI \deff \ZZ_{n \cdot m}$, 
  \item $\indvsld^{\interI} \deff 0$, 
  $\indvsrd^{\interI} \deff n{-}1$, 
  $\indvslu^{\interI} \deff (m{-}1) \cdot n$, 
  $\indvsru^{\interI} \deff m \cdot n - 1$, 
  and $\indva^{\interI} \deff 0$ for all~$\indva \in \Ilang \setminus \namessnakeT$.
  \item $\conceptC_{\tile}^{\interI} \deff \{ (x + y \cdot n) \in \DeltaI \mid \tilesmap(x,y) = \tile \}$ for all $\tile \in \tiles$, and $\conceptC^{\interI} \deff \emptyset$ for $\conceptC \in \Clang \setminus \concepttilepath^{\tiles}$,
  \item $\roler^{\interI} \deff \{ (i, i{+}1) \mid i \in \ZZ_{n \cdot m - 1} \}$, and $(\roler')^{\interI} \deff \emptyset$ for all $\roler' \in \Rlang \setminus \{ \roler \}$.
  \end{enumerate}
  Thus $\interI$ is an $n \cdot m$ element $\roler^+$-path, labelled accordingly to $\tilesmap$.
  As $\tilesmap$ respects $\mathrm{\ref{tiles:Borders}}$, $\mathrm{\ref{tiles:Hori}}$ and $\mathrm{\ref{tiles:Verti}}$, we can readily verify that $\interI$ is indeed a $\tilingsys$-snake.
  The only case that requires treatment, is to verify the satisfaction of $\mathrm{\ref{snake:Hori}}$ for elements of the form $\domelemd \deff (n{-}1) + y \cdot n$ for some $y \in \N$.
  Then, by $\mathrm{\ref{tiles:Borders}}$, $\domelemd$ carries a right-border tile and its $\roler$-successor $\domelemd' \deff 0 + (y{+}1) \cdot n$ carries a left-border tile. 
  Hence, their tiles are $\rmH$-compatible.
\end{proof}

\begin{lem}\label{lemma:from-snakes-to-tilling}
  If there exists a $\tilingsys$-snake for a domino tiling system $\tilingsys$, then $\tilingsys$ is solvable. 
  \end{lem}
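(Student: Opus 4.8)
The plan is to recover a covering $\tilesmap\colon \ZZ_n \times \ZZ_m \to \tiles$ by reading tiles off a single $\roler$-path through the snake. First I would set $n := \rmN$ for the width guaranteed by $\mathrm{\ref{snake:Len}}$, fix the path $\pathrho$ from $\mathrm{\ref{snake:Path}}$ with $\pathrho_1 = \indvsld^{\interI}$, $\pathrho_i = \indvsrd^{\interI}$, $\pathrho_j = \indvslu^{\interI}$, $\pathrho_{|\pathrho|} = \indvsru^{\interI}$, and aim to prove $|\pathrho| = n{\cdot}m$ for some $m \geq 2$ (with $m\geq 2$ following from $j > i = n$). Once this is done, I would define $\tilesmap(x,y)$ to be the unique tile (unique by $\mathrm{\ref{snake:UniqTil}}$) carried by $\pathrho_{yn+x+1}$ and verify the three covering conditions.

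Second, I would nail down the bottom row. Since the initial segment of $\pathrho$ from $\pathrho_1$ to $\pathrho_i$ is an $\roler^+$-path between $\indvsld^{\interI}$ and $\indvsrd^{\interI}$, condition $\mathrm{\ref{snake:Len}}$ forces $i = n$. Starting from the left-and-down-border tile of $\indvsld^{\interI}$ (given by $\mathrm{\ref{snake:SpecTil}}$) and walking forward, clause~(ii) of $\mathrm{\ref{snake:Hori}}$ keeps the down-border colour white until the first right-border element is met; but a down-and-right-border element has two white sides, so by $\mathrm{\ref{snake:SpecTil}}$ it must be $\indvsrd^{\interI}$, which sits at index $n$. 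Hence $\pathrho_1, \ldots, \pathrho_n$ are exactly the down-border prefix, with $\indvsld^{\interI}$ the unique left-border and $\indvsrd^{\interI}$ the unique right-border element among them; this is row $0$, with $\pathrho_{x+1}$ placed in column $x$.

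Third --- and this is where the real work lies --- I would establish a column invariant and locate the top row. By induction along the path, using that $\pathrho_{t}$ is an $\roler^{n}$-successor of $\pathrho_{t-n}$ together with clause~(c) of $\mathrm{\ref{snake:Verti}}$ (which preserves left- and right-border status vertically, as long as the lower element is not up-border), I would show that $\pathrho_t$ is left-border iff $t \equiv 1 \pmod n$ and right-border iff $t \equiv 0 \pmod n$. Clause~(iii) of $\mathrm{\ref{snake:Hori}}$ makes the up-border elements a suffix of $\pathrho$, and the key point is to show this suffix is precisely the last block of $n$ elements. Since $\indvslu^{\interI}$ is left-border, its $\roler$-predecessor $\pathrho_{j-1}$ would be right-border, so if it were up-border it would be the unique $\mathrm{R}\&\mathrm{U}$ corner $\indvsru^{\interI}$; but then $\indvsru^{\interI}$ would $\roler^+$-reach itself, contradicting $\mathrm{\ref{snake:NoLoop}}$. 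Thus the suffix begins exactly at $\indvslu^{\interI}=\pathrho_j$ with $j \equiv 1 \pmod n$. Symmetrically, within this top block the unique $\mathrm{R}\&\mathrm{U}$ corner $\indvsru^{\interI}$ must occur at the column-$(n{-}1)$ position (a shorter block has no such column, and $\mathrm{\ref{snake:NoLoop}}$ forbids it from appearing before the block ends), so the suffix is a single aligned row and $|\pathrho| = n{\cdot}m$. Finally, combining ``up-border only in the top row'' with the vertical-compatibility clause~(b) of $\mathrm{\ref{snake:Verti}}$ shows that the down-border colour cannot reappear above row $0$.

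With the grid pinned down, the verification is routine: $\mathrm{\ref{tiles:Borders}}$ follows from the column invariant and from ``down-border iff row $0$'' and ``up-border iff row $m{-}1$''; $\mathrm{\ref{tiles:Hori}}$ follows from clause~(i) of $\mathrm{\ref{snake:Hori}}$ applied to the consecutive path elements $\pathrho_{yn+x+1}$ and $\pathrho_{yn+x+2}$; and $\mathrm{\ref{tiles:Verti}}$ follows from clauses~(a),(b) of $\mathrm{\ref{snake:Verti}}$ applied to $\pathrho_{yn+x+1}$ and its $\roler^{n}$-successor $\pathrho_{(y+1)n+x+1}$ for the interior rows. Thus $\tilingsys$ covers $\ZZ_n \times \ZZ_m$ and is solvable. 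I expect the main obstacle to be exactly the third step: a snake may branch, and the conditions constrain tiles only locally, so ruling out a misaligned or over-long up-border region requires carefully interleaving the uniqueness of the four corner tiles $\mathrm{\ref{snake:SpecTil}}$, the no-loop condition $\mathrm{\ref{snake:NoLoop}}$, and the forward/vertical propagation encoded in $\mathrm{\ref{snake:Hori}}$ and $\mathrm{\ref{snake:Verti}}$.
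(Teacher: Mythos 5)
Your proposal is correct and follows essentially the same route as the paper's proof: both fix the path from $\mathrm{\ref{snake:Path}}$, use $\mathrm{\ref{snake:Len}}$ to pin the row width $n = \rmN$, run an induction along the path that uses $\mathrm{\ref{snake:Hori}}$ for horizontal steps and clause~(c) of $\mathrm{\ref{snake:Verti}}$ for vertical propagation of left-/right-border status (with $\mathrm{\ref{snake:SpecTil}}$ and $\mathrm{\ref{snake:NoLoop}}$ pinning the four corners) to conclude $|\pathrho| = n \cdot m$, and then read off $\tilesmap(x,y)$ from $\pathrho_{yn+x+1}$ and verify the covering conditions exactly as the paper does. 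If anything, your explicit localisation of the up-border suffix is slightly more careful than the paper's induction, which invokes clause~(c) of $\mathrm{\ref{snake:Verti}}$ without spelling out why the elements it is applied to are not up-border.
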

\begin{proof}
  Suppose that $\interI$ is a $\tilingsys$-snake. 
  Let $\pathrho \deff \pathrho_1\ldots\pathrho_{|\pathrho|}$ be the path guaranteed by $\mathrm{\ref{snake:Path}}$ and let $\rmN$ be the integer guaranteed by~$\mathrm{\ref{snake:Len}}$.
  We show by induction that $|\pathrho|$ is divisible by $\rmN$. 
  The inductive assumption states that for all integers $k \in \N$ with $k \cdot \rmN \leq \len{\pathrho}$ we have:
  \begin{enumerate}[(i)]\itemsep0em
    \item $\pathrho_{k \cdot \rmN + 1}$ carries a left-border tile,
    \item There is no $2 \leq i < \rmN$ such that $\pathrho_{k \cdot \rmN + i}$ carries a left-border tile or a right-border tile,
    \item $\pathrho_{k \cdot \rmN + \rmN}$ carries a right-border tile.
  \end{enumerate}
  Then by Property (iii) and the fact that $\pathrho_{|\pathrho|}$ (equal to $\indvsru^{\interI}$ by~$\mathrm{\ref{snake:Path}}$) carries a right-border tile (by Property~(d) of~$\mathrm{\ref{snake:SpecTil}}$), we can conclude that $|\pathrho|$ is indeed divisible by $\rmN$.

  We heavily rely on the fact that every element of $\pathrho$ is labelled by precisely one tile, which is due to~$\mathrm{\ref{snake:UniqTil}}$.
  We start with the case of $k = 0$. 
  Then $\pathrho_{0 \cdot \rmN + 1} = \pathrho_1$ is equal to $\indvsld^{\interI}$, by $\mathrm{\ref{snake:Path}}$. 
  Moreover, $\pathrho_1$ is labelled with a left-border tile, by Property (a) of~$\mathrm{\ref{snake:SpecTil}}$.
  What is more, $\pathrho_{0 \cdot \rmN + \rmN} = \pathrho_\rmN$ is equal to $\indvsrd^{\interI}$ (by~$\mathrm{\ref{snake:Len}}$), which carries a right-border tile by Property (b) of~$\mathrm{\ref{snake:SpecTil}}$.
  This resolves Properties (i) and (iii).
  To establish Property~(ii), assume towards a contradiction that there is $i$ between $2$ and $\rmN$ for which $\pathrho_{i}$ carries a left-border tile (the proof for a right-border tile is analogous). 
  Take the smallest such $i$. 
  By $\mathrm{\ref{snake:Hori}}$ we infer that~$\pathrho_{(i{-}1)}$ carries a right-border tile. 
  In particular, this means that~$i > 2$ because the tile carried by
   $\pathrho_1$ is not right-border.
  By exhaustive application of $\mathrm{\ref{snake:Hori}}$ and the fact that the tile of $\pathrho_1$ is down-border, we deduce that the tile of~$\pathrho_{(i{-}1)}$ is also down-border.
  Hence, by Property (b) of~$\mathrm{\ref{snake:SpecTil}}$ we have that $\pathrho_{(i{-}1)}$ is equal to~$\indvsrd^{\interI}$.
  But then the path $\pathrho_{(i{-}1)}\pathrho_{i}\ldots\pathrho_{\rmN}$ witnesses $\roler^+$-reachability of~$\indvsrd^{\interI}$ from itself, which is forbidden by~$\mathrm{\ref{snake:NoLoop}}$. A~contradiction.
  For the inductive step, assume that Properties (i)--(iii) hold true for some~$k$, and consider the case of $k{+}1$.
  Note that Property (i) follows from Property~(iii) of the inductive assumption by $\mathrm{\ref{snake:Hori}}$.
  We next show that Property (ii) holds. Assume ad absurdum that there is $i$ for which $\pathrho_{(k{+}1)\cdot\rmN + i}$ carries a left-border (resp. right-border) tile. But then, invoking Item (c) of $\mathrm{\ref{snake:Verti}}$, we infer that $\pathrho_{k\cdot\rmN + i}$ is also left-border (resp. right-border). This contradicts Property (ii) of the inductive assumption. Hence Property (ii) holds true.
  By inductive assumption, we know that $\pathrho_{k\cdot\rmN + \rmN}$ carries a right-border tile. 
  Then, we apply Property~(c) of~$\mathrm{\ref{snake:Verti}}$ to infer that $\pathrho_{k\cdot\rmN + \rmN + \rmN} = \pathrho_{(k{+}1)\cdot\rmN + \rmN}$ is right-border, as desired. This establishes Property (iii), and concludes the induction.

  Let $\rmM \deff |\pathrho|/\rmN$.
  By the previous claim, we know that $\rmM \in \N$.
  Consider a function $\tilesmap : \ZZ_\rmN \times \ZZ_\rmM \to \tiles$ that maps all $(x,y)$ to the unique tile carried by $\pathrho_{x + \rmN \cdot y + 1}$. (Note that we number paths from $1$!)
  This function is well-defined by $\mathrm{\ref{snake:UniqTil}}$ and it satisfies $\mathrm{\ref{tiles:Hori}}$ and $\mathrm{\ref{tiles:Verti}}$ due to the satisfaction of $\mathrm{\ref{snake:Hori}}$ and~$\mathrm{\ref{snake:Verti}}$.
  The satisfaction of the first two statements of $\mathrm{\ref{tiles:Borders}}$ by $\tilesmap$ is guaranteed by Properties (i)--(iii) from the induction above.
  Finally, the last two statements of $\mathrm{\ref{tiles:Borders}}$ are due to straightforward induction that employs~$\mathrm{\ref{snake:SpecTil}}$ and the last statement of $\mathrm{\ref{snake:Hori}}$.
  As we proved that $\tilesmap$ covers $\ZZ_\rmN \times \ZZ_\rmM$, we conclude that~$\tilingsys$ is indeed solvable.
\end{proof}

While $\tilingsys$-snakes do not seem to be directly axiomatizable even in $\ALCvpl$, we at least see how to express $\tilingsys$-pseudosnakes in $\ALCOregrhashshash$. The next lemma is routine.
\begin{lem}\label{lemma:expressing-pseudosnakes-in-ALCVPLO}
For every domino tiling system $\tilingsys \deff (\tilesCol, \tiles, \tileswhite)$,
there exists an $\ALCOregrhashshash$-concept $\conceptsnake^\tilingsys$, that employs only the role $\roler$, individual names from~$\namessnakeT$ and concept names from $\concepttilepath^{\tiles}$, such that for all interpretations~$\interI$ we have that $\interI$ is a $\tilingsys$-pseudosnake iff $\interI \models \conceptsnake^\tilingsys$.
\end{lem}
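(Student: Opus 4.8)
The plan is to build $\conceptsnake^\tilingsys$ as a conjunction, rooted at the nominal $\{\indvsld\}$, of one sub-concept per \emph{pseudosnake} condition, i.e.\ for $\mathrm{\ref{snake:Path}}$, $\mathrm{\ref{snake:NoLoop}}$, $\mathrm{\ref{snake:UniqTil}}$, $\mathrm{\ref{snake:SpecTil}}$, $\mathrm{\ref{snake:Hori}}$ (all snake conditions except the two counting ones, $\mathrm{\ref{snake:Len}}$ and $\mathrm{\ref{snake:Verti}}$). The observation driving the construction is that each of these conditions constrains only the elements $\roler^*$-reachable from $\indvsld^\interI$ (or the $\namessnakeT$-named elements, which by $\mathrm{\ref{snake:Path}}$ are themselves reachable). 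Hence no non-regular language is needed: $\conceptsnake^\tilingsys$ will in fact be an $\ALCOreg$-concept, a fortiori an $\ALCOregrhashshash$-concept, using only $\roler$, the nominals in $\namessnakeT$, and the tile-concepts in $\concepttilepath^\tiles$. Since $\{\indvsld\}$ pins the root to $\indvsld^\interI$, the concept is nonempty exactly when $\indvsld^\interI$ satisfies every conjunct, which is the desired equivalence.

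First I would encode the ``local universal'' conditions via the $\forall\roler^*$ modality. $\mathrm{\ref{snake:UniqTil}}$ becomes $\forall\roler^*.\big(\bigdlor_{\tile}\conceptC_\tile \dland \bigdland_{\tile\neq\tile'}\neg(\conceptC_\tile\dland\conceptC_{\tile'})\big)$, and $\mathrm{\ref{snake:Path}}$ is the single existential witness $\exists\roler^+.(\{\indvsrd\}\dland\exists\roler^+.(\{\indvslu\}\dland\exists\roler^+.\{\indvsru\}))$ asserted at the root. $\mathrm{\ref{snake:NoLoop}}$ becomes $\forall\roler^*.\bigdland_{\indva\in\namessnakeT}(\{\indva\}\to\neg\exists\roler^+.\{\indva\})$; its correctness rests on the fact that, in \emph{both} directions of the equivalence, $\mathrm{\ref{snake:Path}}$ forces the named elements to be reachable, so the $\forall\roler^*$-guarded test actually visits them. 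For $\mathrm{\ref{snake:SpecTil}}$ I would use two conjuncts: one saying that a reachable element carries a two-white-sided tile iff it is a nominal, via the equivalence with $\{\indvsld\}\dlor\{\indvsrd\}\dlor\{\indvslu\}\dlor\{\indvsru\}$; and one fixing the border type of each corner, e.g.\ $\forall\roler^*.(\{\indvsld\}\to\bigdlor_{\tile\ \text{left-\&-down-border}}\conceptC_\tile)$ and analogously for the other three. Distinctness of the four named elements then comes for free from $\mathrm{\ref{snake:UniqTil}}$ together with the standing assumption that no tile has more than two white sides.

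The delicate conjunct is $\mathrm{\ref{snake:Hori}}$. For each tile $\tile$ I would precompute the static set $\mathsf{compat}(\tile)$ of tiles $\tile'$ satisfying conditions (i)--(iii) of $\mathrm{\ref{snake:Hori}}$, and then write $\forall\roler^*.\bigdland_{\tile\in\tiles}\big((\conceptC_\tile\dland\neg\{\indvsru\})\to\bigdlor_{\tile'\in\mathsf{compat}(\tile)}\forall\roler.\conceptC_{\tile'}\big)$. The point to verify is that $\bigdlor_{\tile'}\forall\roler.\conceptC_{\tile'}$ really expresses ``all $\roler$-successors carry one common compatible tile'': here I lean on $\mathrm{\ref{snake:UniqTil}}$, so that a node with two differently-tiled successors falsifies every disjunct, while a node with no successors satisfies the disjunction precisely when $\mathsf{compat}(\tile)\neq\emptyset$, matching the vacuous reading of $\mathrm{\ref{snake:Hori}}$.

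The hard part will be exactly this interaction between the disjunction-of-universals encoding and $\mathrm{\ref{snake:UniqTil}}$, and aligning the empty-successor and empty-$\mathsf{compat}(\tile)$ corner cases with the literal meaning of $\mathrm{\ref{snake:Hori}}$; everything else is a direct unfolding of the semantics in both directions, which is why the lemma is fairly called routine.
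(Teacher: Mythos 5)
Your construction is essentially the paper's own proof: a conjunction anchored at the nominal $\{\indvsld\}$, with one $\ALCOreg$-concept per pseudosnake condition, $\forall{\roler^*}$-guards doing the relativisation to reachable elements (justified, as you note, by $\mathrm{\ref{snake:Path}}$ making the named elements reachable), and the disjunction-of-universals $\bigdlor_{\tile'}\forall{\roler}.\conceptC_{\tile'}$ handling $\mathrm{\ref{snake:Hori}}$; all of your conjuncts are logically equivalent to the paper's except one. The paper's encoding of $\mathrm{\ref{snake:Hori}}$ additionally conjoins $(\exists{\roler}.\top)$ to the consequent, i.e.\ it forces every $\roler^*$-reachable element other than $\indvsru^{\interI}$ to actually possess an $\roler$-successor, whereas you deliberately adopt the vacuous reading for dead-end elements. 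Consequently the two concepts differ exactly on interpretations containing a reachable, non-$\indvsru$, successor-free element whose tile has nonempty $\mathsf{compat}(\tile)$: which encoding is the faithful one depends on how one disambiguates the English phrasing of $\mathrm{\ref{snake:Hori}}$, and the paper's proof implicitly reads ``all $\roler$-successors carry $\tile'$'' as presupposing that successors exist. This divergence is harmless for everything downstream, since $\mathrm{\ref{snake:Hori}}$ is only ever invoked on elements of the path guaranteed by $\mathrm{\ref{snake:Path}}$, which do have successors, and both readings support the snake-to-tiling extraction; but if you want your concept to coincide with the one the paper builds, add the $(\exists{\roler}.\top)$ conjunct to your $\mathrm{\ref{snake:Hori}}$ encoding.
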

\begin{proof}
  We present a rather straightforward axiomatization of the aforementioned properties, written from the point of view of the interpretation of a nominal $\indvsld$.
  \begin{align*} 
    \conceptC_{\mathrm{\ref{snake:Path}}} & \deff \{ \indvsld\} \dland \exists{\roler^+}.\left(  \{ \indvsrd\} \dland \exists{\roler^+}.\left( \{ \indvslu \} \dland \exists{\roler^+}.\{\indvsru\} \right) \right).\\
    \conceptC_{\mathrm{\ref{snake:NoLoop}}} & \deff \bigdland_{\indva \in \namessnakeT} \forall{\roler^*}.\Big[ \{ \indva \} \to \forall{\roler^+}.\neg\{\indva\} \Big].\\ 
    \conceptC_{\mathrm{\ref{snake:UniqTil}}} & \deff \forall{\roler^*}\big[ \bigdlor_{\conceptC \in \concepttilepath^{\tiles}}\left( \conceptC \dland \bigdland_{\conceptC' \in \concepttilepath^{\tiles}, \conceptC' \neq \conceptC} \neg \conceptC' \right) \Big].\\
    %
    \conceptC_{\mathrm{\ref{snake:SpecTil}}} & \deff \forall{\roler^*} \big[ \left( \bigdlor_{\tile \in \tiles \ \text{with two white sides}} \conceptC_\tile \right) \leftrightarrow \bigdlor_{\indva \in \namessnakeT} \{ \indva \}  \big] \dland \conceptC_{\mathrm{\ref{snake:SpecTil}}}^{\text{down}} \dland \conceptC_{\mathrm{\ref{snake:SpecTil}}}^{\text{up}}, \ \text{where}\\
  \end{align*}
  \begin{align*}
     \conceptC_{\mathrm{\ref{snake:SpecTil}}}^{\text{down}} & \deff \left( \{ \indvsld \} \dland \bigdlor_{\tile \deff (\whiteBox, \whiteBox, \colour_r, \colour_u) \in \tiles } \conceptC_\tile  \right) \dland \exists{\roler^*}.\left( \{ \indvsrd \} \dland \bigdlor_{\tile \deff (\colour_l, \whiteBox, \whiteBox, \colour_u) \in \tiles } \conceptC_\tile  \right),\\
     \conceptC_{\mathrm{\ref{snake:SpecTil}}}^{\text{up}} & \deff \exists{\roler^*}.\left( \{ \indvslu \} \dland \bigdlor_{\tile \deff (\whiteBox, \colour_d, \colour_r, \whiteBox) \in \tiles } \conceptC_\tile  \right) \dland \exists{\roler^*}.\left( \{ \indvsru \} \dland \bigdlor_{\tile \deff (\colour_l, \colour_d, \whiteBox, \whiteBox) \in \tiles } \conceptC_\tile  \right).\\
    \conceptC_{\mathrm{\ref{snake:Hori}}} & \deff \bigdland_{\tile\in\tiles} \forall{\roler^*}.\big[(\neg\{ \indvsru \} \dland \conceptC_\tile) \to \left( (\exists{\roler}.\top) \dland \bigdlor_{\tile'\in\tiles \ \text{satisfying cond. (i)--(iii) of } \mathrm{\ref{snake:Hori}}} \forall{\roler}.\conceptC_{\tile'}\right)\!\big].
  \end{align*}
    We can now define $\conceptsnake^\tilingsys$ as the conjunction of all the concept definitions presented above.
    It~follows immediately from the semantics of $\ALCOregrhashshash$ that the presented concept definition is consistent if and only there exists an element starting a pseudosnake. 
\end{proof}

Note that the property that pseudosnakes are missing in order to be proper snakes, is the ability to measure. 
We tackle this issue by introducing a gadget called a ``yardstick''.

\begin{defi}\label{def:yardstick}
  Let $\tiles$ be a finite non-empty set, let $\namesrulerT \deff \{ \indvyst, \indvmd, \indvmd_\tile, \indvyend_\tile \mid \tile \in \tiles \}$ be composed of (pairwise different) individual names.
  A $\tiles$-\emph{yardstick} is any interpretation $\interI$ satisfying all the conditions below.
  \begin{description}\itemsep0em
    \item[\desclabel{(YNom)}{yardstick:DifNom}] $\namesrulerT$-named elements in $\interI$ are pairwise different and $(\roler+\roles)^*$-reachable from $\indvyst^{\interI}$.
    \item[\desclabel{(YNoLoop)}{yardstick:NoLoop}] No $\namesrulerT$-named element can $(\roler+\roles)^+$-reach itself.
    \item[\desclabel{(YMid)}{yardstick:Mid}] $\indvmd^{\interI}$ is the \emph{unique} element with an $\roles$-successor that is $\roler^*$-reachable from $\indvyst^{\interI}$.
    \item[\desclabel{(YSuccOfMid)}{yardstick:SuccOfMid}] The $\roles$-successors of $\indvmd^{\interI}$ are precisely $\{ \indvmd_\tile \mid \tile \in \tiles \}$-named elements.
    \item[\desclabel{(YReachMidT)}{yardstick:ReachMidT}] For every $\tile \in \tiles$  we have that $\indvmd_\tile^{\interI}$ can $\roles^*$-reach $\indvyend_\tile^{\interI}$ but it cannot $\roles^*$-reach $\indvyend_{\tile'}^{\interI}$ for all~$\tile' \neq \tile$.
    \item[\desclabel{(YEqDst)}{yardstick:EqDist}] The elem. $\langvpaeq{\roler}{\roles}$-reachable from $\indvyst^{\interI}$ are precisely the $\{ \indvyend_\tile \mid \tile \in \tiles \}$-named~ones.
    \item[\desclabel{(YNoEqDst)}{yardstick:NoEqDist}] No $\{ \indvyend_\tile \mid \tile \in \tiles \}$-named element is $\langvpaeq{\roler}{\roles}$-reachable from an element $(\roles+\roler)^+$-reachable~from~$\indvyst^{\interI}$. 
  \end{description}
\end{defi}
\noindent An~example $\{ \heartsuit, \spadesuit \}$-yardstick is depicted below.
A~``minimal'' yardstick contains the grey nodes~only.
\begin{figure}[h]
  \centering
  \begin{tikzpicture}[transform shape]
      \draw (-1.5, 0) node[medrond] (BeforeA) {};
      \draw (0, 0) node[medrond, fill=black!15] (A) {};
      \draw (0, 0.6) node[label=center:\( \indvyst^{\interI} \)] (AA) {\phantom{0}};
      \draw (1.5, 0) node[medrond, fill=black!15] (B) {};
      \draw (1.5, 1) node[medrond] (BB) {};
      \draw (1.5, -1) node[medrond] (BBB) {};
      \draw (3.0, 0) node[medrond, fill=black!15] (C) {};
      \draw (3.0, -1) node[medrond] (CCC) {};
      \draw (4.5, 0) node[medrond, fill=black!15] (D) {};
      \draw (4.5, 1) node[medrond] (DD) {};
      \draw (4.5, -1) node[medrond] (DDD) {};
      \draw (6, 0) node[medrond, fill=black!15] (E) {};
      \draw (6, 0.6) node[label=center:\( \indvmd^{\interI} \)] (EE) {\phantom{0}};

      \draw (7.5, 1) node[medrond, fill=black!15] (XA) {};
      \draw (7.5, 1.6) node[label=center:\( \indvmd_{\heartsuit}^{\interI} \)] (XAA) {\phantom{0}};

      \draw (9.0, 1) node[medrond, fill=black!15] (XB) {};
      \draw (10.5, 1) node[medrond, fill=black!15] (XC) {};
      \draw (12.0, 1) node[medrond, fill=black!15] (XD) {};
      \draw (12.0, 1.6) node[label=center:\( \indvyend_{\heartsuit}^{\interI} \)] (XDD) {\phantom{0}};

      \draw (7.5, -1) node[medrond, fill=black!15] (YA) {};
      \draw (7.5, -0.4) node[label=center:\( \indvmd_{\spadesuit}^{\interI} \)] (YAA) {\phantom{0}};
      \draw (9.0, -1) node[medrond, fill=black!15] (YB) {};
      \draw (9.0, 0) node[medrond] (YBB) {};
      \draw (10.5, -1) node[medrond, fill=black!15] (YC) {};
      \draw (10.5, 0) node[medrond] (YCC) {};
      \draw (12.0, -1) node[medrond, fill=black!15] (YD) {};
      \draw (12.0, -0.4) node[label=center:\( \indvyend_{\spadesuit}^{\interI} \)] (YDD) {\phantom{0}};

      \path[->] (BeforeA) edge [red] node[yshift=3, xshift=-2] {$\roler$} (A);
      \path[->] (A) edge [red] node[yshift=3, xshift=-2] {$\roler$} (B);
      \path[->] (A) edge [red] node[yshift=3, xshift=-2] {$\roler$} (BB);
      \path[->] (A) edge [red] node[yshift=3, xshift=2] {$\roler$} (BBB);
      \path[->] (B) edge [red] node[yshift=3, xshift=-2] {$\roler$} (C);
      \path[->] (BB) edge [red] node[yshift=5, xshift=-2] {$\roler$} (C);
      \path[->] (BBB) edge [red] node[yshift=5, xshift=-2] {$\roler$} (CCC);
      \path[->] (C) edge [red] node[yshift=3, xshift=-2] {$\roler$} (D);
      \path[->] (CCC) edge [red] node[yshift=3, xshift=-2] {$\roler$} (DDD);
      \path[->] (D) edge [red] node[yshift=3, xshift=-2] {$\roler$} (E);
      \path[->] (DDD) edge [red] node[yshift=3, xshift=-2] {$\roler$} (E);
      \path[->] (C) edge [red] node[yshift=3, xshift=-2] {$\roler$} (DD);
      \path[->] (C) edge [red] node[yshift=3, xshift=1] {$\roler$} (DDD);
      \path[->] (DD) edge [red] node[yshift=3, xshift=3] {$\roler$} (E);

      \path[->] (E) edge [blue, dotted] node[yshift=3, xshift=-2] {$\roles$} (XA);
      \path[->] (XA) edge [blue, dotted] node[yshift=3, xshift=-2] {$\roles$} (XB);
      \path[->] (XB) edge [blue, dotted] node[yshift=3, xshift=-2] {$\roles$} (XC);
      \path[->] (XC) edge [blue, dotted] node[yshift=3, xshift=-2] {$\roles$} (XD);

      \path[->] (E) edge [blue, dotted] node[yshift=3, xshift=-2] {$\roles$} (YA);
      \path[->] (YA) edge [blue, dotted] node[yshift=3, xshift=-2] {$\roles$} (YB);
      \path[->] (YA) edge [blue, dotted] node[yshift=3, xshift=-2] {$\roles$} (YBB);
      \path[->] (YB) edge [blue, dotted] node[yshift=3, xshift=-2] {$\roles$} (YC);
      \path[->] (YBB) edge [blue, dotted] node[yshift=3, xshift=-2] {$\roles$} (YCC);
      \path[->] (YC) edge [blue, dotted] node[yshift=3, xshift=-2] {$\roles$} (YD);
      \path[->] (YCC) edge [blue, dotted] node[yshift=3, xshift=1] {$\roles$} (YD);
  \end{tikzpicture}
\end{figure}

The forthcoming lemma explains the name ``yardstick''. Intuitively it says that in any $\tiles$-yardstick $\interI$, all $\roles^*$-paths from $\indvmd^{\interI}$ to all $\indvyend_\tile^{\interI}$ have equal length.
\begin{lem}\label{lemma:yardstickequidist}
Let $\interI$ be a $\tiles$-yardstick.
Then there exists a unique positive integer $\rmN$ such that:
(i) for all~$\tile \in \tiles$ we have that $\indvyend_\tile^{\interI}$ is $\roles^{\rmN}$-reachable from $\indvmd^{\interI}$, and
(ii) for all $\tile \in \tiles$ we have that $\indvyend_\tile^{\interI}$ is $\roles^{\rmN{-}1}$-reachable from~$\indvmd_\tile^{\interI}$.
We will call $\rmN$ the \emph{length} of $\interI$.
\end{lem}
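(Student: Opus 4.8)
The plan is to pin down $\rmN$ as the common length of the $\roler$-paths from $\indvyst^{\interI}$ to $\indvmd^{\interI}$ and to transfer this length to the $\roles$-side by exploiting the interplay between $\mathrm{\ref{yardstick:EqDist}}$, which forces the $\langvpaeq{\roler}{\roles}$-reachable-from-$\indvyst^{\interI}$ elements to be exactly the $\indvyend_\tile^{\interI}$, and $\mathrm{\ref{yardstick:NoEqDist}}$, which forbids ``shortcuts''. The structural backbone is the following \emph{turning-point observation}: by $\mathrm{\ref{yardstick:Mid}}$ the only $\roler^*$-reachable-from-$\indvyst^{\interI}$ element possessing an $\roles$-successor is $\indvmd^{\interI}$, so every $\langvpaeq{\roler}{\roles}$-path from $\indvyst^{\interI}$, witnessed by a word $\roler^n\roles^n$ with $n \geq 1$, must pass through $\indvmd^{\interI}$ exactly after its $\roler^n$-prefix. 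Hence the elements $\langvpaeq{\roler}{\roles}$-reachable from $\indvyst^{\interI}$ are precisely those $\roles^n$-reachable from $\indvmd^{\interI}$ for some $n$ for which $\indvmd^{\interI}$ is $\roler^n$-reachable from $\indvyst^{\interI}$; combined with $\mathrm{\ref{yardstick:EqDist}}$ this set equals $\{ \indvyend_\tile^{\interI} \mid \tile \in \tiles \}$.

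First I would let $R$ be the set of positive integers $n$ such that $\indvmd^{\interI}$ is $\roler^n$-reachable from $\indvyst^{\interI}$; it is non-empty with only positive elements by $\mathrm{\ref{yardstick:Mid}}$ and $\mathrm{\ref{yardstick:DifNom}}$, and I set $\rmN \deff \min R$. Fixing a shortest $\roler$-path $\indvyst^{\interI} = p_0, \ldots, p_\rmN = \indvmd^{\interI}$, I argue that no $\indvyend_\tile^{\interI}$ is $\roles^m$-reachable from $\indvmd^{\interI}$ for $m < \rmN$: otherwise $p_{\rmN-m}$, which is $(\roler{+}\roles)^+$-reachable from $\indvyst^{\interI}$, would $\langvpaeq{\roler}{\roles}$-reach $\indvyend_\tile^{\interI}$ via $\roler^m\roles^m$, contradicting $\mathrm{\ref{yardstick:NoEqDist}}$. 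Next, since each $\indvyend_\tile^{\interI}$ is (by the turning-point observation) $\roles^\ell$-reachable from $\indvmd^{\interI}$ for some $\ell \in R$, hence $\ell \geq \rmN$, the vertex at $\roles$-distance $\rmN$ on such a path is itself some $\indvyend^{\interI}$ (using that every $\roles^\rmN$-reachable element from $\indvmd^{\interI}$ is an endpoint, as $\rmN \in R$); so at least one endpoint $\indvyend_{\tile_0}^{\interI}$ is $\roles^\rmN$-reachable from $\indvmd^{\interI}$. I then show $R = \{\rmN\}$: were there $\ell^\ast \in R$ with $\ell^\ast > \rmN$, the vertex at $\roler$-distance $\ell^\ast - \rmN$ on a length-$\ell^\ast$ $\roler$-path would $\langvpaeq{\roler}{\roles}$-reach $\indvyend_{\tile_0}^{\interI}$ via $\roler^\rmN\roles^\rmN$, again contradicting $\mathrm{\ref{yardstick:NoEqDist}}$. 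With $R = \{\rmN\}$, the turning-point observation forces every $\indvyend_\tile^{\interI}$ to be $\roles^\rmN$-reachable from $\indvmd^{\interI}$, which is exactly~(i).

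Claim~(ii) then follows routinely: a witnessing $\roles^\rmN$-path from $\indvmd^{\interI}$ to $\indvyend_\tile^{\interI}$ begins with an edge to an $\roles$-successor of $\indvmd^{\interI}$, which is some $\indvmd_{\tile'}^{\interI}$ by $\mathrm{\ref{yardstick:SuccOfMid}}$; since the remainder $\roles^*$-reaches $\indvyend_\tile^{\interI}$, $\mathrm{\ref{yardstick:ReachMidT}}$ forces $\tile' = \tile$, so $\indvyend_\tile^{\interI}$ is $\roles^{\rmN-1}$-reachable from $\indvmd_\tile^{\interI}$. For uniqueness, suppose $\rmN'$ also satisfies~(i) for all tiles. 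The lower bound of the previous step excludes $\rmN' < \rmN$, so it remains to rule out $\rmN' > \rmN$. For each $\tile$ the vertex at $\roles$-distance $\rmN$ on a witnessing $\roles^{\rmN'}$-path is again some $\indvyend_{\phi(\tile)}^{\interI}$ with $\phi(\tile) \neq \tile$ (equality would give a nonempty $\roles$-self-loop at $\indvyend_\tile^{\interI}$, contradicting $\mathrm{\ref{yardstick:NoLoop}}$), and $\indvyend_{\phi(\tile)}^{\interI}$ $\roles^+$-reaches $\indvyend_\tile^{\interI}$. As $\tiles$ is finite, iterating $\phi$ produces a cycle $\indvyend_{\tile^\ast}^{\interI} \to_\roles^+ \cdots \to_\roles^+ \indvyend_{\tile^\ast}^{\interI}$, a $\namesrulerT$-named element reaching itself, contradicting $\mathrm{\ref{yardstick:NoLoop}}$; hence $\rmN' = \rmN$.

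The main obstacle is exactly the rigidity argument that $R = \{\rmN\}$ and that all endpoints sit at the \emph{same} $\roles$-distance from $\indvmd^{\interI}$: neither $\mathrm{\ref{yardstick:EqDist}}$ nor $\mathrm{\ref{yardstick:NoEqDist}}$ suffices in isolation, and one must combine them carefully, always placing the ``spoiling'' element on the $\roler$-side so that a genuine $\roler^k\roles^k$-witness can be assembled, together with the acyclicity from $\mathrm{\ref{yardstick:NoLoop}}$. A minor subtlety to dispatch at the outset is the degenerate word $\roler^0\roles^0$, which must be excluded (equivalently, $\langvpaeq{\roler}{\roles}$-reachability is read with $n \geq 1$): otherwise $\indvyst^{\interI}$ would be $\langvpaeq{\roler}{\roles}$-reachable from itself, rendering $\mathrm{\ref{yardstick:EqDist}}$ and $\mathrm{\ref{yardstick:DifNom}}$ jointly contradictory.
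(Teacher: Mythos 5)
Your proof is correct and rests on the same pivot as the paper's: the turning-point observation that, by (YMid), any $\roler^n\roles^n$-witness issuing from $\indvyst^{\interI}$ must sit at $\indvmd^{\interI}$ exactly after its $\roler$-prefix, combined with (YNoEqDst) to kill short $\roles$-paths (by re-rooting the witness at a vertex strictly after $\indvyst^{\interI}$ on the $\roler$-path) and with (YEqDst) plus (YNoLoop) to kill long ones. The divergences are local but real. The paper defines $\rmN$ from one arbitrary $\langvpaeq{\roler}{\roles}$-witness for a single fixed tile and shows directly that \emph{every} $\roles$-path from $\indvmd^{\interI}$ to \emph{any} $\indvyend_\tile^{\interI}$ has length exactly $\rmN$; in its too-long case it identifies the vertex at $\roles$-distance $\rmN$ as an endpoint via (YEqDst) and then pins it to be $\indvyend_{\tile'}^{\interI}$ itself via (YReachMidT), since the path issues from $\indvmd_{\tile'}^{\interI}$, so (YNoLoop) applies at once. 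You instead set $\rmN \deff \min R$, prove the extra claim $R = \{\rmN\}$, and in the uniqueness step avoid re-invoking (YReachMidT): you conclude only that the intermediate vertex is \emph{some} endpoint $\indvyend_{\phi(\tile)}^{\interI}$ with $\phi(\tile) \neq \tile$, and then iterate $\phi$ over the finite set $\tiles$ to manufacture a cycle of named elements contradicting (YNoLoop). This pigeonhole detour is slightly longer than the paper's pinning argument but equally sound, and it buys you a proof of uniqueness from hypothesis (i) alone. Finally, your explicit exclusion of the degenerate word $\roler^0\roles^0$ is a genuine catch: read literally with $0 \in \N$, (YEqDst) would make $\indvyst^{\interI}$ itself $\langvpaeq{\roler}{\roles}$-reachable from $\indvyst^{\interI}$ and hence force it to be $\indvyend_\tile$-named, clashing with (YNom); the paper only remarks $\rmN > 0$ for its chosen witness and otherwise leaves this convention implicit.
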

\begin{proof}
  Fix $\tile_\star \in \tiles$. 
  By $\mathrm{\ref{yardstick:EqDist}}$ we know that $\indvyst^{\interI}$ $\langvpaeq{\roler}{\roles}$-reaches $\indvyend_{\tile_\star}^{\interI}$, and let $\pathrho \deff \pathrho_1 \ldots \pathrho_{2\rmN{+}1}$ be a path witnessing it. 
  We claim that this is the desired length of $\interI$.
  First, note that $\rmN > 0$ by $\mathrm{\ref{yardstick:DifNom}}$. 
  Second, by the semantics of $\langvpaeq{\roler}{\roles}$, for all $i \leq \rmN$ we have $(\pathrho_i, \pathrho_{i{+}1}) \in \roler^{\interI}$ and $(\pathrho_{\rmN{+}i}, \pathrho_{\rmN{+}i{+}1}) \in \roles^{\interI}$.
  Thus $\pathrho_{\rmN{+}1}$ is $\roler^*$-reachable from~$\indvyst^{\interI}$ and has an $\roles$-successor.
  These two facts imply (by $\mathrm{\ref{yardstick:Mid}}$) that $\pathrho_{\rmN{+}1}$ is equal to $\indvmd^{\interI}$.
  It remains to show that all the paths leading from $\indvmd^{\interI}$ to some $\indvyend_\tile$ are of length $\rmN$.
  Towards a contradiction, assume that there is $\tile' \in \tiles$ and an integer $\rmM \neq \rmN$ such that $\indvmd^{\interI}$ $\roles^\rmM$-reaches $\indvyend_{\tile'}^{\interI}$ via a path~$\pathrho' \deff \pathrho_1'\ldots\pathrho_\rmM'$.
  We stress that $\pathrho_1' = \indvmd^{\interI}$ and  $\pathrho_\rmM' = \indvyend_{\tile'}^{\interI}$ (by design of $\pathrho'$), and $\pathrho_2' = \indvmd_{\tile'}^{\interI}$ (by a conjunction of $\mathrm{\ref{yardstick:SuccOfMid}}$ and $\mathrm{\ref{yardstick:ReachMidT}}$).
  To conclude the proof, it suffices to resolve the following two cases.
  \begin{itemize}\itemsep0em
  \item Suppose that $\rmM < \rmN$. 
  Then $\pathrho_{\rmN{+}1{-}\rmM}$ $(\roler^{\rmM}\roles^{\rmM})$-reaches (thus also $\langvpaeq{\roler}{\roles}$-reaches) $\indvyend_{\tile'}^{\interI}$, as witnessed by the path $\pathrho_{\rmN{+}1{-}\rmM}\ldots\pathrho_{\rmN}\pathrho'$.
  Moreover $\pathrho_{\rmN{+}1{-}\rmM}$ is $\roler^+$-reachable from $\indvyst^{\interI}$, witnessed by the path $\pathrho_1\ldots\pathrho_{\rmN{+}1{-}\rmM}$ (note that its length is positive by the inequality $\rmM < \rmN$).
  This yields a contradiction with~$\mathrm{\ref{yardstick:NoEqDist}}$.
  \item Suppose that $\rmM > \rmN$. 
  Consider a path $\pathrho_1\ldots\pathrho_\rmN\pathrho_1'\ldots\pathrho_\rmN'$.
  By construction, such a path witnesses the fact that $\indvyst^{\interI}$ $(\roler^{\rmN}\roles^{\rmN})$-reaches (and thus also $\langvpaeq{\roler}{\roles}$-reaches) $\pathrho_\rmN'$.
  By $\mathrm{\ref{yardstick:EqDist}}$ we infer that $\pathrho_\rmN'$ is then $\{ \indvyend_\tile \mid \tile \in \tiles \}$-named.
  As $\pathrho_2' = \indvmd_{\tile'}^{\interI}$ $\roles^+$-reaches $\pathrho_\rmN'$, we infer that $\pathrho_\rmN' = \indvyend_{\tile'}^{\interI}$ (otherwise we would have a contradiction with $\mathrm{\ref{yardstick:ReachMidT}}$).
  But then $\indvyend_{\tile'}^{\interI}$ $\roles^+$-reaches itself via a path $\pathrho_\rmN'\ldots\pathrho_\rmM$, which is of positive length by the fact that $\rmM > \rmN$. 
  This yields a contradiction with $\mathrm{\ref{yardstick:NoLoop}}$. 
  \end{itemize}

  This establishes Property (i). Property (ii) is now immediate by~$\mathrm{\ref{yardstick:SuccOfMid}}$. \qedhere
\end{proof}

The next lemma proves the existence of arbitrary large yardsticks.
\begin{lem}\label{lemma:yardsticks-of-len-N-exist}
For every finite non-empty set $\tiles$ and a positive integer $\rmN$, there exists a $\tiles$-yardstick of length $\rmN$.
\end{lem}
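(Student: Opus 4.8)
The plan is to exhibit an explicit model, namely the ``minimal'' yardstick drawn (in grey) in the figure accompanying Definition~\ref{def:yardstick}: a single $\roler$-spine of length $\rmN$ running from $\indvyst$ to $\indvmd$, together with one pairwise-disjoint $\roles$-branch of length $\rmN$ per tile, each branch leaving $\indvmd$, passing through $\indvmd_\tile$, and ending in $\indvyend_\tile$. Concretely I would take $\DeltaI \deff \{ s_0, \dots, s_\rmN \} \cup \{ b_{\tile,k} \mid \tile \in \tiles,\ 1 \le k \le \rmN \}$ (all points distinct), set $\indvyst^{\interI} \deff s_0$, $\indvmd^{\interI} \deff s_\rmN$, $\indvmd_\tile^{\interI} \deff b_{\tile,1}$, $\indvyend_\tile^{\interI} \deff b_{\tile,\rmN}$, send every name outside $\namesrulerT$ to $s_0$, and interpret the two relevant roles by $\roler^{\interI} \deff \{ (s_i, s_{i+1}) \mid 0 \le i < \rmN \}$ and $\roles^{\interI} \deff \{ (s_\rmN, b_{\tile,1}) \mid \tile \in \tiles \} \cup \{ (b_{\tile,k}, b_{\tile,k+1}) \mid \tile \in \tiles,\ 1 \le k < \rmN \}$, leaving all other roles and all concept names empty.

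Once this $\interI$ is fixed, the five combinatorial axioms are read off directly from its shape. Since $\interI$ is an acyclic union of a spine and $|\tiles|$ disjoint branches, no element reaches itself, which gives \ref{yardstick:NoLoop}; and every named element sits on the spine or on a branch, hence is $(\roler+\roles)^*$-reachable from $s_0 = \indvyst$, and these named points are pairwise distinct because the spine and the branches are disjoint chains each carrying $\rmN$ nodes, which gives \ref{yardstick:DifNom}. The elements $\roler^*$-reachable from $\indvyst$ are exactly the spine nodes $s_0,\dots,s_\rmN$, and among them only $s_\rmN = \indvmd$ carries an outgoing $\roles$-edge, yielding \ref{yardstick:Mid}; the $\roles$-successors of $\indvmd$ are precisely the $b_{\tile,1} = \indvmd_\tile$, which is \ref{yardstick:SuccOfMid}; and from $\indvmd_\tile$ the only $\roles$-reachable points are the remaining nodes of its own branch, so it reaches $\indvyend_\tile$ but no other $\indvyend_{\tile'}$, which is \ref{yardstick:ReachMidT}.

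The substantive part---and the step I expect to be the main obstacle---are the two axioms \ref{yardstick:EqDist} and \ref{yardstick:NoEqDist}, which mention the non-regular language $\langvpaeq{\roler}{\roles} = \{\roler^n\roles^n \mid n \in \N\}$ and so are the only places where the balanced structure must be argued rather than merely observed. The crux is a counting statement: a word $\roler^n\roles^n$ can be traversed starting from $\indvyst$ only when $n = \rmN$. Indeed, its $\roler^n$-prefix must descend the spine to $s_n$ (so $n \le \rmN$), while the following $\roles^n$-suffix can take its first $\roles$-step only from a node owning an $\roles$-edge; the unique such node reachable by $\roler$-steps from $\indvyst$ is $\indvmd = s_\rmN$, which forces $n = \rmN$. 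The trailing $\roles^\rmN$ then runs to the very end of some branch, i.e.\ to $b_{\tile,\rmN} = \indvyend_\tile$, and every $\indvyend_\tile$ is reached in this way. This pins down the $\langvpaeq{\roler}{\roles}$-reachable set from $\indvyst$ as exactly $\{\indvyend_\tile \mid \tile \in \tiles\}$, establishing \ref{yardstick:EqDist}.

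For \ref{yardstick:NoEqDist} I would rerun the same count from an arbitrary element $x \neq \indvyst$ that is $(\roler+\roles)^+$-reachable from $\indvyst$. If $x = s_i$ with $i \ge 1$, then a balanced $\roler^n\roles^n$ run would have to satisfy $i + n = \rmN$ (to reach the only $\roles$-enabled node $\indvmd$) and simultaneously $n = \rmN$ (to land on a branch endpoint), which is impossible for $i \ge 1$; and if $x$ lies on a branch it has no outgoing $\roler$-edge, so no $\roler^n\roles^n$-traversal with $n \ge 1$ can even begin. Hence no $\indvyend_\tile$ is $\langvpaeq{\roler}{\roles}$-reachable from such an $x$, which is \ref{yardstick:NoEqDist}. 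Finally, by construction $\indvyend_\tile = b_{\tile,\rmN}$ is $\roles^\rmN$-reachable from $\indvmd$ and $\roles^{\rmN-1}$-reachable from $\indvmd_\tile$, so the length of $\interI$ in the sense of Lemma~\ref{lemma:yardstickequidist} is exactly $\rmN$, completing the argument.
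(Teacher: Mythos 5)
Your construction is exactly the paper's, up to renaming ($i \mapsto s_i$ on the spine and $(k{-}1,\tile) \mapsto b_{\tile,k}$ on the branches): both exhibit the ``minimal'' yardstick consisting of a single $\roler$-spine of length $\rmN$ feeding into $|\tiles|$ pairwise-disjoint $\roles$-branches, and both then check the conditions of Definition~\ref{def:yardstick}. The only difference is one of detail, not of approach: the paper declares the verification routine, whereas you spell out the counting argument behind $\mathrm{\ref{yardstick:EqDist}}$ and $\mathrm{\ref{yardstick:NoEqDist}}$ (that a balanced $\roler^n\roles^n$-traversal from $\indvyst$ forces $n = \rmN$), which is correct and matches what the paper leaves implicit.
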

\begin{proof}
Consider the following interpretation $\interI$ with $\DeltaI \deff \ZZ_\rmN \cup \{ \rmN \} \cup (\ZZ_{\rmN{-}1} \times \tiles)$:
\begin{itemize}\itemsep0em
\item $\indvyst^{\interI} \deff 0$, $\indvmd^{\interI} \deff \rmN$, $(\indvmd_\tile)^{\interI} \deff (0, \tile)$, $(\indvyend_\tile)^{\interI} \deff (\rmN{-}1, \tile)$ for all $\tile \in \tiles$, and $\indva^{\interI} \deff 0$ for all names $\indva \in \Ilang \setminus \namesrulerT$.
\item $\roler^{\interI} \deff \{ (i,i{+}1) \mid i \in \ZZ_{\rmN} \}$,
$\roles^{\interI} \deff \{ (\rmN, (0, \tile)), ((i,\tile),(i{+}1, \tile)) \mid \tile \in \tiles, i \in \ZZ_{\rmN{-}1} \}$, and $(\roler')^{\interI} = \emptyset$ for all role names $\roler' \in \Rlang \setminus \{ \roler, \roles\}$.
\end{itemize}
An example such $\interI$ for $\rmN = 3$ and $\tiles = \{ \heartsuit, \spadesuit \}$ is depicted above (in restriction to grey nodes only).
It is routine to verify that~$\interI$ satisfies all the properties in Definition~\ref{def:yardstick}, as well as conditions (i) and~(ii) from the statement of Lemma~\ref{lemma:yardstickequidist}.
This concludes the proof.
\end{proof}

To make use of yardsticks in our proofs, we need to axiomatise them inside $\ALCOregrhashshash$.
\begin{lem}\label{lemma:expressing-yardstick-in-ALCVPLO}
There exists an $\ALCOregrhashshash$-concept $\conceptyardstick^\tiles$, that employs only role names $\roler, \roles$ and individual names from~$\namesrulerT$, such that for all interpretations $\interI$ we have: $\interI$ is a $\tiles$-yardstick if and only if $\interI$ is a model~of~$\conceptyardstick^\tiles$.
\end{lem}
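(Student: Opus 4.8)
The plan is to mirror the construction used for pseudosnakes in Lemma~\ref{lemma:expressing-pseudosnakes-in-ALCVPLO}: I would write $\conceptyardstick^\tiles$ as a big conjunction, with one conjunct $\conceptC_{\mathrm{X}}$ per condition $\mathrm{X}$ of Definition~\ref{def:yardstick}, each phrased \emph{from the point of view of the nominal $\indvyst$}. Concretely I would set
\[
  \conceptyardstick^\tiles \deff \{\indvyst\} \dland \conceptC_{\mathrm{\ref{yardstick:DifNom}}} \dland \conceptC_{\mathrm{\ref{yardstick:NoLoop}}} \dland \conceptC_{\mathrm{\ref{yardstick:Mid}}} \dland \conceptC_{\mathrm{\ref{yardstick:SuccOfMid}}} \dland \conceptC_{\mathrm{\ref{yardstick:ReachMidT}}} \dland \conceptC_{\mathrm{\ref{yardstick:EqDist}}} \dland \conceptC_{\mathrm{\ref{yardstick:NoEqDist}}}.
\]
The leading $\{\indvyst\}$ conjunct pins any witness of $\conceptyardstick^\tiles$ to be exactly $\indvyst^\interI$, so that $(\conceptyardstick^\tiles)^\interI \neq \emptyset$ holds if and only if $\indvyst^\interI$ satisfies every remaining conjunct. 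Since each condition of Definition~\ref{def:yardstick} speaks only about named elements and their reachability \emph{from $\indvyst^\interI$}, and since the interpretation of names is fixed by $\interI$, the equivalence ``$\interI$ is a $\tiles$-yardstick iff $\interI \models \conceptyardstick^\tiles$'' reduces to checking, conjunct by conjunct, that $\conceptC_{\mathrm{X}}$ holds at $\indvyst^\interI$ exactly when $\mathrm{X}$ holds.

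\textbf{The conjuncts.} Distinctness of names (which DLs cannot state with a primitive $\neq$) is obtained by combining reachability with a nominal implication, and a surrogate for the universal modality is provided by navigating $(\roler{+}\roles)^*$ from $\indvyst$, which by $\mathrm{\ref{yardstick:DifNom}}$ reaches every relevant element:
\[
  \conceptC_{\mathrm{\ref{yardstick:DifNom}}} \deff \bigdland_{\indva \in \namesrulerT} \exists{(\roler{+}\roles)^*}.\{\indva\} \dland \bigdland_{\indva, \indvb \in \namesrulerT,\, \indva \neq \indvb} \forall{(\roler{+}\roles)^*}.\big(\{\indva\} \to \neg\{\indvb\}\big).
\]
The absence of loops is $\conceptC_{\mathrm{\ref{yardstick:NoLoop}}} \deff \bigdland_{\indva \in \namesrulerT} \forall{(\roler{+}\roles)^*}.(\{\indva\} \to \forall{(\roler{+}\roles)^+}.\neg\{\indva\})$, copying $\conceptC_{\mathrm{\ref{snake:NoLoop}}}$. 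The midpoint condition encodes ``unique $\roler^*$-reachable element carrying an $\roles$-successor'' by $\conceptC_{\mathrm{\ref{yardstick:Mid}}} \deff \exists{\roler^*}.(\{\indvmd\} \dland \exists{\roles}.\topconcept) \dland \forall{\roler^*}.(\exists{\roles}.\topconcept \to \{\indvmd\})$, while $\conceptC_{\mathrm{\ref{yardstick:SuccOfMid}}} \deff \forall{(\roler{+}\roles)^*}.\big(\{\indvmd\} \to [\forall{\roles}.\bigdlor_{\tile\in\tiles}\{\indvmd_\tile\} \dland \bigdland_{\tile\in\tiles}\exists{\roles}.\{\indvmd_\tile\}]\big)$ and $\conceptC_{\mathrm{\ref{yardstick:ReachMidT}}} \deff \bigdland_{\tile\in\tiles}\forall{(\roler{+}\roles)^*}.\big(\{\indvmd_\tile\} \to [\exists{\roles^*}.\{\indvyend_\tile\} \dland \bigdland_{\tile'\neq\tile}\neg\exists{\roles^*}.\{\indvyend_{\tile'}\}]\big)$ capture the $\roles$-successor structure. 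Crucially, the only non-regular language enters exactly where the definition demands it, in the last two conjuncts:
\[
  \conceptC_{\mathrm{\ref{yardstick:EqDist}}} \deff \bigdland_{\tile\in\tiles}\exists{\langvpaeq{\roler}{\roles}}.\{\indvyend_\tile\} \dland \forall{\langvpaeq{\roler}{\roles}}.\bigdlor_{\tile\in\tiles}\{\indvyend_\tile\}, \qquad \conceptC_{\mathrm{\ref{yardstick:NoEqDist}}} \deff \forall{(\roler{+}\roles)^+}.\neg\exists{\langvpaeq{\roler}{\roles}}.\bigdlor_{\tile\in\tiles}\{\indvyend_\tile\}.
\]
Every other conjunct uses only regular expressions over $\{\roler,\roles\}$, so $\conceptyardstick^\tiles$ is indeed an $\ALCOregrhashshash$-concept employing only $\roler,\roles$ and names from $\namesrulerT$, as required.

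\textbf{Main obstacle and correctness.} The delicate point is \emph{not} the non-regular language (which is used verbatim as in the definition), but rather faithfully encoding the \emph{distinctness} and \emph{uniqueness} requirements without a primitive inequality or a genuine universal modality, and ensuring that the guarded universals of the form $\forall{(\roler{+}\roles)^*}.(\{\indva\}\to\cdots)$ actually ``fire''. The resolution is the interplay between $\conceptC_{\mathrm{\ref{yardstick:DifNom}}}$ and the other conjuncts: because $\mathrm{\ref{yardstick:DifNom}}$ forces each named element to be $(\roler{+}\roles)^*$-reachable from $\indvyst^\interI$ (and $\mathrm{\ref{yardstick:Mid}}$ even forces $\indvmd^\interI$ to be $\roler^*$-reachable), the universally guarded statements are evaluated at elements that genuinely lie in the scope of the navigation, so they are not vacuously true and hence impose the intended constraints. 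Once this is in place, I would finish by a routine bidirectional unfolding of the $\ALCOregrhashshash$ semantics: reading each conjunct at $\indvyst^\interI$ reproduces verbatim the corresponding clause of Definition~\ref{def:yardstick}, and conversely a $\tiles$-yardstick satisfies every conjunct at its $\indvyst^\interI$. This yields the claimed biconditional.
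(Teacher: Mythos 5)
Your proposal is correct and follows essentially the same route as the paper's proof: one conjunct per condition of Definition~\ref{def:yardstick}, written from the viewpoint of $\indvyst$ and pinned by the leading nominal $\{\indvyst\}$, with the non-regular language $\langvpaeq{\roler}{\roles}$ appearing only in the conjuncts for $\mathrm{\ref{yardstick:EqDist}}$ and $\mathrm{\ref{yardstick:NoEqDist}}$. Your individual concepts differ from the paper's only by trivial logical equivalences (e.g., $\neg\exists$ versus $\forall\neg$, and a slightly different but equivalent encoding of name distinctness).
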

\begin{proof}
  The following concepts are written from the point of view of the interpretation~of~$\indvyst$.
 \begin{align*} 
  \conceptC_{\mathrm{\ref{yardstick:DifNom}}} & \deff \bigdland_{\indva \in \namesrulerT} \exists{(\roler+\roles)^*} \Big[ \{ \indva \}  \dland \bigdland_{\indvb \in \namesrulerT \setminus \{ \indva \} } \neg \{ \indvb \} \Big].\\
\end{align*}
\begin{align*}
  \conceptC_{\mathrm{\ref{yardstick:NoLoop}}} & \deff \bigdland_{\indva \in \namesrulerT} \forall{(\roler+\roles)^*}.\Big[ \{ \indva\} \to \forall{(\roler+\roles)^+}.\neg\{\indva\} \Big].\\ 
  \conceptC_{\mathrm{\ref{yardstick:Mid}}} & \deff 
  \Big[ \exists{\roler^*}.\left( (\exists{\roles}.\top) \dland \{ \indvmd \} \right)  \Big] 
  \dland
  \Big[ \forall{\roler^*}.\left( (\exists{\roles}.\top) \to \{ \indvmd \} \right) \Big].\\ 
  \conceptC_{\mathrm{\ref{yardstick:SuccOfMid}}} & \deff \forall{(\roler + \roles)^*}.
  \left(
  \{ \indvmd \} \to \Big[
  \bigdland_{\indva \in \{ \indvmd_\tile \mid \tile \in \tiles \}} \exists{\roles}.\{ \indva \}
  \Big] \dland 
  \Big[
  \forall{\roles}.\bigdlor_{\indva \in \{ \indvmd_\tile \mid \tile \in \tiles \}}\{ \indva \}
  \Big]
  \right).\\
  \conceptC_{\mathrm{\ref{yardstick:ReachMidT}}} & \deff \bigdland_{\tile \in \tiles} \forall{(\roler + \roles)^*}. \{ \indvmd_\tile \} \to
  \Big[ \left( \exists{\roles^*}.\{ \indvyend_\tile \} \right) \dland \bigdland_{\tile' \in \tiles, \tile \neq \tile'} \forall{\roles^*}.\neg\{ \indvyend_{\tile'} \}  \Big].\\
    \conceptC_{\mathrm{\ref{yardstick:EqDist}}} & \deff 
    \left( \bigdland_{\tile \in \tiles} \exists{\langvpaeq{\roler}{\roles}}.\{\indvyend_\tile\} \right)
    \dland \forall{\langvpaeq{\roler}{\roles}}.\left( \bigdlor_{\tile \in \tiles} \{\indvyend_\tile\} \right).\\
    \conceptC_{\mathrm{\ref{yardstick:NoEqDist}}} & \deff \forall{(\roler+\roles)^+}.\forall{\langvpaeq{\roler}{\roles}}.\left( \bigdland_{\tile \in \tiles} \neg \{\indvyend_\tile\} \right).\\
  \end{align*} 
    We define $\conceptyardstick^{\tiles}$ as the conjunction of $\{ \indvyst \}$ and all the concept definitions presented above.
    By semantics of $\ALCOregrhashshash$ we have that $\conceptyardstick^{\tiles}$ is consistent if and only if~$(\conceptyardstick^\tiles)^\interI = \{ \indvyst^{\interI} \}$.
\end{proof}

We next put pseudosnakes and yardsticks together, obtaining metricobras.
The intuition behind their construction is fairly simple: (i) we take a disjoint union of a pseudosnake and a yardstick, (ii) we then connect (via the role $\roles$) every element carrying a tile~$\tile$ with the interpretation of the corresponding nominal $\indvmd_\tile$, and finally (iii) we synchronise the length of the underlying yardstick, say~$\rmN$, with the length of the path between the interpretations of $\indvsld$ and $\indvsrd$. 
After such ``merging'', retrieving $\mathrm{\ref{snake:Verti}}$ is easy: rather than testing if every $\rmN$-reachable element from some $\domelemd$ carries a suitable tile $\tile$ (for an a priori unknown $\rmN$) we can check instead whether $\domelemd$ can $\langvpaeq{\roler}{\roles}$-reach the interpretation of $\indvyend_\tile$.

A formal definition and a picture come next.

\begin{figure}[h]
  \centering
  \begin{tikzpicture}[transform shape]
      \draw (0, 0) node[medrond] (A1) {};
      \draw (1.5, 0) node[medrond] (A2) {};
      \draw (3, 0) node[medrond] (A3) {};
      \draw (4.5, 0) node[medrond] (A4) {};

      \draw (0, 1.5) node[medrond] (B1) {};
      \draw (1.5, 1.5) node[medrond] (B2) {};
      \draw (3, 1.5) node[medrond] (B3) {};
      \draw (4.5, 1.5) node[medrond] (B4) {};

      \draw (6, 1.5) node[medrond] (C1) {};
      \draw (7.5, 1.5) node[medrond] (C2) {};
      \draw (9, 1.5) node[medrond] (C3) {};
      \draw (10.5, 1.5) node[medrond] (C4) {};

      \path[->] (A1) edge [red] node[yshift=3] {$\roler$} (A2);
      \path[->] (A2) edge [red] node[yshift=3] {$\roler$} (A3);
      \path[->] (A3) edge [red] node[yshift=3, xshift=-8] {$\roler$} (A4);
      \path[->] (A4) edge [red] node[yshift=3] {$\roler$} (B1);
      \path[->] (B1) edge [red] node[yshift=3] {$\roler$} (B2);
      \path[->] (B2) edge [red] node[yshift=3] {$\roler$} (B3);
      \path[->] (B3) edge [red] node[yshift=3, xshift=-8] {$\roler$} (B4);
      \path[->] (B4) edge [red] node[yshift=3] {$\roler$} (C1);
      \path[->] (C1) edge [red] node[yshift=3] {$\roler$} (C2);
      \path[->] (C2) edge [red] node[yshift=3] {$\roler$} (C3);
      \path[->] (C3) edge [red] node[yshift=3] {$\roler$} (C4);

       \smallwang{-0.125}{-0.125}{white}{white}{cyan}{rouge} 
       \smallwang{1.5-0.125}{-0.125}{cyan}{white}{lime}{rouge} 
       \smallwang{3-0.125}{-0.125}{lime}{white}{rouge}{rouge} 
       \smallwang{4.5-0.125}{-0.125}{rouge}{white}{white}{cyan}

       \smallwang{-0.125}{1.5-0.125}{white}{rouge}{cyan}{lime} 
       \smallwang{1.5-0.125}{1.5-0.125}{cyan}{rouge}{cyan}{cyan} 
       \smallwang{3-0.125}{1.5-0.125}{cyan}{rouge}{rouge}{cyan} 
       \smallwang{4.5-0.125}{1.5-0.125}{rouge}{cyan}{white}{lime} 

       \smallwang{6-0.125}{1.5-0.125}{white}{lime}{cyan}{white} 
       \smallwang{7.5-0.125}{1.5-0.125}{cyan}{cyan}{cyan}{white} 
       \smallwang{9-0.125}{1.5-0.125}{cyan}{cyan}{lime}{white} 
       \smallwang{10.5-0.125}{1.5-0.125}{lime}{lime}{white}{white}

      \draw (0, -0.5) node[] (XXX) {};
      \node[above=0.1em of A1] {$\indvsld^{\interI}$};
      \node[above=0.1em of A4] {$\indvsrd^{\interI}$};
      \node[above=0.1em of C4] {$\indvsru^{\interI}$}; 
      \node[above=0.1em of C1] {$\indvslu^{\interI}$}; 

      \draw (1.5, -1.25) node[medrond] (Cobra) {};
      \node[right=0.1em of Cobra] {$\indvcbra^{\interI}$}; 
      \path[->] (Cobra) edge [red] node[yshift=3] {$\roler$} (A1);

      \draw (0, -2.5) node[medrond] (Y1) {};
      \node[above=0.1em of Y1] {$\indvyst^{\interI}$};
      \path[->] (Cobra) edge [blue, dotted] node[yshift=3, xshift=1] {$\roles$} (Y1);

      \draw (1.5, -2.5) node[medrond] (Y2) {};
      \path[->] (Y1) edge [red] node[yshift=3] {$\roler$} (Y2);

      \draw (3, -2.5) node[medrond] (Y3) {};
      \path[->] (Y2) edge [red] node[yshift=3] {$\roler$} (Y3);

      \draw (4.5, -2.5) node[medrond] (Y4) {};
      \path[->] (Y3) edge [red] node[yshift=3] {$\roler$} (Y4);

      \draw (4.5, -2.5) node[medrond] (Y4) {};
      \path[->] (Y3) edge [red] node[yshift=3] {$\roler$} (Y4);

      \draw (6, -2.5) node[medrond] (Y5) {};
      \node[above=0.1em of Y5] {$\indvmd^{\interI}$};
      \path[->] (Y4) edge [red] node[yshift=3] {$\roler$} (Y5);

      \draw (7.4, -1.5) node[medrond] (YA1) {};
      \node[above=0.1em of YA1] {$\indvmd_{\intextwang{rouge}{white}{white}{cyan}}^{\interI}$};
      \path[->] (Y5) edge [blue, dotted] node[yshift=3, xshift=1] {$\roles$} (YA1);
      \draw (8.6, -1.5) node[medrond] (YA2) {};
      \path[->] (YA1) edge [blue, dotted] node[yshift=3, xshift=1] {$\roles$} (YA2);
      \draw (9.8, -1.5) node[medrond] (YA3) {};
      \path[->] (YA2) edge [blue, dotted] node[yshift=3, xshift=1] {$\roles$} (YA3);
      \draw (11, -1.5) node[medrond] (YA4) {};
      \node[above=0.1em of YA4] {$\indvyend_{\intextwang{rouge}{white}{white}{cyan}}^{\interI}$};
      \path[->] (YA3) edge [blue, dotted] node[yshift=3, xshift=1] {$\roles$} (YA4);

      \node[below=0.1em of A1] {$\ldots$};
      \node[below=0.1em of A2] {$\ldots$};
      \node[below=0.1em of B1] {$\ldots$};
      \node[below=0.1em of B2] {$\ldots$};
      \node[below=0.1em of B3] {$\ldots$};
      \node[below=0.1em of B4] {$\ldots$};
      \node[below=0.1em of C1] {$\ldots$};
      \node[below=0.1em of C2] {$\ldots$};
      \node[below=0.1em of C3] {$\ldots$};
      \node[below=0.1em of C4] {$\ldots$};

      \path[->] (A4) edge [blue, dotted] node[yshift=3, xshift=1] {$\roles$} (YA1);

      \draw (7.4, -3) node[medrond] (YB1) {};
      \node[above=0.1em of YB1] {$\indvmd_{\intextwang{lime}{white}{rouge}{rouge} }^{\interI}$};
      \path[->] (Y5) edge [blue, dotted] node[yshift=3, xshift=1] {$\roles$} (YB1);
      \draw (8.6, -3) node[medrond] (YB2) {};
      \path[->] (YB1) edge [blue, dotted] node[yshift=3, xshift=1] {$\roles$} (YB2);
      \draw (9.8, -3) node[medrond] (YB3) {};
      \path[->] (YB2) edge [blue, dotted] node[yshift=3, xshift=1] {$\roles$} (YB3);
      \draw (11, -3) node[medrond] (YB4) {};
      \node[above=0.1em of YB4] {$\indvyend_{\intextwang{lime}{white}{rouge}{rouge} }^{\interI}$};
      \path[->] (YB3) edge [blue, dotted] node[yshift=3, xshift=1] {$\roles$} (YB4);

      \path[->] (Y5) edge [blue, dotted] node[yshift=3, xshift=1] {$\roles$} (6.75, -3.25);      
      \path[->] (Y5) edge [blue, dotted] node[yshift=3, xshift=1] {$\roles$} (6.25, -3.25);      
      \path[->] (Y5) edge [blue, dotted] node[yshift=3, xshift=1] {$\roles$} (5.75, -3.25);
      \path[->] (Y5) edge [blue, dotted] node[yshift=3, xshift=1] {$\roles$} (5.25, -3.25);
      \path[->] (A3) edge [blue, dotted] node[] {$\roles$} (YB1);

  \end{tikzpicture}
  \caption{A fragment of an example $\tilingsys$-metricobra representing $\tilesmap$ from Example~\ref{ex:visualization-of-tilesmap}.
  The upper part corresponds to a $\tilingsys$-snake, and the lower part corresponds to a $\tiles$-yardstick. 
  The distances between named~elements are important.}
  \label{fig:an-example-cobra}
\end{figure}

\begin{defi}
Let $\tilingsys \deff (\tilesCol, \tiles, \tileswhite)$ be a domino tiling system  and $\indvcbra$ be an individual name.
An interpretation~$\interI$ is a $\tilingsys$-\emph{metricobra} if all the conditions below are satisfied: 
  \begin{description}\itemsep0em
    \item[\desclabel{(MInit)}{metricobra:Init}] $\interI$ is a $\tilingsys$-pseudosnake and a $\tiles$-yardstick, and $\indvcbra^{\interI}$ has precisely two successors: one $\roler$-successor, namely $\indvsld^{\interI}$, and one $\roles$-successor, namely~$\indvyst^{\interI}$.
    \item[\desclabel{(MTile)}{metricobra:Tiles}] For every tile $\tile \in \tiles$ and every element $\domelemd \in \DeltaI$ that is $\roler^*$-reachable from $\indvsld^{\interI}$ we have that $\domelemd$ carries a tile $\tile \in \tiles$ if and only if $\domelemd$ has a unique $\roles$-successor and such a successor is equal to $\indvmd_\tile^{\interI}$.
    \item[\desclabel{(MSync)}{metricobra:Sync}] Let $\tile \in \tiles$ be the tile labelling $\indvsrd^{\interI}$.
    Then (a) $\indvcbra^{\interI}$ $\langvpaeq{\roler}{\roles}$-reaches $\indvyend_\tile^{\interI}$ and cannot $\langvpaeq{\roler}{\roles}$-reach any of~$\indvyend_{\tile'}^{\interI}$ for $\tile' \neq \tile$,
    (b) $\indvcbra^{\interI}$ cannot $\langvpaeq{\roler}{\roles}$-reach an element that can $\roles^+$-reach $\indvyend_\tile^{\interI}$,
    (c) no element $\roler^*$-reachable from $\indvsld^{\interI}$ can $\langvpaeq{\roler}{\roles}$-reach~$\indvyend_\tile^{\interI}$.  
    \item[\desclabel{(MVerti)}{metricobra:Verti}] For all elements $\domelemd$ that are $\roler^*$-reachable from $\indvsld^{\interI}$ and are labelled by some $\tile \in \tiles$ that is not up-border, we have that there exists a tile $\tile' \in \tiles$ such that (a) $\tile$ and $\tile'$ are $\rmV$-compatible, (b) $\tile$ is left-border (resp. right-border) iff $\tile'$ is, and (c) $\domelemd$ can $\langvpaeq{\roler}{\roles}$-reach $\indvyend_{\tile'}$ but cannot reach $\langvpaeq{\roler}{\roles}$-reach $\indvyend_{\tile''}$ for all~$\tile'' \neq \tile'$.
  \end{description}
\end{defi}

We first show that $\tilingsys$-metricobras are axiomatizable in $\ALCOregrhashshash$.
\begin{lem}\label{lemma:expressing-metricobras-in-ALCVPLO}
There exists an $\ALCOregrhashshash$-concept $\conceptcobra^\tilingsys$ such that for all interpretations $\interI$ we have that $\interI$ is a $\tilingsys$-metricobra if and only if $(\conceptcobra^\tilingsys)^\interI = \{ \indvcbra^{\interI} \}$.
\end{lem}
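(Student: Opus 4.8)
The plan is to build $\conceptcobra^\tilingsys$ as the conjunction of the nominal $\{\indvcbra\}$ with four concepts, one per condition of the definition, so that $\conceptcobra^\tilingsys$ can only ever hold at $\indvcbra^\interI$; the equivalence $(\conceptcobra^\tilingsys)^\interI = \{\indvcbra^\interI\}$ then reduces to checking that, evaluated at $\indvcbra^\interI$, each conjunct expresses precisely its corresponding condition. For the pseudosnake/yardstick part of \ref{metricobra:Init} I would not re-axiomatise anything: I would simply reuse the concepts $\conceptsnake^\tilingsys$ and $\conceptyardstick^\tiles$ from Lemmas~\ref{lemma:expressing-pseudosnakes-in-ALCVPLO} and~\ref{lemma:expressing-yardstick-in-ALCVPLO}, which already hold exactly at $\indvsld^\interI$ (resp.\ $\indvyst^\interI$) iff $\interI$ is a pseudosnake (resp.\ yardstick). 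Since those two anchors are an $\roler$- and an $\roles$-successor of $\indvcbra^\interI$, the remaining part of \ref{metricobra:Init} is expressed by $\exists{\roler}.(\{\indvsld\} \dland \conceptsnake^\tilingsys) \dland \forall{\roler}.\{\indvsld\} \dland \exists{\roles}.(\{\indvyst\} \dland \conceptyardstick^\tiles) \dland \forall{\roles}.\{\indvyst\}$, which simultaneously forces the unique $\roler$- and $\roles$-successor structure and imports both global axiomatisations.

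Second, I would encode the three navigational conditions by walking from $\indvcbra^\interI$ to the relevant anchors with nominals and regular reachability, and by using the single non-regular operator $\exists{\langvpaeq{\roler}{\roles}}.(\cdot)$ only where distances must be compared. Condition \ref{metricobra:Tiles} is a local biconditional enforced along the snake: $\exists{\roler}.\big(\{\indvsld\} \dland \forall{\roler^*}.\bigdland_{\tile\in\tiles}[\conceptC_\tile \leftrightarrow (\exists{\roles}.\{\indvmd_\tile\} \dland \forall{\roles}.\{\indvmd_\tile\})]\big)$. For \ref{metricobra:Sync} I would first identify the tile of $\indvsrd^\interI$ by a guarded disjunction over $\tiles$ --- the guard being $\exists{\roler}.(\{\indvsld\} \dland \exists{\roler^*}.(\{\indvsrd\} \dland \conceptC_\tile))$ --- and under each guard impose (a) $\exists{\langvpaeq{\roler}{\roles}}.\{\indvyend_\tile\} \dland \bigdland_{\tile'\neq\tile}\forall{\langvpaeq{\roler}{\roles}}.\neg\{\indvyend_{\tile'}\}$, (b) $\forall{\langvpaeq{\roler}{\roles}}.\neg\exists{\roles^+}.\{\indvyend_\tile\}$, and (c) $\exists{\roler}.(\{\indvsld\} \dland \forall{\roler^*}.\neg\exists{\langvpaeq{\roler}{\roles}}.\{\indvyend_\tile\})$. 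Condition \ref{metricobra:Verti} is a guarded statement along the snake, analogous to \ref{snake:Verti} but with the (unknown) distance $\rmN$ replaced by a $\langvpaeq{\roler}{\roles}$-test: $\exists{\roler}.\big(\{\indvsld\} \dland \forall{\roler^*}.\bigdland_{\tile \text{ not up-border}}[\conceptC_\tile \to \bigdlor_{\tile'}(\exists{\langvpaeq{\roler}{\roles}}.\{\indvyend_{\tile'}\} \dland \bigdland_{\tile''\neq\tile'}\forall{\langvpaeq{\roler}{\roles}}.\neg\{\indvyend_{\tile''}\})]\big)$, where the inner disjunction ranges over tiles $\tile'$ that are $\rmV$-compatible with $\tile$ and agree with $\tile$ on being left-/right-border. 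Taking $\conceptcobra^\tilingsys$ to be the conjunction of $\{\indvcbra\}$ with these pieces, the semantics of $\ALCOregrhashshash$ then makes each conjunct hold at $\indvcbra^\interI$ exactly when its condition is met, yielding the claimed equivalence.

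The main obstacle I expect is not the syntactic encoding but the justification that reusing $\conceptsnake^\tilingsys$ and $\conceptyardstick^\tiles$ inside a single \emph{shared} interpretation is sound, i.e.\ that the cross-edges of a metricobra do not make either imported axiomatisation fail. I would argue this by a non-interference inspection: $\conceptsnake^\tilingsys$ mentions only the role $\roler$ and the tile concepts, so the $\roles$-edges added by \ref{metricobra:Tiles} and the incoming $\roler$-edge $\indvcbra^\interI \to \indvsld^\interI$ leave every snake condition untouched (in particular no new $\roler$-cycle through a $\namessnakeT$-named element is created). Conversely, every yardstick condition is scoped to elements $\roler^*$- or $(\roler{+}\roles)^*$-reachable \emph{from} $\indvyst^\interI$; since no snake element is $\roler^*$-reachable from $\indvyst^\interI$ and the only $\roles$-edges leaving the yardstick's $\roler$-spine originate at $\indvmd^\interI$ (guaranteed by \ref{yardstick:Mid}), the extra $\roles$-edges of the metricobra --- which emanate from snake elements --- never enter the scope of $\conceptyardstick^\tiles$. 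Together with the observation that every relevant anchor, and hence the whole structure, is $(\roler{+}\roles)^*$-reachable from $\indvcbra^\interI$, this shows the conjunction faithfully captures all four conditions. The correctness of the actual ``measuring'' performed by the $\langvpaeq{\roler}{\roles}$-tests, which relies on the equidistance guaranteed by Lemma~\ref{lemma:yardstickequidist}, is only needed for the subsequent reduction and not for this expressibility statement.
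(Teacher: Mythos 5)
Your proposal is correct and follows essentially the same route as the paper's proof: the same $\{\indvcbra\}$-anchored conjunction importing $\conceptsnake^\tilingsys$ and $\conceptyardstick^\tiles$ under $\exists{\roler}$/$\exists{\roles}$, the same biconditional encoding of $\mathrm{\ref{metricobra:Tiles}}$, and the same tile-guarded $\langvpaeq{\roler}{\roles}$-tests for $\mathrm{\ref{metricobra:Sync}}$ and $\mathrm{\ref{metricobra:Verti}}$ (the paper additionally writes $\forall{\roler}.\neg\{\indvyst\}$ and $\forall{\roles}.\neg\{\indvsld\}$ to make the two successors of $\indvcbra^{\interI}$ explicitly distinct, a detail your encoding leaves to be derived from the other conjuncts). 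Your closing ``non-interference'' concern is moot for this particular lemma, since Lemmas~\ref{lemma:expressing-pseudosnakes-in-ALCVPLO} and~\ref{lemma:expressing-yardstick-in-ALCVPLO} are stated as equivalences over \emph{arbitrary} interpretations, so conjoining the two concepts captures exactly the conjunction of the two properties; compatibility of the cross-edges only becomes relevant in Lemma~\ref{lemma:from-snake-to-cobra}, where a metricobra must actually be exhibited.
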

\begin{proof}
We present a rather straightforward axiomatization of the aforementioned properties, written from the point of view of the interpretation of $\indvcbra$.
Note that as the interpretation of $\indvcbra$ has a unique $\roler$-successor, we can simplify concepts of the form $\forall{\roler}.[\{ \indvsld \} \to (\forall{\roler^*}.\conceptC)]$ to $\forall{\roler^+}.\conceptC$, which we frequently do below.
  \begin{align*} 
    \conceptC_{\mathrm{\ref{metricobra:Init}}} & \deff 
    \exists{\roler}.\conceptsnake^\tilingsys \dland \exists{\roles}.\conceptyardstick^\tiles \dland
    \forall{\roler}.(\{\indvsld \} \dland \neg \{\indvyst\}) 
    \dland \forall{\roles}.(\neg \{ \indvsld \} \dland \{ \indvyst \}). \\
    \conceptC_{\mathrm{\ref{metricobra:Tiles}}} & \deff \bigdland_{\tile \in \tiles} \forall{\roler^+}.\left( \conceptC_\tile \leftrightarrow  [\exists{\roles}.\{\indvmd_{\tile}\} \dland \forall{\roles}.\{ \indvmd_{\tile}\}] \right). \\
    \conceptC_{\mathrm{\ref{metricobra:Verti}}} & \deff \forall{\roler^+}. \bigdland_{\tile \in \tiles}^{\text{not up-border}} \left( \conceptC_\tile \to \hspace{-1.5em}\bigdlor_{\tile' \in \tiles}^{\text{sat.\ (a),(b) of } \mathrm{\ref{metricobra:Verti}}} \hspace{-3em}\big[ (\exists{\langvpaeq{\roler}{\roles}}.\{ \indvyend_{\tile'} \}) \dland \hspace{-1em}\bigdland_{\tile'' \neq \tile', \tile'' \in \tiles} \hspace{-0.5em}\forall{\langvpaeq{\roler}{\roles}}.\neg\{ \indvyend_{\tile''} \} \big] \right). \\
  \end{align*}  
  \[
    \!\conceptC_{\mathrm{\ref{metricobra:Sync}}}^\tile\!\deff\!(\exists{\langvpaeq{\roler}{\roles}}.\{\indvyend_\tile\}\hspace*{-1pt}) \dland \bigdland_{\tile' \neq \tile} (\forall{\langvpaeq{\roler}{\roles}}.\neg\{\indvyend_{\tile'}\}\hspace*{-1pt}) \dland (\forall{\langvpaeq{\roler}{\roles}}.\forall{\roles^+}.\neg\{\indvyend_\tile\}\hspace*{-1pt}) \dland \forall{\roler^+}.\forall{\langvpaeq{\roler}{\roles}}.\neg\{\indvyend_\tile\}.
  \]
      We define $\conceptcobra^\tilingsys \deff
        \{ \indvcbra \} \dland \conceptC_{\mathrm{\ref{metricobra:Init}}} \dland \conceptC_{\mathrm{\ref{metricobra:Tiles}}} \dland \conceptC_{\mathrm{\ref{metricobra:Verti}}} \dland \bigdland_{\tile \in \tiles} [\left( \exists{\roler^+}.( \{ \indvsrd \} \dland \conceptC_\tile ) \right) \to\conceptC_{\mathrm{\ref{metricobra:Sync}}}^\tile].
      $
      Its correctness follows from the semantics of $\ALCOregrhashshash$ and Lemmas~\ref{lemma:expressing-pseudosnakes-in-ALCVPLO} and~\ref{lemma:expressing-yardstick-in-ALCVPLO}.
\end{proof}

As the second step, we show that for every $\tilingsys$-snake we can construct a $\tilingsys$-metricobra.
\begin{lem}\label{lemma:from-snake-to-cobra}
  If $\interI$ is a $\tilingsys$-snake then there exists a $\tilingsys$-metricobra $\interJ$.
  \end{lem}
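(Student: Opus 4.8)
The plan is to sidestep the possibly irregular internal structure of $\interI$ and argue through solvability. Since $\interI$ is a $\tilingsys$-snake, Lemma~\ref{lemma:from-snakes-to-tilling} yields positive integers $\rmN, \rmM$ and a map $\tilesmap \colon \ZZ_\rmN \times \ZZ_\rmM \to \tiles$ witnessing that $\tilingsys$ covers $\ZZ_\rmN \times \ZZ_\rmM$. From $\tilesmap$ I would build $\interJ$ by gluing two gadgets, as depicted in Figure~\ref{fig:an-example-cobra}. The first is the canonical $\tilingsys$-snake $S$ associated with $\tilesmap$ exactly as in the proof of Lemma~\ref{lemma:from-tiling-to-snake}: an $\roler$-path on $\ZZ_{\rmN\rmM}$ with $\indvsld^{\interJ}=0$, $\indvsrd^{\interJ}=\rmN{-}1$, $\indvslu^{\interJ}=(\rmM{-}1)\rmN$, $\indvsru^{\interJ}=\rmN\rmM{-}1$, where the cell at position $x{+}y\rmN$ carries $\tilesmap(x,y)$. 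The second is a $\tiles$-yardstick $Y$ of length $\rmN$, which exists by Lemma~\ref{lemma:yardsticks-of-len-N-exist}. I take the disjoint union $S \cupdot Y$ inside $\DeltaJ$, add one fresh element $\indvcbra^{\interJ}$, and let $\roler^{\interJ}, \roles^{\interJ}$ be the union of the roles of $S$ and of $Y$ together with the edges $(\indvcbra^{\interJ},\indvsld^{\interJ})\in\roler^{\interJ}$, $(\indvcbra^{\interJ},\indvyst^{\interJ})\in\roles^{\interJ}$, and one synchronising edge $(\domelemd,\indvmd_{\tile}^{\interJ})\in\roles^{\interJ}$ from every cell $\domelemd$ of $S$ carrying the tile $\tile$. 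All tile concepts are left empty off $S$.

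For \ref{metricobra:Init}, the key observation is that every edge I added points from $S$ or from $\indvcbra^{\interJ}$ \emph{into} $Y$, and no edge ever leaves $Y$. Hence each reachability notion occurring in the yardstick conditions --- all of which originate at $\namesrulerT$-named elements, in particular at $\indvyst^{\interJ}$ --- is computed entirely inside $Y$ and is therefore inherited from the genuine yardstick $Y$; thus $\interJ$ is a $\tiles$-yardstick. Symmetrically, the set of elements $\roler^{*}$-reachable from $\indvsld^{\interJ}$ is exactly $S$, and the five pseudosnake conditions refer only to $\roler$, to tiles and to names, never to $\roles$; since all added edges are $\roles$-edges, they are harmless, and $\interJ$ is a $\tilingsys$-pseudosnake because $S$ is a $\tilingsys$-snake. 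The element $\indvcbra^{\interJ}$ has precisely one $\roler$-successor and one $\roles$-successor by construction, completing \ref{metricobra:Init}. Condition \ref{metricobra:Tiles} is immediate, since each cell of $S$ was given a single $\roles$-successor $\indvmd_{\tile}^{\interJ}$ determined by its unique tile $\tile$ (see \ref{snake:UniqTil}).

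The heart of the argument, and the step I expect to be the main obstacle, is the synchronisation computation. Fix a cell of $S$ at position $p = x{+}y\rmN$ and consider any $\langvpaeq{\roler}{\roles}$-path $\roler^{n}\roles^{n}$ out of it ending in some $\indvyend_{\tile'}^{\interJ}$. Its $\roler^{n}$-prefix stays inside $S$ and reaches position $p{+}n$; the first $\roles$-step then enters $Y$ at $\indvmd_{\tile''}^{\interJ}$ for the tile $\tile''$ of position $p{+}n$, and the remaining $\roles^{n-1}$ must travel along a branch of $Y$. As these branches are pairwise node-disjoint, carry no $\roler$-edges, and have length exactly $\rmN{-}1$ (Lemma~\ref{lemma:yardstickequidist}), reaching an $\indvyend$-node forces $n{-}1 = \rmN{-}1$, hence $n = \rmN$ and $\tile' = \tile'' = \tilesmap(x,y{+}1)$. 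Consequently every cell of $S$ $\langvpaeq{\roler}{\roles}$-reaches exactly the single node $\indvyend_{\tilesmap(x,y{+}1)}^{\interJ}$, and does so precisely when the cell below exists, that is, when $\tilesmap(x,y)$ is not up-border. Given this, \ref{metricobra:Verti} follows directly: $\rmV$-compatibility of $\tilesmap(x,y)$ and $\tilesmap(x,y{+}1)$ is \ref{tiles:Verti}, and the left-/right-border status depends (by \ref{tiles:Borders}) only on $x$, shared by both cells. For \ref{metricobra:Sync} let $\tile$ be the tile of $\indvsrd^{\interJ}$; by \ref{tiles:Borders} it is the \emph{unique} tile that is both right- and down-border, so it occurs only at position $\rmN{-}1$. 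The path reading $\roler\,\roler^{\rmN-1}\,\roles^{\rmN}$ from $\indvcbra^{\interJ}$ through $\indvsld^{\interJ}$ and $\indvsrd^{\interJ}$ reaches $\indvyend_{\tile}^{\interJ}$, and the forcing of $n=\rmN$ shows this target is unique, giving \ref{metricobra:Sync}(a). No cell of $S$ reaches $\indvyend_{\tile}^{\interJ}$, since that would require its lower neighbour to sit in row $0$, which is impossible; this is \ref{metricobra:Sync}(c). Finally, the only $\langvpaeq{\roler}{\roles}$-targets of $\indvcbra^{\interJ}$ are branch nodes of tiles occurring at positions $0,\dots,\rmN{-}1$, and by the uniqueness of $\tile$ the single such node lying on the $\tile$-branch is $\indvyend_{\tile}^{\interJ}$ itself, a branch endpoint with no outgoing $\roles$-edge; hence none of these targets can $\roles^{+}$-reach $\indvyend_{\tile}^{\interJ}$, which is \ref{metricobra:Sync}(b). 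This verifies all conditions, so $\interJ$ is a $\tilingsys$-metricobra.
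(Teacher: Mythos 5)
Your proof is correct, but it takes a genuinely different route from the paper's. The paper glues the yardstick directly onto the \emph{given} snake $\interI$: it takes the disjoint union of $\interI$, a $\tiles$-yardstick of length $\rmN$ (with $\rmN$ supplied by \ref{snake:Len}), and a fresh cobra element, adds the same connecting edges you do, and then discharges \ref{metricobra:Sync} and \ref{metricobra:Verti} abstractly by citing \ref{snake:Len} and \ref{snake:Verti}. You instead normalise first: you run $\interI$ through Lemma~\ref{lemma:from-snakes-to-tilling} to extract a tiling $\tilesmap$, rebuild the canonical snake of Lemma~\ref{lemma:from-tiling-to-snake} from it, and glue the yardstick onto that rigid object. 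Since the lemma only asserts existence of \emph{some} metricobra, and Lemma~\ref{lemma:from-snakes-to-tilling} is proved before this one, the detour is legitimate and non-circular. What your route buys is that every verification becomes an explicit coordinate computation on a structure consisting of a single $\roler$-path plus pairwise disjoint $\roles$-branches, so the key forcing argument ($n=\rmN$ for any $\roler^n\roles^n$-path ending in an $\indvyend$-node) is airtight; it also quietly sidesteps a subtlety in the paper's construction, namely that Definition~\ref{def:snake} does not forbid stray $\roles$-edges inside $\interI$, which in a naive disjoint-union gluing could violate \ref{metricobra:Tiles} and \ref{metricobra:Sync} unless one first deletes all non-$\roler$ edges (harmless, but left implicit in the paper). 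The cost is a logical dependence on Lemma~\ref{lemma:from-snakes-to-tilling}, the most involved proof of the section, which the paper's argument avoids. Two cosmetic slips in your write-up, neither a gap: the sentence ``all added edges are $\roles$-edges'' overlooks the added $\roler$-edge $(\indvcbra^{\interJ},\indvsld^{\interJ})$ (harmless, since it is incoming to $\indvsld^{\interJ}$ and so changes no forward $\roler$-reachability and creates no loop); and in \ref{metricobra:Sync}(b) the empty word gives $\indvcbra^{\interJ}$ itself as a $\langvpaeq{\roler}{\roles}$-target, which is not a branch node --- but it cannot $\roles^{+}$-reach $\indvyend_{\tile}^{\interJ}$ either, since its only $\roles$-successor $\indvyst^{\interJ}$ has no outgoing $\roles$-edge, so your conclusion stands.
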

  \begin{proof}
  Let $\interI$ be a $\tilingsys$-snake, $\rmN$ be the integer guaranteed by $\mathrm{\ref{snake:Len}}$, and $\interI'$ be any $\tiles$-yardstick of length $\rmN$ (existence guaranteed by Lemma~\ref{lemma:yardsticks-of-len-N-exist}).
  We construct an interpretation $\interJ$ as a disjoint union of $\interI$, $\interI'$ and an additional domain element that we interpret as $\indvcbra^{\interJ}$.
  We let $\interJ$ interpret all names from $\namesrulerT$ as in $\interI'$, all names from $\namessnakeT$ as in $\interI$, and all other (unused in our concept definitions) names as $\indvcbra^{\interJ}$.
  Interpretation of concept names is inherited from $\interI$ and $\interJ$.
  Finally we interpret role names as in $\interI$ and $\interI'$ with minor corrections.
  More precisely:
  (i) we alter the interpretation of $\roler^\interJ$ to include an extra pair $(\indvcbra^{\interJ}, \indvsld^{\interJ})$, and 
  (ii) we alter the interpretation of $\roles^\interJ$ to include the pair $(\indvcbra^{\interJ}, \indvyst^{\interJ})$ and $\bigcup_{\tile \in \tiles} \conceptC_\tile^{\interJ} \times \{ \indvmd_\tile^{\interJ} \}$. 
  Consult Figure~\ref{fig:an-example-cobra} to see an example construction of $\interJ$ from $\interI$ and~$\interI'$.
  The satisfaction of properties~$\mathrm{\ref{metricobra:Init}}$ and $\mathrm{\ref{metricobra:Tiles}}$ follow  from the construction of $\interJ$, while the other two properties are due to, respectively, $\mathrm{\ref{snake:Len}}$ and $\mathrm{\ref{snake:Verti}}$.
  Thus $\interJ$ is the desired metricobra.
\end{proof}

Finally, we show that every $\tilingsys$-metricobra is actually a $\tilingsys$-snake.
\begin{lem}\label{lemma:from-cobra-to-snake}
If $\interI$ is a $\tilingsys$-metricobra then it is also a $\tilingsys$-snake.
\end{lem}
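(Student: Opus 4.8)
The plan is to lean on $\mathrm{\ref{metricobra:Init}}$: a $\tilingsys$-metricobra $\interI$ is by definition a $\tilingsys$-pseudosnake, so the five conditions $\mathrm{\ref{snake:Path}}$, $\mathrm{\ref{snake:NoLoop}}$, $\mathrm{\ref{snake:UniqTil}}$, $\mathrm{\ref{snake:SpecTil}}$, $\mathrm{\ref{snake:Hori}}$ hold for free, and only $\mathrm{\ref{snake:Len}}$ and $\mathrm{\ref{snake:Verti}}$ remain. Since $\interI$ is also a $\tiles$-yardstick, Lemma~\ref{lemma:yardstickequidist} hands us a length $\rmN$; I claim this $\rmN$ is exactly the integer demanded by $\mathrm{\ref{snake:Len}}$. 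The recurring device throughout is a horizontal-to-diagonal translation: any $\domelemd$ that is $\roler^*$-reachable from $\indvsld^{\interI}$ carries a unique tile $\tile_0$ (by $\mathrm{\ref{snake:UniqTil}}$), its unique $\roles$-successor is $\indvmd_{\tile_0}^{\interI}$ (by $\mathrm{\ref{metricobra:Tiles}}$), and from there the yardstick runs $\roles^{\rmN{-}1}$ into $\indvyend_{\tile_0}^{\interI}$ (Lemma~\ref{lemma:yardstickequidist}), so $\domelemd$ $\roles^{\rmN}$-reaches precisely $\indvyend_{\tile_0}^{\interI}$ among the $\indvyend$-nodes (the others excluded by $\mathrm{\ref{yardstick:ReachMidT}}$). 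Hence any $\roler^{\rmN}$-walk ending at $\domelemd$ extends to a $\langvpaeq{\roler}{\roles}$-witness terminating in $\indvyend_{\tile_0}^{\interI}$.

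For $\mathrm{\ref{snake:Len}}$ I would first show that every element $\domeleme$ that is $\roler^{\rmN{-}1}$-reachable from $\indvsld^{\interI}$ equals $\indvsrd^{\interI}$. Prefixing the edge $\indvcbra^{\interI}\to\indvsld^{\interI}$, the device makes $\indvcbra^{\interI}$ $\langvpaeq{\roler}{\roles}$-reach $\indvyend_{\tile_1}^{\interI}$, where $\tile_1$ is the tile of $\domeleme$; by $\mathrm{\ref{metricobra:Sync}}$(a) this forces $\tile_1$ to be the tile $\tile$ of $\indvsrd^{\interI}$. As $\tile$ is right- and down-border it has exactly two white sides, so $\mathrm{\ref{snake:SpecTil}}$ makes $\domeleme$ a $\namessnakeT$-named corner, and since no tile may carry three white sides the only corner with a right-and-down-border tile is $\indvsrd^{\interI}$; thus $\domeleme=\indvsrd^{\interI}$. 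Next I rule out $\roler^+$-paths from $\indvsld^{\interI}$ to $\indvsrd^{\interI}$ of length $m\neq\rmN{-}1$. If $m>\rmN{-}1$, the length-$(\rmN{-}1)$ prefix of such a path already equals $\indvsrd^{\interI}$ by the claim just proved, so its nonempty remainder is an $\roler^+$-self-loop at $\indvsrd^{\interI}$, contradicting $\mathrm{\ref{snake:NoLoop}}$. If $m<\rmN{-}1$, then the walk $\indvcbra^{\interI}\to\indvsld^{\interI}\to^{\roler^m}\indvsrd^{\interI}\to^{\roles}\indvmd_\tile^{\interI}\to^{\roles^{m}}$ lands, inside the length-$(\rmN{-}1)$ $\roles$-stretch toward $\indvyend_\tile^{\interI}$, on a node that still $\roles^+$-reaches $\indvyend_\tile^{\interI}$ while being $\langvpaeq{\roler}{\roles}$-reached from $\indvcbra^{\interI}$, contradicting $\mathrm{\ref{metricobra:Sync}}$(b). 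Since $\mathrm{\ref{snake:Path}}$ supplies at least one such path, its length is pinned to $\rmN{-}1$; combined with the first claim this gives both halves of $\mathrm{\ref{snake:Len}}$.

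For $\mathrm{\ref{snake:Verti}}$ I would take any $\domelemd$ that is $\roler^*$-reachable from $\indvsld^{\interI}$ and carries a non-up-border tile $\tile$. By $\mathrm{\ref{metricobra:Verti}}$ there is a tile $\tile'$ that is $\rmV$-compatible with $\tile$ and matches its left-/right-border status, and such that $\domelemd$ $\langvpaeq{\roler}{\roles}$-reaches only $\indvyend_{\tile'}^{\interI}$; the first two properties are precisely clauses (b) and (c) of $\mathrm{\ref{snake:Verti}}$. For clause (a), let $\domeleme$ be any element $\roler^{\rmN}$-reachable from $\domelemd$ and let $\tile^\star$ be its unique tile. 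The device gives $\domelemd\to^{\roler^{\rmN}}\domeleme\to^{\roles}\indvmd_{\tile^\star}^{\interI}\to^{\roles^{\rmN{-}1}}\indvyend_{\tile^\star}^{\interI}$, so $\domelemd$ $\langvpaeq{\roler}{\roles}$-reaches $\indvyend_{\tile^\star}^{\interI}$; the uniqueness in $\mathrm{\ref{metricobra:Verti}}$(c) then forces $\tile^\star=\tile'$. Hence every $\roler^{\rmN}$-successor of $\domelemd$ carries $\tile'$, which is clause (a), and $\mathrm{\ref{snake:Verti}}$ is established.

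The step I expect to be the main obstacle is $\mathrm{\ref{snake:Len}}$, and specifically the index bookkeeping that makes a horizontal walk of length $\rmN$ compose with a vertical walk of length $\rmN$ into a genuine $\langvpaeq{\roler}{\roles}$-path, together with routing the two failure cases to their correct contradictions---the short case ($m<\rmN{-}1$) to $\mathrm{\ref{metricobra:Sync}}$(b) and the long case ($m>\rmN{-}1$) to $\mathrm{\ref{snake:NoLoop}}$. The whole argument is arranged so as to invoke Lemma~\ref{lemma:yardstickequidist} only through the \emph{existence} of the stated $\roles$-paths, which keeps it independent of any finer uniqueness-of-length analysis inside the yardstick. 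The $\mathrm{\ref{snake:Verti}}$ part, by contrast, is a direct reuse of the same device once $\rmN$ is fixed, and I expect it to be entirely routine.
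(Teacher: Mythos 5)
Your proof is correct, and its overall architecture matches the paper's: the pseudosnake conditions come for free from $\mathrm{\ref{metricobra:Init}}$, the yardstick length $\rmN$ of Lemma~\ref{lemma:yardstickequidist} is the candidate for $\mathrm{\ref{snake:Len}}$, and your central ``device'' --- composing an $\roler$-walk with the $\roles$-edge to $\indvmd_{\tile}$ (via $\mathrm{\ref{metricobra:Tiles}}$) and an $\roles^{\rmN{-}1}$-path to $\indvyend_\tile$ into a $\langvpaeq{\roler}{\roles}$-witness --- is exactly the paper's, as is the use of $\mathrm{\ref{metricobra:Sync}}$(b) against too-short paths and of the uniqueness in $\mathrm{\ref{metricobra:Verti}}$(c) for $\mathrm{\ref{snake:Verti}}$. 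The genuine divergence is in the long-path case and in the order of the argument: you first prove that \emph{every} element $\roler^{\rmN{-}1}$-reachable from $\indvsld^{\interI}$ equals $\indvsrd^{\interI}$ (via $\mathrm{\ref{metricobra:Sync}}$(a), $\mathrm{\ref{snake:SpecTil}}$, and the at-most-two-white-sides convention), and then refute a path of length $m > \rmN{-}1$ by observing that its length-$(\rmN{-}1)$ prefix already ends at $\indvsrd^{\interI}$, so its remainder is an $\roler^+$-cycle contradicting $\mathrm{\ref{snake:NoLoop}}$; the paper instead establishes the existence of a length-$(\rmN{-}1)$ path first, kills long paths with $\mathrm{\ref{metricobra:Sync}}$(c), and derives the ``only element'' claim last. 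Your route buys two things: condition (c) of $\mathrm{\ref{metricobra:Sync}}$ is never invoked, so it is redundant for this direction of the reduction; and Lemma~\ref{lemma:yardstickequidist} is used only through the \emph{existence} of $\roles$-paths of the stated lengths, whereas the paper's phrasing (``all the $\roles^*$-paths \ldots are of length $\rmN$'') leans on the finer uniqueness-of-length analysis inside the yardstick. The paper's version, conversely, uses the metricobra conditions symmetrically ((b) against short paths, (c) against long ones) and thus never needs the detour through $\mathrm{\ref{snake:NoLoop}}$.
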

\begin{proof}
  As $\interI$ is a pseudosnake by definition, it suffices to show that it satisfies the missing conditions of Definition~\ref{def:snake}.
  Our first goal is to establish $\interI \models$ $\mathrm{\ref{snake:Len}}$.
  Note that by $\mathrm{\ref{snake:SpecTil}}$ we have that $\indvsrd^{\interI}$ is the only element $\roler^*$-reachable from $\indvsld^{\interI}$ that carries a down- and right-border tile, say $\tile$.
  Furthermore by $\mathrm{\ref{metricobra:Tiles}}$ and $\mathrm{\ref{yardstick:ReachMidT}}$ we infer that $\indvsrd^{\interI}$ is the only element $\roler^*$-reachable from $\indvsld^{\interI}$ that can $\roles^*$-reach $\indvyend_\tile^{\interI}$.
  Take $\rmN$ to be the length of the yardstick.
  By Lemma~\ref{lemma:yardstickequidist} we know that $\indvmd_\tile^{\interI}$ $\roles^{\rmN{-}1}$-reaches~$\indvyend_\tile^{\interI}$, hence $\indvsrd^{\interI}$ $\roles^{\rmN}$-reaches~$\indvyend_\tile^{\interI}$ (and there is no other integer $\rmM \neq \rmN$ for which such reachability conditions hold).
  From $\mathrm{\ref{metricobra:Sync}}$ we know that~$\indvcbra^{\interI}$ $\langvpaeq{\roler}{\roles}$-reaches~$\indvyend_\tile^{\interI}$, thus by previous observations we deduce that~$\indvcbra^{\interI}$ $\roler^{\rmN}\roles^{\rmN}$-reaches~$\indvyend_\tile^{\interI}$ (whence~$\indvsld^{\interI}$ $\roler^{\rmN{-}1}$-reaches $\indvsrd^{\interI}$).
  This establishes the existence of a path of length $\rmN$ mentioned in $\mathrm{\ref{snake:Len}}$, and we next need to show that all such paths have equal length.
  Consider~cases:
  \begin{itemize}
    \item There exists $\rmM < \rmN$ such that $\indvsld^{\interI}$ $\roler^{\rmM{-}1}$-reaches $\indvsrd^{\interI}$.
    Then $\indvcbra^{\interI}$ $\roler^{\rmM}\roles^{\rmM}$-reaches some element that can $\roles^+$-reach $\indvyend_\tile^{\interI}$. This yields a contradiction with condition~(b) of $\mathrm{\ref{metricobra:Sync}}$.
    \item There exists $\rmM > \rmN$ such that $\indvsld^{\interI}$ $\roler^{\rmM{-}1}$-reaches $\indvsrd^{\interI}$.
    Then there is an element $\roler^*$-reachable from $\indvsld^{\interI}$ that can $\roler^{\rmN}\roles^{\rmN}$-reach $\indvyend_\tile^{\interI}$. This contradictions condition~(c) of $\mathrm{\ref{metricobra:Sync}}$.
  \end{itemize}
  Hence $\rmN$ is indeed unique. 
  The fact that $\indvsrd^{\interI}$ is the only element $\roler^{\rmN{-}1}$-reachable from $\indvsld^{\interI}$ follows from the uniqueness of the tile assigned to~$\indvsrd^{\interI}$, see~$\mathrm{\ref{snake:SpecTil}}$ and Property (a) of~$\mathrm{\ref{metricobra:Sync}}$. 
  It remains now to show that~$\interI \models$ $\mathrm{\ref{snake:Verti}}$. To do so, it suffices to observe that the following property holds. 
  As all the $\roles^*$-paths from an element carrying a tile $\tile$ to $\indvyend_\tile$ are of length $\rmN$ (by the previous discussion and Lemma~\ref{lemma:yardstickequidist}), we can see that $\langvpaeq{\roler}{\roles}$-reachability of $\indvyend_\tile$ is equivalent to $\roler^{\rmN}$-reachability of some element carrying $\tile$.
  Then the satisfaction of~$\mathrm{\ref{snake:Verti}}$ follows immediately by $\mathrm{\ref{metricobra:Verti}}$.
  Hence, $\interI$ is indeed a $\tilingsys$-snake.
\end{proof}

By collecting all previous lemmas we establish the correspondence between solvability of tiling systems and satisfiability of $\conceptcobra^\tilingsys$.
\begin{lem}
A domino tiling system $\tilingsys$ is solvable if and only if $\conceptcobra^\tilingsys$ has a model. 
\end{lem}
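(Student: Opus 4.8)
The plan is to simply assemble the five structural lemmas already proved, since all the heavy lifting (the snake/yardstick/metricobra correspondences) is now behind us. I would prove the two implications separately, in each case chaining the relevant lemmas.

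For the ``only if'' direction I would start from a solvable $\tilingsys$. By Lemma~\ref{lemma:from-tiling-to-snake} there exists a $\tilingsys$-snake, and then Lemma~\ref{lemma:from-snake-to-cobra} upgrades it to a $\tilingsys$-metricobra $\interJ$. Finally, Lemma~\ref{lemma:expressing-metricobras-in-ALCVPLO} guarantees $(\conceptcobra^\tilingsys)^{\interJ} = \{ \indvcbra^{\interJ} \}$, which is non-empty; hence $\interJ$ is a model of $\conceptcobra^\tilingsys$ and the concept is satisfiable. For the ``if'' direction, I would suppose that $\conceptcobra^\tilingsys$ has a model, i.e.\ $(\conceptcobra^\tilingsys)^{\interI} \neq \emptyset$ for some $\interI$. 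By Lemma~\ref{lemma:expressing-metricobras-in-ALCVPLO} (and the bridging observation below) $\interI$ is then a $\tilingsys$-metricobra; Lemma~\ref{lemma:from-cobra-to-snake} shows $\interI$ is in fact a $\tilingsys$-snake, and Lemma~\ref{lemma:from-snakes-to-tilling} concludes that $\tilingsys$ is solvable.

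I expect no genuine obstacle here, as the statement is essentially a bookkeeping corollary of the preceding development. The one point deserving a word of care is reconciling ``satisfiable'' with the singleton characterisation of Lemma~\ref{lemma:expressing-metricobras-in-ALCVPLO}: since $\conceptcobra^\tilingsys$ carries $\{\indvcbra\}$ as a conjunct, its extension is always contained in $\{\indvcbra^{\interI}\}$, so a non-empty extension forces $(\conceptcobra^\tilingsys)^{\interI} = \{\indvcbra^{\interI}\}$, exactly the form in which Lemma~\ref{lemma:expressing-metricobras-in-ALCVPLO} is stated. With that remark in place, both directions reduce to applying the cited lemmas in sequence.
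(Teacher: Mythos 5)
Your proof is correct and takes essentially the same route as the paper: both directions chain Lemmas~\ref{lemma:from-tiling-to-snake}, \ref{lemma:from-snake-to-cobra}, \ref{lemma:expressing-metricobras-in-ALCVPLO}, \ref{lemma:from-cobra-to-snake}, and~\ref{lemma:from-snakes-to-tilling} in exactly the order the paper's proof does. Your bridging remark---that the conjunct $\{\indvcbra\}$ in $\conceptcobra^\tilingsys$ forces any non-empty extension to equal the singleton $\{\indvcbra^{\interI}\}$, matching the form of Lemma~\ref{lemma:expressing-metricobras-in-ALCVPLO}---is a detail the paper leaves implicit, and it is correct.
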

\begin{proof}
  If $\tilingsys$ is solvable, then by Lemma~\ref{lemma:from-tiling-to-snake} there exists a $\tilingsys$-snake, which by Lemma~\ref{lemma:from-snake-to-cobra} implies the existence of a $\tilingsys$-metricobra, which is a model of $\conceptcobra^\tilingsys$ (see Lemma~\ref{lemma:expressing-metricobras-in-ALCVPLO}).
  For the other direction, if $\conceptcobra^\tilingsys$ has a model, then such a model is a $\tilingsys$-metricobra (by Lemma~\ref{lemma:expressing-metricobras-in-ALCVPLO}), as well as a~$\tilingsys$-snake (by Lemma~\ref{lemma:from-cobra-to-snake}). 
  As the existence of a~$\tilingsys$-snake guarantees that $\tilingsys$ is solvable by Lemma~\ref{lemma:from-snakes-to-tilling}, this finishes the proof.
\end{proof}

Hence, we can conclude the main theorem of the paper.
\begin{thm}\label{thm:ALCOvpl-is-undecidable}
The concept satisfiability problem for $\ALCOvpl$ is undecidable, even if the languages allowable in concepts are restricted to $\{ \roler, \roles, \roler^+, \roles^+, \roler^*, \roles^*, (\roler + \roles)^*, (\roler + \roles)^+, \langvpaeq{\roler}{\roles} \}$.
\end{thm}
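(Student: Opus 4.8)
The plan is to prove undecidability by reducing the (undecidable) solvability problem for domino tiling systems, noted undecidable just after Example~\ref{ex:visualization-of-tilesmap}, to the concept satisfiability problem for $\ALCOvpl$. All of the substantive work has already been carried out by the preceding chain of lemmas, so the proof reduces to assembling them and then carefully auditing which path languages the construction actually uses.

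Concretely, given a tiling system $\tilingsys$ I would map it to the concept $\conceptcobra^\tilingsys$ supplied by Lemma~\ref{lemma:expressing-metricobras-in-ALCVPLO}; this map is plainly computable. Its correctness is exactly the statement of the immediately preceding (unlabelled) lemma: $\tilingsys$ is solvable if and only if $\conceptcobra^\tilingsys$ has a model. Unwinding that lemma, the forward implication composes Lemma~\ref{lemma:from-tiling-to-snake} (solvable $\Rightarrow$ snake), Lemma~\ref{lemma:from-snake-to-cobra} (snake $\Rightarrow$ metricobra) and Lemma~\ref{lemma:expressing-metricobras-in-ALCVPLO} (metricobra $\Rightarrow$ model), while the backward implication composes Lemma~\ref{lemma:expressing-metricobras-in-ALCVPLO} (model $\Rightarrow$ metricobra), Lemma~\ref{lemma:from-cobra-to-snake} (metricobra $\Rightarrow$ snake) and Lemma~\ref{lemma:from-snakes-to-tilling} (snake $\Rightarrow$ solvable). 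Since solvability is undecidable and the reduction is a computable many-one reduction, satisfiability of the concepts in its image is undecidable as well.

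The one genuinely new obligation — and the step I would treat most carefully — is to certify the language restriction. I would go through the three axiomatising lemmas and tabulate every language occurring under a path quantifier: Lemma~\ref{lemma:expressing-pseudosnakes-in-ALCVPLO} uses only $\roler$, $\roler^*$, $\roler^+$; Lemma~\ref{lemma:expressing-yardstick-in-ALCVPLO} uses $\roles$, $\roler^*$, $\roles^*$, $(\roler + \roles)^*$, $(\roler + \roles)^+$ and the single non-regular language $\langvpaeq{\roler}{\roles}$; and Lemma~\ref{lemma:expressing-metricobras-in-ALCVPLO} uses $\roler$, $\roles$, $\roler^+$, $\roles^+$ and again $\langvpaeq{\roler}{\roles}$. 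Their union is exactly $\{ \roler, \roles, \roler^+, \roles^+, \roler^*, \roles^*, (\roler + \roles)^*, (\roler + \roles)^+, \langvpaeq{\roler}{\roles} \}$, with $\langvpaeq{\roler}{\roles}$ the only non-regular member; hence $\conceptcobra^\tilingsys$ is an $\ALCOregrhashshash$-concept and therefore an $\ALCOvpl$-concept. The only thing that could go wrong here is an abbreviation silently hiding an illegal language — for instance the rewriting of $\forall{\roler}.[\{ \indvsld \} \to \forall{\roler^*}.\conceptC]$ into $\forall{\roler^+}.\conceptC$ used in Lemma~\ref{lemma:expressing-metricobras-in-ALCVPLO} — so I would expand all such shorthands before declaring the census complete. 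With the language audit done there is no deeper mathematical difficulty left: the hard part is purely this bookkeeping, and the theorem then follows immediately.
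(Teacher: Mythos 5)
Your proposal is correct and follows essentially the same route as the paper: the theorem is obtained there by combining the unlabelled correctness lemma (solvability of $\tilingsys$ iff satisfiability of $\conceptcobra^\tilingsys$, itself assembled from Lemmas~\ref{lemma:from-tiling-to-snake}, \ref{lemma:from-snake-to-cobra}, \ref{lemma:from-cobra-to-snake}, \ref{lemma:from-snakes-to-tilling} and~\ref{lemma:expressing-metricobras-in-ALCVPLO}) with the undecidability of the tiling problem, exactly as you do. Your explicit census of the path languages occurring in Lemmas~\ref{lemma:expressing-pseudosnakes-in-ALCVPLO}, \ref{lemma:expressing-yardstick-in-ALCVPLO} and~\ref{lemma:expressing-metricobras-in-ALCVPLO} (including expanding the $\forall{\roler^+}$ shorthand) is accurate and is precisely the bookkeeping the paper leaves implicit in the ``even if'' clause.
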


The logic $\ALCOreg$ is a notational variant of Propositional Dynamic Logic with nominals~\cite{KaminskiS14}. 
Thus the above theorem provides results also in the realm of formal verification. 


\section{Negative results III: Entailment of queries with non-regular atoms}\label{sec:querying-negative}

We conclude the negative part of the paper by showing that positive results regarding entailment of conjunctive queries with visibly-pushdown atoms in the database setting~\cite[Thm.~2]{LangeL15} do not generalise even to $\ALC$ ontologies. 
We provide a reduction from the \emph{White-bordered Octant Tiling Problem}, which we are going to define next.
Roughly speaking, the ontology used in our reduction will define a ``grid'' covered with tiles, while the query counterpart will serve as a tool to detect mismatches in its lower triangle (a.k.a. octant).

\begin{figure}[h]
  \centering
  \begin{tikzpicture}[transform shape]
      \draw (0, 0) node[medrond] (A0) {};
      \draw (1.5, 0) node[medrond] (A1) {};
      \draw (3, 0) node[medrond] (A2) {};
      \draw (4.5, 0) node[medrond] (A3) {};
      \draw (6, 0) node[medrond] (A4) {};
      \draw (1.5, 1) node[medrond] (B1) {};
      \draw (3, 1) node[medrond] (B2) {};
      \draw (4.5, 1) node[medrond] (B3) {};
      \draw (6, 1) node[medrond] (B4) {};
      \draw (3, 2) node[medrond] (C2) {};
      \draw (4.5, 2) node[medrond] (C3) {};
      \draw (6, 2) node[medrond] (C4) {};
      \draw (4.5, 3) node[medrond] (D3) {};
      \draw (6, 3) node[medrond] (D4) {};
      \draw (6, 4) node[medrond] (E4) {};

      \path[->] (A0) edge [red] node[yshift=3] {$\roler$} (A1);
      \path[->] (A1) edge [red] node[yshift=3] {$\roler$} (A2);
      \path[->] (A2) edge [red] node[yshift=3] {$\roler$} (A3);
      \path[->] (A3) edge [red] node[yshift=3] {$\roler$} (A4);

      \path[->] (A1) edge [blue, dotted] node {$\roles$} (B1);
      \path[->] (A2) edge [blue, dotted] node {$\roles$} (B2);
      \path[->] (B2) edge [blue, dotted] node {$\roles$} (C2);

      \path[->] (A3) edge [blue, dotted] node {$\roles$} (B3);
      \path[->] (B3) edge [blue, dotted] node {$\roles$} (C3);
      \path[->] (C3) edge [blue, dotted] node {$\roles$} (D3);

      \path[->] (A4) edge [blue, dotted] node {$\roles$} (B4);
      \path[->] (B4) edge [blue, dotted] node {$\roles$} (C4);
      \path[->] (C4) edge [blue, dotted] node {$\roles$} (D4);
      \path[->] (D4) edge [blue, dotted] node {$\roles$} (E4);

     \smallwang{-0.125}{-0.125}{white}{white}{white}{white} 
     \smallwang{1.5-0.125}{-0.125}{white}{rouge}{cyan}{white}  
     \smallwang{3-0.125}{-0.125}{cyan}{rouge}{rouge}{cyan} 
     \smallwang{4.5-0.125}{-0.125}{rouge}{cyan}{cyan}{lime}  
     \smallwang{6-0.125}{-0.125}{cyan}{cyan}{cyan}{lime}  
     \smallwang{1.5-0.125}{1-0.125}{white}{white}{white}{white} 
     \smallwang{3-0.125}{1-0.125}{white}{cyan}{rouge}{white} 
     \smallwang{4.5-0.125}{1-0.125}{rouge}{lime}{cyan}{lime} 
     \smallwang{6-0.125}{1-0.125}{cyan}{lime}{cyan}{lime} 
     \smallwang{3-0.125}{2-0.125}{white}{white}{white}{white} 
     \smallwang{4.5-0.125}{2-0.125}{white}{lime}{rouge}{white} 
     \smallwang{6-0.125}{2-0.125}{rouge}{lime}{cyan}{lime} 
     \smallwang{4.5-0.125}{3-0.125}{white}{white}{white}{white} 
     \smallwang{6-0.125}{3-0.125}{white}{lime}{rouge}{white} 
     \smallwang{6-0.125}{4-0.125}{white}{white}{white}{white} 

     \draw (6.75, 0) node[] (XXX) {$\ldots$};
     \draw (6.75, 1) node[] (XXX) {$\ldots$};
     \draw (6.75, 2) node[] (XXX) {$\ldots$};
     \draw (6.75, 3) node[] (XXX) {$\ldots$};
     \draw (6.75, 4) node[] (XXX) {$\ldots$};
     \draw (6.75, 5) node[] (XXX) {$\ldots$};

  \end{tikzpicture}
    \caption{Visualization of a fragment of a $\tiles$-octant interpretation.}
    \label{fig:octant-structure}
\end{figure}

We refer to the set $\octant \deff \{ (n,m) \mid n,m \in \N, 0 \leq m \leq n \}$ as \emph{the octant}.
Let~$\tilingsys \deff (\tilesCol, \tiles, \tileswhite)$ be a domino tiling system (defined as in Section~\ref{sec:nominals}).
For our reduction it is convenient to impose several restrictions on $\tiles$, namely that the all-white tile $\intextwang{white}{white}{white}{white}$ belongs to~$\tiles$, and that all other tiles from $\tiles$ containing white colour are both left- and up-border but neither down- nor right-border.
We say that~$\tilingsys$ \emph{covers} $\octant$ if there exists a mapping $\tilesmap \colon \octant \to \tiles$ such that for all pairs $(n,m) \in \octant$ the following conditions are satisfied:
\begin{description}\itemsep0em
  \item[\desclabel{(OInit)}{octtiles:Init}] $\tilesmap(0,0) = \intextwang{white}{white}{white}{white}$ and $\tilesmap(1,0) \neq \intextwang{white}{white}{white}{white}$.
  \item[\desclabel{(OVerti)}{octtiles:Verti}] If $(n,m{+}1) \in \octant$ then $\tilesmap(n,m)$ and $\tilesmap(n, m{+}1)$ are $\rmV$-compatible.
  \item[\desclabel{(OHori)}{octtiles:Hori}] The tiles $\tilesmap(n,m)$ and $\tilesmap(n{+}1, m)$ are $\rmH$-compatible. 
\end{description}

In the \emph{White-bordered Octant Tiling Problem} we ask if an input domino tiling system~$\tilingsys$ (with additional conditions on tiles mentioned above) covers the octant $\octant$. 
\begin{obs}\label{obs:octant-tiling-proper-white}
  Let $\tilingsys$ be a tiling system and let $\tilesmap \colon \octant \to \tiles$ be covering the octant.
  Then for all $i \in \N$ we have that $\tilesmap(i,i) = \intextwang{white}{white}{white}{white}$, and $\tilesmap(i{+}1,i)$ is left- and up-bordered.
  Moreover, no position $(i,j)$ satisfying $0 < j < i{-}1$ carries a white-border tile.
\end{obs}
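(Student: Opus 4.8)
The plan is to exploit the very restrictive shape of the admissible tiles to convert the two compatibility conditions into a pair of simple propagation rules, after which each of the three assertions becomes a short induction. Throughout I write a tile as $(\colour_l, \colour_d, \colour_r, \colour_u)$ and I record the following two equivalences, which are immediate from the hypothesis that every non-all-white tile containing white is left- and up-border but neither down- nor right-border. First, a tile is down-border if and only if it is right-border if and only if it is the all-white tile $\intextwang{white}{white}{white}{white}$: a down- or right-border tile contains $\whiteBox$, so were it not all-white the restriction would forbid it from being down- or right-border, while the all-white tile is trivially down- and right-border. Dually, a tile is left-border if and only if it is up-border if and only if it contains the colour $\whiteBox$ at all: a left- (or up-) border tile clearly contains white, and conversely any tile containing white is left- and up-border (the all-white one, or any other white-containing tile by assumption). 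Distilling these equivalences is the conceptual crux, and I expect it to be the main obstacle; everything afterwards is bookkeeping.

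Next I would translate \ref{octtiles:Hori} and \ref{octtiles:Verti} into propagation rules. Since $\rmH$-compatibility equates the right colour of $\tilesmap(n,m)$ with the left colour of $\tilesmap(n{+}1,m)$, the equivalences above give
\[
\text{(P1)}\qquad \tilesmap(n{+}1,m)\text{ contains white}\iff \tilesmap(n,m)\text{ is all-white}.
\]
Likewise, $\rmV$-compatibility equates the up colour of $\tilesmap(n,m)$ with the down colour of $\tilesmap(n,m{+}1)$, yielding
\[
\text{(P2)}\qquad \tilesmap(n,m{+}1)\text{ is all-white}\iff \tilesmap(n,m)\text{ contains white},
\]
each valid whenever both positions involved lie in $\octant$.

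I would then prove the first two assertions together by induction on $i$. The base case $\tilesmap(0,0) = \intextwang{white}{white}{white}{white}$ is exactly \ref{octtiles:Init}. For the step, assuming $\tilesmap(i,i)$ is all-white, rule (P1) with $n=m=i$ gives that $\tilesmap(i{+}1,i)$ contains white, hence is left- and up-border; this is precisely the second assertion. Applying (P2) with $n=i{+}1$ and $m=i$ then gives that $\tilesmap(i{+}1,i{+}1)$ is all-white, which closes the induction and establishes the first assertion.

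Finally, for the interior positions I would prove the slightly stronger statement that $\tilesmap(n,m)$ contains no white whenever $n \geq m{+}2$, by induction on $m$. For the base $m=0$ I first observe, by descending use of (P1) together with \ref{octtiles:Init}, that $\tilesmap(k,0)$ is never all-white for $k \geq 1$ (an all-white $\tilesmap(k,0)$ contains white, so (P1) would force $\tilesmap(k{-}1,0)$ all-white, and descending down to $k=1$ contradicts \ref{octtiles:Init}); combining this with (P1) shows $\tilesmap(n,0)$ contains no white for all $n \geq 2$. For the inductive step, (P1) followed by (P2) shows that ``$\tilesmap(n,m{+}1)$ contains white'' is equivalent to ``$\tilesmap(n{-}1,m)$ contains white'', and since $n \geq m{+}3$ implies $n{-}1 \geq m{+}2$, the induction hypothesis makes the latter false. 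As any position $(i,j)$ with $0 < j < i{-}1$ satisfies $i \geq j{+}2$, it carries no white colour, i.e.\ no white-border tile, which is the desired conclusion.
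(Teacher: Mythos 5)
Your proof is correct. For the first two assertions it follows essentially the paper's own argument: induction on $i$, with $\mathrm{\ref{octtiles:Init}}$ as the base case, $\rmH$-compatibility plus the tile restrictions giving that $\tilesmap(i{+}1,i)$ is left- and up-border, and $\rmV$-compatibility plus ``the all-white tile is the only down-border tile'' returning to the diagonal --- your rules (P1) and (P2) are exactly this reasoning, packaged once and for all. For the last assertion, however, your route genuinely differs. The paper argues by contradiction from a lexicographically smallest pair $(i,j)$ with $j < i{-}1$ carrying a white-border tile: by $\mathrm{\ref{octtiles:Hori}}$ the tile at $(i{-}1,j)$ is right-border, and this is claimed to contradict minimality. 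You instead prove the stronger positive invariant that $\tilesmap(n,m)$ contains no white whenever $n \geq m{+}2$, by induction on the row index $m$: the base row is anchored in $\mathrm{\ref{octtiles:Init}}$ via a descending chain, and the step uses the diagonal transfer ``$\tilesmap(n,m{+}1)$ contains white iff $\tilesmap(n{-}1,m)$ contains white'' obtained by composing (P1) with (P2). The paper's argument buys brevity; yours buys completeness at exactly the point where the paper's sketch is delicate. Namely, when the minimal counterexample has $j = i{-}2$, the pair $(i{-}1,j)$ lies on the subdiagonal and so fails the condition $j < (i{-}1){-}1$, hence it is not itself in the set over which minimality was taken; to get a contradiction there one must additionally note that a right-border tile is forced to be all-white, which contradicts $\mathrm{\ref{octtiles:Init}}$ when $j = 0$ but requires a further $\rmV$-compatibility descent when $j \geq 1$. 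Your induction never leaves the range covered by its hypothesis (the step reduces a position with $n \geq m{+}3$ to one with $n{-}1 \geq m{+}2$), so no such patching is needed.
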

\begin{proof}
  The first statement follows by routine induction. 
  From $\mathrm{\ref{octtiles:Init}}$ we have $\tilesmap(0,0) = \intextwang{white}{white}{white}{white}$.
  Suppose now that $\tilesmap(i,i) = \intextwang{white}{white}{white}{white}$. 
  Then, by $\mathrm{\ref{octtiles:Hori}}$ and our choice of tiles, we have that $\tilesmap(i{+}1, i)$ is left- and up-border.
  Thus, as $\intextwang{white}{white}{white}{white}$ is the only tile in $\tiles$ that is down-border, we conclude that $(i{+}1, i{+}1)$ carries $\intextwang{white}{white}{white}{white}$.
  For the second statement, suppose that there is a pair~$(i,j)$ for which  $j < i{-}1$ and $\tilesmap(i,j)$ contains a white-border tile, and take lexicographically smallest such~$(i,j)$.
  By design of $\tilingsys$, the tile $\tilesmap(i,j)$ is left- and up-border.
  From~$\mathrm{\ref{octtiles:Init}}$ we infer~$i > 1$.
  By $\mathrm{\ref{octtiles:Hori}}$, the tile $\tilesmap(i{-}1,j)$ is also right-border, contradicting the minimality of $(i,j)$.
\end{proof}

We first show undecidability of the White-bordered Octant Tiling Problem, which follows by a straightforward reduction from the Octant Tiling Problem~\cite[Sec.~3.1]{BresolinEtAl2010}.
\begin{lem}\label{lemma:white-bordered-octant-tiling-is-undec}
  The white-bordered Octant Tiling Problem is undecidable. 
\end{lem}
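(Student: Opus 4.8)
The plan is to reduce from the (plain) Octant Tiling Problem, which is undecidable~\cite[Sec.~3.1]{BresolinEtAl2010}: given a domino tiling system $\tilingsys_0 \deff (\tilesCol_0, \tiles_0)$ together with a designated initial tile, decide whether there is a map $\tilesmap_0 \colon \octant \to \tiles_0$ respecting horizontal and vertical compatibility and placing the initial tile at the origin. Given such an instance, I would build a white-bordered system $\tilingsys \deff (\tilesCol, \tiles, \tileswhite)$ by introducing a fresh colour $\tileswhite$ (so $\tilesCol \deff \tilesCol_0 \cup \{\tileswhite\}$ with $\tileswhite \notin \tilesCol_0$) and taking $\tiles$ to consist of: (i) every original tile of $\tiles_0$ (these are the white-free \emph{interior} tiles); (ii) the all-white tile $\intextwang{white}{white}{white}{white}$, to be placed on the diagonal; and (iii) all \emph{transition} tiles of the form $(\tileswhite, \colour_d, \colour_r, \tileswhite)$ with $\colour_d, \colour_r \in \tilesCol_0$, which are left- and up-border but neither down- nor right-border, exactly as the problem definition demands. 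The guiding idea is that in a white-bordered tiling the diagonal is forced to be all-white and the sub-diagonal to consist of transition tiles, so that the genuine content lives on $\{(n,m) : 0 \le m \le n{-}2\}$, a copy of $\octant$ shifted by the map $(n,m) \mapsto (n{+}2,m)$.

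For the forward direction, suppose $\tilesmap_0$ covers $\octant$. I would define $\tilesmap \colon \octant \to \tiles$ by putting $\intextwang{white}{white}{white}{white}$ on each diagonal cell $(i,i)$, the shifted original tiling $\tilesmap(n,m) \deff \tilesmap_0(n{-}2,m)$ on each content cell ($0 \le m \le n{-}2$), and, on each sub-diagonal cell $(n,n{-}1)$, the transition tile whose down-colour equals the up-colour of $\tilesmap_0(n{-}2,n{-}2)$ and whose right-colour equals the left-colour of $\tilesmap_0(n{-}1,n{-}1)$ (such a tile exists since we added all colour combinations). Checking $\mathrm{\ref{octtiles:Verti}}$ and $\mathrm{\ref{octtiles:Hori}}$ then splits into three routine cases: content--content compatibility is inherited from $\tilesmap_0$; diagonal/sub-diagonal compatibility holds because the neighbouring edges are white on both sides; and content--sub-diagonal compatibility holds precisely by the way the transition tiles were chosen. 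Condition $\mathrm{\ref{octtiles:Init}}$ holds since $(0,0)$ is white while $(1,0)$ is a (non-white) transition tile.

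For the backward direction, let $\tilesmap$ cover $\octant$ in the white-bordered sense. By Observation~\ref{obs:octant-tiling-proper-white} every diagonal cell carries $\intextwang{white}{white}{white}{white}$, every sub-diagonal cell carries a left- and up-border tile, and no interior cell carries a white-border tile. Since the only white-free tiles of $\tiles$ are those of $\tiles_0$, the content region $\{(n,m) : 0 \le m \le n{-}2\}$ is labelled entirely by original tiles, so setting $\tilesmap_0(n,m) \deff \tilesmap(n{+}2,m)$ yields a map $\octant \to \tiles_0$. Its horizontal and vertical compatibility are inherited from $\tilesmap$, because adjacencies inside the content region are adjacencies of $\tilesmap$ and the only ``missing'' vertical constraints are those above the content diagonal, which are not required of a plain octant tiling. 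Hence $\tilesmap_0$ witnesses solvability of the original instance.

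The main obstacle is the bookkeeping at the border, and in particular transferring the initial-tile condition of the plain problem. Each sub-diagonal transition tile must simultaneously agree with two \emph{different} content tiles (the original diagonal cell below it and the next original diagonal cell to its right), which is why the transition tiles are parameterised by a pair of colours rather than by a single original tile; one must check that this extra freedom cannot be abused in the backward direction, i.e.\ that transition tiles are confined to the sub-diagonal by Observation~\ref{obs:octant-tiling-proper-white}. Finally, forcing the designated initial tile of $\tilingsys_0$ to occupy the corner $(2,0)$ of the content requires a dedicated treatment of the corner cells $(1,0)$ and $(2,0)$ (e.g.\ admitting at the corner only the transition tile whose right-colour is the initial tile's left-colour); verifying that this pins down the initial tile, together with the case analysis above, is the only genuinely technical part, and it is routine.
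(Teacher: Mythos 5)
Your core construction---all-white tiles on the diagonal, left- and up-border ``transition'' tiles on the sub-diagonal, the original white-free tiles shifted two columns to the right, and Observation~\ref{obs:octant-tiling-proper-white} used in the backward direction to confine white tiles to the border---is exactly the reduction the paper gives. The difference is the source problem: the paper reduces from the octant tiling problem \emph{without} any initial-tile condition (a system ``almost covers'' the octant if some map satisfies just $\mathrm{\ref{octtiles:Verti}}$ and $\mathrm{\ref{octtiles:Hori}}$; this is the formulation of Bresolin et al.\ and van Emde Boas), whereas you reduce from an origin-constrained version that demands a designated tile $t_{\mathrm{init}}$ at $(0,0)$.

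That choice creates a genuine gap, and it sits exactly in the step you defer as ``routine''. In your backward direction nothing forces $\tilesmap(2,0) = t_{\mathrm{init}}$, so a white-bordered covering only yields \emph{some} compatibility-respecting tiling by $\tiles_0$, not one anchored at the origin. Concretely, take $\tiles_0 = \{t_1, t_2\}$ with $t_1 = (c_1,c_1,c_1,c_1)$, $t_2 = (c_1,c_1,c_2,c_2)$ and $t_{\mathrm{init}} = t_2$: the origin-constrained instance is a NO-instance (no tile has left colour $c_2$ to sit right of $t_2$), yet your white-bordered system covers the octant using only $t_1$ as content, so the reduction maps NO to YES. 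Your proposed repair---``admitting at the corner only the transition tile whose right-colour is the initial tile's left-colour''---is not available: a domino system's tile set is position-independent, so nothing can restrict which tiles may appear at $(1,0)$; nor can compatibility simulate such a restriction, because $(1,0)$ is \emph{strictly less} constrained than every other sub-diagonal cell $(n,n{-}1)$, $n \geq 2$ (it has no below-neighbour), hence any transition tile usable elsewhere on the sub-diagonal is also usable at $(1,0)$. Globally restricting transition tiles to right-colour equal to the left colour of $t_{\mathrm{init}}$ breaks the forward direction instead, since the transition tile at $(n,n{-}1)$ must match the left colour of the original diagonal tile placed at $(n{+}1,n{-}1)$, and these vary. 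The fix is simply to take as source the initial-tile-free formulation actually proved undecidable in the cited work; then the bookkeeping you flag as the ``only genuinely technical part'' disappears, both of your directions go through as written, and your argument coincides with the paper's proof.
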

\begin{proof}
  We say that a domino tiling system $\tilingsys$ \emph{almost covers} the octant if there is a mapping $\tilesmap \colon \octant \to \tiles$ such that  all pairs $(n,m) \in \octant$ fulfil $\mathrm{\ref{octtiles:Verti}}$ and $\mathrm{\ref{octtiles:Hori}}$.
  It was stated by Bresolin et al.~\cite[Sec. 3.1]{BresolinEtAl2010} (and follows from the classical work of van Emde Boas~\cite{van1997convenience}) that deciding whether $\tilingsys$, that does not contain white-bordered tiles, almost covers the octant is undecidable. 
  To~prove undecidability of the White-bordered Octant Tiling Problem we provide a reduction from the tiling problem of Bresolin et al.
  Thus, let $\tilingsys \deff (\tilesCol, \tiles, \tileswhite)$ be a domino tiling system that does not have white-bordered tiles, and consider $\tilingsys' \deff (\tilesCol, \tiles', \tileswhite)$ to be the tiling system obtained by putting $\tiles' \deff \{ \intextwang{white}{white}{white}{white}, (\colour_l, \colour_d, \colour_r, \colour_u), (\whiteBox, \colour_d, \colour_r, \whiteBox) \mid (\colour_l, \colour_d, \colour_r, \colour_u) \in \tiles \}$.
  We~claim that~$\tilingsys'$ covers the octant if and only if $\tilingsys$ almost covers the octant. Indeed:

  \begin{itemize}\itemsep0em
  \item The ``if'' direction is relatively straightforward.
  Let $\tilesmap$ be a map witnessing that $\tilingsys$ almost covers the octant.
  We alter the tiles assigned by $\tilesmap$ to the diagonal to make them left- and up-border, then we shift $\tilesmap$ right, and fill the remaining places with~$\intextwang{white}{white}{white}{white}$.
  Formally, let $\tilesmap'$ be defined as: (i) $\tilesmap'(i,i) \deff \intextwang{white}{white}{white}{white}$ for all $i \in \N$, (ii) $\tilesmap'(i,i{-}1) \deff (\whiteBox, \colour_d, \colour_r, \whiteBox)$, where $\tilesmap(i,i) = (\colour_l, \colour_d, \colour_r, \colour_u)$, for all positive $i \in \N$, and (iii) $\tilesmap'(i,j) = \tilesmap'(i,j{-}1)$ for all remaining $i, j \in \N$.
  It can be readily verified that $\tilesmap'$ satisfies the required conditions.
  \item For the ``only if'' direction, let $\tilesmap'$ be a map witnessing that $\tilingsys'$ covers the octant. It suffices to take $\tilesmap\colon (i,j) \mapsto \tilesmap'(i{+}2, j)$. By Observation~\ref{obs:octant-tiling-proper-white}) none of the tiles assigned by $\tilesmap$ is white-bordered. As $\tilesmap'$ covers the octant, we infer that $\tilesmap$ (almost) covers the~octant. \qedhere 
  \end{itemize}
\end{proof}

Our undecidability proof relies on concepts from $\concepttilepath^{\tiles}$ and the non-regular language~$\langvpaeq{\roler}{\roles}$. 
We fix and enumerate a set of tiles $\tiles$ as $\tile_1, \ldots, \tile_\rmN$ for $\rmN \deff |\tiles|$.
As the first building block, we introduce octant interpretations.
An interpretation $\interI$ is called $\tiles$-\emph{octant} if there exists a function $\tilesmap\colon \octant \to \tiles$ fulling $\mathrm{\ref{octtiles:Init}}$ and $\mathrm{\ref{octtiles:Verti}}$ (but not necessarily $\mathrm{\ref{octtiles:Hori}}$) that satisfies the following conditions:
(i) $\DeltaI = \octant$,
(ii) $\roler^{\interI} = \{ ((n, 0), (n{+}1, 0)) \mid n \in \N \}$, 
(iii)~$\roles^{\interI} =  \{ ((n, m), (n, m{+}1)) \mid n,m \in \N, m < n \}$, and
(iv) $\conceptC_{\tile}^{\interI} =  \{ (n,m) \mid \tilesmap(n,m) = \tile \}$ for all tiles $\tile \in \tiles$.
In this case $\interI$ \emph{represents} $\tilesmap$. 
Due to the fact that $\tilesmap$ is a function, every domain element carries precisely one tile.
Moreover, for every such $\tilesmap$ we can easily find a $\tiles$-octant representing $\tilesmap$ (just employ the above definition). 
For more intuitions consult~Figure~\ref{fig:octant-structure}.

To avoid disjunction in the forthcoming query, we need to extend octant interpretations with yet another way of representing tiles, which will be based on distances. 
Suppose that an element $(n,m)$ from an octant interpretation $\interI$ carries a tile $\tile_i$, and that the tile assigned to its horizontal predecessor $(n{-}1, m)$, if exists, is equal to $\tile_j$.
We equip $(n,m)$ with an outgoing $\roles^+$-path~$\pathrho$ of length~$\rmN$, composed of fresh elements (we employ fresh concept names $\concept{In}$ and~$\overline{\concept{In}}$ to make a distinction between elements in octant and the ones present in ``extra paths'').
The $i$-th element of~$\pathrho$ will be the unique element of~$\pathrho$ that belongs to the interpretation of a concept~$\concept{Cur}$.
Similarly, the $j$-th element of $\pathrho$ will be the unique element of $\pathrho$ that belongs to the interpretation of a concept~$\concept{Prev}$.
Thus, the distance from $(n,m)$ to an element labelled by the concept $\concept{Cur}$ (resp.~$\concept{Prev}$) uniquely determines the tile of a current node (resp. its horizontal predecessor).
As a mere technicality, needed for query design, we enrich the element $(0,0)$ with an incoming $\roler^+$-path of length~$\rmN$. 
A formalization comes next.

\begin{defi}\label{def:hyperoctant}
  Let $\interI$ be an interpretation with a domain $\octant \times \ZZ_{\rmN{+}1} \cup \{ ({-}i{-}1, 0, 0) \mid i \in \ZZ_\rmN \}$,
  and let $\interI_{\octant}$ be the restriction of $\interI$ to the set $\octant \times \{ 0 \}$. 
  We call $\interI$ a $\tiles$-\emph{hyperoctant} if:
  \begin{itemize}\itemsep0em
  \item $\interI_{\octant}$ is isomorphic (via a projection $(n,m,0) \mapsto (n,m)$) to a $\tiles$-octant interpretation,
  \item $\concept{In}^{\interI} = \octant \times \{ 0 \}$, $\overline{\concept{In}}^{\interI} = \DeltaI \setminus \concept{In}^{\interI}$, $\overline{\concept{Prev}}^{\interI} = \DeltaI \setminus \concept{Prev}^{\interI}$,
  \item $\roler^{\interI} = \roler^{\interI_{\octant}} \cup \{ (({-}i{-}1, 0, 0), ({-}i, 0, 0)) \mid i \in \ZZ_\rmN \}$, 
  \item $\roles^{\interI} = \roles^{\interI_{\octant}} \cup\{ ((n,m,k),(n,m,k{+}1)) \mid (n,m) \in \octant, k \in \ZZ_\rmN \}$,
  \item $\conceptC_{\tile_k}^{\interI} = \conceptC_{\tile_k}^{\interI_{\octant}}$ and $\concept{Cur}^{\interI} = \{ (n,m,k) \mid (n,m,0) \in \conceptC_{\tile_k}^{\interI}, \tile_k \in \tiles \}$, and
  \item for every $(n,m) \in \octant$ there is precisely one positive $k$ for which $(n,m,k) \in \concept{Prev}^{\interI}$ holds, and for such a number $k$ we have that $\tile_k$ and the tile carried by $(n,m,0)$ are $\rmH$-compatible.
  \end{itemize}
  Note that, in addition to what is present in the definition of $\tiles$-hyperoctant, we employed fresh concept names, namely $\concept{In}$, $\concept{Prev}$, $\concept{Cur}$, $\overline{\concept{In}}$, and $\overline{\concept{Prev}}$.
  The purpose of ``overlined'' concepts is to help with a design of a query, as the use of negation is not allowed there.
  We call $\interI$ \emph{proper} if for all $(n,m,k) \in \concept{Cur}^{\interI}$ we have $(n{+}1,m,k) \in \concept{Prev}^{\interI}$.
  Note that properness is not definable in $\ALC$, but we will enforce it with a query.
  The map $\tilesmap\colon \octant \to \tiles$ \emph{represented} by a hyperoctant $\interI$ is the map represented by its $\tiles$-octant substructure~$\interI_{\octant}$.
\end{defi}

\begin{figure}[h]
  \centering
  \begin{tikzpicture}[transform shape]

      \draw (-3, 0) node[minicarre] (Aminusthree) {};
      \draw (-2, 0) node[minicarre] (Aminustwo) {};
      \draw (-1, 0) node[minicarre] (Aminusone) {};

      \draw (0, 0) node[medrond] (A0) {};
      \draw (2.5, 0) node[medrond] (A1) {};
      \draw (5, 0) node[medrond] (A2) {};
      \draw (2.5, 2.5) node[medrond] (B1) {};
      \draw (5, 2.5) node[medrond] (B2) {};
      \draw (5, 5) node[medrond] (C2) {};

      \draw (0-0.5, 0+0.5) node[minicarre, fill=lime] (A01) {};
      \draw (0-1, 0+1) node[minicarre] (A02) {};
      \draw (0-1.5, 0+1.5) node[minicarre] (A03) {};

      \draw (2.5-0.5, 0+0.5) node[minicarre, fill=black!40] (A11) {};
      \draw (2.5-1, 0+1) node[minicarre, fill=black] (A12) {};
      \draw (2.5-1.5, 0+1.5) node[minicarre] (A13) {};

      \draw (5-0.5, 0+0.5) node[minicarre] (A21) {};
      \draw (5-1, 0+1) node[minicarre, fill=black!40] (A22) {};
      \draw (5-1.5, 0+1.5) node[minicarre, fill=black] (A23) {};

      \draw (2.5-0.5, 2.5+0.5) node[minicarre, fill=lime] (B11) {};
      \draw (2.5-1, 2.5+1) node[minicarre] (B12) {};
      \draw (2.5-1.5, 2.5+1.5) node[minicarre] (B13) {};

      \draw (5-0.5, 2.5+0.5) node[minicarre, fill=black!40] (B21) {};
      \draw (5-1, 2.5+1) node[minicarre, fill=black] (B22) {};
      \draw (5-1.5, 2.5+1.5) node[minicarre] (B23) {};

      \draw (5-0.5, 5+0.5) node[minicarre,fill=lime] (C21) {};
      \draw (5-1, 5+1) node[minicarre] (C22) {};
      \draw (5-1.5, 5+1.5) node[minicarre] (C23) {};

      \path[->] (Aminusthree) edge [red] node[yshift=3] {$\roler$} (Aminustwo);
      \path[->] (Aminustwo) edge [red] node[yshift=3] {$\roler$} (Aminusone);
      \path[->] (Aminusone) edge [red] node[yshift=3] {$\roler$} (A0);

      \path[->] (A0) edge [red] node[yshift=3] {$\roler$} (A1);
      \path[->] (A1) edge [red] node[yshift=3] {$\roler$} (A2);

      \path[->] (A1) edge [blue, dotted] node {$\roles$} (B1);
      \path[->] (A2) edge [blue, dotted] node {$\roles$} (B2);
      \path[->] (B2) edge [blue, dotted] node {$\roles$} (C2);

      \path[->] (A0) edge [blue, dotted] node {$\roles$} (A01);
      \path[->] (A01) edge [blue, dotted] node {$\roles$} (A02);
      \path[->] (A02) edge [blue, dotted] node {$\roles$} (A03);

      \path[->] (A1) edge [blue, dotted] node {$\roles$} (A11);
      \path[->] (A11) edge [blue, dotted] node {$\roles$} (A12);
      \path[->] (A12) edge [blue, dotted] node {$\roles$} (A13);

      \path[->] (A2) edge [blue, dotted] node {$\roles$} (A21);
      \path[->] (A21) edge [blue, dotted] node {$\roles$} (A22);
      \path[->] (A22) edge [blue, dotted] node {$\roles$} (A23);

      \path[->] (B1) edge [blue, dotted] node {$\roles$} (B11);
      \path[->] (B11) edge [blue, dotted] node {$\roles$} (B12);
      \path[->] (B12) edge [blue, dotted] node {$\roles$} (B13);

      \path[->] (B2) edge [blue, dotted] node {$\roles$} (B21);
      \path[->] (B21) edge [blue, dotted] node {$\roles$} (B22);
      \path[->] (B22) edge [blue, dotted] node {$\roles$} (B23);

      \path[->] (C2) edge [blue, dotted] node {$\roles$} (C21);
      \path[->] (C21) edge [blue, dotted] node {$\roles$} (C22);
      \path[->] (C22) edge [blue, dotted] node {$\roles$} (C23);

     \smallwang{-0.125}{-0.125}{white}{white}{white}{white} 
     \smallwang{2.5-0.125}{-0.125}{white}{rouge}{rouge}{white}  
     \smallwang{5-0.125}{-0.125}{rouge}{rouge}{rouge}{rouge} 
     \smallwang{2.5-0.125}{2.5-0.125}{white}{white}{white}{white} 
     \smallwang{5-0.125}{2.5-0.125}{white}{rouge}{rouge}{white} 
     \smallwang{5-0.125}{5-0.125}{white}{white}{white}{white} 

     \draw (6.75, 0) node[] (XXX) {$\ldots$};
     \draw (6.75, 1) node[] (XXX) {$\ldots$};
     \draw (6.75, 2) node[] (XXX) {$\ldots$};
     \draw (6.75, 3) node[] (XXX) {$\ldots$};
     \draw (6.75, 4) node[] (XXX) {$\ldots$};
     \draw (6.75, 5) node[] (XXX) {$\ldots$};

  \end{tikzpicture}
    \caption{A fragment of a proper $\{ \tile_1, \tile_2, \tile_3 \}$-hyperoctant for $\tile_1 = \intextwang{white}{white}{white}{white}$, $\tile_2 = \intextwang{white}{rouge}{rouge}{white}$, and $\tile_3 = \intextwang{rouge}{rouge}{rouge}{rouge}$.
    Elements in $\concept{In}^{\interI}$ are depicted as circles.
    Elements in $\concept{Prev}^{\interI}$ are marked grey, elements in $\concept{Cur}^{\interI}$ are marked black, and the lime elements belong to $\concept{Prev}^{\interI} \cap \concept{Cur}^{\interI}$.}
    \label{fig:hyperoctant}
\end{figure}

\noindent We next relate proper $\tiles$-hyperoctants and domino tiling systems.
\begin{lem}\label{lemma:hyperoctant-works}
  Let $\tilingsys \deff (\tilesCol, \tiles, \tileswhite)$ be a domino tilling system. 
  For every proper $\tiles$-hyperoctant~$\interI$, the map $\tilesmap\colon \octant \to \tiles$ represented by $\interI$ witnesses that $\tilingsys$ covers the octant.
  If $\tilingsys$ covers the octant, as witnessed by a map $\tilesmap\colon \octant \to \tiles$, then there exists a proper $\tiles$-hyperoctant $\interI$ representing $\tilesmap$.
\end{lem}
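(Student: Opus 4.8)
The statement splits into two implications, and the plan is to note first that the conditions $\mathrm{\ref{octtiles:Init}}$ and $\mathrm{\ref{octtiles:Verti}}$ are already built into the notion of a $\tiles$-octant (hence into $\interI_{\octant}$), so that in both directions the only genuinely non-trivial content is the horizontal matching condition $\mathrm{\ref{octtiles:Hori}}$, which the $\concept{Cur}$/$\concept{Prev}$ distance-encoding together with \emph{properness} is designed to transport.

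For the first implication I would fix a proper $\tiles$-hyperoctant $\interI$ with represented map $\tilesmap$ and verify $\mathrm{\ref{octtiles:Hori}}$ for an arbitrary $(n,m) \in \octant$. Since $\tilesmap$ is a function, the element $(n,m,0)$ carries exactly one tile $\tile_k = \tilesmap(n,m)$, so by the definition of $\concept{Cur}$ the level $k$ is the unique one with $(n,m,k) \in \concept{Cur}^{\interI}$. Properness then yields $(n{+}1,m,k) \in \concept{Prev}^{\interI}$ (note $(n{+}1,m) \in \octant$ whenever $(n,m) \in \octant$). On the other hand, the last clause of Definition~\ref{def:hyperoctant} applied to the position $(n{+}1,m)$ supplies a \emph{unique} positive level $k'$ with $(n{+}1,m,k') \in \concept{Prev}^{\interI}$ for which $\tile_{k'}$ is $\rmH$-compatible with $\tilesmap(n{+}1,m)$. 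Uniqueness forces $k' = k$, so $\tilesmap(n,m) = \tile_k$ and $\tilesmap(n{+}1,m)$ are $\rmH$-compatible, which is exactly $\mathrm{\ref{octtiles:Hori}}$; together with $\mathrm{\ref{octtiles:Init}}$ and $\mathrm{\ref{octtiles:Verti}}$ inherited from $\interI_{\octant}$, this shows that $\tilesmap$ covers the octant.

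For the converse I would build $\interI$ explicitly. Starting from the $\tiles$-octant $\interI_{\octant}$ representing $\tilesmap$ (available because $\tilesmap$ satisfies $\mathrm{\ref{octtiles:Init}}$ and $\mathrm{\ref{octtiles:Verti}}$), I would adjoin the domain $\octant \times \{1, \ldots, \rmN\}$ and the initial incoming $\roler$-path, and interpret $\roler, \roles, \concept{In}, \overline{\concept{In}}, \conceptC_{\tile_k}$ and $\concept{Cur}$ exactly as prescribed by Definition~\ref{def:hyperoctant}. The one real design choice is $\concept{Prev}$: for a position $(n,m) \in \octant$ with $m < n$ (so that the horizontal predecessor $(n{-}1,m)$ still lies in $\octant$) I put $(n,m,j) \in \concept{Prev}^{\interI}$ where $\tile_j = \tilesmap(n{-}1,m)$, and let $\overline{\concept{Prev}}^{\interI}$ be the complement. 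This makes properness hold by inspection, since for $(n,m,k) \in \concept{Cur}^{\interI}$ the position $(n{+}1,m)$ is sub-diagonal and its recorded $\concept{Prev}$-level is the index of $\tilesmap(n,m)$, i.e.\ $k$; and it satisfies the $\rmH$-compatibility clause precisely because $\tilesmap$ obeys $\mathrm{\ref{octtiles:Hori}}$.

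The only point needing care — and the main (minor) obstacle — is the diagonal, where a position $(i,i)$ has no horizontal predecessor inside $\octant$, so properness imposes no constraint there, yet the hyperoctant definition still demands a unique positive $\concept{Prev}$-level that is $\rmH$-compatible with the carried tile. Here I would invoke Observation~\ref{obs:octant-tiling-proper-white} to get $\tilesmap(i,i) = \intextwang{white}{white}{white}{white}$, whose left side is white; assuming w.l.o.g.\ that the all-white tile is enumerated as $\tile_1$ (a right-border tile), placing $(i,i,1) \in \concept{Prev}^{\interI}$ meets both requirements. A routine check that the remaining clauses of Definition~\ref{def:hyperoctant} hold by construction then closes the argument, and since the represented map of the resulting $\interI$ is $\tilesmap$ by design, $\interI$ is the desired proper $\tiles$-hyperoctant.
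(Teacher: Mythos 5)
Your proof is correct and takes essentially the same route as the paper's: the first direction combines properness with the uniqueness-and-$\rmH$-compatibility clause of Definition~\ref{def:hyperoctant} exactly as the paper does, and your construction of $\concept{Prev}^{\interI}$ (the horizontal predecessor's tile index below the diagonal, the all-white tile's index on the diagonal) coincides with the paper's interpretation of $\concept{Prev}$. Your explicit appeal to Observation~\ref{obs:octant-tiling-proper-white} for the diagonal case merely spells out what the paper leaves implicit.
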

\begin{proof}
  Take a proper $\tiles$-hyperoctant $\interI$, and the map $\tilesmap$ represented by $\interI$.
  By definition of a $\tiles$-octant, we have that $\tilesmap$ satisfies $\mathrm{\ref{octtiles:Init}}$ and $\mathrm{\ref{octtiles:Verti}}$. 
  To establish $\mathrm{\ref{octtiles:Hori}}$ take any $(n,m) \in \octant$, and suppose that $\tilesmap(n,m) = \tile_i$ and $\tilesmap(n{+}1,m) = \tile_j$ hold for some integers $i, j \in \N$.
  By definition of a hyperoctant and properness of $\interI$, we have that $(n{+}1,m,0)$ carries the tile~$\tile_j$, and $(n{+}1,m,i) \in \concept{Prev}^{\interI}$. 
  Thus, by the 6th item of Definition~\ref{def:hyperoctant} we infer that tiles $\tile_i$ and~$\tile_j$ are $\rmH$-compatible.
  The proof of the other statement is routine: take $\interI_{\octant}$ to be a $\tiles$-octant representing~$\tilesmap$. 
  W.l.o.g. assume that $\interI_{\octant}$ interprets all role names and concept names that are not mentioned in its definitions as empty sets.
  We next rename the domain of $\interI_{\octant}$ to $\octant \times \{ 0 \}$ and append fresh elements to make the domain equal to $\octant \times \ZZ_{\rmN{+}1} \cup \{ ({-}i{-}1, 0, 0) \mid i \in \ZZ_\rmN \}$. 
  Then, we enlarge the interpretations of $\roler, \roles, \concept{In}$, and $\concept{Cur}$ in a minimal way according to the first five items of Definition~\ref{def:hyperoctant}.
  Finally, we interpret $\concept{Prev}$ as the set composed of all triples $(n{+}1,m,k)$ for all $(n,m) \in \octant$ carrying a tile $\tile_k$, and all triples $(n,n,\ell)$ for $x \in \N$ and $\ell$ denoting the index of $\intextwang{white}{white}{white}{white}$ in $\tiles$.
  Call the resulting interpretation $\interI$.
  Clearly, $\interI$ is $\tiles$-hyperoctant due to the fact that $\tilesmap$ is a map and respects conditions $\mathrm{\ref{octtiles:Init}}$, $\mathrm{\ref{octtiles:Verti}}$, and $\mathrm{\ref{octtiles:Hori}}$.
\end{proof}

We employ a $\VPL$-\CQ~$\queryq_{\blacktriangle}^{\tilingsys}(\varu_1, \varu_2, \varv_1, \varv_2, \varw_1, \varw_2, \varx_1, \varx_2, \vary_1, \vary_2, \varz_1, \varz_2)$ as a tool for detecting whether a given $\tiles$-hyperoctant~$\interI$ is proper. 
Observe that $\interI$ \emph{is not} proper if and only if there is a position $(n,m) \in \octant$ and a number $1 \leq k \leq \rmN$, for which we have $(n,m,k) \in \concept{Cur}^{\interI}$ and $(n{+}1,m,k) \not\in \concept{Prev}^{\interI}$.
This is precisely the condition that is going to be expressed with $\queryq_{\blacktriangle}^{\tilingsys}$, informally presented at Figure~\ref{fig:Visualisation-of-queryq-blacktriangle}.
The intuition behind~$\queryq_{\blacktriangle}^{\tilingsys}$ is as~follows.

\begin{figure}[h]
  \centering
  \begin{tikzpicture}[transform shape]

      \draw (-4.5, 0) node[medrond] (A0prim) {};
      \draw (-3, 0) node[medrond] (A1prim) {};
      \draw (-1.5, 0) node[] (A2prim) {$\ldots$};

      \draw (0, 0) node[medrond] (A0) {};
      \draw (1.5, 0) node[medrond] (A1) {};
      \draw (3, 0) node[] (A2) {$\ldots$};
      \draw (4.5, 0) node[medrond] (A3) {};
      \draw (8, 0) node[medrond] (A4) {};
      \draw (4.5, 1) node[] (B3) {$\ldots$};
      \draw (8, 1) node[] (B4) {$\ldots$};
      \draw (4.5, 2) node (C3) {$\dots$};
      \draw (8, 2) node[] (C4) {$\dots$};
      \draw (4.5, 3) node[medrond] (D3) {};
      \draw (8, 3) node[medrond] (D4) {};

      \path[->] (A0prim) edge [red] node[yshift=3] {$\roler$} (A1prim);
      \path[->] (A1prim) edge [red] node[yshift=3] {$\roler$} (A2prim);
      \path[->] (A2prim) edge [red] node[yshift=3] {$\roler$} (A0);

      \path[->] (A0) edge [red] node[yshift=3] {$\roler$} (A1);
      \path[->] (A1) edge [red] node[yshift=3] {$\roler$} (A2);
      \path[->] (A2) edge [red] node[yshift=3] {$\roler$} (A3);
      \path[->] (A3) edge [red] node[yshift=3] {$\roler$} (A4);

      \path[->] (A3) edge [blue, dotted] node {$\roles$} (B3);
      \path[->] (B3) edge [blue, dotted] node {$\roles$} (C3);
      \path[->] (C3) edge [blue, dotted] node {$\roles$} (D3);

      \path[->] (A4) edge [blue, dotted] node {$\roles$} (B4);
      \path[->] (B4) edge [blue, dotted] node {$\roles$} (C4);
      \path[->] (C4) edge [blue, dotted] node {$\roles$} (D4);

      \node at (A0prim.center) {$\varu_1$};
      \node at (A1prim.center) {$\varu_2$};

      \node at (A0.center) {$\varx_1$};
      \node at (A1.center) {$\varx_2$};

      \node at (A3.center) {$\vary_1$};
      \node at (A4.center) {$\vary_2$};

      \node at (D3.center) {$\varz_1$};
      \node at (D4.center) {$\varz_2$};

      \node[left=0.5em of D3] {$\concept{In}$}; 
      \node[right=0.5em of D4] {$\concept{In}$}; 

      \draw [decorate, decoration={brace, amplitude=8pt}] (A0.north) -- (A3.north) node[midway, above=10pt] {$m$};
      \draw [decorate, decoration={brace, mirror, amplitude=8pt}] (A1.south) -- (A4.south) node[midway, below=10pt] {$m$};
      \draw [decorate, decoration={brace, amplitude=8pt, mirror, raise=3pt}] (A4.east) -- (D4.east) node[midway, right=5pt] {\ $m$};
      \draw [decorate, decoration={brace, amplitude=8pt, mirror, raise=3pt}] (A3.east) -- (D3.east) node[midway, right=5pt] {\ $m$};


      \draw (4.5, 4) node[ptcarre] (X1) {$\varw_1$};
      \draw (8, 4) node[ptcarre] (Y1) {$\varw_2$}; 

      \draw (4.5, 5) node[] (X2) {$\ldots$};
      \draw (8, 5) node[] (Y2) {$\ldots$}; 

      \draw (4.5, 6) node[ptcarre] (X3) {$\varv_1$};
      \draw (8, 6) node[ptcarre] (Y3) {$\varv_2$}; 

      \node[above=0.7em of X3] {$\concept{Cur}$}; 
      \node[above=0.7em of Y3] {$\overline{\concept{Prev}}$};

      \node[left=0.3em of X1] {$\overline{\concept{In}}$}; 
      \node[right=0.3em of Y1] {$\overline{\concept{In}}$};


      \draw (5.2, 0) node[] (A3Shadow) {};
      \draw (9, 0) node[] (A4Shadow) {};
      \draw (5.2, 5.8) node[] (X3Shadow) {};
      \draw (9, 5.8) node[] (Y3Shadow) {}; 

      \draw [decorate, decoration={brace, amplitude=8pt, mirror, raise=3pt}] (A3Shadow.east) -- (X3Shadow.east) node[midway, right=5pt] {\ $m{+}k$};
      \draw [decorate, decoration={brace, amplitude=8pt, mirror, raise=3pt}] (A4Shadow.east) -- (Y3Shadow.east) node[midway, right=5pt] {\ $m{+}k$};

      \draw [decorate, decoration={brace, amplitude=8pt}] (A0prim.north) -- (A0.north) node[midway, above=10pt] {$k$};
      \draw [decorate, decoration={brace, mirror, amplitude=8pt}] (A1prim.south) -- (A1.south) node[midway, below=10pt] {$k$};

      \path[->] (D3) edge [blue, dotted] node {$\roles$} (X1);
      \path[->] (D4) edge [blue, dotted] node {$\roles$} (Y1);
      \path[->] (X1) edge [blue, dotted] node {$\roles$} (X2);
      \path[->] (X2) edge [blue, dotted] node {$\roles$} (X3);
      \path[->] (Y1) edge [blue, dotted] node {$\roles$} (Y2);
      \path[->] (Y2) edge [blue, dotted] node {$\roles$} (Y3);

      \scoped[on background layer] \filldraw [red!50, draw opacity=0.2, fill opacity=0.2, line width=2.5em, line join=round] (X3.center) -- (Y3.center) -- cycle;

      \node at (6.25, 6.1) {\textcolor{red!80}{\text{mismatch!}}};

  \end{tikzpicture}
    \caption{Visualisation of the query $\queryq_{\blacktriangle}^{\tilingsys}(\varu_1, \varu_2, \varv_1, \varv_2, \varw_1, \varw_2, \varx_1, \varx_2, \vary_1, \vary_2, \varz_1, \varz_2)$.}
    \label{fig:Visualisation-of-queryq-blacktriangle}
\end{figure}

We first ensure that $\varz_1, \varz_2$ are mapped to some elements representing the coordinates $(n, m)$ and $(n', m')$ of the octant for some integers satisfying $n' = n {+} 1$ and~$m = m'$.
The~fact that they belong to the octant is handled by means of the $\concept{In}$ concept.
The equality $n' = n {+} 1$ is achieved by introducing variables $\vary_1, \vary_2$, stating their $\roler$-connectedness, and the $\roles^*$-reachability of $\varz_1$ and $\varz_2$ from, respectively, $\vary_1$ and $\vary_2$.
Thus $\vary_1, \vary_2$ are placed ``at the bottom'' of the variables $\varz_1$ and $\varz_2$.
Next, the equi-hight of $\varz_1, \varz_2$ is ensured with extra $\roler$-connected variables $\varx_1, \varx_2$ located to the left of $\vary_1$, and non-regular atoms $\langvpaeq{\roler}{\roles}(\varx_1, \varz_1)$ and $\langvpaeq{\roler}{\roles}(\varx_2, \varz_2)$, enforcing equality of the distance between $\varx_i$ and $\vary_i$, and the distance between $\vary_i$ and $\varz_i$, for all~$i \in \{ 1, 2 \}$.
Once we know that the variables~$\varz_1, \varz_2$ are mapped by a query as desired, we need to express that they violate properness of $\interI$.
Recall that we want to establish $(n,m,k) \in \concept{Cur}^{\interI}$ and $(n{+}1,m,k) \not\in \concept{Prev}^{\interI}$ for some $k$.
Such elements will be represented, respectively, by variables $\varv_1$ and $\varv_2$.  
To express the mentioned constraint, we introduce fresh $\roler$-connected variables $\varu_1, \varu_2$ that are located to the left of $\varx_1, \varx_2$, and whose distance to $\varu_1, \varu_2$ will be precisely the~$k$ that we are looking for.
We stress that we do not ``hardcode'' the value of $k$ in the query. 
As~the variable $\varu_1$ is free to map whenever it wants, this mimics a disjunction over possible values of $k$.
We ensure the variables $\varv_1, \varv_2$ are mapped to elements outside the octant by expressing that they are $\roler^*$-reachable from the $\roles$-successors $\varw_1, \varw_2$ of $\varz_1, \varz_2$, that are labelled with $\overline{\concept{In}}$.
Note that just expressing that $\varv_1, \varv_2$ belong to $\overline{\concept{In}}$ does not suffice, as the path leading from some of $\varz_i$ to $\varv_i$ could contain elements in $\concept{In}$ (which we implicitly forbid).
With non-regular atoms $\langvpaeq{\roler}{\roles}(\varu_1, \varv_1)$ and $\langvpaeq{\roler}{\roles}(\varu_2, \varv_2)$ we make sure that the distance between $\varz_1$ and $\varv_1$ (respectively~$\varz_2$ and~$\varv_2$) is indeed~$k$.

Despite the high technicality of our construction, we hope that after reading the above intuition and glancing at Figure~\ref{fig:Visualisation-of-queryq-blacktriangle}, the following definition of the query $\queryq_{\blacktriangle}^{\tilingsys}$ should be now~understandable:

\begin{align*}
  \queryq_{\blacktriangle}^{\tilingsys} \deff \roler(\varu_1, \varu_2) \land \roler^*(\varu_2, \varx_1) \land \roler(\varx_1, \varx_2) \land \roler^*(\varx_2, \vary_1) \land \roler(\vary_1, \vary_2) \land \concept{Cur}(\varv_1) \land \overline{\concept{Prev}}(\varv_2) \land \\
  \bigwedge_{i=1}^{2} \big[ \roles^*(\vary_i, \varz_i) \land \concept{In}(\varz_i) \land \roles(\varz_i, \varw_i) \land \overline{\concept{In}}(\varw_i) \land \roles^*(\varw_i, \varv_i)  \land \langvpaeq{\roler}{\roles}(\varx_i, \varz_i) \land \langvpaeq{\roler}{\roles}(\varu_i, \varv_i) \big].
\end{align*}

By routine case analysis with a bit of calculations, we can show that:
\begin{lem}\label{lemma:encoding-octant-uno}
Let $\tilingsys \deff (\tilesCol, \tiles, \tileswhite)$ be a domino tilling system and let $\interI$ be a $\tiles$-hyperoctant.
Then we have that $\interI$ is proper if and only if $\interI \not\models \queryq_{\blacktriangle}^{\tilingsys}$.
\end{lem}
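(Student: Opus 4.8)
The plan is to prove the biconditional by giving an \emph{exact} characterisation of the matches of $\queryq_{\blacktriangle}^{\tilingsys}$ in a $\tiles$-hyperoctant $\interI$. Concretely, I will show that $\interI \models \queryq_{\blacktriangle}^{\tilingsys}$ holds if and only if there is a position $(n,m) \in \octant$ and an index $1 \le k \le \rmN$ with $(n,m,k) \in \concept{Cur}^{\interI}$ and $(n{+}1,m,k) \notin \concept{Prev}^{\interI}$, which is exactly the negation of properness. Since a hyperoctant has a completely rigid shape (its domain and the interpretation of $\roler$, $\roles$, $\concept{In}$, $\overline{\concept{In}}$ are fixed by Definition~\ref{def:hyperoctant}), I can argue about paths concretely and need no extra model-theoretic machinery: the $\roler$-edges form a single chain $\ldots \to ({-}1,0,0) \to (0,0,0) \to (1,0,0) \to \ldots$ living on the base row, and from each octant node $(c,j,0)$ the only two $\roles$-successors are the octant node $(c,j{+}1,0) \in \concept{In}^{\interI}$ (present when $j<c$) and the ``extra'' node $(c,j,1) \in \overline{\concept{In}}^{\interI}$.

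For one direction I take an arbitrary match $\matcheta$ and decode it. The atoms $\roler(\varu_1,\varu_2)$, $\roler^*(\varu_2,\varx_1)$, $\roler(\varx_1,\varx_2)$, $\roler^*(\varx_2,\vary_1)$, $\roler(\vary_1,\vary_2)$ pin $\varu_1,\varu_2,\varx_1,\varx_2,\vary_1,\vary_2$ onto the $\roler$-chain in strictly increasing column order, with $\mathrm{col}(\varx_2)=\mathrm{col}(\varx_1){+}1\le\mathrm{col}(\vary_1)$ and $\mathrm{col}(\vary_2)=\mathrm{col}(\vary_1){+}1$; in particular this forces the common height to be $\ge 1$. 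Next the labels disambiguate the two available $\roles$-successors: $\concept{In}(\varz_i)$ forces the $\roles^*(\vary_i,\varz_i)$-path to stay inside the octant (once one steps into an extra path one never returns to $\concept{In}^{\interI}$), so $\varz_i=(c_i,m_i,0)$ with $c_i=\mathrm{col}(\vary_i)$, and then $\langvpaeq{\roler}{\roles}(\varx_i,\varz_i)$ forces the number of $\roler$-steps to equal the number of $\roles$-steps, whence $m_i=c_i-\mathrm{col}(\varx_i)$. Combined with the chain offsets this yields $m_1=m_2=:m$ and $c_2=c_1{+}1$, i.e. $\varz_1=(n,m,0)$ and $\varz_2=(n{+}1,m,0)$ for $n:=c_1$. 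Finally $\overline{\concept{In}}(\varw_i)$ together with $\roles(\varz_i,\varw_i)$ forces $\varw_i=(c_i,m,1)$, so $\varv_i=(c_i,m,k_i)$ sits on the extra path, and $\langvpaeq{\roler}{\roles}(\varu_i,\varv_i)$ fixes $k_1=k_2=:k$ by the same counting argument. Hence $\varv_1=(n,m,k)$, $\varv_2=(n{+}1,m,k)$, and the literals $\concept{Cur}(\varv_1)$, $\overline{\concept{Prev}}(\varv_2)$ are precisely a witness of non-properness.

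For the converse I start from a witness $(n,m,k)$ and build a match explicitly. I place $\varz_1:=(n,m,0)$, $\varz_2:=(n{+}1,m,0)$, $\varw_i:=(c_i,m,1)$, $\varv_1:=(n,m,k)$, $\varv_2:=(n{+}1,m,k)$, $\vary_1:=(n,0,0)$, $\vary_2:=(n{+}1,0,0)$, $\varx_i:=(\mathrm{col}(\vary_i){-}m,0,0)$ and $\varu_i:=(\mathrm{col}(\varv_i){-}m{-}k,0,0)$ on the chain; all these columns are legal, using $m\le n$ for the $\varx_i$ and using the incoming path of length $\rmN$ (together with $k\le\rmN$) to accommodate the possibly negative columns of the $\varu_i$. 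A direct check shows every conjunct holds: the two $\langvpaeq{\roler}{\roles}$-atoms are satisfied because in each case the $\roler^t\roles^t$-path first walks $t$ steps along the chain to the target column and then $t$ steps up — vertically inside the octant and then, for the $\varu$–$\varv$ atoms, into the extra path — with $t=m$ for the $\varx$–$\varz$ atoms and $t=m{+}k$ for the $\varu$–$\varv$ atoms.

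The main obstacle, and the only step requiring genuine care, is the length bookkeeping for the visibly-pushdown atoms $\langvpaeq{\roler}{\roles}(\cdot,\cdot)$: one must argue that an $\roler^t\roles^t$-path is forced to turn exactly at the column of its target (since $\roles$-edges never change the column coordinate) and that the markers $\concept{In}/\overline{\concept{In}}$ are exactly what decide whether the ascending segment stays on the octant's vertical edges or enters the extra path — without them a $\roles$-path could take the ``wrong'' successor and the counting would collapse. Everything else is a routine verification of the individual conjuncts against the fixed shape of $\interI$.
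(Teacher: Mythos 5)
Your proposal follows essentially the same route as the paper's own proof. The forward direction decodes an arbitrary match by coordinate bookkeeping on the rigid shape of the hyperoctant (the paper does exactly this, deriving $a<b<n$ from the $\roler$-atoms, then $m'=m$ and $k'=k$ from the two pairs of $\langvpaeq{\roler}{\roles}$-atoms), and your converse builds the very same explicit assignment from a non-properness witness $(n,m,k)$, down to the identical coordinates $\varx_1=(n{-}m,0,0)$, $\varx_2=(n{-}m{+}1,0,0)$, $\varu_1=(n{-}m{-}k,0,0)$, $\varu_2=(n{-}m{-}k{+}1,0,0)$, including the same remark that the incoming $\roler$-path of length $\rmN$ absorbs the possibly negative columns of the $\varu_i$. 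Your explanation of why the $\concept{In}$/$\overline{\concept{In}}$ labels force the ascending $\roles$-segments to stay inside (resp.\ leave) the octant is, if anything, spelled out more carefully than in the paper.

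There is, however, a genuine gap in the converse, and it is one your own forward analysis exposes: you correctly observe that every match forces the common height to be at least $1$, because $\mathrm{col}(\varx_2)=\mathrm{col}(\varx_1){+}1\le\mathrm{col}(\vary_1)$. Consequently the query can only ever witness mismatches at heights $m\ge 1$, so your claimed ``exact characterisation'' (a match exists iff \emph{some} violation $(n,m,k)$ with $(n,m)\in\octant$ exists) cannot be what the two directions establish. Concretely, if the only witness of non-properness sits on the bottom row, i.e.\ $m=0$, then your explicit assignment places $\varx_2=(n{+}1,0,0)$ and $\vary_1=(n,0,0)$, and the conjunct $\roler^*(\varx_2,\vary_1)$ fails because the $\roler$-chain is one-directional; moreover no other assignment can work, precisely because every match satisfies $m\ge 1$. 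Such hyperoctants do exist: choose the $\concept{Prev}$-indices properly everywhere except at a single bottom-row position $(n{+}1,0)$, giving it some $\rmH$-compatible index different from the tile index of $(n,0)$ --- nothing in Definition~\ref{def:hyperoctant} rules this out. For such an $\interI$ we have $\interI\not\models\queryq_{\blacktriangle}^{\tilingsys}$ yet $\interI$ is not proper, so the stated biconditional fails. To be fair, the paper's own proof of the ``if'' direction has exactly the same blind spot (its match is identical to yours and likewise breaks at $m=0$), so you have faithfully reproduced the published argument; but as a self-contained proof of the lemma as stated, the case $m=0$ is unhandled and would require either a modified query, a notion of properness restricted to $m\ge 1$ (with corresponding adjustments in the lemmas that consume it), or an argument eliminating bottom-row mismatches.
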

\begin{proof}
  We start from the ``only if'' direction.
  Suppose that $\interI$ is proper, but there is a match~$\matcheta$ witnessing $\interI \models \queryq_{\blacktriangle}^{\tilingsys}$.
  We provide a few tedious calculations.
  As the atoms $\roler(\varx_1, \varx_2)$, $\roler(\vary_1, \vary_2)$, and $\roler(\varu_1, \varu_2)$ belong to $\queryq_{\blacktriangle}^{\tilingsys}$, there are $a, b, n \in \ZZ$ such that $\matcheta(\varu_1) = (a, 0, 0)$, $\matcheta(\varu_2) = (a{+}1, 0, 0)$, $\matcheta(\varx_1) = (b, 0, 0)$, $\matcheta(\varx_2) = (b{+}1, 0, 0)$, $\matcheta(\vary_1) = (n, 0, 0)$, and $\matcheta(\vary_2) = (n{+}1, 0, 0)$.
  By the presence of atoms $\roler^*(\varu_2, \varx_1)$ and $\roler^*(\varx_2, \vary_1)$ in the query, we infer that $a < b < n$.
  From atoms $\roles^*(\vary_i, \varz_i), \concept{In}(\varz_i)$ present in $\queryq_{\blacktriangle}^{\tilingsys}$, we can deduce that $\matcheta(\varz_1) = (n, m, 0)$, $\matcheta(\varz_2) = (n{+}1, m', 0)$ hold for some $m, m' \in \N$.
  Next, by the satisfaction of the atoms $\langvpaeq{\roler}{\roles}(\varx_1, \varz_1)$ and $\langvpaeq{\roler}{\roles}(\varx_2, \varz_2)$ the equations $m = n - b$ and $m' = (n{+}1)-(b{+}1)$ follow. 
  Thus $m' = m$.
  As the next step, we deal with the variables $\varw_1$ and $\varw_2$.
  From the atoms $\roles(\varz_i, \varw_i)$ and $\overline{\concept{In}}(\varw_i)$ we can deduce that $\matcheta(\varw_1) = (n, m, 1)$, $\matcheta(\varw_2) = (n{+}1, m, 1)$.
  Together with atoms $\roles^*(\varw_1, \varv_1)$ and $\roles^*(\varw_1, \varv_1)$, this implies that $\matcheta(\varv_1) = (n, m, k)$, and $\matcheta(\varz_2) = (n{+}1, m, k')$ hold for some positive~$k, k' \in \N$.
  By the satisfaction of the atoms $\langvpaeq{\roler}{\roles}(\varu_1, \varv_1)$ and $\langvpaeq{\roler}{\roles}(\varu_2, \varv_2)$ the equations $m{+}k = n{-}a$ and $m{+}k' = (n{+}1)-(a{+}1)$ follow. 
  Hence, $k = k'$ holds and by collecting all the previous equations we conclude that $\matcheta(\varv_1) = (n, m, k)$ and  $\matcheta(\varv_1) = (n{+}1, m, k)$.
  Finally, the atoms $\concept{Cur}(\varv_1)$ and $\overline{\concept{Prev}}(\varv_2)$ of $\queryq_{\blacktriangle}^{\tilingsys}$ imply that $(n, m, k) \in \concept{Cur}^{\interI}$ but $(n{+}1, m, k) \not\in \concept{Prev}^{\interI}$.
  This contradicts the properness of $\interI$. 

  We show the ``if'' direction by contraposition. 
  Suppose that $\interI$ is not proper, and that the properness of $\interI$ is violated by $(n, m, k) \in \concept{Cur}^{\interI}$ and $(n{+}1, m, k) \in \overline{\concept{Prev}}^{\interI}$.
  Then the map $\varv_1 \mapsto (n,m,k)$, $\varv_2 \mapsto (n{+}1,m,k)$, $\varw_1 \mapsto (n,m,1)$, $\varw_2 \mapsto (n{+}1,m,1)$, $\varz_1 \mapsto (n,m,0)$, $\varz_2 \mapsto (n{+}1,m,0)$, $\vary_1 \mapsto (n,0,0)$, $\vary_2 \mapsto (n{+}1,0,0)$, $\varx_1 \mapsto (n{-}m, 0,0)$, $\varx_2 \mapsto (n{-}m{+}1, 0,0)$, $\varu_1 \mapsto (n{-}m{-}k, 0,0)$, $\varu_2 \mapsto (n{-}m{-}k{+}1, 0,0)$ is a match for $\queryq_{\blacktriangle}^{\tilingsys}$. This finishes the proof.

  As a side remark, note that if the value of $n{-}m$ is sufficiently small, the value of $n{-}m{-}k$ can be negative (but not smaller than $-\rmN$). This explains at last why we appended an incoming $\roler^*$-path of length $\rmN$ to the element $(0,0,0)$ in the construction of~hyperoctants.
\end{proof}

As the final step of our construction, we define an $\ALC$-TBox $\tboxT_{\blacktriangle}^{\tilingsys}$ whose intended tree-like models will contain $\tiles$-hyperoctants.
Most of the axioms written below are formalizations of straightforward properties satisfied by $\tiles$-hyperoctants and grids.
As we are aiming to design an $\ALC$-TBox, all the properties expressed in $\tboxT_{\blacktriangle}^{\tilingsys}$ are interpreted ``globally''. Hence, to express existence of a starting point of a $\tiles$-hyperoctant we employ a fresh role name $\role{aux}$ and say that every element has an $\role{aux}$-successor (with the intended meaning that such a successors ``starts'' a $\tiles$-hyperoctant).
For brevity, we say that an element is inner (resp. outer) if it belongs (resp. does not belong) to the interpretation of $\concept{In}$.
For a language $\languageL$ we also say that an element $\domeleme$ in $\interI$ is $\languageL$-outer-reachable from $\domelemd$ if there exists a $\languageL$-path from $\domelemd$ to $\domeleme$ composed solely of outer elements (with a possible exception~of~$\domelemd$).
 
\begin{description}\itemsep0em
  \item[\desclabel{(GStart)}{grid:Start}] Every element has an $\role{aux}$-successor.
  Every such successor has an outgoing $\roler^{\rmN}$-path composed of outer elements that leads to an inner element carrying a tile $\intextwang{white}{white}{white}{white}$ that has an inner $\roler$-successor carrying a tile different from $\intextwang{white}{white}{white}{white}$.
  \item[\desclabel{(GCompl)}{grid:Compl}] Concept name $\overline{\concept{In}}$ (resp. $\overline{\concept{Prev}}$) is interpreted as the complement of the interpretation of concept name $\concept{In}$ (resp. $\concept{Prev}$).
  \item[\desclabel{(GTil)}{grid:Til}] Every inner element is labelled with precisely one concept name from $\concepttilepath^{\tiles}$, and there are no outer elements labelled with such concept names.
  \item[\desclabel{(GSuc)}{grid:Suc}] If an element has an inner $\roler$-successor then such a successor also has an inner $\roler$-successor.
  Every inner element has an inner $\roles$-successor.\footnote{This ensures for each $i \in \N$ the existence of $i$ $\roles$-successors for the $i$-th element in the bottom of the octant.}
  \item[\desclabel{(GPath)}{grid:Path}] Every inner element has an outgoing $\roles^\rmN$-path composed solely of outer elements.
  \item[\desclabel{(GCur)}{grid:Cur}] For every inner element carrying the tile $\tile_k$ we have that: 
  (i) all $\roles^k$-outer-reachable elements are labelled with $\concept{Cur}$,
  (ii) there is no $\roles^\ell$-outer-reachable element, for $\ell \leq \rmN$ and $\ell \neq k$, that is labelled with $\concept{Cur}$.

  \item[\desclabel{(GPrev)}{grid:Prev}] For every inner element there exists a number $k \leq \rmN$ such that:
  (i) all $\roles^k$-outer-reachable elements are labelled with $\concept{Prev}$,
  (ii) there is no $\roles^\ell$-outer-reachable element, for $\ell \leq \rmN$ and $\ell \neq k$, that is labelled with $\concept{Prev}$.

  \item[\desclabel{(GVerti)}{grid:Verti}] Every pair of tiles carried by $\roles$-successors is $\rmV$-compatible.

  \item[\desclabel{(GHori)}{grid:Hori}] For all tiles $\tile \in \tiles$ and all elements carrying $\tile$ that can $\roles^\ell$-outer-reach an element labelled with $\concept{Prev}$ for some $\ell$, we have that $\tile_\ell$ and $\tile$ are $\rmH$-compatible.
\end{description}

\noindent The definition of $\tboxT_{\blacktriangle}^{\tilingsys}$ is provided in the proof of the following lemma.

\begin{lem}\label{lemma:Tblacktriangle-exists}
  There exists an $\ALC$-TBox $\tboxT_{\blacktriangle}^{\tilingsys}$ expressing the above properties.
\end{lem}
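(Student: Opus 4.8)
The plan is to exhibit, for each of the nine properties $\mathrm{\ref{grid:Start}}$--$\mathrm{\ref{grid:Hori}}$, one $\ALC$-GCI or a finite family of them, and to set $\tboxT_{\blacktriangle}^{\tilingsys}$ to be their union. The single enabling observation is that $\rmN \deff |\tiles|$ is a constant fixed together with the input tiling system, so every property above talks only about a neighbourhood of radius at most~$\rmN$: the $\roler^{\rmN}$- and $\roles^{\rmN}$-paths of $\mathrm{\ref{grid:Start}}$ and $\mathrm{\ref{grid:Path}}$, as well as the $\roles^{k}$- and $\roles^{\ell}$-reachabilities appearing in $\mathrm{\ref{grid:Cur}}$--$\mathrm{\ref{grid:Hori}}$, all have $k, \ell \leq \rmN$. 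Since a path of a fixed length~$\ell$ is captured by an $\ell$-fold nesting of $\exists$- or $\forall$-restrictions, plain $\ALC$ --- although it has neither regular nor the non-regular path expressions used elsewhere in the paper --- already suffices. The unbounded octant is never described by a single axiom; it arises only because the GCIs are enforced globally and compel each element to spawn appropriately labelled successors.

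To tame the notation I would introduce, for a role $\role{p} \in \{\roler, \roles\}$, a concept $\conceptC$ and a depth $\ell \leq \rmN$, a relativised existential $\exists{\role{p}^{\ell}_{\mathrm{out}}}.\conceptC$ abbreviating the $\ell$-fold nesting $\exists{\role{p}}.(\overline{\concept{In}} \dland \exists{\role{p}}.(\overline{\concept{In}} \dland \cdots \dland \conceptC))$, together with its universal dual $\forall{\role{p}^{\ell}_{\mathrm{out}}}.\conceptC$ built from $\forall{\role{p}}.(\overline{\concept{In}} \to \cdots)$. These are genuine $\ALC$-concepts exactly because $\ell$ is bounded, and the $\overline{\concept{In}}$-guards faithfully implement the phrase ``composed solely of outer elements (with a possible exception of the start)''.

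The translation is then routine. For $\mathrm{\ref{grid:Start}}$ I use the fresh role $\role{aux}$ and the global axiom $\topconcept \dlsubseteq \exists{\role{aux}}.\conceptC_{\mathrm{ini}}$, where $\conceptC_{\mathrm{ini}}$ spells out an outgoing $\roler$-chain of length $\rmN$ through outer nodes ending in an inner node carrying $\intextwang{white}{white}{white}{white}$ whose inner $\roler$-successor carries a different tile; routing through $\role{aux}$ compensates for the inability of an $\ALC$-TBox to name a root. Property $\mathrm{\ref{grid:Compl}}$ becomes $\concept{In} \dland \overline{\concept{In}} \dlsubseteq \botconcept$, $\topconcept \dlsubseteq \concept{In} \dlor \overline{\concept{In}}$ and the analogues for $\concept{Prev}$; $\mathrm{\ref{grid:Til}}$ becomes $\concept{In} \dlsubseteq \bigdlor_{\tile \in \tiles}(\conceptC_\tile \dland \bigdland_{\tile' \neq \tile} \neg \conceptC_{\tile'})$ together with $\overline{\concept{In}} \dlsubseteq \bigdland_{\tile \in \tiles} \neg \conceptC_\tile$; and $\mathrm{\ref{grid:Suc}}$, $\mathrm{\ref{grid:Path}}$ become the obvious local existential axioms (e.g. $\concept{In} \dlsubseteq \exists{\roles}.\concept{In}$ and $\concept{In} \dlsubseteq \exists{\roles^{\rmN}_{\mathrm{out}}}.\topconcept$). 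The interesting ones use the relativised operators: $\mathrm{\ref{grid:Cur}}$ is, for each $k$, $\conceptC_{\tile_k} \dland \concept{In} \dlsubseteq \forall{\roles^{k}_{\mathrm{out}}}.\concept{Cur} \dland \bigdland_{\ell \neq k, \ell \leq \rmN} \neg\exists{\roles^{\ell}_{\mathrm{out}}}.\concept{Cur}$; $\mathrm{\ref{grid:Prev}}$ is $\concept{In} \dlsubseteq \bigdlor_{k=1}^{\rmN}(\forall{\roles^{k}_{\mathrm{out}}}.\concept{Prev} \dland \bigdland_{\ell \neq k} \neg\exists{\roles^{\ell}_{\mathrm{out}}}.\concept{Prev})$; $\mathrm{\ref{grid:Verti}}$ is, for each $\tile$, $\conceptC_\tile \dlsubseteq \forall{\roles}.(\overline{\concept{In}} \dlor \bigdlor_{\tile'} \conceptC_{\tile'})$ where $\tile'$ ranges over the tiles $\rmV$-compatible with $\tile$ (the disjunct $\overline{\concept{In}}$ leaving outer successors, which carry no tile by $\mathrm{\ref{grid:Til}}$, unconstrained); and $\mathrm{\ref{grid:Hori}}$ is $\conceptC_\tile \dlsubseteq \bigdland_{\ell} \neg\exists{\roles^{\ell}_{\mathrm{out}}}.\concept{Prev}$ where $\ell$ ranges over the indices with $\tile_\ell$ not $\rmH$-compatible with $\tile$.

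There is no genuine obstacle here, since all quantities are bounded by the fixed constant $\rmN$; the construction is of the kind the paper elsewhere calls routine. The only points that need care are (i) getting the $\overline{\concept{In}}$-relativisation of the outer paths exactly right, (ii) the $\role{aux}$-detour in $\mathrm{\ref{grid:Start}}$, and (iii) not over-constraining outer $\roles$-successors in $\mathrm{\ref{grid:Verti}}$ and $\mathrm{\ref{grid:Hori}}$, for which the $\overline{\concept{In}}$-escape disjunct is essential. The substantive content --- that every single-role tree-like model of $\tboxT_{\blacktriangle}^{\tilingsys}$ (which exists by Corollary~\ref{cor:tree-model-property-for-ALCVPL}) in fact embeds a proper $\tiles$-hyperoctant in the sense of Definition~\ref{def:hyperoctant} --- is deferred to the correctness lemmas that follow; for the present statement it suffices that each property is a plain $\ALC$-GCI, which the above case analysis establishes.
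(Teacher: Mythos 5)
Your proposal is correct and takes essentially the same approach as the paper: your relativised operators $\exists{\role{p}^{\ell}_{\mathrm{out}}}$ and $\forall{\role{p}^{\ell}_{\mathrm{out}}}$ are precisely the paper's inductively defined macros $(\exists{\role{p}}.\overline{\concept{In}})^{\ell}.\conceptD$ and $(\forall{\role{p}}.\overline{\concept{In}})^{\ell}.\conceptD$, and your property-by-property GCIs coincide with the paper's up to immaterial variations (De Morgan-dual phrasing of the ``no other distance'' clauses in $\mathrm{\ref{grid:Cur}}$--$\mathrm{\ref{grid:Hori}}$, splitting global axioms into per-tile GCIs, restricting left-hand sides to $\concept{In}$, and using $\exists{\role{aux}}.\conceptC_{\mathrm{ini}}$ where the paper writes $(\exists{\role{aux}}.\top) \dland \forall{\role{aux}}.(\cdots)$ for $\mathrm{\ref{grid:Start}}$, all of which are harmless for the reduction). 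Like the paper, you rightly treat this lemma as only asserting expressibility in plain $\ALC$ via nestings of depth bounded by the fixed constant $\rmN$, deferring semantic correctness to the subsequent lemmas.
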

\begin{proof}
  Given $\ALC$-concepts $\conceptC, \conceptD$ and a role name $\roler \in \Rlang$ we employ macros $(\forall{\roler}.\conceptC)^{n}.\conceptD$ and $(\exists{\roler}.\conceptC)^{n}.\conceptD$ defined inductively for $n \geq 1$ as follows: 
  \[
  (\forall{\roler}.\conceptC)^{1}.\conceptD \deff \forall{\roler}.(\conceptC \to \conceptD), \qquad (\forall{\roler}.\conceptC)^{n{+}1}.\conceptD \deff \forall{\roler}.(\conceptC \to [(\forall{\roler}.\conceptC)^{n}.\conceptD]), 
  \]
  \[
    (\exists{\roler}.\conceptC)^{1}.\conceptD \deff \exists{\roler}.(\conceptC \dland \conceptD), \qquad (\exists{\roler}.\conceptC)^{n{+}1}.\conceptD \deff \exists{\roler}.(\conceptC \dland [(\exists{\roler}.\conceptC)^{n}.\conceptD]).
  \]
  Our $\tboxT_{\blacktriangle}^{\tilingsys}$ is composed of all the GCIs listed below, 
  describing properties $\mathrm{\ref{grid:Start}}$, $\mathrm{\ref{grid:Compl}}$, $\mathrm{\ref{grid:Til}}$, $\mathrm{\ref{grid:Suc}}$, $\mathrm{\ref{grid:Path}}$, $\mathrm{\ref{grid:Cur}}$, $\mathrm{\ref{grid:Prev}}$, $\mathrm{\ref{grid:Verti}}$, and $\mathrm{\ref{grid:Hori}}$ in precisely~this~order.
    \begin{align*} 
        \top & \dlsubseteq (\exists{\role{aux}}.\top) \dland \forall{\role{aux}}.\left( (\exists{\roler}.\overline{\concept{In}})^{\rmN}. [ \exists{\roler}.\left( \concept{In} \dland \conceptC_{\intextwang{white}{white}{white}{white}} \dland \exists{\roler}.(\concept{In} \dland \neg \conceptC_{\intextwang{white}{white}{white}{white}})  \right) ]  \right), \\
        \top & \dlsubseteq (\overline{\concept{In}} \leftrightarrow \neg \concept{In}) \dland (\overline{\concept{Prev}} \leftrightarrow \neg \concept{Prev}),\\
        \top & \dlsubseteq [(\bigdlor_{\tile \in \tiles} \conceptC_{\tile}) \to \concept{In}] \dland [\concept{In} \to (\bigdlor_{\tile \in \tiles} (\conceptC_{\tile} \dland \bigdland_{\tile' \in \tiles \setminus \{ \tile \}} \neg \conceptC_{\tile'}) ], \\
      \end{align*}
      \begin{align*}
        \top & \dlsubseteq (\forall{\roler}.[\concept{In} \to (\exists{\roler}.\concept{In})]) \dland (\concept{In} \to \exists{\roles}.\concept{In} ), \\ 
                \top & \dlsubseteq \concept{In} \to (\exists{\roles}.\overline{\concept{In}})^{\rmN}.\top, \\
        \top & \dlsubseteq \bigdland_{1 \leq k \leq \rmN} \left( 
          \conceptC_{\tile_k} \to \Big[
            [ (\forall{\roles}.\overline{\concept{In}})^k.\concept{Cur} ] \dland \bigdland_{1 \leq \ell \leq \rmN,\ \ell \neq k} [ (\forall{\roles}.\overline{\concept{In}})^\ell.\neg\concept{Cur} ]
          \Big]
        \right),\\
        \top & \dlsubseteq \bigdlor_{1 \leq k \leq \rmN} \left( 
          [ (\forall{\roles}.\overline{\concept{In}})^k.\concept{Prev} ] \dland \bigdland_{1 \leq \ell \leq \rmN,\ \ell \neq k} [ (\forall{\roles}.\overline{\concept{In}})^\ell.\neg\concept{Prev} ]
        \right),\\
        \top & \dlsubseteq \bigdlor_{\tile \in \tiles} \left( \conceptC_{\tile} \to [ \bigdland_{\tile' \in \tiles, (\tile, \tile') \ \text{are not}\ \rmV\text{-compatible}} \neg \exists{\roles}.\conceptC_{\tile'} ] \right),\\
        \top & \dlsubseteq \bigdlor_{\tile \in \tiles} \left( \conceptC_{\tile} \to [ \bigdland_{\tile_{\ell} \in \tiles, (\tile_{\ell}, \tile) \ \text{are not}\ \rmH\text{-compatible}} \neg (\exists{\roles}.\overline{\concept{In}})^{\ell}.\concept{Prev}  ] \right).
    \end{align*}
  It should be clear that the above GCIs formalise the required properties.
\end{proof}

We first see that $\tiles$-hyperoctant can be extended to (counter)models of $\tboxT_{\blacktriangle}^{\tilingsys}$ and $\queryq_{\blacktriangle}^{\tilingsys}$.

\begin{lem}\label{lemma:from-hyperoctants-to-models}
Every proper $\tiles$-hyperoctant can be extended to a model of $\tboxT_{\blacktriangle}^{\tilingsys}$ that violates~$\queryq_{\blacktriangle}^{\tilingsys}$.
\end{lem}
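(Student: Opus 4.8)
The plan is to obtain $\interJ$ by completing the octant represented by $\interI$ to the full quarter-plane $\N \times \N$, filling every position strictly above the diagonal with the all-white tile, and then installing the auxiliary role demanded by $\mathrm{\ref{grid:Start}}$. First I would use Lemma~\ref{lemma:hyperoctant-works} to note that the map $\tilesmap \colon \octant \to \tiles$ represented by the proper hyperoctant $\interI$ witnesses that $\tilingsys$ covers the octant, so Observation~\ref{obs:octant-tiling-proper-white} gives $\tilesmap(i,i) = \intextwang{white}{white}{white}{white}$ and that $\tilesmap(i{+}1,i)$ is left- and up-border for every $i$. The only two obstructions to $\interI$ being a model of $\tboxT_{\blacktriangle}^{\tilingsys}$ are that $\role{aux}$ is interpreted nowhere (needed for $\mathrm{\ref{grid:Start}}$) and that the diagonal nodes $(n,n,0)$ have no inner $\roles$-successor (needed for $\mathrm{\ref{grid:Suc}}$); both are cured by the completion.

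For the construction I would first extend $\tilesmap$ to $\hat{\tilesmap} \colon \N \times \N \to \tiles$ by keeping $\hat{\tilesmap}(n,m) = \tilesmap(n,m)$ on $\octant$ and putting $\hat{\tilesmap}(n,m) = \intextwang{white}{white}{white}{white}$ whenever $m > n$. Since the all-white tile is by assumption the unique down-border and the unique right-border tile of $\tiles$, and using Observation~\ref{obs:octant-tiling-proper-white} on the boundary, I would check that $\hat{\tilesmap}$ is $\rmV$- and $\rmH$-compatible everywhere. I then take $\interJ$ to be the verbatim analogue of Definition~\ref{def:hyperoctant} with $\octant$ replaced by $\N \times \N$: infinite columns, a fresh outgoing $\roles^{\rmN}$-path of outer elements attached to each inner node, with $\concept{Cur}$ placed at the distance coding the node's own tile and $\concept{Prev}$ at the distance coding the tile of its horizontal predecessor (all-white being used for the leftmost column), keeping the incoming $\roler^{\rmN}$-path attached to $(0,0,0)$, and finally I set $\role{aux}^{\interJ} \deff \DeltaJ \times \{ ({-}\rmN,0,0) \}$. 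By design the restriction of $\interJ$ to the octant region together with the incoming path coincides with $\interI$, so $\interJ$ extends $\interI$.

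Checking $\interJ \models \tboxT_{\blacktriangle}^{\tilingsys}$ is then routine. Property $\mathrm{\ref{grid:Start}}$ holds because every $\role{aux}$-edge lands on $({-}\rmN,0,0)$, from which a $\roler^{\rmN}$-path of outer elements reaches the inner all-white node $(0,0,0)$, whose inner $\roler$-successor $(1,0,0)$ is non-white by $\mathrm{\ref{octtiles:Init}}$. Properties $\mathrm{\ref{grid:Compl}}$ and $\mathrm{\ref{grid:Til}}$ are immediate from the labelling; $\mathrm{\ref{grid:Suc}}$ holds because the bottom row and all columns are now infinite; $\mathrm{\ref{grid:Path}}$, $\mathrm{\ref{grid:Cur}}$ and $\mathrm{\ref{grid:Prev}}$ follow from the outer-path distance-coding; and $\mathrm{\ref{grid:Verti}}$, $\mathrm{\ref{grid:Hori}}$ follow from the compatibility of $\hat{\tilesmap}$ verified above.

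The delicate part, and the step I would treat as the main obstacle, is to argue that the enlarged $\interJ$ still fails $\queryq_{\blacktriangle}^{\tilingsys}$, since the all-white region and the new outer paths introduce fresh $\langvpaeq{\roler}{\roles}$-reachabilities. Here I would observe that $\interJ$ is proper in the sense of Definition~\ref{def:hyperoctant} essentially by construction: $\concept{Prev}$ was defined to record exactly the tile of the horizontal predecessor, so whenever $(n,m,k) \in \concept{Cur}^{\interJ}$ we automatically have $(n{+}1,m,k) \in \concept{Prev}^{\interJ}$, the $\rmH$-compatibility of $\hat{\tilesmap}$ being needed only to see that this $\concept{Prev}$-labelling meets the hyperoctant conditions. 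Granting properness, the ``only if'' direction of Lemma~\ref{lemma:encoding-octant-uno} transfers verbatim: its proof forces the match variables onto grid coordinates using only the $\roler$- and $\roles$-edges, the $\concept{In}$, $\concept{Cur}$, $\concept{Prev}$ labels and the semantics of $\langvpaeq{\roler}{\roles}$, none of which change when $\octant$ is replaced by $\N \times \N$, and then derives a properness violation --- impossible here. Consequently no match exists, $\interJ \not\models \queryq_{\blacktriangle}^{\tilingsys}$, and $\interJ$ is the required model.
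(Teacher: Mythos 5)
Your proposal is correct and takes essentially the same route as the paper's own proof: both extend the proper hyperoctant to the full quarter-plane grid whose above-diagonal positions carry the all-white data (the paper phrases this as copying the concept labels of the column above $(0,0,0)$, which encodes exactly your $\hat{\tilesmap}$-based labelling), interpret $\role{aux}$ as $\DeltaJ \times \{({-}\rmN,0,0)\}$, and check the axioms of $\tboxT_{\blacktriangle}^{\tilingsys}$ directly. The key final step is also identical: observe that the enlarged interpretation is still proper (in the sense extended to all of $\N \times \N$) and then invoke the ``only if'' direction of Lemma~\ref{lemma:encoding-octant-uno}, whose coordinate computations go through unchanged on the grid, to conclude that $\queryq_{\blacktriangle}^{\tilingsys}$ has no match.
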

\begin{proof}
  Let $\interI$ be a proper $\tiles$-hyperoctant. 
  Consider an interpretation $\interJ$ with the domain $\DeltaJ \deff \DeltaI \cup ((\N \times \N) \setminus \octant) \times \ZZ_{\rmN{+}1}$ defined as follows:
  \begin{itemize}\itemsep0em
  \item $\interJ$ restricted to $\DeltaI$ is isomorphic to $\interI$,
  \item $\roles^{\interJ} \deff \roles^{\interI} \cup \{ ((n,m,k),(n,m,k{+}1)), ((n,m,0),(n,m{+}1,0)) \mid (n,m) \in (\N \times \N) \setminus \octant, k \in \ZZ_\rmN \}$,
  \item $\roler^{\interJ} \deff \roler^{\interI}$,
  \item $\role{aux}^{\interJ} \deff \DeltaJ \times \{ (-\rmN,0,0)\}$, and
  \item For each concept name $\conceptA$, each tuple $(n,m,k)$ with $(n,m) \not\in \octant$ belongs $\conceptA^{\interJ}$ iff~$(0,0,k) \in \conceptA^{\interI}$.
  \end{itemize}
  Intuitively, we enlarged $\interI$ to a grid and filled fresh places with copies of the diagonal of~$\interI$ (this is needed to fulfil the second conjunct of the axiomatisation of $\mathrm{\ref{grid:Suc}}$).
  By construction of $\interJ$ (and the fact that $\interI$ is a $\tiles$-hyperoctant) it follows $\interJ$ is a model of~$\tboxT_{\blacktriangle}^{\tilingsys}$.
  Call $\interJ$ \emph{proper} if for all $(n,m,k) \in \concept{Cur}^{\interJ}$ we have $(n{+}1,m,k) \in \concept{Prev}^{\interJ}$.
  Note that as~$\interI$ is proper, the above condition holds for all triples $(n,m,k)$ with $(n,m) \in \octant$.
  For other tuples, we simply use the fact that the interpretation of concepts for $(n,m,k)$ with $(n,m) \not\in \octant$ is inherited from the elements of the form $(0,0,k)$.
  Thus $\interJ$ is indeed proper. Now, without any further changes, the proof of the ``only if'' direction of Lemma~\ref{lemma:encoding-octant-uno} establishes~$\interJ \not\models \queryq_{\blacktriangle}^{\tilingsys}$.
\end{proof}

As a handy lemma used in the forthcoming proof, we need to establish that:
\begin{lem}\label{lemma:tree-models-of-Tblacktriangle-contain-hyperoctants}
Every single-role tree-like model of $\tboxT_{\blacktriangle}^{\tilingsys}$ contains a substructure that is isomorphic to a $\tiles$-hyperoctant.
\end{lem}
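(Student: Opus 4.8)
The plan is to embed a hyperoctant into the given model $\interI$ by following, step by step, the role edges whose existence is guaranteed by the axioms of $\tboxT_{\blacktriangle}^{\tilingsys}$, and then to argue that the resulting image, equipped with the induced structure, is exactly a $\tiles$-hyperoctant. The key structural observation is that a $\tiles$-hyperoctant (Definition~\ref{def:hyperoctant}) is itself a \emph{tree}: every element other than $({-}\rmN,0,0)$ has a unique predecessor---$({-}i,0,0)$ comes from $({-}i{-}1,0,0)$, each bottom element $(n{+}1,0,0)$ from $(n,0,0)$ via $\roler$, each $(n,m{+}1,0)$ from $(n,m,0)$ via $\roles$, and each third-coordinate element $(n,m,k{+}1)$ from $(n,m,k)$ via $\roles$. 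Hence it suffices to realise this tree inside the tree-like model by choosing, at each node, children matching the prescribed roles and labels; because $\interI$ is single-role and tree-like, such a realisation is automatically an isomorphism onto the induced substructure.

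First I would anchor the construction using $\mathrm{\ref{grid:Start}}$: picking any $\role{aux}$-successor $a$ of the root yields an $\roler$-path of $\rmN$ outer elements $q_1 \to \cdots \to q_\rmN$ (the images of $({-}\rmN,0,0),\dots,({-}1,0,0)$) ending in an inner element $d_{0,0}$ carrying $\intextwang{white}{white}{white}{white}$ that has an inner $\roler$-successor $d_{1,0}$ carrying a non-white tile. I would then grow the bottom row $d_{0,0}, d_{1,0}, d_{2,0}, \dots$ by repeatedly invoking the first conjunct of $\mathrm{\ref{grid:Suc}}$, which guarantees that every inner $\roler$-successor again has an inner $\roler$-successor. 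For each bottom element $d_{n,0}$ I would build its column $d_{n,1},\dots,d_{n,n}$ of inner elements using the second conjunct of $\mathrm{\ref{grid:Suc}}$ (every inner element has an inner $\roles$-successor), stopping at height $n$. Finally, from every inner element $d_{n,m}$ I would attach, via $\mathrm{\ref{grid:Path}}$, an $\roles$-path of $\rmN$ \emph{outer} elements $d_{n,m,1},\dots,d_{n,m,\rmN}$. At each step several candidate successors may exist, and I fix one arbitrarily; this defines a map $\phi$ from the hyperoctant coordinates onto $S \deff \{q_i\} \cup \{d_{n,m}\} \cup \{d_{n,m,k}\}$.

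The verification splits into roles and concepts. Injectivity of $\phi$ follows because both the hyperoctant and $\interI$ are trees and at each branching the chosen children are pairwise distinct (an inner and an outer $\roles$-successor cannot coincide since $\concept{In}$ and $\overline{\concept{In}}$ are disjoint, and single-role forbids a node from being simultaneously an $\roler$- and an $\roles$-successor of the same element). That $S$ carries \emph{no spurious} edges is exactly where single-role tree-likeness is essential: in a tree each element of $S$ has a unique predecessor, which by construction is its hyperoctant-parent, so the induced role structure on $S$ is precisely that of the hyperoctant. For the concepts, $\concept{In}$ matches by construction, $\overline{\concept{In}}$ and $\overline{\concept{Prev}}$ match by $\mathrm{\ref{grid:Compl}}$, and the tiles match by $\mathrm{\ref{grid:Til}}$ (each inner element carries exactly one tile, each outer element none). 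The labels $\concept{Cur}$ and $\concept{Prev}$ along $d_{n,m,1},\dots,d_{n,m,\rmN}$ are pinned down by $\mathrm{\ref{grid:Cur}}$ and $\mathrm{\ref{grid:Prev}}$: since $d_{n,m,k}$ is $\roles^{k}$-outer-reachable from $d_{n,m}$, precisely the element whose distance equals the index of $d_{n,m}$'s tile is labelled $\concept{Cur}$, and exactly one $\concept{Prev}$-element appears, sitting (by $\mathrm{\ref{grid:Hori}}$) at an $\rmH$-compatible distance, matching the last clause of Definition~\ref{def:hyperoctant}. Moreover, the map $\tilesmap$ represented by the octant part satisfies $\mathrm{\ref{octtiles:Init}}$ (by $\mathrm{\ref{grid:Start}}$) and $\mathrm{\ref{octtiles:Verti}}$ (by $\mathrm{\ref{grid:Verti}}$, as $d_{n,m{+}1}$ is an $\roles$-successor of $d_{n,m}$), so $\interI_{\octant}$ is genuinely a $\tiles$-octant.

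The main obstacle is the accounting for $\concept{Cur}$, whose interpretation a hyperoctant fixes \emph{exactly}---in particular, no inner or negative element may lie in $\concept{Cur}$---whereas $\mathrm{\ref{grid:Cur}}$ constrains $\concept{Cur}$ only on outer endpoints that are $\roles^{\ell}$-outer-reachable, leaving its value on inner elements formally free. Since $\concept{Cur}$ occurs in the antecedent of no GCI of $\tboxT_{\blacktriangle}^{\tilingsys}$ and is read by $\queryq_{\blacktriangle}^{\tilingsys}$ only at a variable that every match forces onto an outer element (reached through $\overline{\concept{In}}$), I would assume without loss of generality that $\interI$ interprets $\concept{Cur}$ as empty on all inner and negative elements; this preserves $\interI \models \tboxT_{\blacktriangle}^{\tilingsys}$, does not affect $\queryq_{\blacktriangle}^{\tilingsys}$, and makes the induced substructure on $S$ coincide with a hyperoctant on the nose. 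The remaining work is the routine but lengthy bookkeeping that the chosen successors realise every required edge and no other, which the tree shape of both structures keeps firmly under control.
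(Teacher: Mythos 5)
Your construction is correct and is essentially the paper's own proof: anchor an initial segment using $\mathrm{\ref{grid:Start}}$, grow the bottom row and the columns using the two conjuncts of $\mathrm{\ref{grid:Suc}}$, attach the outer $\roles^{\rmN}$-paths using $\mathrm{\ref{grid:Path}}$, and let single-role tree-likeness give injectivity and exclude spurious edges, with $\mathrm{\ref{grid:Til}}$, $\mathrm{\ref{grid:Compl}}$, $\mathrm{\ref{grid:Cur}}$, $\mathrm{\ref{grid:Prev}}$, $\mathrm{\ref{grid:Verti}}$, $\mathrm{\ref{grid:Hori}}$ pinning down the labels. The one point where you genuinely depart from the paper is your final paragraph, and it is a legitimate refinement rather than a detour: the paper's proof dismisses the remaining clauses of Definition~\ref{def:hyperoctant} as ``readily verified'', while you correctly notice that the exact equality demanded of $\concept{Cur}^{\interI}$ is not entailed by $\tboxT_{\blacktriangle}^{\tilingsys}$, because $\mathrm{\ref{grid:Cur}}$ constrains only $\roles^{\ell}$-outer-reachable elements and says nothing about $\concept{Cur}$ on inner elements or on the leading outer $\roler$-path; a model exercising that freedom has no induced substructure that is literally a hyperoctant, so the lemma as stated needs exactly the kind of repair you give. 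Your repair is sound: $\concept{Cur}$ appears in the TBox only inside the consequent of the $\mathrm{\ref{grid:Cur}}$ axiom, its positive occurrences constrain only outer elements, and it occurs only positively in $\queryq_{\blacktriangle}^{\tilingsys}$, so emptying it on inner and negative elements preserves modelhood and cannot create a query match. Be aware, though, that this is not a ``w.l.o.g.'' inside the stated lemma: the relabelling changes the model, so what you actually prove is that every single-role tree-like model of $\tboxT_{\blacktriangle}^{\tilingsys}$ can be relabelled (preserving the TBox and non-satisfaction of $\queryq_{\blacktriangle}^{\tilingsys}$) into one that contains a hyperoctant substructure. This weaker statement is exactly what Lemma~\ref{lemma:D-covers-iff-Tblacktriangle-does-not-entail} needs, so the overall reduction is intact, but you should state it as such rather than as the lemma's literal claim.
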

\begin{proof}
  The proof idea is simple but tedious. 
  We take single-role tree-like model $\interI$ of~$\tboxT_{\blacktriangle}^{\tilingsys}$, and select elements that will later constitute a hyperoctant branch-by-branch. 
  Let $\domelemd_{(-\rmN, 0, 0)}$ be any element of~$\interI$ which is an $\role{aux}$-successor of some element of $\interI$ (it exists as $\DeltaI \neq \emptyset$ and $\interI$ satisfies~$\mathrm{\ref{grid:Start}}$). 
  Thus there exists an $\roler^*$-path of elements $\domelemd_{(-\rmN, 0, 0)}, \domelemd_{(-\rmN{+}1, 0, 0)}, \ldots, \domelemd_{(0,0,0)},$ $\domelemd_{(1,0,0)}$ witnessing the satisfaction of $\mathrm{\ref{grid:Start}}$, where only $\domelemd_{(0,0,0)}, \domelemd_{(1,0,0)}$ belong to $\concept{In}^{\interJ}$ (by $\mathrm{\ref{grid:Start}}$ and $\mathrm{\ref{grid:Compl}}$), $\domelemd_{(0,0,0)}$ carries a tile $\intextwang{white}{white}{white}{white}$ and $\domelemd_{(1,0,0)}$ carries a tile that differs from $\intextwang{white}{white}{white}{white}$.  
  After employing a routine induction, we see that the first conjunct of~$\mathrm{\ref{grid:Suc}}$ yields the existence of an infinite $\roler^*$-path $\domelemd_{(1,0,0)}, \domelemd_{(2,0,0)}, \ldots$ of inner elements.
  By the second conjunct of $\mathrm{\ref{grid:Suc}}$ we know that every $\domelemd_{(i,0,0)}$ has an outgoing $\roles^{i}$-path $\domelemd_{(i,1,0)}, \ldots, \domelemd_{(i,i,0)}$ composed of inner elements. 
  Finally, by $\mathrm{\ref{grid:Path}}$, every element~$\domelemd_{(i,j,0)}$ has an outgoing $\roles^{\rmN}$-path $\domelemd_{(i,j,1)}$,$\ldots$, $\domelemd_{(i,j,\rmN)}$ composed of outer elements.
  Note that by tree-likeness of $\interI$ we have that all of selected elements $\domelemd_{(i,j,k)}$ are pairwise-different, and any pair of elements is connected by at most one role.
  Let $\interJ$ be the restriction of $\interI$ to the set of selected element, with the domain renamed with a map $\domelemd_{(i,j,k)} \mapsto (i,j,k)$.
  It follows from $\mathrm{\ref{grid:Til}}$ and~$\mathrm{\ref{grid:Compl}}$ that~$\interJ$ restricted to $\concept{In}^{\interJ}$ is a $\tiles$-octant. 
  Other properties of $\tiles$-hyperoctants can be readily verified with $\mathrm{\ref{grid:Cur}}$, $\mathrm{\ref{grid:Prev}}$, $\mathrm{\ref{grid:Verti}}$, $\mathrm{\ref{grid:Hori}}$.
  Thus $\interJ$ is a $\tiles$-hyperoctant, as desired.
\end{proof}

The next lemma summarises our reduction.

\begin{lem}\label{lemma:D-covers-iff-Tblacktriangle-does-not-entail}
$\tilingsys$ covers the octant if and only if $\tboxT_{\blacktriangle}^{\tilingsys} \not\models \queryq_{\blacktriangle}^{\tilingsys}$. 
\end{lem}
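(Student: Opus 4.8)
The plan is to assemble the final equivalence by chaining the lemmas already proved, treating the two implications separately.

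For the forward (``only if'') direction, I would start from the assumption that $\tilingsys$ covers the octant $\octant$. Lemma~\ref{lemma:hyperoctant-works} then supplies a \emph{proper} $\tiles$-hyperoctant $\interI$ representing a witnessing map $\tilesmap\colon\octant\to\tiles$. Feeding this $\interI$ into Lemma~\ref{lemma:from-hyperoctants-to-models} produces an interpretation $\interJ$ that is a model of $\tboxT_{\blacktriangle}^{\tilingsys}$ and satisfies $\interJ \not\models \queryq_{\blacktriangle}^{\tilingsys}$. Since $\interJ$ is a model of the TBox that refutes the query, the definition of query entailment immediately gives $\tboxT_{\blacktriangle}^{\tilingsys} \not\models \queryq_{\blacktriangle}^{\tilingsys}$. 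This direction is essentially bookkeeping.

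For the backward (``if'') direction, suppose $\tboxT_{\blacktriangle}^{\tilingsys} \not\models \queryq_{\blacktriangle}^{\tilingsys}$. The central tool is the tree-model property of Corollary~\ref{cor:tree-model-property-for-ALCVPL}: as $\queryq_{\blacktriangle}^{\tilingsys}$ is a (disjunction-free) $\VPL$-\CQ\ and hence preserved under homomorphisms, non-entailment yields a \emph{single-role tree-like} model $\interI \models \tboxT_{\blacktriangle}^{\tilingsys}$ with $\interI \not\models \queryq_{\blacktriangle}^{\tilingsys}$. Lemma~\ref{lemma:tree-models-of-Tblacktriangle-contain-hyperoctants} then exhibits, inside $\interI$, a substructure (a restriction of $\interI$) isomorphic to a $\tiles$-hyperoctant $\interJ$. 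I would next argue that $\interJ \not\models \queryq_{\blacktriangle}^{\tilingsys}$, apply Lemma~\ref{lemma:encoding-octant-uno} to conclude that $\interJ$ is \emph{proper}, and finally invoke Lemma~\ref{lemma:hyperoctant-works} to turn the proper hyperoctant $\interJ$ into a map $\tilesmap\colon\octant\to\tiles$ witnessing that $\tilingsys$ covers the octant, closing the loop.

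The one step requiring care rather than mere citation --- and thus the main obstacle --- is the transfer $\interI \not\models \queryq_{\blacktriangle}^{\tilingsys} \Rightarrow \interJ \not\models \queryq_{\blacktriangle}^{\tilingsys}$. Here I would observe that, being a restriction of $\interI$, the hyperoctant $\interJ$ embeds into $\interI$ via the inclusion map, and that this inclusion is a homomorphism: it preserves concept and role atoms and, crucially, the non-regular reachability atoms $\langvpaeq{\roler}{\roles}(\cdot,\cdot)$, since any $\langvpaeq{\roler}{\roles}$-path present in $\interJ$ survives in the larger interpretation $\interI$. Consequently any match of $\queryq_{\blacktriangle}^{\tilingsys}$ in $\interJ$ would compose with this homomorphism to give a match in $\interI$; as $\interI$ admits no such match, neither does $\interJ$. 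This monotonicity of $\langvpaeq{\roler}{\roles}$-reachability under passage to superstructures is immediate but must be stated explicitly, and it is the same homomorphism-preservation property of conjunctive queries that legitimised the use of Corollary~\ref{cor:tree-model-property-for-ALCVPL} at the outset.
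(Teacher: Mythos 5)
Your proposal is correct and follows essentially the same route as the paper's proof: the forward direction chains Lemma~\ref{lemma:hyperoctant-works} and Lemma~\ref{lemma:from-hyperoctants-to-models}, and the backward direction uses Corollary~\ref{cor:tree-model-property-for-ALCVPL}, Lemma~\ref{lemma:tree-models-of-Tblacktriangle-contain-hyperoctants}, Lemma~\ref{lemma:encoding-octant-uno}, and Lemma~\ref{lemma:hyperoctant-works} in exactly the paper's order. The only difference is that you spell out the transfer $\interI \not\models \queryq_{\blacktriangle}^{\tilingsys} \Rightarrow \interJ \not\models \queryq_{\blacktriangle}^{\tilingsys}$ via the inclusion homomorphism and preservation of positive existential queries, a step the paper asserts without comment --- a worthwhile clarification, not a deviation.
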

\begin{proof}
  Suppose that $\tilingsys$ covers the octant. 
  Then, by Lemma~\ref{lemma:hyperoctant-works}, there exists a proper $\tiles$-hyperoctant $\interI$.
  Hence, from Lemma~\ref{lemma:from-hyperoctants-to-models} we conclude existence of a model $\interJ$ of $\tboxT_{\blacktriangle}^{\tilingsys}$ such that $\interJ \not\models \queryq_{\blacktriangle}^{\tilingsys}$.
  Thus $\tboxT_{\blacktriangle}^{\tilingsys} \not\models \queryq_{\blacktriangle}^{\tilingsys}$.
  For the other direction, suppose that there exists an interpretation~$\interI$ such that $\interI \models \tboxT_{\blacktriangle}^{\tilingsys}$ but $\interI \not\models \queryq_{\blacktriangle}^{\tilingsys}$. 
  By Corollary~\ref{cor:tree-model-property-for-ALCVPL} we can assume that $\interI$ is single-role tree-like.
  From Lemma~\ref{lemma:tree-models-of-Tblacktriangle-contain-hyperoctants} we know that $\interI$ contains a substructure~$\interJ$ that is a $\tiles$-hyperoctant.
  As~$\interI \not\models \queryq_{\blacktriangle}^{\tilingsys}$, we know that $\interJ \not\models \queryq_{\blacktriangle}^{\tilingsys}$.
  Thus $\interJ$ is a proper $\tiles$-hyperoctant.
  Hence, by Lemma~\ref{lemma:hyperoctant-works} we conclude that $\tilingsys$ covers the octant.
\end{proof}

\noindent As the octant tiling problem is undecidable, we conclude the main theorem of this section.
\begin{thm}\label{thm:undecidability-querying-with-VPLs}
The entailment problem of $\{ \roler, \roles, \roler^*, \roles^*, \langvpaeq{\roler}{\roles} \}$-conjunctive-queries over $\ALC$-TBoxes is undecidable.
\end{thm}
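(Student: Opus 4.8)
The plan is to observe that all the technical work has already been carried out in Lemmas~\ref{lemma:white-bordered-octant-tiling-is-undec}--\ref{lemma:D-covers-iff-Tblacktriangle-does-not-entail}, so that the theorem follows by assembling them into an effective reduction. Concretely, given a domino tiling system $\tilingsys$ of the restricted shape required for the White-bordered Octant Tiling Problem, I would compute the $\ALC$-TBox $\tboxT_{\blacktriangle}^{\tilingsys}$ of Lemma~\ref{lemma:Tblacktriangle-exists} together with the conjunctive query $\queryq_{\blacktriangle}^{\tilingsys}$; both objects are plainly computable from $\tilingsys$ in finite time. By Lemma~\ref{lemma:D-covers-iff-Tblacktriangle-does-not-entail}, the system $\tilingsys$ covers the octant if and only if $\tboxT_{\blacktriangle}^{\tilingsys} \not\models \queryq_{\blacktriangle}^{\tilingsys}$, so the map $\tilingsys \mapsto (\tboxT_{\blacktriangle}^{\tilingsys}, \queryq_{\blacktriangle}^{\tilingsys})$ is a many-one reduction from the White-bordered Octant Tiling Problem to the \emph{complement} of the query-entailment problem. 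Since the former is undecidable by Lemma~\ref{lemma:white-bordered-octant-tiling-is-undec} and undecidability is preserved under complementation, the query-entailment problem is undecidable as well.

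Before appealing to the black-box lemmas, I would double-check that the produced instance genuinely lies in the claimed fragment. The TBox is an honest $\ALC$-TBox, since its only navigational ingredients are the bounded-path abbreviations $(\forall{\roler}.\conceptC)^{n}.\conceptD$ and $(\exists{\roler}.\conceptC)^{n}.\conceptD$, which unfold into plain $\ALC$; and every path atom occurring in $\queryq_{\blacktriangle}^{\tilingsys}$ is drawn from $\{ \roler, \roles, \roler^*, \roles^*, \langvpaeq{\roler}{\roles} \}$ together with concept atoms, exactly matching the restriction in the statement. This bookkeeping is where I would be most careful, since the whole point of the theorem is the sharp contrast with the relational-database setting, and the result is only meaningful if the sole non-regular ingredient is the single language $\langvpaeq{\roler}{\roles}$.

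The genuinely hard part is not this final assembly but the correctness of Lemma~\ref{lemma:D-covers-iff-Tblacktriangle-does-not-entail}, which I would reconstruct in two directions. For the ``covers $\Rightarrow$ non-entailment'' direction I would take the proper hyperoctant guaranteed by Lemma~\ref{lemma:hyperoctant-works}, extend it to a full model of the TBox via Lemma~\ref{lemma:from-hyperoctants-to-models}, and invoke Lemma~\ref{lemma:encoding-octant-uno} to see that properness forces the query to fail. For the converse I would crucially use the tree-model property (Corollary~\ref{cor:tree-model-property-for-ALCVPL}): any countermodel may be assumed single-role and tree-like, and Lemma~\ref{lemma:tree-models-of-Tblacktriangle-contain-hyperoctants} then carves out a $\tiles$-hyperoctant inside it which must be proper by Lemma~\ref{lemma:encoding-octant-uno}, whence Lemma~\ref{lemma:hyperoctant-works} yields a covering. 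The conceptual subtlety underlying all of this, and the step I would expect to demand the most care, is the design of $\queryq_{\blacktriangle}^{\tilingsys}$: it uses the non-regular atoms $\langvpaeq{\roler}{\roles}$ purely as a distance-measuring device so that, with the free variables $\varu_1, \varx_1$ floating, a single positive query simultaneously guesses the offending coordinate $(n,m)$ and the tile index $k$ witnessing a failure of properness, thereby simulating a disjunction that $\ALC$ and positive existential queries cannot otherwise express.
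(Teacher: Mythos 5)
Your proposal is correct and follows essentially the same route as the paper: the theorem is obtained exactly by composing the reduction $\tilingsys \mapsto (\tboxT_{\blacktriangle}^{\tilingsys}, \queryq_{\blacktriangle}^{\tilingsys})$ with Lemma~\ref{lemma:D-covers-iff-Tblacktriangle-does-not-entail} and the undecidability of the White-bordered Octant Tiling Problem (Lemma~\ref{lemma:white-bordered-octant-tiling-is-undec}), and your reconstruction of the two directions of Lemma~\ref{lemma:D-covers-iff-Tblacktriangle-does-not-entail} (via Lemmas~\ref{lemma:hyperoctant-works}, \ref{lemma:from-hyperoctants-to-models}, \ref{lemma:tree-models-of-Tblacktriangle-contain-hyperoctants}, \ref{lemma:encoding-octant-uno} and Corollary~\ref{cor:tree-model-property-for-ALCVPL}) matches the paper's argument. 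Your added bookkeeping that the TBox is plain $\ALC$ and that all query atoms lie in $\{ \roler, \roles, \roler^*, \roles^*, \langvpaeq{\roler}{\roles} \}$ is a sound and worthwhile sanity check, not a deviation.
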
 

Interestingly enough\footnote{I would like to thank Anni-Yasmin Turhan for asking this question.}, our proof technique can be adjusted (with little effort) to derive related results for query languages involving inverses. 
Before moving to the undecidability result, let us define the class of \emph{Visibly-Pushdown Path Queries}~\cite[p.~330]{LangeL15} ({\VPQ}s) as the class of single-atom $\VPL$-{\CQ}s (\ie $\VPL$-{\CQ}s without conjunction). 
We stress that the entailment problem of such queries is decidable.
The main idea is that the non-satisfaction of a {\VPQ} of the form $\languageL(\varx,\vary)$ can be expressed with an $\ALCvpl$-GCI $\top \dlsubseteq \neg \exists{\languageL}.\top$. Indeed:

\begin{cor}\label{cor:VPQ-entailment-is-decidable}
  The entailment problem of {\VPQ}s over $\ALCvpl$-TBoxes is $\TwoExpTime$-complete (assuming that the {$\VPL$}s from the query and the TBox are over the same alphabet).
\end{cor}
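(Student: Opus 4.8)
The plan is to establish both bounds by reducing \VPQ{} entailment to $\ALCvpl$-TBox satisfiability and back, exploiting the fact that a single language atom carries no genuine conjunctive structure. First I would record the key semantic observation: for any $\VPL$ $\languageL$ and any interpretation $\interI$, one has $\interI \models \languageL(\varx,\vary)$ if and only if $(\exists{\languageL}.\top)^{\interI} \neq \emptyset$. The reason is that a match maps $\varx,\vary$ to a pair $(\domelemd,\domeleme)$ with $\domelemd$ $\languageL$-reaching $\domeleme$, and since $\top^{\interI}=\DeltaI$ is always non-empty, such a pair exists exactly when some element $\languageL$-reaches a member of $\top$; the degenerate case $\varepsilon \in \languageL$ is consistent with this, as then every element already witnesses the query. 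Hence $\interI \not\models \languageL(\varx,\vary)$ if and only if $\interI \models \top \dlsubseteq \neg\exists{\languageL}.\top$.

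For the \textbf{upper bound}, this observation gives $\tboxT \models \languageL(\varx,\vary)$ if and only if the $\ALCvpl$-TBox $\tboxT \cup \{ \top \dlsubseteq \neg\exists{\languageL}.\top \}$ is unsatisfiable. The augmented TBox is of polynomial size and lives over the same pushdown alphabet, so I would appeal to the $\TwoExpTime$-completeness of $\ALCvpl$-TBox satisfiability (Löding et al.~\cite{LodingLS07}, together with the TBox internalization noted after Corollary~\ref{cor:tree-model-property-for-ALCVPL}); since $\TwoExpTime$ is closed under complement, entailment lands in $\TwoExpTime$.

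For the \textbf{lower bound}, I would reduce $\ALCvpl$-TBox satisfiability to \VPQ{} non-entailment. Given $\tboxT$, pick a role name $\rolet$ that does not occur in $\tboxT$ and use the single-atom query $\rolet(\varx,\vary)$ (i.e.\ the language atom for $\{\rolet\}$, which is regular, hence visibly-pushdown over the common alphabet). The claim to verify is that $\tboxT$ is satisfiable if and only if $\tboxT \not\models \rolet(\varx,\vary)$: from any model of $\tboxT$ one obtains a countermodel by re-interpreting $\rolet$ as the empty relation, which preserves satisfaction of $\tboxT$ by freshness of $\rolet$ and removes all $\rolet$-edges, while conversely any countermodel is already a model of $\tboxT$. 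As $\ALCvpl$-TBox satisfiability is $\TwoExpTime$-hard and $\TwoExpTime=\coTwoExpTime$, \VPQ{} entailment is $\TwoExpTime$-hard, matching the upper bound.

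The reduction carries essentially no algorithmic difficulty; the whole content is the collapse of a single-atom query to concept emptiness. The only places I expect to spend a sentence of care are the edge case $\varepsilon \in \languageL$ in the semantic equivalence, and the justification that it is \emph{TBox} (not merely concept) satisfiability of $\ALCvpl$ that is $\TwoExpTime$-complete --- which is exactly what the internalization argument behind Corollary~\ref{cor:tree-model-property-for-ALCVPL} supplies.
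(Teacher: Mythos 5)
Your upper bound is exactly the paper's argument: $\tboxT \not\models \languageL(\varx,\vary)$ iff $\tboxT \cup \{\topconcept \dlsubseteq \neg\exists{\languageL}.\topconcept\}$ is satisfiable, followed by internalisation of the TBox into a concept and the $\TwoExpTime$ procedure of L\"oding et al.; your handling of the $\varepsilon$-case and of the common-alphabet assumption is fine.

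The lower bound is where you diverge from the paper, and where the one genuine soft spot sits. The paper disposes of it in a single sentence (``inherited from the concept satisfiability problem''), whereas you explicitly reduce \emph{TBox} satisfiability to non-entailment via a fresh role $\rolet$ and the query $\rolet(\varx,\vary)$; that reduction is correct (freshness of $\rolet$ means no language of $\tboxT$ can traverse an $\rolet$-edge, so re-interpreting $\rolet$ as $\emptyset$ preserves modelhood). But you then claim that $\TwoExpTime$-\emph{hardness} of $\ALCvpl$-TBox satisfiability is ``exactly what the internalization argument supplies''. It is not: internalisation maps a TBox to an equisatisfiable concept, i.e.\ it shows TBox satisfiability reduces \emph{to} concept satisfiability, which is the membership direction. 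Hardness of TBox satisfiability needs the opposite reduction, from concept satisfiability to TBox satisfiability, and that is not automatic, because a GCI cannot assert non-emptiness of a concept (the TBox $\{\topconcept \dlsubseteq \conceptC\}$ tests global satisfiability, a different problem). The standard fix closes your gap: for $\conceptC$ over roles $\roler_1,\ldots,\roler_k$, the TBox $\{\topconcept \dlsubseteq \exists{(\roler_0+\roler_1+\cdots+\roler_k)^*}.\conceptC\}$ with $\roler_0$ fresh is satisfiable iff $\conceptC$ is: take any model of $\conceptC$ and add $\roler_0$-edges from every element to a fixed element of $\conceptC^{\interI}$ (no language in $\conceptC$ mentions $\roler_0$, so no concept extension changes), and conversely any model of this TBox has a non-empty domain whose elements reach an element of $\conceptC^{\interI}$. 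With this one extra step your chain --- concept satisfiability, to TBox satisfiability, to non-entailment, then complement-closure of $\TwoExpTime$ --- is complete, and is in fact a fully detailed version of what the paper leaves implicit.
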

\begin{proof}
  The lower bound is inherited from the concept satisfiability problem for $\ALCvpl$~\cite[Thm.~19]{LodingLS07}.
  Let $\tboxT$ be an $\ALCvpl$-TBox and $\queryq$ be a {\VPQ} of the form $\languageL(\varx,\vary)$.
  The crucial observation is that by the semantics of queries we have that $\tboxT \not\models \queryq$ if and only if $\tboxT' \deff \tboxT \cup \{ \top \dlsubseteq \neg \exists{\languageL}.\top \}$ has a model. 
  Hence, by a well-known internalisation of TBoxes as concepts in the presence of regular expressions~\cite[p.~186]{2003handbook}, we can compute (in time polynomial w.r.t. $|\tboxT|$) an $\ALCvpl$-concept $\conceptC$ that is satisfiable if and only if $\tboxT'$ is.
  Now it suffices to check the satisfiability of $\conceptC$. 
  This in turn can be done in doubly-exponential time by the result of L\"oding et al.~\cite[Thm.~18]{LodingLS07}, finishing the proof. 
\end{proof}

We next extend the class of {\VPQ}s with an inverse operator, obtaining the class of \emph{Two-Way Visibly Pushdown Path Queries} ({\TwoVPQ}s). More precisely, such queries are {\VPQ}s that allow for letters of the form $\roler^-$ for all role names $\roler \in \Rlang$ in the underlying visibly-pushdown alphabet. When evaluating {\TwoVPQ}s over interpretations $\interI$, the role $(\roler^-)^{\interI}$ is interpreted as the set-theoretic inverse of the role $\roler^{\interI}$.
In what follows we sketch the proof of the fact that the entailment of {\TwoVPQ}s over $\ALC$-TBoxes is undecidable.
The key ingredients of our reduction are the previously-defined $\ALC$-TBox $\tboxT_{\blacktriangle}^{\tilingsys}$ and a fresh {\TwoVPQ} $\queryq_{{\TwoVPQ}}^{\tilingsys} \deff \languageL_{\downarrow\rightarrow\uparrow}(\varx,\vary)$, where
\[
  \languageL_{\downarrow\rightarrow\uparrow}\!\deff\!\Big\{ \concept{Cur}?\ \overline{\concept{In}}? \left( \roles^-\; \overline{\concept{In}}?\right)^k \ \roles \; (\concept{In}?\ \roles)^m \;
  (\concept{In}? \; \roler \; \concept{In}?) \; (\roles \ \concept{In}?)^m \roles 
  \left(\overline{\concept{In}}? \; \roles \right)^k  \overline{\concept{In}}?\ \overline{\concept{Prev}}? \!\mid\! k, m \in \N \Big\}.
\]

It is an easy  exercise to construct a pushdown automaton for the language $\languageL_{\downarrow\rightarrow\uparrow}$. Such an automaton becomes visibly-pushdown under the requirement that $\roles$ and~$\roles^-$ are, respectively, return and call symbols, and $\roler$ is an internal symbol.
The forthcoming Lemma~\ref{lemma:encoding-octant-revisited} relates the queries $\queryq_{\blacktriangle}^{\tilingsys}$ and $\queryq_{{\TwoVPQ}}^{\tilingsys}$.
As it can be established analogously to the other lemmas of this section, we only sketch the proof and leave some minor details to the reader.

\begin{lem}\label{lemma:encoding-octant-revisited}
  Let $\tilingsys \deff (\tilesCol, \tiles, \tileswhite)$ be a domino tilling system and let $\interI$ be a $\tiles$-hyperoctant.
  We have that $\interI \models \queryq_{\blacktriangle}^{\tilingsys}$ if and only if $\interI \models \queryq_{{\TwoVPQ}}^{\tilingsys}$.
  Moreover, by applying the construction from Lemma~\ref{lemma:from-hyperoctants-to-models}, every proper $\tiles$-hyperoctant can be extended to a model of $\tboxT_{\blacktriangle}^{\tilingsys}$ that violates~$\queryq_{{\TwoVPQ}}^{\tilingsys}$.
\end{lem}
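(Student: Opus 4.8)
The plan is to route both statements through the notion of \emph{properness}, exactly as Lemma~\ref{lemma:encoding-octant-uno} does for $\queryq_{\blacktriangle}^{\tilingsys}$. Recall from that lemma that, for a $\tiles$-hyperoctant $\interI$, we have $\interI \models \queryq_{\blacktriangle}^{\tilingsys}$ if and only if $\interI$ is not proper, i.e.\ iff there exist $(n,m)\in\octant$ and $1\leq k\leq\rmN$ with $(n,m,k)\in\concept{Cur}^{\interI}$ but $(n{+}1,m,k)\in\overline{\concept{Prev}}^{\interI}$. Hence, to obtain the first claim it suffices to prove that $\interI \models \queryq_{{\TwoVPQ}}^{\tilingsys}$ likewise holds if and only if $\interI$ is not proper; the announced equivalence $\interI \models \queryq_{\blacktriangle}^{\tilingsys} \iff \interI \models \queryq_{{\TwoVPQ}}^{\tilingsys}$ is then immediate. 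The intuition is that the single atom $\languageL_{\downarrow\rightarrow\uparrow}(\varx,\vary)$ compresses the whole of $\queryq_{\blacktriangle}^{\tilingsys}$ into one path that leaves the $\concept{Cur}$-witness, descends to the bottom row, crosses a single $\roler$-edge, and climbs back up to the alleged $\overline{\concept{Prev}}$-witness, while the visibly-pushdown matching of the call letter $\roles^-$ against the return letter $\roles$ plays the role previously taken by the two $\langvpaeq{\roler}{\roles}$-atoms.

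For the direction ``not proper $\Rightarrow \interI \models \queryq_{{\TwoVPQ}}^{\tilingsys}$'', I would fix a violating pair and exhibit an explicit $\languageL_{\downarrow\rightarrow\uparrow}$-path from $\varx\mapsto(n,m,k)$ to $\vary\mapsto(n{+}1,m,k)$. The path first descends along $\roles^{-}$: back along the extra path of the cell $(n,m)$ (whose proper elements are outer, matching the $\overline{\concept{In}}?$ tests) down to the octant cell $(n,m,0)$, and then further down the octant column (inner, matching the $\concept{In}?$ tests) to the bottom element $(n,0,0)$. It then takes the unique $\roler$-edge to $(n{+}1,0,0)$ and ascends symmetrically along $\roles$ up the octant column and then along the extra path of $(n{+}1,m)$, terminating at $(n{+}1,m,k)$. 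The opening test $\concept{Cur}?$ and the closing test $\overline{\concept{Prev}}?$ are satisfied by the choice of witnesses, and the descent and ascent have equal length, so the traversal realises a word of $\languageL_{\downarrow\rightarrow\uparrow}$ with parameters $m$ (octant height) and $k$ (extra-path depth).

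The converse direction ``$\interI \models \queryq_{{\TwoVPQ}}^{\tilingsys} \Rightarrow \interI$ not proper'' is where the real work lies, and it is the step I expect to be the main obstacle. Given a match with a witnessing path $\pathrho$, I would use the rigidity of hyperoctants together with the membership tests to pin down its shape. The test $\concept{Cur}?$ forces the source to be some $(n,m,k)\in\concept{Cur}^{\interI}$; the $\roles^{-}$-guided prefix, read off against the $\overline{\concept{In}}?$ and $\concept{In}?$ tests, forces a descent first along the extra path and then down the octant column. Crucially, since $\roler$-edges occur \emph{only} in the bottom row, the single internal $\roler$-step can fire only at $(n,0,0)$, which forces the descent to reach the bottom and fixes the column shift to exactly one. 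The ascending suffix, again disambiguated by the tests (each octant cell has several $\roles$-successors, but only one is inner and only one continues an extra path), then climbs back, and the visibly-pushdown equality of the exponent blocks guarantees that the ascent returns to height $m$ and depth $k$, so that $\vary$ is pinned to $(n{+}1,m,k)$. The terminal test $\overline{\concept{Prev}}?$ finally yields $(n{+}1,m,k)\in\overline{\concept{Prev}}^{\interI}$, a properness violation. The delicate points here are the disambiguation of the multiple $\roles$-successors via the inner/outer labels and the off-by-one bookkeeping on the extra paths and at the bottom row.

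Finally, for the ``moreover'' part I would not invoke the biconditional directly, since the extension $\interJ$ produced by Lemma~\ref{lemma:from-hyperoctants-to-models} is a full grid rather than a hyperoctant. Instead, I would observe that the construction of that lemma already yields $\interJ\models\tboxT_{\blacktriangle}^{\tilingsys}$ together with the fact that $\interJ$ is \emph{proper}, and then re-run the shape analysis of the converse direction verbatim on $\interJ$: that argument used only properness, the inner/outer labelling, and the bottom-row-only occurrence of $\roler$, all of which $\interJ$ shares with a hyperoctant. Since $\interJ$ is proper, no match can survive the terminal $\overline{\concept{Prev}}?$ test, whence $\interJ\not\models\queryq_{{\TwoVPQ}}^{\tilingsys}$, as required.
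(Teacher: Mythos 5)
Your proposal is correct and follows essentially the same route as the paper's proof sketch: the same explicit down--across--up path construction in one direction, the same shape analysis pinning the endpoints to $(n,m,k)\in\concept{Cur}^{\interI}$ and $(n{+}1,m,k)\in\overline{\concept{Prev}}^{\interI}$ in the other, and the same use of the extension from Lemma~\ref{lemma:from-hyperoctants-to-models} for the ``moreover'' part. Your repackaging---pivoting both queries through properness via Lemma~\ref{lemma:encoding-octant-uno} instead of converting matches directly, and exploiting properness of the extended model $\interJ$ instead of first confining the match to its hyperoctant part---is only a cosmetic variation of the paper's argument.
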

\begin{proof}[Proof sketch.]
  We first sketch the proof of the equivalence $\interI \models \queryq_{\blacktriangle}^{\tilingsys}$ if and only if $\interI \models \queryq_{{\TwoVPQ}}^{\tilingsys}$.
  First, take any match $\matcheta$ witnessing $\interI \models \queryq_{\blacktriangle}^{\tilingsys}$.
  Then it can be readily verified that the mapping $\varx \mapsto \matcheta(\varv_1), \vary \mapsto \matcheta(\varv_2)$ is a match for $\interI$ and $\queryq_{{\TwoVPQ}}^{\tilingsys}$.
  For the opposite direction, let $\domelemd, \domeleme \in \DeltaI$ be domain elements for which the mapping $\varx \mapsto \domelemd, \vary \mapsto \domeleme$ is a match for $\interI$ and $\queryq_{{\TwoVPQ}}^{\tilingsys}$. 
  Hence,~$\domeleme$ is $\textstyle[\overline{\concept{In}}? \left( \roles^-\; \overline{\concept{In}}?\right)^k \ \roles \; (\concept{In}?\ \roles)^m \;
  (\concept{In}? \; \roler \; \concept{In}?) \; (\roles \ \concept{In}?)^m \roles 
  \left(\overline{\concept{In}}? \; \roles \right)^k  \overline{\concept{In}}?]$-reachable from $\domelemd$ for some integers $m, k \in \N$.
  By analysing the shape of $\tiles$-hyperoctants and the above expression, we infer the existence of an integer $n \in \N$ for which $\domelemd = (n,m,k)$, and $\domeleme = (n{+}1,m,k)$. Indeed, the equality of the second and third coordinates is ensured by the presence of $m$ and $k$ in the above path expression; the fact that the first coordinate of $\domeleme$ is the successor value of the first coordinate of $\domelemd$ is guaranteed by the subexpression $(\concept{In}? \; \roler \; \concept{In}?)$ in $\queryq_{{\TwoVPQ}}^{\tilingsys}$. 
  Now the match for $\queryq_{\blacktriangle}^{\tilingsys}$ and $\interI$ can be defined as:
  $\varv_1 \mapsto (n,m,k)$, $\varv_2 \mapsto (n{+}1,m,k)$, $\varw_1 \mapsto (n,m,1)$, $\varw_2 \mapsto (n{+}1,m,1)$, $\varz_1 \mapsto (n,m,0)$, $\varz_2 \mapsto (n{+}1,m,0)$, $\vary_1 \mapsto (n,0,0)$, $\vary_2 \mapsto (n{+}1,0,0)$, $\varx_1 \mapsto (n{-}m, 0,0)$, $\varx_2 \mapsto (n{-}m{+}1, 0,0)$, $\varu_1 \mapsto (n{-}m{-}k, 0,0)$, $\varu_2 \mapsto (n{-}m{-}k{+}1, 0,0)$.

  For the remaining part of the proof, take any proper $\tiles$-hyperoctant $\interI$. 
  We apply the construction from the proof of Lemma~\ref{lemma:from-hyperoctants-to-models}, and obtain a model $\interJ$ of $\tboxT_{\blacktriangle}^{\tilingsys}$ that violates $\queryq_{\blacktriangle}^{\tilingsys}$. 
  It~suffices to show that $\interJ \not\models \queryq_{{\TwoVPQ}}^{\tilingsys}$. Towards a contradiction, suppose the opposite, and take any match $\matcheta$ for $\interJ$ and $\queryq_{{\TwoVPQ}}^{\tilingsys}$. 
  Then, by the construction of $\interJ$ and the shape of $\queryq_{{\TwoVPQ}}^{\tilingsys}$, we conclude that the image of $\matcheta$ belongs to the $\tiles$-hyperoctant part $\interI$ of $\interJ$.
  By equivalence between the queries $\queryq_{{\TwoVPQ}}^{\tilingsys}$ and $\queryq_{\blacktriangle}^{\tilingsys}$ sketched above, we conclude $\interI \models \queryq_{\blacktriangle}^{\tilingsys}$. A contradiction.
\end{proof}

The second auxiliary lemma has a proof analogous to the proof Lemma~\ref{lemma:D-covers-iff-Tblacktriangle-does-not-entail}.

\begin{lem}\label{lemma:encoding-octant-revisited-II}
  Let $\tilingsys \deff (\tilesCol, \tiles, \tileswhite)$ be a domino tilling system and let $\interI$ be a $\tiles$-hyperoctant.
  We have that $\tboxT_{\blacktriangle}^{\tilingsys} \not\models \queryq_{{\TwoVPQ}}^{\tilingsys}$ if and only if $\tilingsys$ covers the octant.
\end{lem}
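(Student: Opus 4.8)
The plan is to replay the argument of Lemma~\ref{lemma:D-covers-iff-Tblacktriangle-does-not-entail} almost verbatim, using the query $\queryq_{{\TwoVPQ}}^{\tilingsys}$ in place of $\queryq_{\blacktriangle}^{\tilingsys}$ and invoking Lemma~\ref{lemma:encoding-octant-revisited} as the bridge between the two queries. As in that lemma, the hyperoctant $\interI$ named in the statement is not needed for the biconditional itself; what matters is the quantification over all models of the TBox.

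For the direction asserting that covering implies $\tboxT_{\blacktriangle}^{\tilingsys} \not\models \queryq_{{\TwoVPQ}}^{\tilingsys}$, I would assume that $\tilingsys$ covers the octant. By Lemma~\ref{lemma:hyperoctant-works} there is then a proper $\tiles$-hyperoctant $\interI$. The ``moreover'' part of Lemma~\ref{lemma:encoding-octant-revisited}, applied to $\interI$, directly yields a model $\interJ$ of $\tboxT_{\blacktriangle}^{\tilingsys}$ with $\interJ \not\models \queryq_{{\TwoVPQ}}^{\tilingsys}$, which witnesses the desired non-entailment.

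For the converse, suppose $\tboxT_{\blacktriangle}^{\tilingsys} \not\models \queryq_{{\TwoVPQ}}^{\tilingsys}$, so that some $\interI \models \tboxT_{\blacktriangle}^{\tilingsys}$ satisfies $\interI \not\models \queryq_{{\TwoVPQ}}^{\tilingsys}$. I would first apply Corollary~\ref{cor:tree-model-property-for-ALCVPL} to assume, without loss of generality, that $\interI$ is single-role tree-like; this is legitimate because $\queryq_{{\TwoVPQ}}^{\tilingsys}$ is a single path atom that is preserved under homomorphisms even though its alphabet contains $\roles^-$ (a homomorphism preserves forward $\roles$-edges together with their inverses, as well as the tests $\concept{In}?$, $\overline{\concept{In}}?$, $\concept{Cur}?$, $\overline{\concept{Prev}}?$). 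Next, Lemma~\ref{lemma:tree-models-of-Tblacktriangle-contain-hyperoctants} provides a substructure $\interJ$ of $\interI$ isomorphic to a $\tiles$-hyperoctant. Since any match of a path query that lives inside an induced substructure is also a match in the ambient interpretation, $\interI \not\models \queryq_{{\TwoVPQ}}^{\tilingsys}$ forces $\interJ \not\models \queryq_{{\TwoVPQ}}^{\tilingsys}$. The equivalence of Lemma~\ref{lemma:encoding-octant-revisited} on the hyperoctant $\interJ$ then converts this into $\interJ \not\models \queryq_{\blacktriangle}^{\tilingsys}$, whence $\interJ$ is proper by Lemma~\ref{lemma:encoding-octant-uno}, and finally Lemma~\ref{lemma:hyperoctant-works} shows that $\tilingsys$ covers the octant.

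The only genuinely non-routine point is the applicability of the tree-model property in the converse direction: one must verify that $\queryq_{{\TwoVPQ}}^{\tilingsys}$ is homomorphism-preserved despite its use of the inverse letter $\roles^-$, and that passing to the single-role tree-like model does not destroy the hyperoctant substructure guaranteed by Lemma~\ref{lemma:tree-models-of-Tblacktriangle-contain-hyperoctants}. Everything else is a faithful transcription of the proof of Lemma~\ref{lemma:D-covers-iff-Tblacktriangle-does-not-entail}, with Lemma~\ref{lemma:encoding-octant-revisited} supplying the passage between the two query formulations.
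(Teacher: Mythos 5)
Your proof is correct and follows essentially the same route as the paper's: both directions use exactly the same chain of lemmas (Lemma~\ref{lemma:hyperoctant-works} and the second part of Lemma~\ref{lemma:encoding-octant-revisited} for the forward direction; Corollary~\ref{cor:tree-model-property-for-ALCVPL}, Lemma~\ref{lemma:tree-models-of-Tblacktriangle-contain-hyperoctants}, Lemma~\ref{lemma:encoding-octant-revisited}, Lemma~\ref{lemma:encoding-octant-uno}, and Lemma~\ref{lemma:hyperoctant-works} for the converse). Your explicit check that $\queryq_{{\TwoVPQ}}^{\tilingsys}$ is preserved under homomorphisms despite the inverse letter $\roles^-$ (needed to legitimately invoke Corollary~\ref{cor:tree-model-property-for-ALCVPL}) is a point the paper leaves implicit, and your verification of it is sound.
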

\begin{proof}
  If $\tilingsys$ covers the octant, by Lemma~\ref{lemma:hyperoctant-works} there exists a proper $\tiles$-hyperoctant~$\interI$. 
  Applying the second part of Lemma~\ref{lemma:encoding-octant-revisited}, we extend $\interI$ to a model of $\tboxT_{\blacktriangle}^{\tilingsys}$ violating $\interJ \not\models \queryq_{{\TwoVPQ}}^{\tilingsys}$.
  Thus $\tboxT_{\blacktriangle}^{\tilingsys} \not\models \queryq_{{\TwoVPQ}}^{\tilingsys}$.
  For the reverse direction, take any model~$\interI$ of $\tboxT_{\blacktriangle}^{\tilingsys}$ such that $\interI \not\models \queryq_{{\TwoVPQ}}^{\tilingsys}$. 
  By Corollary~\ref{cor:tree-model-property-for-ALCVPL} we can assume that $\interI$ is single-role tree-like.
  From Lemma~\ref{lemma:tree-models-of-Tblacktriangle-contain-hyperoctants} we know that $\interI$ contains a substructure~$\interJ$ that is a $\tiles$-hyperoctant.
  As~$\interI \not\models \queryq_{{\TwoVPQ}}^{\tilingsys}$, we know that $\interJ \not\models \queryq_{{\TwoVPQ}}^{\tilingsys}$, and thus, by Lemma~\ref{lemma:encoding-octant-revisited}, we know that $\interJ \not\models \queryq_{\blacktriangle}^{\tilingsys}$.
  Hence, by Lemma~\ref{lemma:encoding-octant-uno}, $\interJ$ is a proper $\tiles$-hyperoctant.
  Invoking Lemma~\ref{lemma:hyperoctant-works} we conclude that $\tilingsys$ covers the octant.
\end{proof}

Linking Lemma~\ref{lemma:encoding-octant-revisited-II} with the undecidability of the tiling problem (Lemma~\ref{lemma:white-bordered-octant-tiling-is-undec}), we~get:

\begin{thm}
  The entailment problem of {\TwoVPQ}s over $\ALC$-TBoxes is undecidable.
\end{thm}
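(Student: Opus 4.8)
The plan is to give an effective many-one reduction from the White-bordered Octant Tiling Problem, which is undecidable by Lemma~\ref{lemma:white-bordered-octant-tiling-is-undec}. Given an input domino tiling system $\tilingsys \deff (\tilesCol, \tiles, \tileswhite)$ (subject to the side conditions on white tiles imposed in this section), I would produce in a single computable step the pair consisting of the $\ALC$-TBox $\tboxT_{\blacktriangle}^{\tilingsys}$ from Lemma~\ref{lemma:Tblacktriangle-exists} together with the single-atom query $\queryq_{{\TwoVPQ}}^{\tilingsys} \deff \languageL_{\downarrow\rightarrow\uparrow}(\varx,\vary)$. Both objects are computable from $\tilingsys$: the TBox $\tboxT_{\blacktriangle}^{\tilingsys}$ is a finite conjunction of GCIs whose size depends only on $|\tiles|$, and $\languageL_{\downarrow\rightarrow\uparrow}$ is given by the explicit parametrised word displayed above.

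First I would confirm that $\queryq_{{\TwoVPQ}}^{\tilingsys}$ is a genuine {\TwoVPQ}, i.e.\ a single non-regular atom whose language is visibly-pushdown. This amounts to exhibiting a pushdown automaton for $\languageL_{\downarrow\rightarrow\uparrow}$ and checking that it is input-driven once we declare $\roles^-$ a call letter, $\roles$ a return letter, $\roler$ an internal letter, and every test $\conceptC?$ internal. The balanced shape $\left(\roles^-\,\overline{\concept{In}}?\right)^k \cdots \left(\overline{\concept{In}}?\,\roles\right)^k$ of the defining word makes the stack behaviour input-driven, so $\languageL_{\downarrow\rightarrow\uparrow}$ is visibly-pushdown over this partition and the query is a well-formed member of the {\TwoVPQ} fragment over the same alphabet as the TBox.

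With the reduction in place, correctness is \emph{exactly} Lemma~\ref{lemma:encoding-octant-revisited-II}: $\tilingsys$ covers the octant if and only if $\tboxT_{\blacktriangle}^{\tilingsys} \not\models \queryq_{{\TwoVPQ}}^{\tilingsys}$; contrapositively, $\tilingsys$ fails to cover the octant if and only if $\tboxT_{\blacktriangle}^{\tilingsys} \models \queryq_{{\TwoVPQ}}^{\tilingsys}$. Since $\tilingsys \mapsto (\tboxT_{\blacktriangle}^{\tilingsys}, \queryq_{{\TwoVPQ}}^{\tilingsys})$ is computable and the White-bordered Octant Tiling Problem is undecidable, it follows that the entailment problem of {\TwoVPQ}s over $\ALC$-TBoxes is undecidable.

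I expect essentially no obstacle at this final stage, since all the substance has been discharged in the preceding lemmas: the design of $\tboxT_{\blacktriangle}^{\tilingsys}$, the embedding of $\tiles$-hyperoctants into its single-role tree-like models (Lemma~\ref{lemma:tree-models-of-Tblacktriangle-contain-hyperoctants}), and—most delicately—the equivalence between the multi-atom detector $\queryq_{\blacktriangle}^{\tilingsys}$ and its two-way repackaging $\queryq_{{\TwoVPQ}}^{\tilingsys}$ proved in Lemma~\ref{lemma:encoding-octant-revisited}. The only point that deserves a moment of care is that non-entailment may be witnessed by a single-role tree-like model, via Corollary~\ref{cor:tree-model-property-for-ALCVPL}, which is precisely what licenses the application of Lemma~\ref{lemma:tree-models-of-Tblacktriangle-contain-hyperoctants}; this is already folded into Lemma~\ref{lemma:encoding-octant-revisited-II}, so the present theorem is obtained simply by composing that lemma with the undecidability of the tiling problem.
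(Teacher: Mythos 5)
Your proposal is correct and follows essentially the same route as the paper: the theorem is obtained by composing Lemma~\ref{lemma:encoding-octant-revisited-II} (correctness of the reduction $\tilingsys \mapsto (\tboxT_{\blacktriangle}^{\tilingsys}, \queryq_{{\TwoVPQ}}^{\tilingsys})$) with the undecidability of the White-bordered Octant Tiling Problem (Lemma~\ref{lemma:white-bordered-octant-tiling-is-undec}). Your added checks — computability of the reduction and that $\languageL_{\downarrow\rightarrow\uparrow}$ is genuinely visibly-pushdown under the partition with $\roles^-$ a call letter, $\roles$ a return letter, and $\roler$ internal — match what the paper notes just before the theorem.
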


Once again, let us point out that by the semantics of the query and the logics observe that $\tboxT_{\blacktriangle}^{\tilingsys} \not\models \queryq_{{\TwoVPQ}}^{\tilingsys}$ holds if and only if $\tboxT_{\blacktriangle}^{\tilingsys} \cup \{ \top \dlsubseteq \neg \exists{\languageL_{\downarrow\rightarrow\uparrow}}.\top \}$ (which is written in $\ALCvpl\mathcal{I}$, namely the extension of $\ALCvpl$ with the inverse operator) is satisfiable. 
Such a reduction yields undecidability of the concept satisfiability problem for $\ALCvpl\mathcal{I}$, and thus  reproves the previously-established result of G\"oller~\cite[Prop.~2.32]{Goller2008} (Corollary~\ref{cor:Goller}~in~Preliminaries).  

We conclude the section by revisiting known results concerning query entailment in (extensions of) $\ALC$, and lifting them to the case of $\ALCvpl$.
This contrasts with Theorem~\ref{thm:undecidability-querying-with-VPLs}.
\begin{cor}\label{cor:regular-querying-of-ALCvpl}
The query entailment problem for the class of Positive Conjunctive Regular Path Queries over $\ALCvpl$-TBoxes is $\TwoExpTime$-complete.
\end{cor}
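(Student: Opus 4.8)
The plan is to prove the matching bounds separately: the lower bound comes essentially for free from $\ALCvpl$ concept satisfiability, and the upper bound follows the standard automata-theoretic recipe for query answering over logics enjoying a tree-model property. For the lower bound I would inherit $\TwoExpTime$-hardness from \cite[Thm.~19]{LodingLS07}, just as in Corollary~\ref{cor:VPQ-entailment-is-decidable}. Using the internalisation of TBoxes as concepts in the presence of regular expressions \cite[p.~186]{2003handbook}, concept satisfiability reduces in polynomial time to $\ALCvpl$-TBox satisfiability; and for a fresh concept name $\conceptA$ not occurring in $\tboxT$ one has that $\tboxT$ is satisfiable iff $\tboxT \not\models \conceptA(\varx)$, since a model of $\tboxT$ can be made to falsify $\conceptA(\varx)$ by reinterpreting $\conceptA$ as $\emptyset$, while an unsatisfiable $\tboxT$ entails everything. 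As $\conceptA(\varx)$ is a $\REG$-\PEQ\ (indeed a $\CQ$) and $\TwoExpTime$ is closed under complement, the entailment problem is $\TwoExpTime$-hard.

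For the upper bound I would show that non-entailment is decidable in $\TwoExpTime$. Every $\REG$-\PEQ\ is preserved under homomorphisms (a homomorphism maps roles to roles and preserves concept membership, hence sends $\languageL$-paths to $\languageL$-paths), so Corollary~\ref{cor:tree-model-property-for-ALCVPL} applies: $\tboxT \not\models \queryq$ iff some single-role tree-like $\interI \models \tboxT$ satisfies $\interI \not\models \queryq$. Fixing an encoding of single-role tree-like interpretations as labelled trees, I would build two tree automata over this encoding. The first, $\automatonA_\tboxT$, accepts exactly the encodings of tree-like models of $\tboxT$; it is extracted from the construction underlying \cite[Prop.~8, Thm.~18]{LodingLS07}, where the input-driven nature of the path languages lets VPL-reachability between nodes be captured by finite ``profiles'', so that $\automatonA_\tboxT$ is an ordinary (alternating) tree automaton with at most exponentially many states in $|\tboxT|$. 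The second, $\automatonA_\queryq$, accepts those encodings in which $\queryq$ has a match; it is produced by the \emph{logic-independent} techniques for positive conjunctive regular path queries over tree-shaped models \cite{CalvaneseEO09,BienvenuCOS14}, is alternating with at most exponentially many states in $|\queryq|$, and dualises into an automaton $\overline{\automatonA_\queryq}$ of the same size accepting the encodings that \emph{violate} $\queryq$. Non-entailment then reduces to non-emptiness of $\automatonA_\tboxT \cap \overline{\automatonA_\queryq}$.

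The complexity bookkeeping is exactly what keeps us at $\TwoExpTime$ rather than $\ThreeExpTime$: both $\automatonA_\tboxT$ and $\overline{\automatonA_\queryq}$ have singly-exponentially many states, hence so does their product, and emptiness of (alternating) tree automata is decidable in time exponential in the number of states, yielding a doubly-exponential bound overall. I expect the main obstacle to be purely technical: one must check that the query-matching automaton, designed in the literature for regular roles over tree models of $\ALCreg$/$\Z$-family ontologies, can be run over the \emph{same} tree encoding used by the $\ALCvpl$ automaton of \cite{LodingLS07}, and that the two can be synchronised into one product whose acceptance condition still admits an exponential-time emptiness test. Since the query component is genuinely regular and carries no stack, it does not interfere with the visibly-pushdown summarisation inside $\automatonA_\tboxT$, so the product is a standard tree automaton; verifying this compatibility yields the $\TwoExpTime$ upper bound and, with the lower bound above, completeness — in sharp contrast with the undecidability of Theorem~\ref{thm:undecidability-querying-with-VPLs}.
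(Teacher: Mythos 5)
Your lower bound and your reduction to single-role tree-like countermodels via homomorphism preservation (Corollary~\ref{cor:tree-model-property-for-ALCVPL}) are fine and match the paper. The gap is in the upper bound, at the claim that the TBox automaton $\automatonA_{\tboxT}$ extracted from L\"oding et al.\ is an \emph{ordinary} (alternating) tree automaton because ``the input-driven nature of the path languages lets VPL-reachability be captured by finite profiles''. This is false: the set of single-role tree-like models of an $\ALCvpl$-TBox is in general \emph{not} a regular tree language. Already for the TBox $\{\top \dlsubseteq \forall{\langvpaeq{\roler}{\roles}}.\conceptA\}$, an automaton must compare an unbounded number of $\roler$-steps with the subsequent number of $\roles$-steps along a branch, which no finite-state tree automaton can do (alternation does not help, since alternating and nondeterministic parity tree automata are equally expressive); a pumping argument makes this precise. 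This is exactly why L\"oding et al.\ introduced \emph{visibly pushdown tree automata} (VPTAs) instead of reusing ordinary ones. The paper's proof therefore keeps the model automaton as a VPTA of exponential size \cite[Lemma~17]{LodingLS07}, pairs it with the \emph{nondeterministic} (ordinary) tree automaton $\automatonA_{\neg\queryq}$ of Guti\'errez-Basulto et al.\ \cite[Lemma~8]{GBasulto23}, which has exponential size and accepts precisely the tree-like structures containing no match of $\queryq$, and then uses that VPTAs generalise nondeterministic tree automata, are closed under intersection in polynomial time, and have an $\ExpTime$ emptiness test \cite[Thm.~4]{LodingLS07}; this is what yields $\TwoExpTime$.

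Even granting a repaired model automaton, your query side has a bookkeeping problem. You build an \emph{alternating} automaton for ``has a match'' and dualise it; dualisation is free in the alternating world, but to intersect with the (necessarily pushdown) model automaton and run the emptiness test one needs a nondeterministic automaton, and de-alternation costs an exponential on top of an already exponentially large query automaton --- pushing the bound towards $\ThreeExpTime$. The paper sidesteps this by citing a nondeterministic automaton for the \emph{complement} of the query directly. Your closing remark that ``the product is a standard tree automaton'' has the same root problem: the product of a VPTA and a nondeterministic tree automaton is a VPTA, not an ordinary tree automaton. A final minor slip: internalisation goes from TBoxes \emph{to} concepts, so it is not what reduces concept satisfiability to TBox satisfiability; that direction is nevertheless easy (e.g., $\conceptC$ is satisfiable iff $\{\conceptA_0 \dlsubseteq \conceptC,\ \top \dlsubseteq \exists{\roler_0}.\conceptA_0\}$ is satisfiable for fresh $\conceptA_0$ and $\roler_0$), so your lower-bound argument stands.
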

\begin{proof}[Proof sketch.]
  Note that all the results given above transfer immediately to the case of TBoxes, as they can be internalised in concepts in the presence of regular expressions~\cite[p.~186]{2003handbook}. Hence, it suffices to focus on $\ALCvpl$-concepts only. 
  Let $\conceptC$ be an input $\ALCvpl$-concept and $\queryq$ be a $\REG$-\PEQ.\@
  L\"oding et al. introduced~\cite[p.~55]{LodingLS07} a model of visibly pushdown tree automata that has the following properties (unfortunately all of them are only implicit in the paper):  
  (a) generalises nondeterministic tree automata, 
  (b) is closed under intersection (and an automaton recognizing the intersection of languages can be computed in polynomial time), 
  and (c) its non-emptiness can be tested in exponential time~\cite[Thm.~4]{LodingLS07}.
  They also provided~\cite[Sec.~4.2]{LodingLS07} an automaton $\automatonA_{\conceptC}$ that accepts precisely (suitably) single-role tree-like models of $\conceptC$, and the size of $\automatonA_{\conceptC}$ is exponential w.r.t. the size of the concept $\conceptC$~\cite[Lemma~17]{LodingLS07}.
  On the other hand, Guti{\'{e}}rrez{-}Basulto et al. provide~\cite[Lemma~8]{GBasulto23} a non-deterministic tree automaton $\automatonA_{\neg\queryq}$, of size exponential w.r.t. the sizes of $\conceptC$ and~$\queryq$, that accepts all single-role tree-like structures that do not contain any matches of~$\queryq$.\footnote{In the paper of Guti{\'{e}}rrez{-}Basulto et al., the query automaton is presented in a more general setting of tree decomposition as it was designed to also work for the case of knowledge-bases (not just TBoxes). In our paper we deal with concepts only, so such an automaton works on trees as usual.}
  It follows then that the intersection of the languages of $\automatonA_{\conceptC}$ and $\automatonA_{\neg\queryq}$ is non-empty if and only if $\conceptC \not\models \queryq$.
  As visibly pushdown tree automata are closed under intersection~\cite[p.~55]{LodingLS07}, and their non-emptiness can be solved in exponential time~\cite[Thm. 4]{LodingLS07}, we infer that the non-emptiness of $\automatonA_{\conceptC} \cap \automatonA_{\neg\queryq}$ can be tested in doubly-exponential time w.r.t. the sizes of $\conceptC$ and $\queryq$.
  The matching lower bound is inherited from the concept satisfiability.
\end{proof}


\section{Conclusions}\label{sec:conclusions}

We investigated the decidability status of extensions of $\ALCvpl$ (also known as Propositional Dynamic Logic with Visibly Pushdown Programs) with popular features supported by W3C ontology languages.
While undecidability of $\ALCvpl$ with inverses or role-hierarchies follows from existing work (see the end of Section~\ref{sec:preliminaries}),
we provided undecidability proofs of $\ALCvpl$ extended with the $\Self$ operator (Section~\ref{sec:self}), with nominals (Section~\ref{sec:nominals}), or non-regular queries (Section~\ref{sec:querying-negative}).
The first proof relied on the reduction from non-emptiness of the intersection of deterministic one-counter languages. The other two proofs relied on reduction from (variants~of) the tiling problem. 
We~conclude the paper with a list of open problems. 
\begin{itemize}\itemsep0em
\item Our undecidability proof for $\ALCSelfvpl$ relied on the availability of multiple visibly-pushdown languages that are encodings on deterministic one-counter languages. Can our undecidability proof be sharpened?
For instance, is the concept satisfiability of $\ALCregrhashshash$ with $\Self$ already undecidable?
\item Positive results for $\ALCvpl$~\cite[Thm.~18]{LodingLS07} concern the concept satisfiability problem, rather than the knowledge-base satisfiability problem. 
Is the later decidable for $\ALCvpl$? Classical techniques~\cite[p.~210]{GiacomoL94} for incorporating ABoxes inside concepts do not work, as the class of visibly-pushdown languages is not compositional (is of ``infinite memory'').
Note that this problem already occurs for $\ALCregrhashshash$.
\item Is an extension of $\ALCvpl$ (or even $\ALCregrhashshash$) with functionality decidable? 
De Giacomo and Lenzerini~\cite[p.~211]{GiacomoL94} proposed a satisfiability-preserving translation from $\ALCreg$ with functionality to plain $\ALCreg$.
Unfortunately, this reduction does not seem to be applicable to $\ALCvpl$. The reason is again that visibly-pushdown languages are not compositional.
What is more, functionality violates a crucial condition of ``unique diamond-path property'' from the decidability proof of $\ALCvpl$~\cite[Def.~11]{LodingLS07}.
\item Existing positive results on non-regular extensions of $\ALCreg$, especially these of L{\"{o}}ding et al.~\cite[Thm.~18]{LodingLS07}, rely on the use of (potentially infinite) tree-like models. Is the \emph{finite} satisfiability problem for $\ALCvpl$ decidable? Already the case of $\ALCregrhashshash$ is open.
\end{itemize}

\section*{Acknowledgements}
This work was supported by the ERC Consolidator Grant No.~771779 (\href{https://iccl.inf.tu-dresden.de/web/DeciGUT/en}{DeciGUT}).
Snake~and cobra icons were downloaded from \href{https://icons8.com/icon/9cOrIHyR3rRE/snake}{Icons8} and \href{https://www.flaticon.com/free-icons/cobra}{Flaticon}.

\begin{figure}[h]
    \centering
    \includegraphics[scale=0.05]{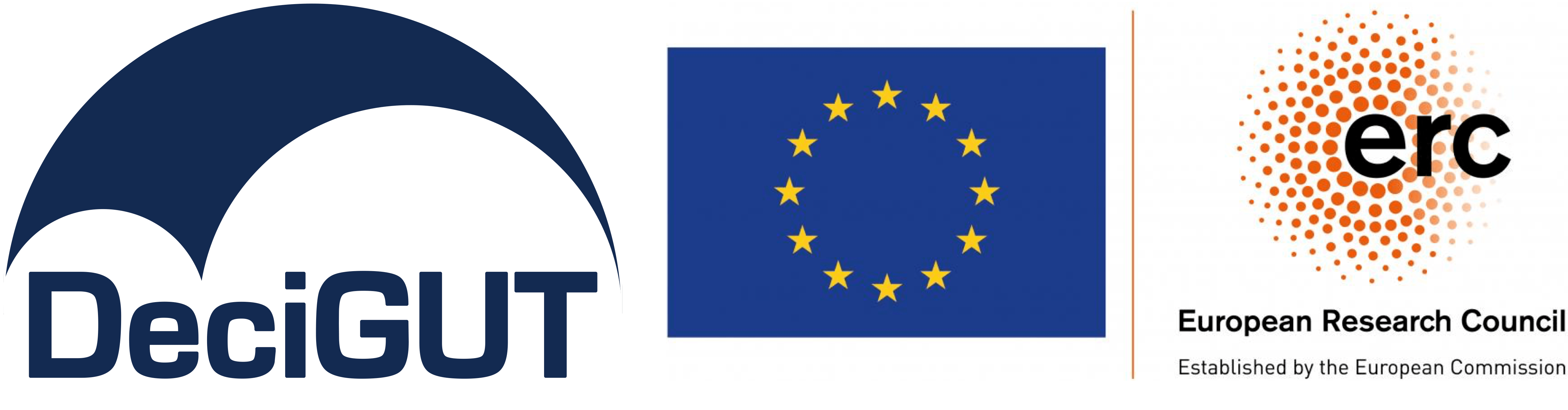}%
\end{figure}

\noindent This work would not be published without the support of Reijo Jaakkola, Sebastian Rudoph, and Witold Charatonik.
I thank Reijo, with whom I initially started this line of research, for all these hours spent in front of a whiteboard in Tampere.
I thank Witek for very careful proofreading of a conference version of this paper, constant support, and for his pedantic approach to writing.
Finally, I thank Sebastian who did truly amazing job as my PhD supervisor. Among all the other things, I would like to thank him for proofreading my drafts, pointing out the work of Calvanese, De Giacomo and Rosati~\cite{CalvaneseGR98}, and for suggesting simplifications of undecidability proofs from the conference version of Section~\ref{sec:querying-negative}.

Last but certainly not least, I extend my gratitude to the anonymous LMCS reviewers for their invaluable feedback on the submitted paper. Frankly speaking, I cannot recall ever encountering reviews as detailed as these before.

\bibliographystyle{alphaurl}
\bibliography{references}

\end{document}